\theoremstyle{plain}%
\newtheorem{theorem}{Theorem}
\newtheorem{proposition}[theorem]{Proposition}%
\newtheorem{lemma}[theorem]{Lemma}
\newtheorem{corollary}[theorem]{Corollary}
\numberwithin{equation}{section}
\numberwithin{theorem}{section}
\theoremstyle{remark}%
\newtheorem{remark}{Remark}%
\theoremstyle{definition}%
\newcommand{\inn}[2]{\left\langle#1,\,#2\right\rangle}
\DeclareMathOperator{\supp}{supp}
\DeclareMathOperator{\dist}{dist}
\newcommand{\di}{\partial}
\DeclareMathOperator{\diam}{diam}
\newcommand{\br}[1]{ \big\langle#1 \big\rangle}
\newcommand{\si}{\sigma}
\newcommand{\eps}{\epsilon}
\newcommand{\g}{\gamma}
\newcommand{\al}{\alpha}
\newcommand{\Rb}{\mathbb{R}}
\newcommand{\N}{\mathbb{N}}
\newcommand{\om}{\omega}
\renewcommand{\l}{\lambda} 
\newcommand{\abs}[1]{\ensuremath{\left\lvert#1\right\rvert}}
\newcommand{\norm}[1]{\ensuremath{\left\lVert#1\right\rVert}}
\newcommand{\Nf}[4][]{N_{#2,#3#4}^{#1}}
\newcommand{\sbr}[1]{\left[#1\right]}
\newcommand{\Set}[1]{\left\{#1\right\}}
\newcommand{\md}[6]{\ensuremath{
		\ifinner
		\tfrac{\partial{^{#2}}#1}{\partial{#3^{#4}}\partial{#5^{#6}}}
		\else
		\tfrac{\partial{^{#2}}#1}{\partial{#3^{#4}}\partial{#5^{#6}}}
		\fi
}}
\newcommand{\del}[1]{\left(#1\right)}
\newcommand{\thmref}[1]{Theorem~\ref{#1}}
\newcommand{\secref}[1]{Section~\ref{#1}}
\newcommand{\lemref}[1]{Lemma~\ref{#1}}
\newcommand{\propref}[1]{Proposition~\ref{#1}}
\newcommand{\remref}[1]{Remark~\ref{#1}}
\newcommand{\figref}[1]{Figure~\ref{#1}}
\newcommand{\corref}[1]{Corollary~\ref{#1}}
\numberwithin{equation}{section}
\definecolor{green}{rgb}{0.0, 0.5, 0.5}
\definecolor{lgray}{gray}{0.9}
\definecolor{llgray}{gray}{0.95}
\definecolor{lllgray}{gray}{0.975}
\newcommand{\R}{\mathbb{R}}
\newcommand{\rd}{\mathrm{d}}
\newcommand{\me}{\mathrm{e}}
\newcommand{\Mr}[1]{\mathrm{#1}}
\newcommand{\scrP}{\mathscr{P}}
\newcommand{\cB}{\mathcal{B}}
\newcommand{\cD}{\mathcal{D}}
\newcommand{\cE}{\mathcal{E}}
\newcommand{\cF}{\mathcal{F}}
\newcommand{\cG}{\mathcal{G}}
\newcommand{\cR}{\mathcal{R}}
\newcommand{\nc}{\newcommand}
\nc{\h}{\delta}
\nc{\G}{\Gamma}
\nc{\et}{\eta} 
\nc{\gam}{\gamma}
\nc{\ka}{\kappa}
\nc{\lam}{\lambda}
\nc{\Lam}{\Lambda}
\nc{\ta}{\tau}
\nc{\w}{\omega}
\nc{\io}{\iota}
\nc{\s}{\sigma}
\nc{\vphi}{\varphi}
\nc{\e}{\epsilon}
\renewcommand{\k}{\kappa}
\nc{\ran}{\rangle}
\nc{\lan}{\langle}
\nc{\bfone}{{\bf 1}}
\nc{\dd}{\mathrm{d}}
\newcommand{\DETAILS}[1]{}
\newcommand{\cp}{\mathrm{c}}
\newcommand{\Rem}{\mathrm{Rem}}
\newcommand{\phidel}[1]{\inn{\phi}{{#1}\phi}}
\newcommand{\ordel}[1]{\inn{\varphi_r}{{#1}\varphi_r}}
\newcommand{\phirdel}[1]{\inn{\phi_r}{{#1}\phi_r}}
\newcommand{\undel}[1]{\inn{\varphi}{{#1}\psi}}
\newcommand{\urdel}[1]{\inn{\varphi_r}{{#1}\psi_r}}
\newcommand{\wndel}[1]{\inn{\psi}{{#1}\psi}}
\newcommand{\wtdel}[1]{\inn{\psi_t}{{#1}\psi_t}}
\newcommand{\wzdel}[1]{\inn{\psi_0}{{#1}\psi_0}}
\DeclareMathOperator{\mI}{\mathrm{I}}
\DeclareMathOperator{\mII}{\mathrm{II}}
\DeclareMathOperator{\mIII}{\mathrm{III}}
\newcommand{\Norm}[1]{{\left\vert\kern-0.25ex\left\vert\kern-0.25ex\left\vert #1 
		\right\vert\kern-0.25ex\right\vert\kern-0.25ex\right\vert}}
\begin{document}

	\title[]{On the quantum dynamics of long-ranged Bose-Hubbard Hamiltonians}
	
		\author{Marius Lemm}
	\address{Department of Mathematics, University of T\"ubingen, 72076 T\"ubingen, Germany }
	\email{marius.lemm@uni-tuebingen.de}
	
	\author{Carla Rubiliani}
	\address{Department of Mathematics, University of T\"ubingen, 72076 T\"ubingen, Germany }
	\email{carla.rubiliani@uni-tuebingen.de}
	
	\author{Jingxuan Zhang
	}
	\address{Yau Mathematical Sciences Center, Tsinghua University, Beijing 100084, China }
	\email{jingxuan@tsinghua.edu.cn}

	\date{May 3, 2025}
	\subjclass[2020]{35Q40   (primary); 81P45, 82C10   (secondary)}
	\begin{abstract}
		We study the quantum dynamics generated by Bose-Hubbard Hamiltonians with long-ranged (power law) terms. 
		We prove two ballistic propagation bounds for suitable initial states: (i) A bound on all moments of the local particle number for all power law exponents $\alpha>d+1$ in $d$ dimensions, the sharp condition.  (ii) The first thermodynamically stable Lieb-Robinson bound (LRB) for these Hamiltonians. 
		To handle the long-ranged and unbounded terms, we further develop the  multiscale ASTLO (adiabatic space time localization observables) method introduced in our recent work \cite{lemm2023microscopic}.
	\end{abstract}

	\maketitle
	
	
	\section{Introduction}\label{secIntro}
	
	In this paper, we study the  quantum dynamics of macroscopically large quantum systems at positive density with strong, long-range interactions.	
	These models are relevant, e.g., for describing the physics of ultracold gases in optical lattices, which  are the main platforms on which experimental physicists can perform  quantum simulation \cite{bloch2012quantum,greiner2008optical,kaufman2021quantum
	}.

	We consider a system of $N$ bosonic quantum particles whose positions are restricted to a discrete, finite subset $\Lambda\subset\mathbb{R}^d,\,d\ge1$.  	
	The state of the $N$-boson system is mathematically given by a permutation-symmetric element of the Hilbert space $\bigotimes^N_{j=1} \ell^2(\Lambda)$. It can be identified with a function
	\begin{equation}
		\begin{aligned}
			\label{eq:introHilber}
			&\psi(x_1,\ldots,x_N),\qquad \textnormal{with } x_1,\ldots ,x_N\in \Lambda
	\end{aligned}
	\end{equation}
     satisfying $\psi(x_{\pi(1)},\ldots,x_{\pi(N)})= \psi(x_1,\ldots,x_N)$ for all permutations $\pi\in S_N$.
	
	The system state at time $t$ is obtained by solving the time-dependent Schrödinger equation
	\begin{equation}\label{eq:introSE}
		i\partial_t \psi_t(x_1,\ldots,x_N) = H\psi_t(x_1,\ldots,x_N).
	\end{equation}
	where $H$, the Hamiltonian, is a linear self-adjoint operator of the form
	\begin{equation}\label{eq:introH}
		H=\sum_{i=1}^N J^{i}+\sum_{1\leq i<j\leq N} V(x_i,x_j).
	\end{equation}
	Here $J^{i}$ acts as a fixed $|\Lambda|\times |\Lambda|$ Hermitian matrix $J$ on the $i$-th particle; it represents the hopping of the $i$-th particle between lattice sites\footnote{In physics, $\Lambda$ is usually called a ``lattice'' and its elements are called ``sites''. We adopt this terminology here without associating any mathematical meaning to it.}  $x,y\in\Lambda$.
	Moreover, $V:\Lambda\times \Lambda\mathbb\,\rightarrow\, \R$ represents the interaction between particles. The precise definitions can be found in Section \ref{secSetup}.

	We are interested in systems which combine a few analytical challenges:
	\begin{enumerate}[label=(\alph*)]
		\item 
		We are interested in \textit{macroscopically large systems at positive  density}, i.e., we aim for bounds that are stable in the thermodynamic limit $N,\,\abs{\Lam}\to\infty$ with fixed density $N/\abs{\Lam}=\rho>0$. The main large parameter is the number of particles $N$.  While the PDE \eqref{eq:introSE} is linear, the presence of $N$ affects the number of tensor powers that is taken in \eqref{eq:introHilber} and so the problem is effectively one in extremely large dimension. This is widely known as the ``curse of dimensionality'' of the quantum many-body problem. Meaningful results thus need to be independent of $N$, i.e., of dimension.\footnote{For few-body dynamics, i.e., when $N$ is held fixed, standard PDE techniques allow to derive ballistic bounds, but the methodology therein does not extend to systems at positive spatial density that we consider here. This means that even deriving a ballistic propagation bound that is uniform in $N$ lies outside the scope of usual PDE techniques (e.g., the well-known Radin-Simon ballistic upper bound \cite{radin1978invariant}). Instead, one needs to exploit the non-trivial local tensor structure that is inherent in quantum many-body physics through the choice of Hilbert space \eqref{eq:introHilber} and Hamiltonian \eqref{eq:introH}.}

		\item The Bose-Hubbard Hamiltonians \eqref{eq:introH} are \textit{strongly interacting}. Mathematically, this means that the interaction term $\sum_{i,j} V(x_i,x_j)$ does not have any kind of small parameter in front of it that would weaken the effect of interactions (e.g., in the widely studied mean-field regime one places a prefactor $\frac{1}{N}$ in front of $\sum_{i,j} V(x_i,x_j)$ \cite{hepp1974classical,ginibre1979classical,rodnianski2009quantum}). Also, we work at fixed density and so we are also not in a ``dilute'' regime in which interactions are strong but rare, such as the well-studied Gross-Pitaevskii regime \cite{erdHos2009rigorous,erdos2010derivation,brennecke2019gross}. Since our systems are not associated with any small parameter, the kind of effective evolution equations (and other expansion techniques around weakly interacting systems) that are commonplace in the literature are not available to us. In fact, they are not meaningful here because there is \textit{no small parameter} to expand in.
		
		\item We will also allow for both    $J^i$ and $V$ to be \textit{long-ranged} (or ``heavy-tailed''), i.e., we allow for power-law decay. More precisely, recalling that $d$ denotes the spatial dimension, we make the following power-law decay assumptions
		\begin{equation}\label{eq:long-range_assumption}
		|J^{i}_{x_i y_i}|\leq C |x_i-y_i|^{-\alpha},\quad 
|V(x_i,x_j)|\leq C |x_i-x_j|^{-\alpha},\qquad 
\textnormal{for all $1\le i \le N$},
		\end{equation}
		Here $\al>0$ is a fixed power-law exponent that we aim to choose as small as possible, so that the results hold with the heaviest possible tails.
		
		\item For bosons, the local interactions are strong in another, non-parametric sense: they are effectively \textit{unbounded}\footnote{More precisely, the size of local interactions can grow in the total particle number $N$, which amounts to unboundedness in the limit  $N\to\infty$.} even on a lattice, because  arbitrarily many of the $N$ bosons can accumulate at a single lattice site.

	\end{enumerate}

	We are interested in the unitary Schr\"odinger dynamics $e^{-\mathrm{i}tH}$ generated by the self-adjoint operator \eqref{eq:introSE} via Stone's theorem. 
	Our goal is to derive \textit{robust propagation bounds} that hold for the whole class of Hamiltonians \eqref{eq:introH} and that are thermodynamically stable, i.e., that are uniform as $N\to\infty$. 
	From a PDE perspective, our goal is to  derive \textit{a priori} bounds which capture the fundamental limitations of these quantum many-body systems to rapidly transport either particles and/or correlations. Such propagation bounds on correlations are commonly known as ``Lieb-Robinson bounds'' after \cite{lieb1972finite}.
	The wide interest in quantum propagation bounds stems from the fact that they are decisive tools both for structural results in the context of operator algebras  (e.g., they provide the existence of the thermodynamic limit  \cite{brattelioperator,nachtergaele2006propagation}) and for analytical proofs in mathematical quantum information theory \cite{hastings2004lieb,hastings2006spectral,hastings2007area}. On top of this, they play a role in the rigorous mathematical definition of a ``quantum  phase'' \cite{hastings2005quasiadiabatic,hastings2007quantum,bachmann2012automorphic}.
	Due to these various uses and others,  propagation bounds have had profound impact on the area of mathematical physics  that investigates quantum many-body systems through analytical methods. Accordingly, deriving these essential bounds for systems that lie beyond the existing analytical methods has developed into a dynamic research area within mathematical physics and our paper falls within this area.
	
	The original Lieb-Robinson bound \cite{lieb1972finite} and its extensions in \cite{hastings2004decay,nachtergaele2006lieb} successfully address points (a) and (b) above, but these methods did not satisfactorily address points (c) and (d), i.e., long-ranged and/or unbounded interactions.  In the past 5 years, methodological breakthroughs have allowed to comprehensively treat models that are long-ranged  and bounded, i.e., models that satisfy (a)-(c) but not (d) \cite{chen2019finite,kuwahara2020strictly,tran2020hierarchy,tran2021lieb,tran2021optimal}. For unbounded interactions (which arise, e.g., in bosonic models such as \eqref{eq:introH}), the picture immediately becomes quite complicated. Unbounded interactions are not only challenging from an analytical-methodological perspective, their physical behavior is also remarkably rich.: there exist simple examples of bosonic Hamiltonians that appear to be  well-behaved in various standard ways (translation-invariant, one-dimensional, with nearest-neighbor interactions), but that provably display exponentially accelerating particle and correlation transport \cite{eisert2009supersonic}.   Most research on propagation bounds for unbounded interactions has focused on the paradigmatic Bose-Hubbard type Hamiltonians \cite{schuch2011information,wang2020tightening,yin2022finite,faupin2022lieb,faupin2022maximal,sigal2022propagation,lemm2023information,lemm2023microscopic,kuwahara2024effective,kuwahara2024enhanced,van2024optimal,lemm2024local,faupin2025macroscopic}; see also \cite{nachtergaele2009lieb,woods2016dynamical} which considers different but related lattice oscillator systems. Also continuum systems have been considered which are unbounded for different reasons \cite{gebert2020lieb,bachmann2024lieb,hinrichs2024lieb}.  
	
	The long-ranged Bose-Hubbard Hamiltonians \eqref{eq:introH} we consider comprise all of the aforementioned methodological obstacles  (a)-(d) simultaneously. 
	While our understanding of how to treat these obstacles individually has improved a lot in recent years, there is no way to separate them and so they compound in difficulty. While some aspects of  long-ranged Bose-Hubbard Hamiltonians  \eqref{eq:introH} have been considered in a few recent works \cite{faupin2022lieb,faupin2022maximal,sigal2022propagation,lemm2023information,lemm2023microscopic,van2024optimal}, only the very rough notion of ``macroscopic transport'' has been satisfactorily addressed.  The following two central dynamical problems have hitherto remained unresolved.  Before we state these, we mention that it has been realized that there is a marked difference in the behavior of \textit{particle propagation bounds} and \textit{correlation propagation bounds} (i.e., Lieb-Robinson bounds) and these two types of problems have to be treated somewhat differently; the same will be true for us here.
	
	\begin{itemize}
		\item \textit{Problem 1: Uniform particle transport bound.} Controlling the growth of high moments of the local boson number $n_x^p$ in time is essential to control the local accumulation of bosons  under many-body dynamics.
		By developing a  multiscale version \cite{lemm2023microscopic} of the ASTLO (adiabatic space time localization observable) scheme introduced in \cite{faupin2022lieb}, we  have recently been able to control the $p$-th moments if $\al>\max\{3dp/2+1,2d+1\}$ grows with $p$. 
		However, considering a fixed $\alpha$ and then controlling arbitrarily high moments has been open.
		\item \textit{Problem 2: Thermodynamically stable Lieb-Robinson bound (LRB).} On the one hand, for finite-range Bose-Hubbard Hamiltonians, thermodynamically stable LRBs (meaning LRBs that are independent of the total particle number) have been proved for bounded-density initial states \cite{kuwahara2024effective,kuwahara2024enhanced}, but with velocities that are growing in time.  On the other hand, for initial states satisfying stronger assumptions, namely states with particle-free regions,  bounded-velocity (``ballistic'') LRBs have been proved in \cite{faupin2022lieb,sigal2022propagation,lemm2023information}. However, the error terms appearing in \cite{faupin2022lieb,sigal2022propagation,lemm2023information} still depend on the total particle number $N$ and are thus divergent in the thermodynamic limit $N\to\infty$. As thermodynamic stability is a fundamental requirement, it is of paramount importance to close this gap in the literature.
	\end{itemize}
	Addressing these open problems would rigorously establish for the first time that ballistic propagation of particles and correlations is a rather universal property that extends to the dynamics of many-body Hamiltonians for which all four methodological obstacles (a)--(d) are present.

	In this paper, we resolve both of these open problems with the following results. 
	Specifically, let  $H_\Lam$ be a Hamiltonian of the form \eqref{eq:introH} satisfying the long-range condition \eqref{eq:long-range_assumption}. Denote $B_h:=\Set{x\in\Lam:\abs{x}\le h}$ and $B_h^\cp:=\Lam\setminus B_h$. Informally, our main results address Problems 1 and 2 as follows.
	\begin{itemize}
		\item \textbf{Uniform particle transport bound.}  Let $\al>d+1$ and assume that the initial  density lies between two positive numbers $\l_1<\l_2$.  Let $N_X$ be the self-adjoint operator encoding the number of particles inside a region $X\subset \Lam$ (see \eqref{NXdef} for the definition). Then for all $p\ge1$, there exist constants $C,\,v>0$ such that  for all $r,\,\eta>0$, 
		\begin{align}
			\notag
			\wtdel{N_{B_r}^p} \leq C  \inn{\psi_0}{{N^p_{B_{r+\eta}}}\psi_0} \qquad \textnormal{for } t<\eta/v. 
		\end{align}
		The main point here is that for any fixed $\al>d+1$, the result holds for all $p$ (compare \cite{lemm2023information,lemm2023microscopic}).
		\item \textbf{Thermodynamically stable Lieb-Robinson bound for initial states with particle-free regions.} 	Assume $\al>3d+2$ and assume that the initial state has density upper bounded by $\l>0$ as well as an annular particle-free region of width $\eta>0$ (i.e., $n_x\psi_0=0$ for $r \le \abs{x}\le r+\eta$). Then, for   any local observables $A,\,B$ supported inside $B_r$, $B_{r+\eta}^\cp$ respectively and all sufficiently large $\eta$,
		\begin{align}
			\notag
			\wzdel{[e^{itH_\Lambda}Ae^{-itH_\Lambda},B]}\le C\abs{t}\norm{A}\norm{B} \eta^{-\beta}\qquad (t<\eta/v). 
		\end{align}
		Here $C,\,v,\beta>0$ are independent of $t$, $\eta$, $\psi_0$, $A$, and $B$. The main point here is that $v$ is a constant (compare \cite{kuwahara2024effective,kuwahara2024enhanced}) and the right-hand side is independent of $N$ (compare \cite{faupin2022lieb,sigal2022propagation,lemm2023information}).
	\end{itemize}

	See \secref{secAssump} for the detailed setup and precise statements, which are more general in various minor ways.  These results extend and improve corresponding ones in earlier works \cite{faupin2022lieb,sigal2022propagation,lemm2023information,lemm2023microscopic} in the aforementioned key ways. In physical terms, we provide   natural sufficient conditions on the initial states  that ensure uniformly bounded speed of propagation of particles and information in bosonic systems with long-range hopping and interactions.\footnote{While we are interested in long-ranged bosonic systems for purely mathematical reasons (existing methodology breaks down for them), we mention in passing that  systems with long-range hopping are also a frontier in physics \cite{roses2021simulating,wang2023quantum,bespalova2024simulating} and it may be interesting to consider if there are any practical physical implications of our results.} 
	
	A few remarks on the results are in order. First, the condition $\alpha>d+1$ for the uniform particle propagation is known to be sharp by studying the non-interacting case ($V=0$) \cite{tran2020hierarchy}. Second, our results yield an explicit universal bound, $\kappa$, on the maximal speed of particle and information propagation. This $\kappa$ is naturally tied to the norm of one-particle group velocity operator. 	This is to be compared to \cite{kuwahara2024effective}, which treated short-range hopping and found a particle propagation bound with an almost linear light cone (speed growing logarithmically in time) and derived LRBs with substantially super-ballistic speed for bounded-density initial states in $d>1$. Our results establish that long-range Bose-Hubbard Hamiltonians  in initial states with particle-free regions do \textit{not} show any of this unusual behavior and instead adhere to the usual physical principle of locality: propagation is controlled by fixed maximal speed that depends on system parameters and does not grow with time.	
	
	Third, regarding the assumption on the initial states, notice that they are different between our two results: our uniform particle propagation bounds assume upper and lower bounds on the density. This includes the experimentally relevant Mott state with unit filling \cite{cheneau2012light}. For the second result, the bounded-speed Lieb-Robinson bounds, we require an initial particle free region. Indeed, since protocols exist even in the short-range case \cite{kuwahara2024effective} that show super-ballistic information propagation for general  bounded-density initial states, an additional assumption on the initial state is needed to obtain a bounded-speed LRB. It is a separate problem to derive LRBs for bounded-density initial states under the power-law condition $\alpha>d+1$, since these LRBs will almost certainly be super-ballistic (i.e., involve speeds that grow in time) as in \cite{kuwahara2024effective,kuwahara2024enhanced}. This would likely require complementing our uniform particle propagation bound with to-be-developed long-range variants of certain methods in \cite{kuwahara2024effective,kuwahara2024enhanced}. Since super-ballistic LRBs belong to a  different physical regime, we consider this a separate problem that we will consider in a separate work.\\

	We achieve our results by further developing the analytic toolkit known as the ASTLO method, an acronym that stands for adiabatic space time localization observables developed in the many-body context in \cite{faupin2022lieb,faupin2022maximal,sigal2022propagation,lemm2023information,lemm2023microscopic}. To resolve problem (i), we introduce a new logarithmic ASTLO which serves to ``renormalize'' the contribution from regions where the density is roughly uniform. 	For (ii), we devise a multiscale scheme building on our previous work \cite{lemm2023microscopic}, for the first time in the context of Lieb-Robinson bounds. We explain our proof strategy in more details after we present the setup and main theorems. We expect that the new methodology we develop here will be useful in related quantum problems in which both technical obstacles (unbounded and long-ranged) are present.\\

	\textbf{Organization of the paper.}
	In \secref{secSetup}, we formally state our setup and main results, Theorems \ref{thm4} and \ref{thm1}. 
    
    In \secref{sec prop 1 and 2}, we prove Theorem \ref{thm4} for the first moment ($p=1$) by implementing a  logarithmically renormalized ASTLO idea in the multiscale scheme developed in \cite{lemm2023microscopic} which in turn is an iteration of the original bosonic ASTLO scheme \cite{faupin2022lieb}. The proof strategy is explained in the small flowchart \ref{flow chart 2}.

	The proof of Theorem \ref{thm4} for general $p$ spans Sections \ref{second mom log}--\ref{second mom log b} and blends a multiscale induction with an induction on moment order; it is more involved and summarized in the larger flowchart \figref{flow chart 1}. More precisely, in \secref{second mom log}, we introduce the ``main ASTLO estimates'' and show how they imply \thmref{thm4}. 
        In \secref{sec proof astlo est}, we run the refined multiscale ASTLO machine to reduce the proof of the ``main ASTLO estimates'' to proving a remainder estimate. The remainder estimate  is then proved in \secref{second mom log b} by a multiscale induction. This completes the proof of \thmref{thm4} for $p\geq 2$. 
    
    Section \ref{secPfLCAgen} contains the proof of \thmref{thm1}, with some technical lemmas deferred to \secref{secPfLems}. In the Appendix, we recall and adapt basic properties in the ASTLO method to our present purposes.
	
	\section{Setup and Main Results}\label{secSetup}
	We fix a finite connected subset $\Lam\subset\mathbb Z^d$ which we will call lattice. The entire paper generalizes to any graph embedded in $\mathbb R^d$ which has polynomial volume growth and satisfies that any two distinct sites are separated by a distance greater than one.
    
	In the following, we use the convenient Fock space formalism, also known as second quantization. Throughout, we work with a fixed particle number $N$. We emphasize that working in Fock space is completely equivalent and does not affect the difficulty of the mathematical problems we consider in any way; it is mainly useful because the algebra takes care of some bookkeeping. 
	
	We consider a system of $N$ bosons whose positions lie in $\Lambda$ a finite discrete subset of $\mathbb R^d$. Mathematically, we work on the Hilbert space that is defined as the bosonic Fock space
	\begin{equation}\label{eq:fock}
		\mathcal F(\ell^2(\Lambda))=  \bigoplus_{N=1}^\infty \mathcal S_N\left(\bigotimes_{j=1}^N \ell^2(\Lambda)\right),
	\end{equation}
	where $\mathcal S_N$ denotes the projection onto the permutation-symmetric subspace of ${\bigoplus_{N=1}^\infty \bigotimes_{j=1}^N \ell^2(\Lambda)}$.
	On this Fock space, we consider a realization of the canonical commutation relations, i.e.,
	\[
	[a_x,a_y^*]=\delta_{x,y},\quad     [a_x^*,a_y^*]=   [a_x,a_y]=0,\qquad \forall x,y\in \Lambda.
	\]
	On $\mathcal F(\ell^2(\Lambda))$, we consider Bose-Hubbard type  Hamiltonians of the form
		\begin{align}\label{H} 
			\boxed{ H_\Lam :=  \sum_{x,y \in \Lambda} J_{xy} a_x^*a_y + \frac12\sum_{\substack{x,y\in \Lam
				}}  V_{xy} a_x^*a_y^*a_ya_x.}
		\end{align}  
		Here,         
		$J=(J_{xy})$ is a Hermitian $|\Lambda|\times |\Lambda|$ matrix representing the energy of individual particles, and 
		{ $V=(V_{xy})$	is a real-symmetric matrix, i.e.~$V_{xy}=\overline{ V_{xy}}=V_{yx}$.}

		For background on the second quantization formalism, see e.g.~\cite{reed1978iv}. 	Moreover, \cite[App.\ A]{faupin2022lieb} proves that $H_\Lam$ is self-adjoint on 
		the dense domain
		\begin{align}
			\notag
			\mathcal D(H_\Lam)=\{(\psi_N)_{N\geq 0}\in \mathcal F(\ell^2(\Lambda))\,:\, \sum_{N\geq 0}\|H_\Lambda \psi_N\|^2<\infty \}.
		\end{align}
		
		We consider solutions $\psi_t$ to the time-dependent many-body Schr\"odinger equation 	\begin{align}\label{SE}
			i\di_t\psi_t = H_\Lam \psi_t\quad \textnormal{ on }\quad \cF(\ell^2(\Lam)),
		\end{align} with initial datum satisfying the well-posedness condition
		\begin{align}\label{g0-cond}
			\psi_0\in \cD(H_\Lam)\cap\cD(N_\Lam^{p/2}),\quad p\ge1.
		\end{align}
		Here and below,		for any subset  $X\subset \Lam$, we define the particle number operator
		\begin{align}
			\label{NXdef}
			N_X:=\sum_{x\in X}n_x.
		\end{align}
		Since $H_\Lam$ is self-adjoint, it generates a unitary group $e^{-\mathrm{i}tH_\Lam}$ on $F(\ell^2(\Lambda))$ via Stone's theorem. This unitary group encodes the solution operator to the PDE \eqref{SE} and we will study its propagation properties. As usual, we will use a dual formulation in which the unitary acts by conjugation on operators (known as the ``Heisenberg picture'' in physics).
		
		\subsection{Power-law decay assumption}\label{secAssump}

		Let $\al>d+1$ and set
		\begin{align}
			\label{CJdef}
			C_{J,\al}:=&	\sup_{x,y\in\Lam}\abs{J_{xy}}(1+\abs{x-y})^{\al},\\
			C_{V,\al}:=& \sup_{x,y\in\Lam} \abs{V _{xy}}(1+\abs{x-y})^{\al}\label{CVdef}.
		\end{align} These constants then capture the power-law decay because, for any $x\ne y$,
		\begin{align}
		    \abs{J_{xy}}\leq C_{J,\al}\abs{x-y}^{-\al},\qquad  \abs{V _{xy}} \leq C_{V,\al}\abs{x-y}^{-\al}. \label{no 1}
		\end{align}

		Our main results below depend on \eqref{CJdef}--\eqref{CVdef} for appropriate values of $\al$. It is understood that $C_{J,\al},C_{V,\al}$ are $\mathcal O(1)$-constants that are independent of $\Lambda$.


		A central role is played by the first moment of the hopping matrix,
		\begin{equation}\label{kappa}
			\kappa:=  \sup_{x\in\Lam}\sum_{y\in\Lam}\abs{J_{xy}}\abs{x-y}.
		\end{equation}
		By Schur's test, it is easy to check that $\kappa$ is an upper bound of the $1$-particle momentum operator. 
		Using $\abs{J_{xy}}\leq C_{J,\al}(1+\abs{x-y})^{-\al}$ for $\alpha>d+1$, we see that $\kappa$ is bounded by a dimensional constant times $C_{J,\alpha}$, uniformly in $\Lambda$.

		In the remainder of this paper we consider $\alpha$ to be fixed and so we will drop the dependence of \eqref{CJdef}--\eqref{CVdef} on $\al$ whenever no confusion arises.\\
		
		\textbf{Notation.}
		In what follows, for a subset $X\subset \Lam$, we denote by 
		$X^\cp:=\Lam\setminus X$ its complement in $\Lam$, $d_X(x)\equiv \dist(\Set{x},X):=\inf_{y\in X}\abs{x-y}$ the distance function to $X$, $X_\xi:=\Set{x\in\Lam:d_X(x)\le \xi}$ (see \figref{fig:Xxi} below), 
		and $X_\xi^\cp$ is always understood as $(X_\xi)^\cp$.
		$\cF\equiv \cF(\ell^2(\Lambda))$ stands for the bosonic Fock space over $\Lam$. 
		Finally, 		given an operator $A$ on $\mathcal F$ and an initial state $\psi_0\in \mathcal F$, we abbreviate 
		\begin{align}\notag
			\br{A}_t:=\br{\psi_t,A\psi_t},\qquad \psi_t:=e^{-iH_\Lam t}\psi_0.
		\end{align}
		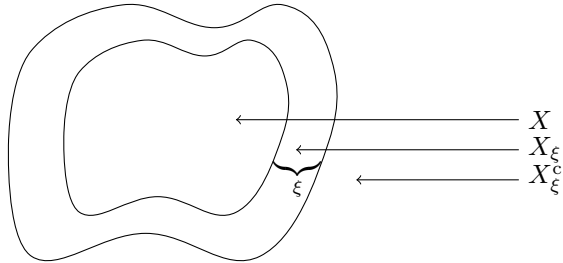
\begin{figure}[H]
			\centering
			\begin{tikzpicture}[scale=.8]
				\draw  plot[scale=.52,smooth, tension=.7] coordinates {(-3,0.5) (-2.5,2.5) (-.5,3.5) (1.5,3) (3,3.5) (4,2.5) (4,0.5) (2.5,-2) (0,-1.5) (-2.5,-2) (-3,0.5)};
				
				\draw  plot[shift={(-0.2,-0.25)}, scale=.76,smooth, tension=.7] coordinates {(-3,0.5) (-2.5,2.5) (-.5,3.5) (1.5,3) (3,3.5) (4,2.5) (4,0.5) (2.5,-2) (0,-1.5) (-2.5,-2) (-3,0.5)};
				
				\draw [->] (6,.5)--(1.3,.5);
				\node [right] at (6,.5) {$X$};
				
				\draw [->] (6,0)--(2.3,0);
				\node [right] at (6,0) {$X_\xi$};
				
				\draw [->] (6,-.5)--(3.3,-.5);
				\node [right] at (6,-.5) {$X_\xi^\cp$};
				
				\node [below] at (2.32,0) {$\underbrace{}_{\xi}$};
			\end{tikzpicture}
			\caption{Schematic diagram illustrating the notations above.}
			\label{fig:Xxi}
		\end{figure}

		\subsection{Main result 1: uniform particle propagation bound}\label{secContrDen}
		We present our first main result, uniform particle propagation bounds for initial states with uniformly bounded density from both above and below, see \eqref{CD}. The result heavily builds on our previous work \cite{lemm2023microscopic} and improves that approach in several key ways.  
		Specifically, the result allows weaker decay of the hopping terms of the Hamiltonian \eqref{H}, and the decay condition is now independent of the moment $p$ of the number operator that we bound. More precisely, we are able to improve the $p$-dependent power-law decay condition from \cite{lemm2023microscopic} $\al>\max\Set{\frac{3dp}{2}+1,2d+1}$ to the sharp
		one
		\[
		\al>d+1
		\]
		for every $p\ge 1$. This already gives an improvement  for the case $p=1$.		
		\begin{theorem}[Main result 1, particle propagation bound]\label{thm4}
			Let $\al>d+1$. Consider an initial state satisfying the bounded-density assumption
			\begin{align}\label{CD}
				(\l_1r^d)^q\le \langle N_{B_r(x)}^q\rangle_0\le (\l_2r^d)^q \qquad  (1\le q\le p,\; x\in \Lam,\; r\ge1),
			\end{align}
			for some $\infty>\l_2>\l_1>0$ and $p\ge1$. {Then, for any  $v>12\kappa$ and $\delta_0>0$,  there exist 
				\begin{align}
					&\rho=\rho( v,\delta_0,d,\l_1,\l_2,p,\al,C_{J,\alpha})>0\label{rho 1}\\
					&C=( v,\delta_0,d,\l_1,\l_2,p),\notag
				\end{align} 
				with $C=0$ for $p=1$ and $C>0$ for $p>1,$ such that 
				for all $R>r\ge\rho$ with $R-r>\delta_0r$, there hold}
			\begin{align}
				\br{N^p_{B_{r}}}_t\le& \br{N^p_{B_{R}}}_0\exp\Set{\frac{C}{R^d}+\frac{vt}{R-r}} \label{main ub 1}\qquad &&(0\le vt\le R-r),\\
				\br{N^p_{B_{R}}}_t\ge& \br{N^p_{B_{r}}}_0\exp\Set{-\del{\frac{C}{R^d}+\frac{vt}{R-r}}}&& (0\le vt\le R-r).\label{main lb 1}
			\end{align}
		\end{theorem}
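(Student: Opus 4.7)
The plan is to derive a Gr\"onwall-type differential inequality for $\log\br{N_{B_{r(s)}}^p}_s$ along a linear interpolation $r(s)$ running from $r$ at $s=0$ to $R$ at $s=t$; integration then yields \eqref{main ub 1} directly, and \eqref{main lb 1} follows by reversing the direction of the space-time localization. The exponential form $\exp\{C/R^d + vt/(R-r)\}$ of the target bound is precisely what one expects from integrating a logarithmic derivative of the form $v/(R-r) + C'/R^d$, which motivates the logarithmically renormalized ASTLO approach.

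For the base case $p=1$, I would form the ASTLO
\[
A_s:=\sum_{x\in\Lam} \chi_s(x) n_x,
\]
with $\chi_s$ a smooth non-decreasing radial cut-off interpolating between $\mathbf{1}_{B_r}$ at $s=0$ and $\mathbf{1}_{B_R}$ at $s=t$, satisfying $|\nabla\chi_s|\ls 1/(R-r)$ and $\partial_s\chi_s \le \dot r(s)|\nabla\chi_s|$. Since the pair interaction $V_{xy}a_x^*a_y^*a_ya_x$ is diagonal in number operators and hence commutes with $A_s$, the Heisenberg derivative collapses to
\[
DA_s = \partial_s A_s + i\sum_{x,y\in\Lam}J_{xy}(\chi_s(x)-\chi_s(y))a_x^*a_y.
\]
A Schur-type bound based on \eqref{kappa} controls the near-diagonal part of the commutator by $\kappa\|\nabla\chi_s\|_\infty A_s$, while the long-range part produces a remainder from hops of length $\gs R-r$. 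Choosing $\dot r(s)$ slightly larger than $\kappa$ (the hypothesis $v>12\kappa$ provides ample slack) gives $\br{DA_s}_s \le (v/(R-r))\br{A_s}_s + \br{\mathrm{Rem}_s}_s$. Using the decay \eqref{no 1} with $\alpha>d+1$, a dyadic summation over hop scales $2^k(R-r)$ converges and bounds the remainder by $\br{A_s}_s$ times a harmless factor that vanishes in the limit, giving $C=0$ as claimed.

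For $p\ge 2$, the commutator $i[H_\Lam, A_s^p]$ fails to reduce to $pA_s^{p-1}\cdot i[H_\Lam,A_s]$, and the cross terms $a_x^*a_y n_z^{p-1}$ must be closed by running the multiscale ASTLO machine of \cite{lemm2023microscopic} iteratively: at each stage the remainder is re-expressed in terms of moments on strictly larger balls and controlled by an outer induction on the moment order $p$ together with an inner induction on the spatial scale. The main obstacle, and the motivation for the new logarithmic renormalization, is that the previous multiscale bound acquired polynomial-in-$N$ prefactors which forced the stringent condition $\alpha>\max\{3dp/2+1,2d+1\}$. By taking $\log\br{N_{B_{r(s)}}^p}_s$ and invoking the two-sided density hypothesis \eqref{CD} at all orders $q\le p$ to renormalize the bulk contribution, the $N$-dependent prefactors collapse into an additive constant $C/R^d$ in the logarithmic derivative, independent of $p$ except through the size of $C$. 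This decoupling of spatial scale from moment order is what enables the sharp condition $\alpha>d+1$ uniformly in $p$, and exactly matches the $C/R^d$ correction appearing in the statement. The lower bound \eqref{main lb 1} finally follows from running the same ASTLO with $\chi_s$ decreasing in $s$, which amounts to time-reversal of the evolution.
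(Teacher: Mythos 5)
Your proposal correctly identifies the logarithmic renormalization of the ASTLO as the central new idea, and you observe correctly that the interaction $V$ commutes with number-diagonal observables and that the moment order and spatial scale must be handled by a coupled double induction. The high-level picture matches the paper. However, there is one load-bearing step that your sketch silently assumes and that the paper must (and does) earn.

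The differential inequality you write for $\frac{d}{ds}\log\br{A_s}_s$ has the remainder divided by $\br{A_s}_s$, which is an expectation \emph{at time $s>0$}. The two-sided density hypothesis \eqref{CD} only controls time-zero moments, so the denominator $\br{A_s}_s$ could a priori collapse to zero (the dynamics could empty the ball), in which case your Gr\"onwall argument is vacuous. This is not a detail: the entire point of the log-derivative is to \emph{divide} by the ASTLO, and that step is only legitimate once you know the ASTLO stays bounded below. The paper handles this by introducing a first bad time $T_1$ in \eqref{T1Def} --- the first time any scale's ASTLO expectation drops below $\lambda_1 r_l^d/\me$ --- and then proving $T_1\ge(R-r)/3v$ in \propref{Prop 4 lb and ub 1} by a contradiction argument that requires the lower propagation bound. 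Consequently the lower bound \eqref{main lb 1} is not a corollary obtained afterwards by ``time-reversing the cutoff''; it is an \emph{input} to the upper bound via the bootstrap, and the two estimates must be derived in lockstep using both localization functions $f_-$ and $f_+$. Your proposal treats the lower bound as an afterthought and the lower bound on $\br{A_s}_s$ as coming for free from \eqref{CD}, so the argument as written does not close.

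A secondary inaccuracy: your claim that a dyadic summation over hop scales bounds the remainder by ``a harmless factor that vanishes in the limit, giving $C=0$'' does not match what actually happens. The remainder $\cR_{1,l}$ does not vanish --- it is shown by a downwards multiscale induction (\propref{cor remainder}), using the already-established bound at larger scales, to be $\le C_{\cR,1}r_l^d$, comparable to the denominator. The ratio is $O(1)$, and $C=0$ for $p=1$ arises because the free parameter $\mu$ in \propref{Prop 1 ub 1} is tuned to absorb $C_{\cR,1}$ exactly; the extra power $s_l^{-\epsilon}$ from Lemma~\ref{new taylor} is what allows this once $r\ge\rho$, but it does not make the remainder small in absolute terms. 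You also do not mention the reduction (\propref{prop7}) from arbitrary $R/r$ to the range $1<R/r\le 4$, which is where the multiscale structure $(R_l,r_l)=(a^{b+l+1},a^{b+l})$ with $a\le 4$ is set up; without that restriction the geometric estimates on $\cR_{p,l}$ (e.g.\ the factors $a^{2d}$) do not give uniform constants.
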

		
		The proof of this theorem is found at the end of \secref{sec prop 1 and 2} for $p=1$ and        
		completed in \secref{second mom log}--\ref{second mom log b} for all $p$. It requires a number of new ideas to refine the multiscale ASTLO (adiabatic space time localization observables) scheme introduced in our previous paper \cite{lemm2023microscopic}. Since these are ideas that are implemented at every stage of a multiscale induction, the details are slightly technical and so we make an effort to give an overview of the proof strategy (and what is new compared to \cite{lemm2023microscopic}) using flowcharts in the respective proof sections.

		\begin{remark}
			\begin{enumerate}[label=(\roman*)]
				\item 
				The right-hand side of the inequalities \eqref{main ub 1} and \eqref{main lb 1}  contains a multiplicative error term. We emphasize that this term is harmless; it is uniformly bounded from above and below (and usually close to $1$) in the relevant situations: First, for large $R\gg 1$, $\me^{C/R^d}$ rapidly converges to $1$. Second, inside the range of validity $vt\le R-r$, the second term is bounded by $\me^{\frac{vt}{R-r}}\leq \me$, a universal constant. We only use the slightly more precise form $\me^{\frac{vt}{R-r}}$ because it converges to $1$ as $t\to 0$.
				
				\item 
				
				The proof controls particle propagation down to microscopic, i.e.,  $\mathcal O(1)$ length scales as desired. In the above theorem, the precise value of the minimal length scale is called $\rho$ and it depends on various parameters as shown in \eqref{rho 1}, but not on the system size parameters $\Lambda$ or $N$; in this sense it is $\mathcal O(1)$. This can be relaxed in two ways: First, since we use a downwards multiscale induction scheme that proceeds from larger to smaller length scales, the controlled density assumption \eqref{CD} can in fact be relaxed to only hold for the scales $\rho$ and larger scales. Second, for smaller $r$ one can still extend the upper bound \eqref{main lb 1} by embedding the relevant ball in $B_\rho$ at the sole cost of gaining a constant factor (so the right-hand side would no longer go to $1$ as $R\to \infty$).

				\item 
				One can equivalently view \eqref{main ub 1}--\eqref{main lb 1} as locally weighted-$\ell^2$ estimates for the solution $\psi_t$. Define
				\begin{align}
					\Norm{\psi}_{r,p}:=\norm{N_{B_r}^{p/2}\psi}_{\ell^2}.\notag
				\end{align}
				Then \thmref{thm4} can be formulated as a priori estimates for \eqref{SE}: for initial states $\psi_0$ satisfying \eqref{CD} and all $R>r\ge \rho$, 
				\begin{align*}
					\Norm{\psi_t}_{r,p}^2\le& Ce^{vt/(R-r)}\Norm{\psi_0}_{R,p}^2\qquad &&(0\le vt\le R-r),\\
					\Norm{\psi_t}_{R,p}^2\ge&  Ce^{-vt/(R-r)}\Norm{\psi_0}_{r,p}^2 && (0\le vt\le R-r).
				\end{align*}
			\end{enumerate}
		\end{remark}

		\subsection{Lieb-Robinson--type bounds}\label{secBddDen}
		We present our second main result, thermodynamically stable Lieb-Robinson bounds for long-range bosonic Hamiltonians.
		
		We begin with some  preliminary notations.    	We say that an operator $A$ acting on the bosonic Fock space $\cF\equiv \cF(\ell^2(\Lambda))$ (see \eqref{eq:fock}) is \textit{localized} in $X\subset \Lam$ if \begin{align}\label{A-loc}
			\sbr{A,a_x^\#}=0\qquad  (x\in X^\cp,a_x^\#=a_x, a_x^*).
		\end{align}
		Denote by $\supp A$ the intersection of all  $X$ s.th.~\eqref{A-loc} holds. 
		We say $A$ is localized in $X$ if and only if  $\supp A\subset X$. Define the set of operators
		\begin{equation}\label{BX}
			\cB_X :=\Set{A\in\cB(\cF):[A,N_\Lam]=0,\, \supp A\subset X},
		\end{equation}
		where  $\cB(\cF)$ is the space of bounded operators on $\cF$. 
		
		As usual for Lieb-Robinson bounds, we study the dual evolution to \eqref{SE}, the Heisenberg evolution on operators $A\in \mathcal B_X$, 
		\begin{equation}\label{1.5}
			\al_t(A):=\me^{itH_\Lam}A\me^{-itH_\Lam}.
		\end{equation}
		For a subset $S\subset \Lam$ and $A\in\mathcal B_S$, we define the \textit{localized (Heisenberg) evolution} by
		\begin{equation}\label{alloc}
			\al_t^S(A):=\me^{itH_S}A\me^{-itH_S},
		\end{equation}
		where $H_S$ is defined by \eqref{H} but with $S$ in place of $\Lam$. We recall $C_{J,\al}$, $C_{V,\al}$ defined in \eqref{CJdef}--\eqref{CVdef}.

		\begin{theorem}[Main result 2, light-cone approximation]\label{thm1}
			Assume that
			\begin{align}\label{alCond}
				{\al>3d+1}.
			\end{align} 
			Let $X\subset \Lam$. Assume that the initial state satisfies the following two conditions.
			\begin{itemize}
				\item \textit{Density bound:}  For some $\l>0$,
				\begin{align}\label{UDBp}
					\wzdel{ N_{B_r(x)}^q}\le(\l r^d)^q \qquad  (q=1,2,\,x\in\Lam,\, r\ge1).
				\end{align}
				\item \textit{Particle-free shell:}
				For some $X\subset\Lam$ and  $\xi\ge1$,
				\begin{equation}\label{locCond}
					N_{X_{2\xi}\setminus X}\psi_0=0.
				\end{equation}
			\end{itemize}
			Then, for every $v>2\kappa$,  there exists a positive constant
			\begin{align}
				\label{C1C2'}
				C=C(\al,d, C_{J,\alpha},C_{V,\alpha}, v,\l,X),	
			\end{align}
			such that for all $\xi\ge\max(2,\diam X)$,  $Y\subset\Lam$ with $\dist(X, Y)\ge2\xi$, and  operators $A\in \cB_X$, $B\in \cB_Y$,
			there holds
			{			\begin{align}\label{lcae-gen}
					|\br{B(\al_t(A)-\al_t^{X_\xi}(A))}_0|\le C\norm{A}\norm{B} \abs{t}  \xi^{-\beta} ,
			\end{align}}
			for all $\abs{t}< \xi/v$. The error exponent $\beta$ is given by
			\begin{align}
				\label{betaDef}\beta:=& \lfloor\al-3d-1\rfloor.
			\end{align}
		\end{theorem}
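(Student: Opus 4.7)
The strategy is to combine a Duhamel reduction with the multiscale ASTLO method and with the uniform particle-propagation bound of \thmref{thm4}. First, applying Duhamel's principle to interpolate between $\alpha_t^{X_\xi}$ and $\alpha_t$ gives
\begin{align*}
\alpha_t(A)-\alpha_t^{X_\xi}(A) = i\int_0^t \alpha_s\!\left(\big[H_\Lam-H_{X_\xi},\, \alpha_{t-s}^{X_\xi}(A)\big]\right) ds,
\end{align*}
so it suffices to control the commutator of the boundary Hamiltonian $W:=H_\Lam-H_{X_\xi}$ with the $X_\xi$-localized evolution of $A$, and then pair with $B$ on $\psi_0$. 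Since $\alpha_{t-s}^{X_\xi}(A)\in \cB_{X_\xi}$ and $B\in\cB_Y$ with $\dist(X,Y)\ge 2\xi$, only the cross terms in $W$, namely hopping and interaction pairs with one site in $X_\xi$ and one in $X_\xi^\cp$, can contribute nontrivially.

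Next, I would invoke the ASTLO machinery from \cite{faupin2022lieb,lemm2023information,lemm2023microscopic} to pass from expressions involving the intractable operator $\alpha_{t-s}^{X_\xi}(A)$ to number-operator expectations on $\psi_0$. Concretely, after sandwiching the commutator in the Duhamel formula by $B$ and $\psi_0$, each cross pair $(x,y)$ with $x\in X_\xi$, $y\in X_\xi^\cp$ contributes a term bounded, up to $\norm{A}\norm{B}$, by a time-integrated expectation of a bilinear or quartic expression in $a_x^\#,a_y^\#$; ASTLOs localize these to annular shells. The particle-free shell hypothesis \eqref{locCond} then kills all contributions supported in $X_{2\xi}\setminus X$, while the density upper bound \eqref{UDBp} and the ballistic propagation estimate in \thmref{thm4} (applicable since $v>2\kappa$ and the relevant time window is $|t|<\xi/v$) control the remaining number-operator expectations uniformly in $N$ and $\Lam$.

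It then remains to sum the long-range tails against these localized contributions. The decays $|J_{xy}|,|V_{xy}|\lesssim|x-y|^{-\al}$ are summed against boundary pairs and against dyadic annuli $\{X_{2^k\xi}\setminus X_{2^{k-1}\xi}\}_{k\ge 1}$ in $X_\xi^\cp$, whose particle content is controlled via \eqref{UDBp} combined with \thmref{thm4}. Counting degrees of freedom, one factor of $d$ is spent summing over $\partial X_\xi$, one on the volume of the interaction pair, and one on the annular volume $r^d$ entering the $\br{N_{Z_r}}_0$ bound, which together with the base power $\al-1$ explain the sharp condition $\al>3d+1$ and the exponent $\beta=\lfloor\al-3d-1\rfloor$.

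The main obstacle is the rigorous bookkeeping of this multiscale iteration in the presence of both long-range hopping and unbounded interactions: the second-quantized commutator $[W,\alpha_{t-s}^{X_\xi}(A)]$ produces products of up to four creation/annihilation operators, so closing the ASTLO estimates requires higher-moment number-operator bounds on $\psi_0$ in shells that themselves expand in time. This is precisely where the sharp $p$-moment propagation estimate of \thmref{thm4} (available for all $p$ under the single assumption $\al>d+1\le 3d+1$) enters as the essential ingredient that was previously missing, and its use should allow the induction to close with constants depending only on $\al,d,C_{J,\al},C_{V,\al},v,\l$, and $X$.
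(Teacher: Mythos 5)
Your high-level architecture (Duhamel reduction, splitting the boundary term into hopping and interaction pieces, ASTLO localization, multiscale iteration, use of the density bound) matches the paper. However, there is a genuine gap in the central claim that \thmref{thm4} is ``the essential ingredient that was previously missing.'' \thmref{thm4} requires the two-sided density assumption \eqref{CD}, in particular a \emph{lower} bound $\br{N_{B_r(x)}^q}_0\ge(\l_1 r^d)^q$ for all $x\in\Lam$. This is incompatible with the particle-free shell hypothesis \eqref{locCond}, which forces the density to vanish identically on $X_{2\xi}\setminus X$. Thus \thmref{thm4} cannot be invoked for states satisfying the hypotheses of \thmref{thm1}, and indeed the paper's proof of \thmref{thm1} never uses it. Instead, the paper quotes \thmref{thmTSMVB} (from \cite{lemm2023microscopic}), which is a particle-propagation bound for moments $p=1,2$ requiring only the density \emph{upper} bound \eqref{UDBp}, and which is valid precisely under $\al>3d+1$; since the Hamiltonian is quartic in $a^\#$, only first and second moments ever appear in the commutator estimates, so no higher-moment control is needed.

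A secondary discrepancy is the structure of the multiscale iteration. You describe summing the long-range tails against dyadic annuli $X_{2^k\xi}\setminus X_{2^{k-1}\xi}$, i.e.\ a direct decomposition in space. The paper's scheme (Secs.~\ref{sec73}--\ref{sec74}) is a downward induction on the light-cone radius $\xi$ itself, telescoping $\Rem_t=\bigl(A_t-A_t^{2\xi}\bigr)+\bigl(A_t^{2\xi}-A_t^\xi\bigr)$: the first piece is handled by the induction hypothesis at scale $2\xi$, and the second by estimating the coupling $R=H_{X_{2\xi}}-H_{X_\xi}-H_{X_{\xi,2\xi}}$ via the annular MVB (\thmref{thmAnnMVB}) and Lemmas~\ref{lemIest}--\ref{lemIIIest}. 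The base case is $\xi\gtrsim L$ with $\Lam\subset[-L/2,L/2]^d$, where the cruder bound from \cite{sigal2022propagation} applies. Your accounting of the $3d$ in the exponent is also not quite the one the paper uses: in the paper it decomposes as $d$ (from the constraint $n\le\al-d-1$ on the MVB remainder exponent) plus $2d$ (from $\xi^{dp}\l^p$ with $p=2$ in $\tau_p$ of \eqref{taupDef}), not as boundary area plus interaction volume plus annular volume. To repair the proposal, replace the appeal to \thmref{thm4} by the $p\le 2$ annular MVB under $\al>3d+1$, and recast the iteration as the telescoping downward induction on $\xi$.
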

		This result is proved in Section \ref{secPfLCAgen}. See \remref{remXdep} for discussion on the $X$-dependence in estimate \eqref{lcae-gen}. In particular, the constant \eqref{C1C2'} is independent of the size of $X$.

		It can be easily verified using  definitions \eqref{A-loc}, \eqref{alloc}, and the relation $[H_X,N_\Lam]=0$ that, for any $A\in\cB_X$, the local evolution   of $A$ remains localized,  as $A\in\cB_X$ implies   $\al_t^S(A)\in\cB_S$  for all $t$ and subset $S$ containing $X$. 
		In contrast, even for localized $A\in\cB_X$, the full evolution $\al_t(A)$ in general cannot remain localized \textit{in any nontrivial time interval}. In other words, under the full Heisenberg evolution, the support of an observable generally spreads over the entire domain $\Lam$ immediately for any $t>0$. 
		
		Theorem \ref{thm1} shows that under suitable conditions on the initial states, the full evolution $\al_t(A)$ can be well-approximated, \textit{up to a remainder uniform in $\Lam$}, by the localized evolution $\al_t^{X_\xi}(A)$ supported  inside the light cone $X_\xi$. This is closely related to a Lieb-Robinson bound on the commutator. Indeed, the latter follows straightforwardly, as we note  below.

		\begin{theorem}[Thermodynamically stable Lieb-Robinson bound for long-range bosons] 
			\label{thm3}
			Suppose the assumptions of \thmref{thm1} hold. {Then, for every $v>2\kappa$, there exists {$C=C(\al,d, C_{J,\alpha},C_{V,\alpha}, v,\l,X)>0$} s.th.~for all }  $Y\subset\Lam$ with $\dist(X, Y)\ge2\xi$ and  operators $A\in \cB_X,\,B\in \cB_Y$, 
			we have the following estimate for all $\abs{t}< \xi/v$:
			\begin{align}
				\abs{\br{[ \al_t (A), B] }_0}  
				\le C \norm{A}\norm{B}\abs{t}\xi^{-\beta}.\label{LRB}
			\end{align}
		\end{theorem}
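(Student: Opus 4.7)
\textbf{Proof proposal for Theorem \ref{thm3}.}
The plan is to deduce the Lieb-Robinson bound directly from the light-cone approximation of Theorem \ref{thm1}, using the crucial fact that the \emph{localized} Heisenberg evolution $\al_t^{X_\xi}(A)$ commutes with any observable supported outside $X_\xi$. First I would verify that $Y\cap X_\xi=\emptyset$ under our hypotheses: indeed, the definition $X_\xi=\{x\in\Lam:d_X(x)\le\xi\}$ and the assumption $\dist(X,Y)\ge 2\xi$ imply that every $y\in Y$ satisfies $d_X(y)\ge 2\xi>\xi$, so $y\notin X_\xi$. Since $\al_t^{X_\xi}(A)\in\cB_{X_\xi}$ (as noted in the paragraph following \thmref{thm1}, using $[H_{X_\xi},N_\Lam]=0$ together with $A\in\cB_X\subset\cB_{X_\xi}$) and $B\in\cB_Y$, the disjointness of supports and the canonical commutation relations yield
\begin{equation}\label{eq:commvanish}
[\al_t^{X_\xi}(A),B]=0.
\end{equation}

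Next, I would use \eqref{eq:commvanish} to rewrite the commutator under the evolution as a difference involving only the error term of \thmref{thm1}:
\begin{align*}
\br{[\al_t(A),B]}_0
&=\br{[\al_t(A)-\al_t^{X_\xi}(A),B]}_0\\
&=\br{(\al_t(A)-\al_t^{X_\xi}(A))\,B}_0-\br{B\,(\al_t(A)-\al_t^{X_\xi}(A))}_0.
\end{align*}
The second term on the right is bounded in absolute value by $C\norm{A}\norm{B}\abs{t}\xi^{-\beta}$ for $\abs{t}<\xi/v$ by a direct application of \eqref{lcae-gen}. For the first term I would take a complex conjugate to reduce it to the same form as \thmref{thm1}:
\begin{equation*}
\br{(\al_t(A)-\al_t^{X_\xi}(A))\,B}_0=\overline{\br{B^*(\al_t(A^*)-\al_t^{X_\xi}(A^*))}_0},
\end{equation*}
using that taking adjoints commutes with both $\al_t$ and $\al_t^{X_\xi}$. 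Since $A^*\in\cB_X$ and $B^*\in\cB_Y$ with $\norm{A^*}=\norm{A}$, $\norm{B^*}=\norm{B}$ (the class $\cB_X$ defined in \eqref{BX} is clearly closed under adjoints because the conditions $\supp A\subset X$ and $[A,N_\Lam]=0$ are self-adjoint), a second application of \eqref{lcae-gen} produces the same bound.

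Combining the two estimates and absorbing the factor of $2$ into the constant $C$ yields \eqref{LRB} for all $\abs{t}<\xi/v$, with $\beta$ and the constant $C$ inherited from \thmref{thm1}. There is essentially no obstacle here beyond bookkeeping: the heavy lifting — handling the long-range hopping and interactions, the unboundedness of local number operators, and the thermodynamic stability — is entirely absorbed into \thmref{thm1}, and what remains is the standard light-cone-to-commutator reduction, which goes through cleanly because the localized evolution genuinely preserves the support $X_\xi$ and because the particle-free shell hypothesis \eqref{locCond} is imposed on the initial state (not on the observables), so it transfers verbatim from \thmref{thm1} to the commutator statement.
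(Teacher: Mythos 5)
Your proposal is correct and follows essentially the same route as the paper: both arguments use the vanishing of $[\al_t^{X_\xi}(A),B]$ (from disjoint supports) to reduce the commutator to the remainder $\al_t(A)-\al_t^{X_\xi}(A)$, bound one ordering directly by \eqref{lcae-gen}, and obtain the other ordering by passing to adjoints and reapplying \thmref{thm1}. The only cosmetic difference is that you spell out the support-disjointness check $Y\cap X_\xi=\emptyset$ explicitly, while the paper simply invokes $Y\subset X_{2\xi}^\cp$.
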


		In contrast to prior results of this kind for long-range Bose-Hubbard Hamiltonians \cite{faupin2022lieb,sigal2022propagation,lemm2023information}, the right-hand side is fully independent of the total particle number, which makes it thermodynamically stable.

		\begin{proof}
			We introduce the remainder term corresponding to the l.h.s.~of \eqref{lcae-gen}:
			\begin{align}\label{Rem-def 1}
				\Rem_t(A):=\al_t(A)-\al_t^{X_\xi}(A).
			\end{align}
			Since the localized evolution $\al_t^{X_\xi}(A)$ and $B$ are respectively localized in $X_\xi$ and $Y\subset X_{2\xi}^\cp$,  it follows that  $[\al_t^{X_\xi}(A),B]=0$ and so \eqref{Rem-def 1} implies
			\begin{align}
				[\al_t(A), B] &=  [ \Rem_t(A), B]. \label{At-B-com}
			\end{align} 
			Furthermore, from definition \eqref{Rem-def 1} follow the relations %
			\begin{align}
				\label{RemRel}
				(\Rem_t(A))^*=\Rem_t(A^*),\qquad\abs{\br{\Rem_t(A)B}_0}=\abs{\br{B^*(\Rem_t(A))^*}_0 }.
			\end{align}
			By Theorem \ref{thm1}, for $v>2\kappa$ and $\abs{t}<\xi/v$, there exists $C>0$ s.th.
			\begin{equation}\label{62'}
				\begin{aligned}
					&\abs{\br{B\,\Rem_t(A)  }_0}\le   C \norm{A}\norm{B}\abs{t} {\xi^{-\beta} }.
				\end{aligned}
			\end{equation}
			Thanks to \eqref{RemRel}, replacing $A,\,B$ in \eqref{62'} by $A^*,\,B^*$ yields the same estimate on $\abs{\br{\Rem_t(A)B}_0}$: 
			\begin{align}
				\label{62''}
				\abs{\br{\Rem_t(A)\,B}_0}\le C \norm{A}\norm{B} \abs{t}\xi^{-\beta} .
			\end{align}
			The desired estimate \eqref{LRB} follows from equality \eqref{At-B-com}, the triangle inequality   $\abs{\br{[\Rem_t(A), B]}_0}\le		\abs{\br{\Rem_t (A)B }_0}+\abs{\br{B\,\Rem_t(A)  }_0}$, estimates \eqref{62'}--\eqref{62''}.
		\end{proof}

		We close with a few remarks on the results presented in this section.
		\begin{remark} \label{remXdep}
			\begin{enumerate}[label=(\roman*)]
				\item 
				
				The particle-free shell condition \eqref{locCond}, which goes back to \cite[Eq.~(1.4)]{faupin2022lieb}, stipulates that there exists an empty shell region of width $\xi$ around some fixed domain $X$. Here we prove for the first time a bound that is valid for long-range Hamiltonians that is thermodynamicall stable, i.e., independent of the total particle number. For the experts familiar with the early, seminal work \cite{schuch2011information}, we point out that their setting is different because it places all particles initially in a ball. This makes it not suitable to study thermodynamic stability, because in this setting the global particle number is the same as the particle number in the initial ball $B$, i.e., $\wzdel{N_\Lam^p} =\wzdel{N_B^p}$ for all $p\ge1$.
				\item The particle-free shell condition can be dropped, but at the steep price of adding an additional term on the right-hand side of the form
				\[
				{C_1\br{N_{X_{2\xi}\setminus X}N_\Lam }_0},\qquad C_1=C_1(\al,d, C_{J,\alpha},C_{V,\alpha}, v,X).
				\]
				See \secref{sec75} for details. We do not discuss this extension further.
				\item
				The dependence on $X$ in the constant $C$ in these results comes  from \eqref{C1C2'}. The latter, in turn, depends only  on the number of balls needed to cover the boundary of $X$; see the proof of \thmref{thmAnnMVB} for details. By adjusting the geometric construction of the ASTLO, it is possible to remove the $X$-dependence altogether.
				\item 		In \cite{kuwahara2024effective}, the authors obtained Lieb-Robinson bounds that are	independent of the strength of $V$ (i.e.,  $C_{V,\al}$ from \eqref{CVdef}) for finite-range interactions by resorting to a clever maneuver exploiting a suitable interaction picture. 
				This maneuver utilizes the finite-range assumption in a fundamental way and is not expected to hold in the same way for long-range interactions. As our focus in the present paper is on long-range Hamiltonians, it is natural to expect that dependence on the interaction $V$ enters in our bounds through $C_{V,\al}$. 
				\item  Through the technical lemmas in \secref{secPfLems},   for any integer $Q\ge1$, Thms.~\ref{thm1}--\ref{thm3} naturally extend to the class of Hamiltonians
				\begin{align}
					\label{HQ}
					H_{\Lam}^Q :=  \sum_{x,y \in \Lambda} J_{xy} a_x^*a_y + \frac12\sum_{\substack{x,y\in \Lam
					}}\sum_{q=1}^Q V^{(q)}_{xy} n_x^{q/2}n_y^{q/2},
				\end{align}	
				with real-symmetric matrices 	 $V^{(q)}=(V^{(q)}_{xy})$. We give the precise statement in \propref{propThm2gen}.

			\end{enumerate}
		\end{remark}

		\section{Proof of  \thmref{thm4} for \texorpdfstring{$p=1$}{p=1}}\label{sec prop 1 and 2}

		We begin with the proof of \thmref{thm4} for  the first moment case $p=1$.  The outline of the upcoming subsections is as follows: 
		\begin{itemize}
			\item In \secref{sec7.0}, we introduce the main ingredient of the proof, namely the ASTLOs, together with some preliminary notations. 
			\item In \secref{sec7.1}, \textit{assuming} the conclusion of \thmref{thm4} with $p=1$ and all 
			\begin{align}
				\label{aCond}
				1<	R/r<4,
			\end{align}
			we prove the same conclusion for all $R/r\ge 4$.  
			\item In \secref{first mom log}, we prove \thmref{thm4} for $p=1$ under condition \eqref{aCond} and for time up to some $T_1>0$.
			\item In \secref{secTest}, we prove that $T_1\ge (R-r)/v$.
			\item In \secref{sec proof thm4} we complete the proof of \thmref{thm4} for the first moment case, thanks to the lower bound on $T_1$ obtained in the previous section.
			\item In \secref{int astlo for p=1} we show an integral estimate for the ASTLO, that will constitute part of the induction basis to show \thmref{thm4} for $p\ge2$.                                                 
		\end{itemize}

		\subsection{ASTLOs and Their Properties}
		\label{sec7.0}
		Given $v>\kappa$ and $R>r\ge1$  (see \eqref{kappa}), we define the quantity ${\s=(R,r)}$ and write \begin{align}
			v':=&\frac{v+\kappa}{2}\label{v'def},\\
			\label{sSiDef}
			s_{\s}:=&\frac{2(R-r)}{3v},\\
			\omega:=&v-v'.\label{omega def}
		\end{align}
		Let $\om>0$. 	Consider the function class 
		\begin{equation}\label{classE}
			\mathcal{E}:=\left\{
			f\in C^{\infty} (\mathbb{R})\left\vert \begin{aligned} &f\geq0,\: f\equiv 0 \text{ on } (-\infty,\omega/2],f\equiv 1 \text{ on } [\omega,\infty) \\ &f'\geq 0,\:\sqrt{f'}\in C_c^{\infty} (\mathbb{R}),\:\supp f'\subset(\omega/2,\omega)
			\end{aligned}\right.\right\}.
		\end{equation}
		For functions $f\in\cE$, we introduce the following change of variables (c.f.~\figref{figfpm}):
		\begin{align}
			f\left(\frac{R-v't-\abs{x}}{s_{\s}}\right)&=:f_{-}(\abs{x}_{t,\s}),\label{f-}\\
			f\left(\frac{(R+2r)/3+v't-\abs{x}}{s_{\s}}\right)&=:f_{+}(\abs{x}_{t,\s})\label{f+}.
		\end{align}
		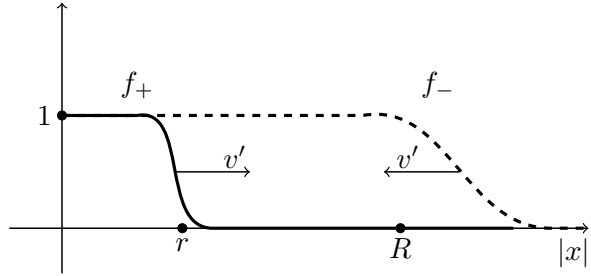
\begin{figure}[H]
			\centering
			
			\begin{tikzpicture}[scale=1]
				\draw [very thick] (0,1.5)--(1,1.5) [out=10, in=-180] to (2,0)--(6,0);
				\draw [dashed, very thick] (0,1.5)--(4,1.5) [out=10, in=-180] to (6.5,0)--(7,0);
				\draw [->,line width=0.2mm](0,-0.6)--(0,3);
				\draw [->,line width=0.2mm](-0.7,0)--(7,0);
				\draw [->,line width=0.2mm](1.5,0.75)--(2.5,0.75);
				\draw [<-,line width=0.2mm](4.28,0.75)--(5.28,0.75);
				\node[scale=1.1] at (4.6, 0.95) {$v'$};
				\node[scale=1.1] at (2.3, 0.95) {$v'$};
				\node[scale=1.1, below] at (1.6, 0) {$r$};
				\node[scale=1.1, below] at (4.5, 0) {$R$};
				\draw [fill] (1.6,0) circle [radius=0.06];
				\draw [fill] (4.5,0) circle [radius=0.06];
				\node[scale=1.1] at (1, 1.9) {$f_+$};
				\node[scale=1.1] at (5, 1.9) {$f_{-}$};
				\node[scale=1.1, left] at (0, 1.5) {$1$}; 
				\node[scale=1.1, below] at (6.8, 0) {$\abs{x}$};
				\draw [fill] (0,1.5) circle [radius=0.06];
				
			\end{tikzpicture} 
			\caption{Schematic diagram illustrating the movement of $f_+$ and $f_-$.}   \label{figfpm}
		\end{figure}

			

				
				


		Given these functions, we define the ASTLO (\textit{adiabatic space-time localization observables}) as
		\begin{align}\label{ASTLOdef}
			\Nf[]{f_{\#}}{t}{\s}:=\sum_{x\in\Lambda}f_\#(\abs{x}_{t,\s})n_x.
		\end{align}

		{In essence, $f_-(\abs{x}_{t,\s})$ will allow us to obtain upper bounds on the propagation, controlling particles going in the ball $B_{R}$, while $f_+(\abs{x}_{t,\s})$ will allow us to control particle moving out of $B_{r}$, which will give us the lower bound on particle propagation.}
		
		{The basic properties of the ASTLOs are summarized below as several lemmas.}
		First, we relate the ASTLOs to the number operator associated to the propagation regions.
		\begin{lemma}\label{ASTLO and N}
			Given $R>r>0$ and function $f\in\cE$,
			the following holds for all $t\le s_\si$:
			\begin{align}
				{\Nf{f_-}{0}{\s}}\le&{N_{B_{R}}} \label{f and B at t=0},\\
				{\Nf{f_-}{t}{\s}}\,\ge&{N_{B_{r}}}\label{f and b at t}, \\
				{ N_{f_+,0\s} } \;\ge & N_{B_{r}} \label{g and b 0},\\
				{ N_{f_+,t\s}}\;\,\le&   N_{B_{R}}  \label{g and B t}.
			\end{align}
		\end{lemma}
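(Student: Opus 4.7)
The plan is to reduce each of the four operator inequalities to a pointwise comparison of scalar coefficients. Since $\{n_x\}_{x\in\Lam}$ is a commuting family of non-negative operators simultaneously diagonalized in the occupation-number basis, any inequality of the form $\sum_x c_x n_x\le \sum_x c'_x n_x$ on $\cF$ reduces to the pointwise assertion $c_x\le c'_x$ for every $x\in\Lam$. Both the ASTLOs in \eqref{ASTLOdef} and the number operators $N_{B_r}$, $N_{B_R}$ are of this diagonal form, so each of \eqref{f and B at t=0}--\eqref{g and B t} becomes an elementary assertion about the profile of $f\in\cE$ combined with the range of $|x|_{t,\s}$.

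First I would dispatch the two upper bounds \eqref{f and B at t=0} and \eqref{g and B t}, which amount to checking that the cutoffs $f_-(|x|_{0,\s})$ and $f_+(|x|_{t,\s})$ vanish for $|x|>R$: since $f\equiv 0$ on $(-\infty,\omega/2]$ and $0\le f\le 1$ everywhere, it suffices to verify that the argument of $f$ falls below $\omega/2$. For \eqref{f and B at t=0} this is immediate because the argument $(R-|x|)/s_\s$ is negative. For \eqref{g and B t} I would decompose $v=v'+\omega$ and invoke the defining relation $vs_\s=2(R-r)/3$ implicit in \eqref{sSiDef}; the constraint $t\le s_\s$ then forces the argument to be bounded above by $v'-v=-\omega$, which is $\le\omega/2$.

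For the lower bounds \eqref{f and b at t} and \eqref{g and b 0} I would verify that the cutoffs equal $1$ on $B_r$, i.e.\ that the arguments of $f$ are $\ge\omega$ for $|x|\le r$. Using $(R-r)/s_\s=3v/2$, the two checks reduce respectively to $v-\kappa/2\ge\omega$ (for \eqref{f and b at t}, worst case $t=s_\s$, $|x|=r$) and $v\ge 2\omega$ (for \eqref{g and b 0}, worst case $|x|=r$); after substituting $\omega=(v-\kappa)/2$ coming from \eqref{v'def}--\eqref{omega def}, both inequalities are equivalent to $v\ge\kappa$ (with slack in the first), which is part of the setup.

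I do not expect a substantive obstacle: the lemma is an algebraic consistency check confirming that the parameters $v,v',\omega,s_\s$ chosen in \eqref{v'def}--\eqref{omega def} make the moving fronts of $f_\pm$ sweep from $|x|\simeq R$ down to $|x|\simeq (R+2r)/3>r$ (and from $|x|\simeq r$ out to $|x|\simeq R$) in time $s_\s$, without crossing either ball prematurely. The only mild bookkeeping is to keep track of whether $v'$ or $v=v'+\omega$ controls the leading edge versus the trailing edge of each front, which is transparent from \figref{figfpm}.
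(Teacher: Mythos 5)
Your proposal is correct and follows essentially the same route as the paper: reduce each operator inequality to a pointwise comparison of $f_\pm(|x|_{t,\sigma})$ against $\chi_{B_R}$ or $\chi_{B_r}$, then use that $f\equiv 0$ below $\omega/2$ and $f\equiv 1$ above $\omega$ together with the bookkeeping $s_\sigma=2(R-r)/(3v)$, $v'=(v+\kappa)/2$, $\omega=v-v'$. One small arithmetic slip in your final parenthetical: after substituting $\omega=(v-\kappa)/2$, the conditions $v-\kappa/2\ge\omega$ and $v\ge 2\omega$ reduce to $v\ge 0$ and $\kappa\ge 0$ respectively (not to $v\ge\kappa$), but both are of course guaranteed by the standing setup, so the argument goes through unchanged.
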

		The proof can be found in Appendix \ref{proof lemma aslto and N}.

		Next, we relate $N_{f_+,t\si}$ and $N_{f_-,t\si}$ with each other:
		\begin{lemma}\label{lem7.3}
			Given $R>r>0$ and function $f\in\cE$,
			\begin{align}
				{\Nf{f+}{t}{l}}\le{\Nf{f-}{t}{l}},\qquad 		 t\le\frac{ R-r}{3v'}.\notag
			\end{align}
			
		\end{lemma}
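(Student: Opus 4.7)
\medskip

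\noindent\textbf{Proof plan for Lemma \ref{lem7.3}.} The plan is to reduce the operator inequality to a pointwise inequality between the two cutoff profiles $f_+$ and $f_-$ evaluated at $\abs{x}_{t,\s}$. Since by definition
\[
N_{f_\pm,t\s}=\sum_{x\in\Lam}f_\pm(\abs{x}_{t,\s})n_x,
\]
and the number operators $n_x$ are positive and mutually commuting, it suffices to establish the scalar inequality $f_+(\abs{x}_{t,\s})\le f_-(\abs{x}_{t,\s})$ for every $x\in\Lam$. Subtracting the two equations then yields the desired operator bound by summing positive contributions.

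To obtain the scalar inequality, I would exploit the fact that every $f\in\cE$ is monotone nondecreasing (since $f'\ge0$). Hence $f_+(\abs{x}_{t,\s})\le f_-(\abs{x}_{t,\s})$ will follow once I verify the corresponding ordering of arguments
\[
\frac{(R+2r)/3+v't-\abs{x}}{s_{\s}}\le \frac{R-v't-\abs{x}}{s_{\s}}.
\]
After cancelling the common term $-\abs{x}$ and multiplying by the positive denominator $s_\s$, this is equivalent to $2v't\le R-\tfrac{R+2r}{3}=\tfrac{2(R-r)}{3}$, i.e.\ $t\le\tfrac{R-r}{3v'}$, which is precisely the time window stated in the lemma.

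There is no genuine obstacle here: the computation is a one-line check once the pointwise reduction is made. The only subtlety worth flagging is that one must be mindful of the sign of $s_\s$ (which is positive by \eqref{sSiDef} since $R>r$ and $v>0$) when clearing denominators, and of the monotonicity direction of $f$, both of which are built into the definition of the class $\cE$ in \eqref{classE}. No further properties of $f$ (such as smoothness or support of $f'$) are needed for this lemma.
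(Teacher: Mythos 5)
Your proof is correct and follows essentially the same route as the paper: reduce to the pointwise inequality $f_+(\abs{x}_{t,\s})\le f_-(\abs{x}_{t,\s})$, invoke monotonicity of $f\in\cE$, and verify the ordering of arguments, which rearranges to $t\le (R-r)/(3v')$. If anything, your phrasing is slightly cleaner than the paper's in that you use only the one-directional implication (monotonicity gives the needed direction) rather than asserting a full equivalence.
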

		The proof can be found in Appendix \ref{secPfLem7.3}.
		
		Lastly, we record an important analytical property for functions in the class $\cE$:
		{		\begin{lemma}\label{new taylor}
				Given any $0<\eps \le 1$ and  function $f\in \cE$, there exists $C_{f}>0$ s.th., 			for $u:=\sqrt{f'}
				$,
				\begin{align}\label{fExp}
					\abs{f(x)-f(y)}\le u(x)u(y)\abs{x-y}+C_f\abs{x-y}^{1+\eps}\quad  (x,y\in\R).
				\end{align}
				
		\end{lemma}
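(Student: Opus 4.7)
The plan is to apply the fundamental theorem of calculus and then perform a symmetric expansion of $u^2$ around its endpoint values. Without loss of generality assume $x \le y$. Since $f' = u^2$, one has $f(y)-f(x) = \int_x^y u(t)^2\,dt$, reducing the task to comparing this integral with the target product $u(x)u(y)(y-x)$.

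The key algebraic manipulation is the identity
\[
u(t)^2 - u(x)u(y) \;=\; u(t)\bigl(u(t)-u(x)\bigr) + u(x)\bigl(u(t)-u(y)\bigr).
\]
Since $f \in \cE$ forces $u = \sqrt{f'} \in C_c^\infty(\R)$, both $\|u\|_\infty$ and $\|u'\|_\infty$ are finite, and the mean value theorem yields $|u(t)-u(x)| \le \|u'\|_\infty |t-x|$ (similarly at $y$). Hence
\[
\bigl|u(t)^2 - u(x)u(y)\bigr| \;\le\; \|u\|_\infty \|u'\|_\infty \bigl(|t-x|+|t-y|\bigr),
\]
and integrating from $x$ to $y$ gives
\[
\bigl|f(y)-f(x) - u(x)u(y)(y-x)\bigr| \;\le\; \|u\|_\infty \|u'\|_\infty\, (y-x)^2,
\]
which already proves the lemma in the limiting case $\eps = 1$.

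To promote this to arbitrary $\eps \in (0,1]$, I split on the size of $y-x$. If $y-x \le 1$ then $(y-x)^2 \le (y-x)^{1+\eps}$ and the intermediate estimate yields the desired form with $C_f = \|u\|_\infty\|u'\|_\infty$. If $y-x > 1$, I instead use the crude bound $|f(y)-f(x)| \le 1$ (since $f$ takes values in $[0,1]$ by definition of $\cE$); discarding the nonnegative term $u(x)u(y)(y-x)$ and noting $(y-x)^{1+\eps} > 1$, the desired inequality holds as soon as $C_f \ge 1$. Taking $C_f := \max(1, \|u\|_\infty \|u'\|_\infty)$ closes both regimes simultaneously.

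I do not anticipate a substantive obstacle: the whole argument is a one-variable Taylor estimate adapted to the structural feature that $f'$ factors as $u \cdot u$ with $u$ smooth and compactly supported. The only point requiring a moment's care is that the estimate must hold for \emph{all} $x,y \in \R$, including configurations in which one or both endpoints lie outside $\supp u$ where $u$ (and possibly $f'$) vanish; the two-regime split on $y-x$ absorbs all such cases without additional geometric bookkeeping.
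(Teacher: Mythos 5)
Your proof is correct and runs parallel to the paper's, but with two noteworthy differences. First, instead of the mean value theorem and the asymmetric decomposition $f'(\xi)=u(x)^2+(f'(\xi)-f'(x))$ followed by $u(x)^2=u(x)u(y)+u(x)(u(x)-u(y))$, you use the fundamental theorem of calculus and split the integrand symmetrically via $u(t)^2-u(x)u(y)=u(t)(u(t)-u(x))+u(x)(u(t)-u(y))$; this gives the clean intermediate estimate $|f(y)-f(x)-u(x)u(y)(y-x)|\le\|u\|_\infty\|u'\|_\infty(y-x)^2$ using only the Lipschitz bound on $u$. Second, you reach the case $\eps<1$ by the dichotomy $|x-y|\le 1$ (so $|x-y|^2\le|x-y|^{1+\eps}$) versus $|x-y|>1$ (so $|f(x)-f(y)|\le1<|x-y|^{1+\eps}$), rather than the paper's H\"older-seminorm comparison $|g|_{0,\eps}\le|g|_{0,1}$ for $g\in\{f',u\}$. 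Your dichotomy is arguably the safer route: the seminorm inequality as stated requires a little extra care for widely separated $x,y$, where boundedness of $g$ must be used in addition to its Lipschitz constant, whereas your split handles that regime automatically via $0\le f\le1$. Both arguments ultimately rest on the same structural facts, that $u=\sqrt{f'}$ is bounded and Lipschitz with compact support, and both produce a finite $C_f$ depending only on $f$ (and $\eps$).
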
}
		This lemma is proved in Appendix \ref{secPfExp}

		\subsection{Proof of Particle Propagation Estimates (\thmref{thm4}) for \texorpdfstring{$R/r$}{R/r} large}
		\label{sec7.1}

		The main result of this section is the following:
		\begin{proposition}\label{prop7}
			Assume \eqref{main ub 1}--\eqref{main lb 1} hold {for some $v_0>0$ and $R/r$ satisfying \eqref{aCond}}. 	Then \eqref{main ub 1}--\eqref{main lb 1} hold  {for $v:=4v_0$ and all $R/r>4$}. 
		\end{proposition}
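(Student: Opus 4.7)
The plan is to extend the validity of \eqref{main ub 1}--\eqref{main lb 1} from the restricted range \eqref{aCond} to the regime $R/r>4$ by inserting a single intermediate radius and invoking operator monotonicity of the number operator; the increase of the effective speed from $v_0$ to $v=4v_0$ will come entirely from the geometry of this insertion.

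\textbf{Main idea.} Given $R>4r$, I would introduce the intermediate radius $r':=R/2$, so that $R/r'=2$ lies in $(1,4)$. The hypothesis then applies to the pair $(R,r')$ at speed $v_0$ and yields
\begin{equation*}
\langle N_{B_{r'}}^p\rangle_t \le \langle N_{B_R}^p\rangle_0\exp\!\left\{\frac{C}{R^d}+\frac{v_0 t}{R-r'}\right\},\qquad 0\le v_0 t\le R-r'=R/2.
\end{equation*}
Since $r<R/4<R/2=r'$, we have $B_r\subset B_{r'}$; the operators $N_{B_r}$ and $N_{B_{r'}}$ commute and satisfy $N_{B_{r'}}-N_{B_r}=N_{B_{r'}\setminus B_r}\ge 0$, so by functional calculus $N_{B_r}^p\le N_{B_{r'}}^p$ for every $p\ge 1$. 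Taking expectations in $\psi_t$ gives $\langle N_{B_r}^p\rangle_t\le\langle N_{B_{r'}}^p\rangle_t$, which will combine with the displayed bound to produce the target upper estimate.

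\textbf{Matching speed and time windows.} The elementary estimate
\[
\frac{v_0}{R-r'}=\frac{2v_0}{R}\le\frac{4v_0}{R-r}=\frac{v}{R-r},
\]
valid because $R-r<2R$, shows that the inner exponent is dominated by the target exponent. For the time window, whenever $0\le vt=4v_0 t\le R-r<R$ we have $v_0 t\le(R-r)/4<R/4<R/2=R-r'$, so the inner hypothesis is applicable throughout the target window. Together with the previous paragraph this will establish \eqref{main ub 1} with $v=4v_0$.

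\textbf{Lower bound and obstacles.} The lower bound \eqref{main lb 1} will follow symmetrically: apply the assumed form of \eqref{main lb 1} to $(R,r')$, then use the reverse-direction consequence $\langle N_{B_{r'}}^p\rangle_0\ge\langle N_{B_r}^p\rangle_0$ of the same operator inequality, together with the identical speed and time comparisons. Since the argument hinges on a single substitution, the main obstacle is only bookkeeping of the auxiliary parameters. In particular, one takes the inner $\delta_0=1$ (compatible with the pair $(R,R/2)$) and enlarges the threshold $\rho$ in the conclusion so that $R/2$ still exceeds the inner minimum scale. Because the density hypothesis \eqref{CD} is uniform in the scale $r\ge 1$, no further difficulty arises, and no new ideas beyond the monotonicity insertion are required.
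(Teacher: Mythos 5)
Your proof is correct and follows essentially the same strategy as the paper's: insert an intermediate radius so that the hypothesis applies at a pair whose ratio lies in the valid range \eqref{aCond}, then use monotonicity of $N_{B_\cdot}^p$ in the radius to transfer the bound to the target pair $(R,r)$. The paper uses the dyadic intermediates $(2^k,2^{k+1})$ with $r<2^k<2^{k+1}\le R<2^{k+2}$, whereas you use the single intermediate $r'=R/2$; your variant is slightly cleaner since it keeps the outer radius $R$ fixed and thereby avoids the harmless factor $2^d$ that the paper picks up when replacing $2^{k+1}$ by $R$ in the $C/R^d$ term.

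One small bookkeeping slip: you write that one takes the inner $\delta_0=1$, but $R-r'=R/2=r'$ gives only $R-r'>\delta_0 r'$ for $\delta_0<1$, not $\delta_0=1$; any fixed $\delta_0<1$ works, so the argument goes through. Also, the enlargement of $\rho$ you mention is not actually needed: if $r\ge\rho$ and $R>4r$ then $r'=R/2>2\rho>\rho$ automatically, so the inner hypothesis is applicable without changing the threshold.
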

		\begin{proof}
			
			We  start by showing \eqref{main ub 1}.
			Since $R/r> 4$, it follows that there exists some   $k=k(R,r)>0$, such that 
			\begin{align}\label{case 2 again}
				{r<2^{k}<2^{k+1}\le R<2^{k+2}}.
			\end{align}
			Applying \eqref{main ub 1} for $v_0$ and $(r',R')=(2^{k},2^{k+1})$, which is possible since $R'/r'=2$ satisfies \eqref{aCond},  we find
			\begin{align}
				\br{N^p_{B_{2^{k}}}}_t\le \exp\Set{\frac{C}{2^{d(k+1)}}+\frac{v_0t}{2^{k}}}\br{N^p_{B_{2^{k+1}}}}_0\qquad(0\le v_0t\le 2^{k}). \label{715}
			\end{align}
			Using \eqref{case 2 again} as a `bridge', we deduce from \eqref{715} that
			\begin{align} 
				\br{N^p_{B_{r}}}_t\le \exp\Set{\frac{C}{2^{d(k+1)}}+\frac{v_0t}{2^{k}}}\br{N^p_{B_{R}}}_0\qquad(0\le v_0t\le 2^{k}).  \label{al}
			\end{align}
			Again by \eqref{case 2 again} we have
			\begin{align}\label{717}
				R-r\le R\le 4\cdot 2^{k}.
			\end{align}
			Using \eqref{717} to interchange $2^k$ in \eqref{al} to $R-r$, and \eqref{case 2 again} to interchange $R$ and $2^{k+1}$, we find
			\begin{align}
				\br{N^p_{B_{r}}}_t\le \exp\Set{\frac{2^dC}{R^d}+\frac{4v_0t}{R-r}}\br{N^p_{B_{R}}}_0\qquad(0\le 4v_0t\le R-r).  \label{717'}
			\end{align}
			From \eqref{717'} we conclude  \eqref{main ub 1} with $v=4v_0$. We can show the lower bound \eqref{main lb 1} via similar reasoning.
		\end{proof} 
		
		\subsection{Proof of Particle Propagation Estimate (\thmref{thm4}) assuming $R/r$ small}\label{first mom log}
		
		In this section we derive upper and lower bound on the time evolution of the number operator restricted on the propagation regions, assuming $R$ and $r$ satisfy \eqref{aCond}. The proof strategy builds on the work done in \cite{lemm2023microscopic}, with new fundamental improvements. 
		\begin{itemize}
			\item  The first step consists in obtaining time estimates for the ASTLO, and here lies the first new key idea. Instead of differentiating the ASTLO themselves, we differentiate their logarithm. This procedure is fundamental as it creates an additional term at denominator
			\begin{align}\label{after int intro}
				\pm\frac{d}{dt}\log\br{\Nf{f_\mp}{t}{l}}_t\le
				&(\kappa-v') s_l^{-1}\frac{\br{ N_{f_\mp',tl}}_t}{\br{\Nf{f_\mp}{t}{l}}_t}  \notag\\
				&+\frac{ C_{f}}{{s_l^{1+\eps}}} \left(\frac{\br{N_{B_{{R_l}}}}_t}{\br{\Nf{f_\mp}{t}{l}}_t}+\sum_{x\in B_{{R_l}-\eps s/2}}\sum_{y\in\Lambda}\abs{J_{xy}}|x-y|^{1+\eps}\frac{\br{ n_y}_t }{\br{\Nf{f_\mp}{t}{l}}_t}\right),
			\end{align}
			which will allow us to better control the contribution from far away particles.
			\vspace{1.5mm}
			\item The second new tool comes into play as we want to avoid the new term in denominator of \eqref{after int intro} becoming too small. To this end we introduce a first bad time, $T_1$, until which all scales are well-behaved, namely not too empty, and $\br{\Nf{f_\mp}{t}{l}}_t$ is big enough.
			\vspace{1.5mm}
			\item To show that the bad time  $T_1$ is not too small compared to the  length scales at play we need to derive lower bound for the particle propagation until time $T_1$. This is where the $f_+$ comes into play.
			\vspace{1.5mm}
			\item To be able to show the lower bound on particle propagation we need to first derive upper bound on the particle propagation, here is where $f_-$ is necessary.
			\vspace{1.5mm}
			\item To show the upper bound we need to bound the remainder term between bracket appearing in \eqref{after int intro}.  We bound the remainder term thanks to a downward multiscale induction as in \cite{lemm2023microscopic}. Compared to \cite{lemm2023microscopic} we are able to better control the remainder thanks to the additional term at denominator that was produced in the first step.
		\end{itemize}
		
		In the flowchart in \figref{flow chart 2} we give a schematic idea of the proof of \thmref{thm4} for $p=1$ and small $R/r$. Notice that this section is the starting point for the proof of \thmref{thm4} for higher $p$.

		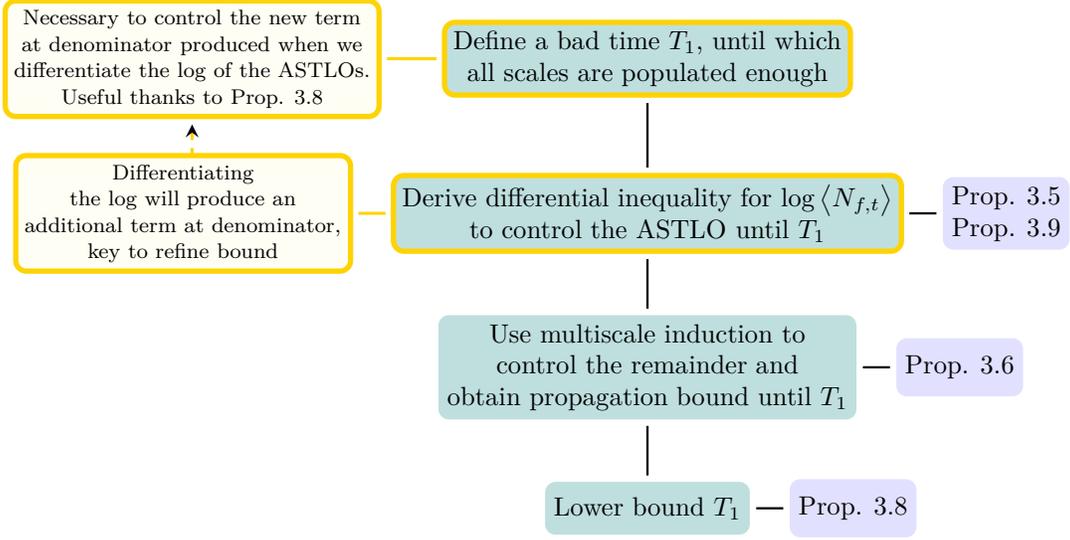
\begin{figure}[h]
			\centering
			\begin{tikzpicture}[
				node distance=1.5cm and 3cm,  line width=0.3mm,
				box/.style = {rounded corners, fill=blue!12, align=center, minimum height=2.2em},
				box2/.style = {rounded corners, fill=green!25, align=center, minimum height=2em},
				arrow/.style = {thick, -Stealth},
				every label/.append style = {font=\scriptsize, align=center},
				box3/.style = {rounded corners, fill=purple!20, align=center, minimum height=2.3em, font=\large}  , 
				highlight/.style={ line width=2pt, draw=yellow!85!red},
				box4/.style={highlight,rounded corners, fill=yellow!5, align=center, font=\Small},
				arrow/.style={-stealth,shorten <=2pt, shorten >=2pt},
				shorten <=2pt, shorten >=2pt,
				myfit/.style = {rounded corners, draw, dashed},
				scale=0.7
				]
				
				\node [box2, highlight] (badtime) {Define a bad time $T_1$, until which\\ all scales are populated enough};
				\node [box4, font=\Small,left=8mm of badtime] (new 2) {Necessary to control the new term\\ at denominator  produced when we\\differentiate the log of the ASTLOs.\\
					Useful thanks to Prop. \ref{Prop 4 lb and ub 1} };
				
				\node [box2,highlight, below=1cm of badtime] (step 1) { Derive differential inequality for $\log\br{ N_{f,t}}$\\ to control the ASTLO until $T_1$};
				\node[box, right=.5cm of step 1] (step1 1) {Prop. \ref{Prop 1 ub 1} \\Prop. \ref{prop ind basis}};
				\node[box4, left=5mm of step 1, font=\Small] (new 1) {Differentiating\\the log will produce an\\additional term at denominator,\\ key to refine bound};

				\node [box2, below=0.8 cm of step 1] (ms) {Use multiscale induction to\\ control the remainder and \\obtain propagation bound until $T_1$};
				\node [box, right=0.5cm of ms] (ms 1) {Prop. \ref{cor remainder}};
				
				\node [box2, below=0.8cm of ms] (lb t) {Lower bound $T_1$};
				\node [box, right=0.5cm of lb t ] (lb t1) {Prop. \ref{Prop 4 lb and ub 1}};
				\draw [arrow, dashed, draw=yellow!85!red, line width=0.4mm] ([yshift=-8mm]new 2.south)--(new 2);
				\draw [ draw=yellow!85!red, line width=0.4mm] (new 2)--(badtime);
				\draw [ draw=yellow!85!red, line width=0.4mm] (new 1)--(step 1);
				\draw (badtime)--(step 1);
				\draw (step 1)--(ms);
				\draw (ms)--(lb t);
				
				\draw (lb t1)--(lb t);
				\draw (ms)--(ms 1);
				\draw (step 1)--(step1 1);
				\node[above=5mm of badtime] { };

			\end{tikzpicture}
			\caption{Flowchart illustrating the proof of \thmref{thm4} for $p=1$ and small $R/r$. The yellow boxes correspond to new ideas that go beyond our prior multiscale induction scheme \cite{lemm2023microscopic}.}
			\label{flow chart 2}
		\end{figure}

		To have a better understanding on how this section fits in with the complete proof the reader can refer to \figref{flow chart 1}.

		\newcommand{\ctE}{\tilde \cE}	
		
		We start with some notation. 
		Let $R>r\ge1$. Set \begin{align}
			\notag
			a:=R/r,\qquad b:=\log_{R/r}r,
		\end{align} or, equivalently,
		\begin{align}\notag
			R=a^{b+1},\qquad r=a^b.
		\end{align}
		Notice that $b\ge0$ for $R>r\ge 1$. For any $l\in\N_0$, we define
		\begin{align}
			\s_l\equiv(R_l,r_l):=&(a^{b+l+1},a^{b+l})\label{silDef},\\ 
			s_l:=&s_{\si_l}= \frac{2(R_l-r_l)}{3v},\label{slDef}
		\end{align}
		where \eqref{slDef} is according to \eqref{sSiDef}.
		Recall definition \eqref{classE} of the function class $\cE$ . From this section onwards, we focus on those function in $\cE$, such that   
		\begin{align}\label{Cf cond}
			C_{f}\le 2 \tilde C:=2\inf_{g\in\cE}C_g,
		\end{align}
		where the constant $C_g$, is the same appearing in \eqref{fExp}, and is defined as in \eqref{Cf def}. We name the set of function in $\cE$ satisfying \eqref{Cf cond}, $\tilde \cE$. Condition \eqref{Cf cond} is reasonable since $C_g$ is a positive and finite constant.
		Recall definitions \eqref{f-}--\eqref{f+}.
		For any $l\in\N_0$ and $f\in\ctE$ we  define
		\begin{align}
			f_\pm(\abs{x}_{t,l}):=&f_\pm(\abs{x}_{t,\si_l})\notag,\\
			\Nf[]{f_{\pm}}{t}{l}:=&\sum_{x\in\Lambda}f_\pm(\abs{x}_{t,l})n_x,\notag\\
			\cG:=& \left\{f_+,f_-\right\},\notag
		\end{align}
		and the quantity
		\begin{align}
			&T_1:=\inf \left\{t>0 \,:\, \exists\, \tilde l\in\N_0 \text{ s.t. }\min_{u\in\cG}\frac{\br{\Nf{u}{t}{\tilde l}}_t}{\l_1 {r^d_{\tilde l}}}\le \frac{1}{\me} \right\}.\label{T1Def}
		\end{align}
		
		Finally, let
		\begin{align}\label{t0}
			{			t_l:=\min\left(\frac{R_l-r_l}{3v},T_1\right)=\min\left(\frac{(a-1)a^{b+l}}{3v},T_1\right).}
		\end{align}
		{The definition of \eqref{T1Def} essentially says that until time $t_l$, particle distribution at every scale is not too dilute. More precisely, definitions \eqref{T1Def}--\eqref{t0} imply
			\begin{align}
				\label{tlprop}
				\min_{u\in\cG}\frac{\br{\Nf{u}{t}{l}}_t}{\l_1 {r^d_{ l}}}\ge \frac{1}{\me} \qquad (l\in\mathbb{N}_0,t\le t_l).
			\end{align}
		}

		{In the remainder of this section, we prove bounds on particle propagation up to time $t_l$.}

		For $\al>d+1$, define
		\begin{align}
			\label{epsDef}
			\eps:=\frac{\al-d-1}{2}>0.
		\end{align}
		We have the following result:
		
		\begin{proposition}\label{Prop 1 ub 1}
			{Let $\al>d+1$ and let \eqref{CD} hold for $p=1$ and some $0<\l_1<\l_2<\infty$.} 
			Then, for any ${v} > \kappa$, $f\in\ctE$, 
			{$\delta_0\in(0,1)$, $\mu>0$, and $\eps>0$ from \eqref{epsDef}, there exists \begin{align}
					\label{rhoDef}
					{\rho=\rho(v,\delta_0,\mu,\eps)}>0,
				\end{align} 
				such that for all $l\in\mathbb{N}_0$, {$R>r\ge\rho$} with $R-r>\delta_0r$,} and $t\le t_l$
			\begin{align}
				\pm\br{\Nf{f_\mp}{t}{l}}_t&\le\pm\br{\Nf{f_\mp}{0}{l}}_0 {\exp \left\{\pm\frac{  t}{\mu \l_1s_{l}r_l^d } \sup_{0\le \ta\le t_l}\cR_{1,l}(\ta)\right\}},\label{bound integral final f}
			\end{align}
			where we set
			\begin{align}
				&\cR_{1,l}(\ta):={\br{N_{B_{r_l}}}_\ta}+\sum_{x\in B_{{R_l}}}\sum_{\substack{y\in\Lambda}}\abs{J_{xy}}|x-y|^{1+\eps}{\br{ n_y}_\ta }.\label{P1Def}
			\end{align} 
			
		\end{proposition}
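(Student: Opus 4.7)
The strategy sketched in the section introduction and flowchart \figref{flow chart 2} suggests two new ingredients beyond the multiscale scheme of \cite{lemm2023microscopic}: (i) differentiate the logarithm $\log \br{\Nf{f_\mp}{t}{l}}_t$, which manufactures a denominator $\br{\Nf{f_\mp}{t}{l}}_t$; and (ii) invoke the bad-time condition \eqref{tlprop} to bound that denominator below by $\l_1 r_l^d/e$ on $[0, t_l]$. Together these moves should let us close the estimate under the sharp decay condition $\al > d+1$.

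\textbf{Heisenberg derivative and Taylor bound.} Since $V_{xy}\, a_x^* a_y^* a_y a_x$ commutes with every $n_z$, only the hopping term contributes to the commutator and a direct computation gives
\begin{align*}
\frac{d}{dt} \br{\Nf{f_\mp}{t}{l}}_t \;=\; \mp \frac{v'}{s_l}\br{\Nf{f'_\mp}{t}{l}}_t \;-\; i \sum_{x,y\in\Lam} J_{xy}\bigl(f_\mp(|x|_{t,l}) - f_\mp(|y|_{t,l})\bigr)\br{a_x^* a_y}_t.
\end{align*}
We bound $|f_\mp(|x|_{t,l}) - f_\mp(|y|_{t,l})|$ via \lemref{new taylor} with $u := \sqrt{f'}$. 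The first (principal) factor $u(|x|_{t,l}) u(|y|_{t,l})|x-y|/s_l$, combined with $|\br{a_x^* a_y}_t| \le \sqrt{\br{n_x}_t \br{n_y}_t}$ and two Cauchy-Schwarz steps anchored by definition \eqref{kappa} of $\kappa$, yields at most $\kappa s_l^{-1} \br{\Nf{f'_\mp}{t}{l}}_t$. The second (Taylor-remainder) factor $C_f|x-y|^{1+\eps}/s_l^{1+\eps}$, after symmetrization in $(x,y)$ and using $\supp f_\mp \subset B_{R_l}$ (valid on $[0, t_l]$ by definitions \eqref{f-}--\eqref{f+}), produces the long-range part of $\cR_{1,l}$. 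Combined with the principal kinetic term of the Heisenberg identity, the net coefficient of $\br{\Nf{f'_\mp}{t}{l}}_t$ is $\mp(v'-\kappa)/s_l$, which has the \emph{favorable} sign since $v > \kappa$ forces $v' > \kappa$ via \eqref{v'def} and may be dropped; any slack can be absorbed into the $\br{N_{B_{r_l}}}_t$ summand of $\cR_{1,l}$ by a crude bound using $\br{\Nf{f'_\mp}{t}{l}}_t \le \br{N_{B_{R_l}}}_t$ together with the bounded ratio $R_l/r_l$ forced by \eqref{aCond}.

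\textbf{Log-differentiation, integration, and choice of $\rho$.} Dividing the resulting inequality by $\br{\Nf{f_\mp}{t}{l}}_t > 0$ and invoking \eqref{tlprop} on $[0, t_l]$ to bound $\br{\Nf{f_\mp}{t}{l}}_t \ge \l_1 r_l^d/\me$ yields
\begin{align*}
\pm \frac{d}{dt} \log \br{\Nf{f_\mp}{t}{l}}_t \;\le\; \frac{\me\, C_f}{s_l^{1+\eps}\,\l_1\, r_l^d}\,\cR_{1,l}(t).
\end{align*}
Integrating over $[0,t]$ and exponentiating yields \eqref{bound integral final f} provided $s_l^{\eps} \ge \me\, C_f\, \mu$, which is ensured by choosing $\rho := \tfrac{3v}{2\delta_0}(\me\, C_f\, \mu)^{1/\eps}$, since $s_l \ge s_0 = 2(R-r)/(3v) \ge 2\delta_0 r/(3v) \ge 2\delta_0\rho/(3v)$ under the hypothesis $R - r > \delta_0 r$. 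This matches the functional dependence displayed in \eqref{rhoDef}.

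\textbf{Main obstacle.} The hardest part is engineering the bound so that the sharp exponent $\al > d+1$ suffices. A naive estimate of the Taylor remainder produces $\sum_{xy}|J_{xy}||x-y|^{1+\eps}\sqrt{\br{n_x}_t\br{n_y}_t}$, which is summable only when $\al > d+1+\eps$; iterated across scales, this is what compounded into the $p$-dependent condition $\al > \max\{3dp/2+1, 2d+1\}$ of \cite{lemm2023microscopic}. The new log-trick buys back an entire volume factor $\l_1 r_l^d$ in the denominator, so the remainder only has to be paired against a \emph{single} density factor $\br{n_y}_t$ (as in the definition of $\cR_{1,l}$). This pairing converges precisely when $\al > d+1$, matching the sharp choice $\eps = (\al-d-1)/2$ in \eqref{epsDef} and leaving just enough margin for absolute convergence.
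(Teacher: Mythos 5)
Your proposal follows essentially the same route as the paper's own proof: differentiate $\log \br{\Nf{f_\mp}{t}{l}}_t$, bound the commutator via \lemref{new taylor}, drop the $\mp(v'-\kappa)/s_l$ term by its sign, invoke \eqref{tlprop} to lower-bound the denominator by $\l_1 r_l^d/\me$ on $[0,t_l]$, integrate, and choose $\rho$ so that $s_l^\eps$ absorbs the constant. One small caution: the aside that ``slack can be absorbed into the $\br{N_{B_{r_l}}}_t$ summand ... using the bounded ratio $R_l/r_l$'' is not a valid move, since $\br{N_{B_{R_l}}}_t$ cannot be compared to $\br{N_{B_{r_l}}}_t$ at the same $t$ by a volume ratio alone; this is harmless here because the $(v'-\kappa)$ term is favorably signed and simply dropped, and the $\br{N_{B_{R_l}}}_t$ term that actually appears is the one used downstream (the $r_l$ in \eqref{P1Def} reads as a typo for $R_l$ given how $\cR_{1,l}$ is handled in \propref{cor remainder}).
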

		
		\begin{proof}
			1.	We  prove \eqref{bound integral final f} by adapting the strategy used in \cite[Prop.~4.2]{lemm2023microscopic}. {Take any $f\in\tilde \cE$.} We start by differentiating  the logarithm of the ASTLO with respect to time.
			\begin{align}\label{differentiatin log with f}
				\frac{d}{dt}\log\br{\Nf{f_-}{t}{l}}_t=\frac{1}{\br{\Nf{f_-}{t}{l}}_t}\left(-\frac{v'}{s_l}\br{\Nf{f_-'}{t}{l}}_t+\br{\left[iH,\Nf{f_-}{t}{l}\right]}_t\right),
			\end{align}
			{where, recall, $v'>\kappa$ (see \eqref{v'def})}.

			Thanks to \lemref{new taylor}, we can follow the proof of  \cite[Prop.~4.2]{lemm2023microscopic}   to bound the second term in the r.h.s. of \eqref{differentiatin log with f}. This way we arrive at
			\begin{align}\label{bound of derivative of log f, after all steps}
				\frac{d}{dt}\log\br{\Nf{f_-}{t}{l}}_t\le
				&\frac{\kappa-v'}{ s_l}\frac{\br{ N_{f_-',tl}}_t}{\br{\Nf{f_-}{t}{l}}_t}  \notag\\
				&+\frac{ C_{f}}{{s_l^{1+\eps}}} \left(\frac{\br{N_{B_{{R_l}}}}_t}{\br{\Nf{f_-}{t}{l}}_t}+\sum_{x\in B_{{R_l}-\omega s/2}}\sum_{y\in\Lambda}\abs{J_{xy}}|x-y|^{1+\eps}\frac{\br{ n_y}_t }{\br{\Nf{f_-}{t}{l}}_t}\right).
			\end{align}
			We note that, since $\kappa\le v'$, the first summand on the r.h.s. of \eqref{bound of derivative of log f, after all steps} is negative, so we can drop it. This way, applying also \eqref{Cf cond}, we obtain
			\begin{align}\label{5 o}
				\frac{d}{dt}\log\br{\Nf{f_-}{t}{l}}_t\le
				\frac{ 2\tilde C}{{s_l^{1+\eps}}} \left(\frac{\br{N_{B_{R_l}}}_t}{\br{\Nf{f_-}{t}{l}}_t}+\sum_{x\in B_{{R_l}}}\sum_{y\in\Lambda}\abs{J_{xy}}|x-y|^{1+\eps}\frac{\br{ n_y}_t }{\br{\Nf{f_-}{t}{l}}_t}\right).
			\end{align}
			Integrating both sides of \eqref{5 o}  up to time $t\le t_l$ and recalling the definition of $t_l$ (see \eqref{tlprop}), it follows that
			\begin{align}\label{right after integral log f}
				\log\br{\Nf{f_-}{t}{l}}_{t}-\log\br{\Nf{f_-}{0}{l}}_0
				\le &\frac{ 2\tilde C}{s_l^{1+\eps}} \int_0^{t}\left( \frac{\br{N_{B_{{R_l}}}}_\ta}{\me^{-1}r^d_{ l}\l_1}+\sum_{x\in B_{{R_l}}}\sum_{y\in\Lambda}\abs{J_{xy}}|x-y|^{1+\eps}\frac{\br{ n_y}_\ta }{\me^{-1}r^d_{ l}\l_1}\right)d\ta\notag\\
				\le &  \frac{2\me\tilde Ct}{\l_1{s_l^{1+\eps}}r_l^d} \sup_{0\le \ta\le t_l}\cR_{1,l}(\ta),		\end{align}
			where $\cR_{1,l}$ is defined in \eqref{P1Def}.
			
			Step 2.	For any $r\ge \rho$ with
			\begin{align}
				\label{Cdef2}
				\rho=\rho(v,\delta_0,\mu,\eps):=\frac{3v}{2}\frac{1}{\delta_0}[2\me\tilde C\mu]^{1/\eps},
			\end{align}
			it is easily verified using definitions \eqref{silDef} and \eqref{slDef}, and recalling $R-r>\delta_0 r$, that
			\begin{align}
				\notag
				s_l^\eps\ge 2\me\tilde C\mu\qquad(l=0,1,\ldots).
			\end{align}
			This, together with estimate \eqref{right after integral log f}, implies
			\begin{align}
				\label{638}
				\log\br{\Nf{ f_-}{t}{l}}_{t}-\log\br{\Nf{ f_-}{0}{l}}_0\le \frac{  t}{\mu\l_1s_lr_l^d} \sup_{0\le \ta\le t_l}\cR_{1,l}(\ta)\qquad (t\le t_l).  
			\end{align}
			Finally, exponentiating  inequality \eqref{638} yields \eqref{bound integral final f}. 
			
			Step 3. It remains to show \eqref{bound integral final f} for $f_+$. To this end, we follow a similar procedure as in step 1. The only difference is that when differentiating $\log\br{\Nf{f_+}{t}{l}}_t$ with respect to time, there is a change of sign, in fact the following holds.
			\begin{align}\label{1 1 1}
				- \frac{d}{dt}\log\br{\Nf{f_+}{t}{l}}_t=\frac{1}{\br{\Nf{f_+}{t}{l}}_t}\left(-\frac{v}{s_l}\br{\Nf{f_+'}{t}{l}}_t+\br{-\left[iH,\Nf{f_+}{t}{l}\right]}_t\right).
			\end{align}
			Recall the definition of $\tilde \cE$, then for any $f\in\tilde \cE$ it holds that 
			\begin{align}\notag
				\text{if } \abs{x}\ge\frac{R_l+2r_l}{3}+v't-\frac{s\omega}{2}\quad \text{then} \quad f_+\del{\abs{x}_{t,l}}=0.
			\end{align}
			Therefore,
			\begin{align}\notag
				\abs{f_+(\abs{x}_{tl})-f_+(\abs{y}_{tl})}\le \abs{f_+(\abs{x}_{tl})-f_+(\abs{y}_{tl})}\del{\chi_{\tilde B}(x)+\chi_{\tilde B}(y)},
			\end{align}
			where
			\begin{align}\notag
				\tilde B\coloneqq B_{(R_l+2r_l)/3+v't-s\omega/2}.
			\end{align}
			Recalling the definition of $s_l$ \eqref{slDef} and $\omega$ \eqref{omega def}, since $v'<v$, one can show that, for every $t\le t_l< s_l$, $\tilde B\subset B_{R_l}$, which implies
			\begin{align}\label{where if f+}
				\abs{f_+(\abs{x}_{tl})-f_+(\abs{y}_{tl})}\le \abs{f_+(\abs{x}_{tl})-f_+(\abs{y}_{tl})}\del{\chi_{ B_{R_l}}(x)+\chi_{ B_{R_l}}(y)}.
			\end{align}
			Inequality \eqref{where if f+} plays the same role as \cite[ex~4.35]{lemm2023microscopic}, and it allows us to estimate the commutator in \eqref{1 1 1} as before, 
			\begin{align}\label{after est comm lb 1}
				-\frac{d}{dt}\log\br{\Nf{f_+}{t}{l}}_t&\le
				\frac{\kappa-v' }{s_l}\frac{\br{ N_{f_+',tl}}_t}{\br{\Nf{f_+}{t}{l}}_t}  \notag\\
				&+\frac{ C_{f}}{{s_l^{1+\eps}}} \left(\frac{\br{N_{B_{{R_l}}}}_t}{\br{\Nf{f_+}{t}{l}}_t}+\sum_{x\in B_{{R_l}}}\sum_{\substack{y\in\Lambda}}\abs{J_{xy}}|x-y|^{1+\eps}\frac{\br{ n_y}_t }{\br{\Nf{f_+}{t}{l}}_t}\right).  
			\end{align}
			Again, we drop the negative term on the r.h.s. of \eqref{after est comm lb 1} and we apply \eqref{Cf cond} to obtain
			\begin{align}\label{6 o}
				-\frac{d}{dt}\log\br{\Nf{f_+}{t}{l}}_t\le\frac{ 2\tilde C}{{s_l^{1+\eps}}} \left(\frac{\br{N_{B_{{R_l}}}}_t}{\br{\Nf{f_+}{t}{l}}_t}+\sum_{x\in B_{{R_l}}}\sum_{\substack{y\in\Lambda}}\abs{J_{xy}}|x-y|^{1+\eps}\frac{\br{ n_y}_t }{\br{\Nf{f_+}{t}{l}}_t}\right).  
			\end{align}
			Integrating both sides  of \eqref{6 o} up until time $t\le t_l$ and using  \eqref{tlprop}, we obtain
			\begin{align}
				\log\frac{\br{\Nf{f_+}{0}{l}}_{0}}{\br{\Nf{f_+}{t}{l}}_{t}}
				\le\frac{ 2\me\tilde Ct}{\l_1 s_l^{1+\eps}{r^d_l}} \sup_{0\le \ta\le t_l}\cR_{1,l}(\ta).\notag
			\end{align}
			For every $r\ge\rho$, with $\rho$ as in \eqref{Cdef2}, we obtain
			\begin{align}\label{befor getting rid of g}
				\log\frac{\br{\Nf{f_+}{0}{l}}_{0}}{\br{\Nf{f_+}{t}{l}}_{t}}
				\le\frac{ t}{\l_1\mu s_lr_l^d} \sup_{0\le \ta\le t_l}\cR_{1,l}(\ta), 
			\end{align}
			We conclude the proof exponentiating both sides of \eqref{befor getting rid of g} to obtain \eqref{bound integral final f}.
		\end{proof}

		Next, we prove an upper bound for the remainder term $\cR_{1,l}$ defined in \eqref{P1Def}.  
		Let $\eps$ be as defined in \eqref{epsDef}. Then, there exists a constant ${C>0}$ independent of $\Lambda$ such that
		\begin{align}\label{k2Cond}
			\kappa_\eps:=\sup_{x\in\Lam}\sum_{y\in\Lam} {\abs{J_{xy}}} \abs{x-y}^{1+\eps}\le C.
		\end{align}
		We have the following result:
		\begin{proposition}[Remainder estimate for $p=1$]\label{cor remainder}
			Let the assumptions of \propref{Prop 1 ub 1} hold. Then, for any $v\ge \kappa$ $\delta_0\in(0,1)$, 	
			there exists 			\begin{align}
				\label{CPdef}
				C_{\cR,1}=&C_{\cR,1}(\al,d,C_J,\l_2,\delta_0)>0,
			\end{align}
			such that for 
			\begin{align}
				{\rho=\rho(\l_1,v,\delta_0,C_{\cR,1},\eps)}>0,\notag
			\end{align} (see definition in \eqref{rhoDef}),
			all $l\in\mathbb{N}_0$, and {$R>r\ge\rho$} with $R-r>\delta_0r$ and $a\equiv R/r$ satisfying \eqref{aCond}, the following holds
			\begin{align}\label{bound rem 1}
				{r_l^{-d}}\sup_{0\le\ta\le t_l}\cR_{1,l}(\ta)\le C_{\cR,1}.
			\end{align}
		\end{proposition}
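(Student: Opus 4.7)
My plan is to bound each of the two contributions to $\cR_{1,l}(\tau)$ by $O(r_l^d)$ using the ASTLO propagation estimate \eqref{bound integral final f} at appropriate scales, then close the resulting self-referential system by a bootstrap combined with a downward multiscale induction, in the spirit of \cite{lemm2023microscopic}.

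For the first term $\br{N_{B_{r_l}}}_\tau$, I would combine the ASTLO inequalities \eqref{f and b at t}--\eqref{f and B at t=0} with \propref{Prop 1 ub 1} and the upper density assumption from \eqref{CD} to obtain
\[
\br{N_{B_{r_l}}}_\tau \le \br{\Nf{f_-}{\tau}{l}}_\tau \le \br{N_{B_{R_l}}}_0 \exp\Bigl(\tfrac{1}{\mu\l_1}\, r_l^{-d}\sup_\sigma \cR_{1,l}(\sigma)\Bigr) \le 4^d \l_2 r_l^d \cdot \exp\bigl(\cdots\bigr),
\]
where I used $\tau/s_l \le 1$ (since $\tau\le t_l\le (R_l-r_l)/(3v)=s_l/2$) and $R_l=ar_l\le 4r_l$ from \eqref{aCond}. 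For the second term $\sum_{x\in B_{R_l}}\sum_y |J_{xy}||x-y|^{1+\eps}\br{n_y}_\tau$, I would split $y$ into a near part $y\in B_{2R_l}$ and dyadic annuli $A_k:=B_{2^{k+1}R_l}\setminus B_{2^k R_l}$ for $k\ge 1$. The near part is bounded by $\kappa_\eps\br{N_{B_{2R_l}}}_\tau$ via \eqref{k2Cond}, which is in turn controlled by the propagation estimate at a neighbouring scale. On $A_k$ one has $|x-y|\gtrsim 2^k r_l$ for $x\in B_{R_l}$, hence $|J_{xy}||x-y|^{1+\eps}\le C_{J,\al}(2^k r_l)^{1+\eps-\al}$; summing over $x$ gives a factor $|B_{R_l}|\lesssim r_l^d$ and, bounding $\br{N_{A_k}}_\tau$ by the propagation inequality at scale $l+k$ together with \eqref{CD}, a further factor $(2^{k+1}r_l)^d$. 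The resulting geometric series in $k$ has ratio $2^{1+\eps-\al+d}$; the sharp choice $\eps=(\al-d-1)/2$ from \eqref{epsDef} makes the exponent equal to $(1+d-\al)/2<0$ precisely under the assumption $\al>d+1$, so the series converges and produces a bound of size $Cr_l^d$ with $C=C(\al,d,C_{J,\al},\l_2)$.

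The main obstacle is closing the bootstrap: each propagation estimate contributes an exponential factor $\exp(r_{l'}^{-d}\sup_\sigma \cR_{1,l'}(\sigma)/(\mu\l_1))$ at a possibly different scale $l'\ge l$, so the bounds at all scales are coupled. I would resolve this by choosing $\rho$ in \eqref{rhoDef} large enough that $\mu$ can be taken as large as needed, ensuring that whenever one has an a priori bound $r_{l'}^{-d}\sup_\tau\cR_{1,l'}(\tau)\le C_0$, the corresponding exponentials are bounded by a universal constant close to $1$. A downward multiscale induction on $l$ then closes the argument: at the maximal scale $L$ with $B_{R_L}\supset\Lam$ the dyadic tail vanishes and the near part reduces to the conserved total particle number bounded by $\l_2|B_{R_L}|$, giving the base case; descending one scale at a time and invoking the convergent geometric series upgrades the coarse constant $C_0$ to the sharper, $l$- and $\Lam$-uniform bound $C_{\cR,1}$. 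The decisive feature that makes this scheme work under the sharp condition $\al>d+1$ is the new logarithmic ASTLO built into \propref{Prop 1 ub 1}: the lower bound $\br{\Nf{f_-}{\tau}{l}}_\tau\ge \l_1 r_l^d/e$ enforced by \eqref{tlprop} supplies exactly the $r_l^{-d}$ normalisation that keeps the multiscale sum convergent uniformly in $\Lam$.
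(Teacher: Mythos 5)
Your proposal follows essentially the same route as the paper's proof: a downward multiscale induction on $l$, with the $J$-sum split into a near part (handled by $\kappa_\eps$) and a far part controlled by a convergent geometric series whose exponent $(1+d-\al)/2<0$ encodes exactly the sharp condition $\al>d+1$; the base case sits at the largest scale where the conserved total particle number and the density assumption trivially give the bound. Your dyadic shells $A_k=B_{2^{k+1}R_l}\setminus B_{2^kR_l}$ are a cosmetic reparametrization of the paper's $a$-adic shells $\{a^m\le|y|<a^{m+1}\}$, and your exponent computation matches the paper's $\gamma:=\al-1-\eps>d$ criterion. One imprecision, however, deserves attention: your displayed bound for $\br{N_{B_{r_l}}}_\tau$ invokes the ASTLO estimate of \propref{Prop 1 ub 1} at the \emph{same} scale $l$, so its exponent contains $\sup_\sigma\cR_{1,l}(\sigma)$ — the very quantity being estimated. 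This self-referential inequality $x\le C\,\me^{x/(\mu\l_1)}+\cdots$ does not close cleanly without an additional continuity or fixed-point argument. The paper avoids this circularity altogether by first enlarging $B_{r_l}\subset B_{R_l}=B_{r_{l+1}}$ (see the passage from $\cR_{1,l}$ to term $\mathrm{I}$ in \eqref{8 o}) and then applying the propagation bound \eqref{final UB moment1}, obtained via the fixed choice $\mu=\l_1^{-1}C_{\cR,1}$ in \eqref{muChoice}, \emph{only} at scales $j\ge l+1$, all of which are already covered by the induction hypothesis. When you carry out your downward induction you should likewise invoke scales $l'\ge l+1$ only — your remark that "the bounds at all scales are coupled" should become "the bound at scale $l$ is controlled by bounds at scales $\ge l+1$". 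Finally, the paper also treats a small lattice subtlety (cases 1 and 2 in the induction step, depending on whether the layer of sites at unit distance from $B_{R_{l+1}}$ lies strictly outside it), producing the additional boundary term $\Mr{II}''$; your proposal does not mention this, but it is a minor discreteness correction and easy to add.
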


		\begin{proof}

			We prove \eqref{bound rem 1} by a downwards induction on the scale parameter $l$. 
			
			1.		\textbf{Base case.} Fix a large $L>0$ s.th.~the domain $\Lam\subset B_L$.  We prove that there exists a large integer $M = M(L) > 0$ s.th. \eqref{bound rem 1}
			holds for all $ l \ge M$.

			Recalling definition \eqref{P1Def} and using \eqref{k2Cond}, the upper bound in density assumption \eqref{CD}, and the fact that the total particle number $N_\Lam$ is conserved,
			we  generously upper bound the remainder term as
			\begin{align}
				r_l^{-d}\cR_{1,l}(\ta)\le&r_l^{-d}{\br{N_\Lam}_\ta}+r_l^{-d}\sum_{x\in \Lambda}\sum_{y\in\Lambda}\abs{J_{xy}}|x-y|^{1+\eps}{\br{ n_y}_\ta }\notag  \\
				\le& r_l^{-d}{\br{N_\Lam}_0}+\kappa_\eps r_l^{-d}{\br{N_\Lam}_0}\notag\\
				\le& {\l_2}(1+\kappa_\eps)\frac{L^d}{a^{d(l+b)}}.\label{ind basis }
			\end{align} 
			For all $l\ge M$ with
			\begin{align}
				\notag
				M:=\log_aL,
			\end{align}
			the desired estimate \eqref{bound rem 1} now follows from \eqref{ind basis }. This proves the induction base.
			
			2.	 \textbf{Induction step.}   Now, {\textit{we assume \eqref{bound rem 1} holds for all $l+1, l+2,\ldots$}}, and we show it for $l$. 
			
			{We first apply \propref{Prop 1 ub 1} with \begin{align}
					\label{muChoice}
					\mu=\l_1^{-1}C_{\cR,1}.
				\end{align} Inserting the choice \eqref{muChoice} into  \eqref{bound integral final f}, and using the geometric properties of the ASTLO proved in \lemref{ASTLO and N}, we find that the induction hypothesis implies the following estimate {\textit{for all $j\ge l+1$ and $r\ge\rho$},}
				\begin{align}
					\sup_{0<t\le t_{j}} {\br{N_{B_{r_{j}}}}_{t}}&\le {\br{N_{B_{R_{j}}}}_0}e^{{ t_{j}}{s_{j}^{-1}}}.\label{final UB moment1}
				\end{align}
				Below we will repeatedly use this consequence to complete the induction step.}

			Indeed, recalling definitions \eqref{P1Def} for $\cR_{1,l}$ and \eqref{silDef} for $r_l$ and $R_l$, we write 
			\begin{align}\notag
				r_l^{-d}\sup_{\ta<t_l} \cR_{1,l}(\ta)&\le\underbrace{\vphantom{\sum_{B_{{a^{l+b+1}}}}}\sup_{\ta<t_l}\frac{\br{N_{B_{{a^{l+b+1}}}}}_\ta}{a^{d(l+b)}}}_{\Mr{I}}+\underbrace{\sup_{\ta<t_l}\sum_{x\in B_{{a^{l+b+1}}}}\sum_{\substack{y\in\Lambda}}\abs{J_{xy}}|x-y|^{1+\eps}\frac{\br{ n_y}_\ta }{a^{d(l+b)}}}_{\Mr{II}}.
			\end{align}
			
			2.1.	To bound $\Mr{I}$, we claim that there exists \begin{align}
				\notag
				C_1=C_1		(d,\l_2)>0,
			\end{align} such that 
			\begin{align}
				\label{mBdd}
				\sup_{\ta<t_l}\frac{\br{N_{B_{a^{j+b+1}}}}_\ta}{{a^{d(j+b)}}}\le C_1,\qquad 	j\ge l.
			\end{align}
			Indeed, suppose \eqref{mBdd} holds. Then applying it with $j=l$ yields 
			\begin{align}\label{8 o}
				\Mr{I}&\equiv \sup_{\ta<t_l}\frac{\br{N_{B_{{a^{l+b+1}}}}}_\ta}{a^{d(l+b)}}\le C_1.
			\end{align}			
			This gives the desired upper bound for $\Mr{I}$.
			
			Now we prove \eqref{mBdd}. Thanks to upper bound \eqref{final UB moment1}, which is valid for $j+1$ when $j\ge l$ under the induction hypothesis, there holds
			\begin{align}
				\label{756}
				\sup_{\ta<t_l}\frac{\br{N_{B_{{a^{j+b+1}}}}}_\ta}{a^{d(j+b)}}\le\frac{\br{N_{B_{a^{j+b+2}}}}_0}{ a^{d m}}\me^{{  t_{j+1}}/{s_{j+1}}}. 
			\end{align}
			Applying the upper bound from the controlled density assumption  \eqref{CD}, we find that the r.h.s.~of \eqref{756} is bounded, for all $t_{l+1}\le s_{l+1}$, as
			\begin{align}
				\frac{\br{N_{B_{a^{j+b+2}}}}_0}{a^{d(j+b) }}\me^{{  t_{l+1}}/{s_{l+1}}}&\le\frac{{a^{d(j+b+2)}}\l_2}{a^{d(j+b)}}\me^{{  t_{l+1}}/{s_{l+1}}}\le  a^{2d} \l_2\me.\notag
			\end{align}
			{Now we use the upper bound on $a$ from assumption \eqref{aCond},} which gives 
			\begin{align}\notag
				\frac{\br{N_{B_{a^{j+b+2}}}}_0}{a^{d(j+b) }}\me^{{  t_{l+1}}/{s_{l+1}}}\le4^{2d}\cdot \l_2\me=: C_1\qquad(j\ge l).
			\end{align}
			This proves claim \eqref{mBdd}.

			2.2.		Next, 	we deal with term II. We proceed in the same way as in the proof of \cite[Prop.~4.4]{lemm2023microscopic}, by splitting the sum on $y$ over two regions, $ B_{R_{l+1}}=:B$ and $ B_{R_{l+1}}^c$.

			\begin{align}\label{term II}
				\Mr{II}&=\sup_{\ta<t_l}\sum_{x\in B}\sum_{y\in\Lambda}\abs{J_{xy}}|x-y|^{1+\eps}\frac{\br{ n_y}_\ta }{a^{d(l+b)}}\notag\\
				&\le\sup_{\ta<t_l}\sum_{x\in B}\sum_{y\in B}\abs{J_{xy}}|x-y|^{1+\eps}\frac{\br{ n_y}_\ta }{a^{d(l+b)}}\notag\\
				&\quad+\sup_{\ta<t_l}\sum_{x\in B}\sum_{m>l+b+1}\sum_{a^{m}\le y<a^{m+1}}\abs{J_{xy}}|x-y|^{1+\eps}\frac{\br{ n_y}_\ta }{a^{d(l+b)}}.
			\end{align}
			
			We bound the two terms in the r.h.s. of \eqref{term II}	separately. Recalling the definition of $\kappa_\eps$ in \eqref{k2Cond} and \eqref{8 o}, we bound the  first term on the  r.h.s.~as
			\begin{align}\label{to last rem}
				\sup_{\ta<t_l} \sum_{x\in B}\sum_{y\in B}\abs{J_{xy}}|x-y|^{1+\eps}\frac{\br{ n_y}_\ta }{a^{d(l+b)}}
				&\le \del{\sup_{x\in\Lam}\sum_{y\in\Lam} \abs{J_{xy}}\abs{x-y}^{1+\eps}}\del{\sup_{\ta<t_l}\sum_{\abs{y}\le a^{b+l+1}}\frac{\br{ n_y}_\ta }{a^{d(l+b)}}}\notag \\
				&= \k_\eps\Mr{I}\le \kappa_\eps C_1.
			\end{align}

			To bound the second term in \eqref{term II},  we follow the strategy in the proof of \cite[eq.~(4.20)]{lemm2023microscopic}.		We differentiate two cases: 
			\begin{itemize}
				\item[1)] The set $\Set{y\in\Lambda\;:\;d(y,B)=1}$ is contained in $B^c_{R_{l+1}}$;
				\item[2)] The set $\Set{y\in\Lambda\;;\;d(y,B)=1}$ lies in between $B_{R_l}$ and $B^c_{R_{l+1}}$.
			\end{itemize} 
			
			2.2.1. In case 1) we set\begin{align}
				\label{gaDef}
				\gamma:=\alpha-1-\eps.
			\end{align} 
			It follows from definition \eqref{gaDef} and \eqref{epsDef} that
			\begin{align}
				\label{gaEst}
				\gamma>d+\eps.
			\end{align}
			We compute
			\begin{align}\label{control 2 2'}
				\Mr{II}'&:=	\sup_{\ta<t_l} \sum_{x\in B}\sum_{m>l+b+1}\sum_{a^{m}\le y<a^{m+1}}\abs{J_{xy}}|x-y|^{1+\eps}\frac{\br{ n_y}_\ta }{a^{d(l+b)}}\notag\\
				&\le C_J	\sup_{\ta<t_l} \sum_{x\in B}\sum_{m>l+b+1}\sum_{a^{m}\le y<a^{m+1}}|x-y|^{-\g}\frac{\br{ n_y}_\ta }{a^{d(l+b)}}\notag\\
				&\le	 C_JC_d a^{d(l+b)}(a-1)^{-\gamma} \sum_{m> l+b+1}a^{-(m-1)\g}\frac{\sup_{\ta<t_l}\br{N_{B_{a^{m+1}}}}_\ta}{{a^{d(l+b)}}}\notag\\
				&\le  C_JC_d \del{\frac{4}{\delta_0}}^{\g} \sum_{m> l+b+1}a^{-m(\g-d)}\frac{\sup_{\ta<t_l}\br{N_{B_{a^{m+1}}}}_\ta}{{a^{dm}}} ,
			\end{align}
			where in the last inequality we used the assumption $\delta_0+1<a<4$.
			Applying \eqref{mBdd} for $m>l+b+1$ in   line \eqref{control 2 2'}, we arrive at
			\begin{align}
				\label{control 2 2}
				\Mr{II}'\le C_JC_dC_1\sum_{m> l+b+1}a^{-m(\g-d)}.
			\end{align}
			The sum in \eqref{control 2 2} converges since $a> \delta_0+1\ge 1$ and $\g>d$ (see \eqref{gaEst}). 
			Thus, we conclude
			\begin{align}\label{last for rem}
				\Mr{II}'\le C_JC_1C_{d,\al,\delta_0}.
			\end{align}

			2.2.2. In case 2) we write 
			\begin{align}
				\Mr{II}&=\sup_{\ta<t_l}\sum_{x\in B}\sum_{y\in\Lambda}\abs{J_{xy}}|x-y|^{1+\eps}\frac{\br{ n_y}_\ta }{a^{d(l+b)}}\notag\\
				&\le\sup_{\ta<t_l}\sum_{x\in B}\sum_{y\in B}\abs{J_{xy}}|x-y|^{1+\eps}\frac{\br{ n_y}_\ta }{a^{d(l+b)}}\notag\\
				&\quad+\sup_{\ta<t_l}\sum_{x\in B}\sum_{m>l+b+1}\sum_{a^{m}\le y<a^{m+1}}\abs{J_{xy}}|x-y|^{1+\eps}\frac{\br{ n_y}_\ta }{a^{d(l+b)}}\notag\\
				&\quad+\sup_{\ta<t_l}\sum_{x\in B}\sum_{\substack{y\in B^c\\ d(B,y)=1}}\abs{J_{xy}}|x-y|^{1+\eps}\frac{\br{ n_y}_\ta }{a^{d(l+b)}}.
			\end{align}
			We additionally have the contribution from the lattice site laying in between $B$ and $B^c_{R_{l+1}}$. This is the only term we need to control as we already controlled the first and the second summands in step 2.2.1. To this end, we compute
			\begin{align}\label{II 3}
				\Mr{II}'':=\sup_{\ta<t_l}&\sum_{x\in B}\sum_{\substack{y\in B^c\\ d(B,y)=1}}\abs{J_{xy}}|x-y|^{1+\eps}\frac{\br{ n_y}_\ta }{a^{d(l+b)}}\\
				\le&\kappa_\eps\sup_{\ta<t_l}\sum_{\substack{y\in \Lambda\\d(y,B)=1}}\frac{\br{ n_y}_\ta }{a^{d(l+b)}}\\
				\le&\kappa_\eps\sup_{\ta<t_l}\frac{\br{ B_{R_l+1}}_\ta }{a^{d(l+b)}}.
			\end{align}
			Notice that since we are assuming $\Set{y\in\Lambda\;;\;d(y,B)=1}$ lies in between $B_{R_l}$ and $B^c_{R_{l+1}}$, it follows that $R_l+1<R_{l+1}$. Then, rewriting $R_l=a^{b+l+1}$ and applying \eqref{mBdd} lead to
			\begin{align}\label{contr new term rem}
				\Mr{II}'' \le\kappa_\eps\frac{\br{ B_{a^{b+l+1}+1}}_\ta }{a^{d(l+b)}}    \le \kappa_\eps\frac{\br{ B_{a^{b+l+2}}}_\ta }{a^{d(l+b)}}  \le \kappa_\eps C_1 a^{d}  \le \kappa_\eps  C_1 4^d   .
			\end{align}
			
			2.3. Finally, plugging \eqref{to last rem}, \eqref{last for rem}, and\eqref{contr new term rem} back to   \eqref{8 o}, we arrive at
			\begin{align}
				\label{655}
				r_l^{-d}\sup_{\ta<t_l} \cR_{1,l}(\ta)\le C_{\cR,1},\qquad C_{\cR,1}:=(1+\kappa_\eps +C_JC_{d,\al,\delta_0}+\kappa_\eps  4^d)C_1.
			\end{align}
			This completes the induction step and \propref{cor remainder} is proved.
		\end{proof}
		\begin{corollary}\label{cor7.6}
			Let the assumptions of \propref{Prop 1 ub 1} hold. Then, for any $v\ge \kappa$ and $\delta_0\in(0,1)$, the following estimates hold for all $l\in\mathbb{N}_0$  and {$R>r\ge\rho$} with 
			\begin{align}
				\rho=\rho\del{v,\delta_0, \al, d, C_J,\l_2,\l_1}, \label{rho at the end}
			\end{align}
			$R-r>\delta_0r$ and $a\equiv R/r$ satisfying \eqref{aCond}:
			\begin{align}
				{\br{N_{B_{r_l}}}_{t}}&\le {\br{N_{B_{R_l}}}_0}e^{{ t}{s_l^{-1}}}\notag\\
				{\br{N_{B_{R_l}}}_{t}}&\ge {\br{N_{B_{r_l}}}_0}e^{{ -t}{s_l^{-1}}}, \label{760}
			\end{align}
			for any $t\le t_l$. 
		\end{corollary}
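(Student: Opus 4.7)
The plan is to directly combine \propref{Prop 1 ub 1}, \propref{cor remainder}, and the geometric properties of the ASTLOs from \lemref{ASTLO and N}. In fact, the key reduction already appears inside the induction step of \propref{cor remainder} (see estimate \eqref{final UB moment1} and the choice $\mu = \lambda_1^{-1} C_{\cR,1}$ in \eqref{muChoice}); the corollary is essentially a clean restatement of that intermediate bound without the induction context.

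More precisely, I would first apply \propref{Prop 1 ub 1} with the specific choice $\mu := \lambda_1^{-1} C_{\cR,1}$, where $C_{\cR,1}$ is the constant produced by \propref{cor remainder}. The minimal scale $\rho$ from \eqref{rhoDef} then depends on the advertised parameters through \eqref{rho at the end}. With this choice of $\mu$, the estimate \eqref{bound integral final f} reads
\begin{align*}
\pm \br{\Nf{f_\mp}{t}{l}}_t \le \pm \br{\Nf{f_\mp}{0}{l}}_0 \exp\left\{ \pm \frac{t}{C_{\cR,1}\, s_l\, r_l^d} \sup_{0 \le \tau \le t_l} \cR_{1,l}(\tau) \right\}.
\end{align*}
Next, by \propref{cor remainder}, for $r \ge \rho$ one has $r_l^{-d} \sup_{0 \le \tau \le t_l} \cR_{1,l}(\tau) \le C_{\cR,1}$, so the exponent collapses to $\pm t/s_l$, yielding
\begin{align*}
\br{\Nf{f_-}{t}{l}}_t &\le \br{\Nf{f_-}{0}{l}}_0 \, e^{t/s_l}, \\
\br{\Nf{f_+}{t}{l}}_t &\ge \br{\Nf{f_+}{0}{l}}_0 \, e^{-t/s_l}.
\end{align*}

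Finally, for the upper bound in the corollary, I would sandwich the ASTLO between number operators on balls using \lemref{ASTLO and N}: inequality \eqref{f and b at t} gives $\br{N_{B_{r_l}}}_t \le \br{\Nf{f_-}{t}{l}}_t$, while \eqref{f and B at t=0} gives $\br{\Nf{f_-}{0}{l}}_0 \le \br{N_{B_{R_l}}}_0$, which together produce the claimed $\br{N_{B_{r_l}}}_t \le \br{N_{B_{R_l}}}_0 e^{t/s_l}$. For the lower bound, I would use the other two parts of \lemref{ASTLO and N}: \eqref{g and b 0} gives $\br{\Nf{f_+}{0}{l}}_0 \ge \br{N_{B_{r_l}}}_0$, and \eqref{g and B t} gives $\br{N_{B_{R_l}}}_t \ge \br{\Nf{f_+}{t}{l}}_t$, which combine to give the second inequality \eqref{760}.

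There is no real obstacle here: every ingredient has already been assembled in the preceding propositions, and the only thing to check is that the validity range $t \le t_l$ is preserved throughout (it is, since both \propref{Prop 1 ub 1} and \propref{cor remainder} are stated on $[0, t_l]$) and that the hypothesis $v \ge \kappa$ is compatible with the condition $v > \kappa$ used in defining $v'$ via \eqref{v'def}; in the boundary case $v = \kappa$ the argument still goes through because we only used $v' \ge \kappa$ to drop the nonpositive term in \eqref{bound of derivative of log f, after all steps} and \eqref{after est comm lb 1}.
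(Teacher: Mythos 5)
Your proposal is correct and follows essentially the same route as the paper's own proof: apply Proposition~\ref{Prop 1 ub 1} with $\mu = \lambda_1^{-1} C_{\cR,1}$, invoke the remainder bound \eqref{bound rem 1} from Proposition~\ref{cor remainder} to collapse the exponent to $\pm t/s_l$, and then strip the $f$-dependence using the four inequalities of Lemma~\ref{ASTLO and N}. The only slight overreach is your final aside on the boundary case $v=\kappa$: there $\omega = v - v' = 0$ and the function class $\cE$ degenerates, so that case is not actually covered; but the paper's corollary statement is best read with $v>\kappa$ (inherited from Proposition~\ref{Prop 1 ub 1}) and this does not affect the substance of your argument.
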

		\begin{proof}
			We apply \propref{Prop 1 ub 1} with $\mu:=\l_1^{-1}C_{\cR,1}$ with $C_{\cR,1}$ from \eqref{CPdef}, for every $t\le t_l$. Inequality \eqref{bound rem 1} then implies
			\begin{align}
				{\br{\Nf{ f_-}{t}{l}}_{t}}&\le {\br{\Nf{ f_-}{0}{l}}_0}e^{{ t}{s_l^{-1}}}\label{still f},\\
				{\br{\Nf{ f_+}{t}{l}}_{t}}&\ge {\br{\Nf{ f_+}{0}{l}}_0}e^{{- t}{s_l^{-1}}}\label{still f 2}.
			\end{align}
			We conclude by applying \lemref{ASTLO and N} to \eqref{still f}-- \eqref{still f 2}, to get rid of the $ f$ dependency.
			
		\end{proof}

		\subsection{Proof of Lower Bound for First Bad Time \texorpdfstring{$T_1$}{T1}}\label{secTest}

		\begin{proposition}\label{Prop 4 lb and ub 1}
			Let the assumptions of \propref{Prop 1 ub 1} hold. Then, for any $v>\kappa$ and $f\in\ctE$, $T_1$ defined in \eqref{T1Def}, and $R>r\ge\rho$ as in \corref{cor7.6},
			\begin{align}
				\label{T1Est}
				T_1\ge (R-r)/3v.
			\end{align}  
		\end{proposition}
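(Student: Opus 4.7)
The plan is to argue by contradiction: suppose $T_1 < (R-r)/3v$, and then show that at $t = T_1$ the defining condition of $T_1$ is in fact violated with a uniform-in-$l$ margin, which contradicts $T_1$ being the infimum in \eqref{T1Def}.

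For the main estimate, I would extract a lower bound on $\br{\Nf{f_+}{t}{l}}_t$ directly from the proof of Corollary \ref{cor7.6} rather than from its final statement. Combining \eqref{still f 2} with the geometric inclusion \eqref{g and b 0} from Lemma \ref{ASTLO and N} and the density lower bound in \eqref{CD} yields, for every $l\ge 0$ and every $t\le t_l$,
\begin{align*}
\br{\Nf{f_+}{t}{l}}_t \;\ge\; \br{\Nf{f_+}{0}{l}}_0\, e^{-t/s_l} \;\ge\; \br{N_{B_{r_l}}}_0\, e^{-t/s_l} \;\ge\; \lambda_1 r_l^{d}\, e^{-t/s_l}.
\end{align*}
Since $R_l-r_l=(a-1)a^{b+l}\ge R-r$ at every scale, the strict inequality $t<(R-r)/3v\le s_l/2$ then gives $e^{-t/s_l}>e^{-1/2}$, a bound uniform in $l$. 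To transfer this lower bound from $f_+$ to $f_-$, I would invoke Lemma \ref{lem7.3}, whose hypothesis $t\le (R_l-r_l)/3v'$ is automatic here because $v'<v$ and $R_l-r_l\ge R-r$. This produces
\begin{align*}
\min_{u\in\cG}\frac{\br{N_{u,tl}}_t}{\lambda_1 r_l^{d}} \;\ge\; e^{-t/s_l}\;>\;e^{-1/2}\;>\;\frac{1}{e}.
\end{align*}

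To close the argument, observe that under the contradiction hypothesis $T_1<(R-r)/3v$ one has $t_l=\min((R_l-r_l)/3v,T_1)=T_1$ for every $l\ge 0$, so the estimate above is valid up to and including the boundary time $t=T_1$ and yields $\min_{u\in\cG}\br{N_{u,T_1 l}}_{T_1}/(\lambda_1 r_l^{d})\ge e^{-T_1/s_l}>e^{-1/2}$ uniformly in $l$. By continuity in $t$ of each map $t\mapsto \br{N_{u,tl}}_t$ (the Schr\"odinger evolution and the ASTLO weights both being smooth in $t$), this strict bound with uniform margin persists on some right-neighborhood of $T_1$, so the condition appearing in \eqref{T1Def} cannot hold at any $t$ slightly larger than $T_1$, contradicting $T_1$ being the infimum. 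The main obstacle is ensuring the uniformity of the lower bound across the infinite family of scales $l$ so as to exclude the possibility that the infimum is approached along a sequence of diverging scales; this uniformity is exactly what the quantitative gap $e^{-1/2}-e^{-1}>0$ provides, and it is available only because of the strict inequality $T_1<(R-r)/3v$, i.e., because one stays strictly below the half-relaxation time $s_0/2=(R-r)/3v$.
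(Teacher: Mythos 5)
Your proof uses the same core estimate as the paper: combining \eqref{still f 2} with the geometric inclusion \eqref{g and b 0} and the density lower bound \eqref{CD} to obtain $\br{\Nf{f_+}{t}{l}}_t \ge \l_1 r_l^d e^{-t/s_l}$, and then invoking \lemref{lem7.3} to identify $f_+$ as realizing the minimum over $\cG$. The difference is organizational. The paper works at the specific scale $\tilde l$ where the defining condition of $T_1$ holds at $t=T_1$ (the infimum being attained, since $\Lambda$ is finite and only finitely many scales are admissible); there the chain $e^{-1}\ge e^{-T_1/s_{\tilde l}}$ yields $T_1\ge s_{\tilde l}\ge (R-r)/3v$ directly, with no continuity or uniformity argument needed. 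You instead prove a lower bound $\min_{u\in\cG}\br{N_{u,T_1l}}_{T_1}/(\l_1 r_l^d)>e^{-1/2}$ \emph{for every} $l$, and then appeal to continuity to extend the strict inequality to a right-neighborhood of $T_1$, contradicting $T_1=\inf$.

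There is one loose end in your closing argument. The uniform margin $e^{-1/2}-e^{-1}>0$ at $t=T_1$ does not by itself produce a right-neighborhood that is \emph{uniform in $l$}: for each fixed $l$ continuity gives some $\delta_l>0$, but to exclude the infimum being approached along diverging scales you would also need a uniform-in-$l$ modulus of continuity of $t\mapsto\br{N_{u,tl}}_t/(\l_1 r_l^d)$ past $T_1$, which is not available (the a priori estimates are only known up to $t_l=T_1$). What actually rescues the argument is the finiteness of $\Lambda$: since the density hypothesis \eqref{CD} can only be satisfied at finitely many scales $l$, $\inf_l\delta_l>0$ trivially. You correctly identify uniformity over $l$ as ``the main obstacle,'' but attribute its resolution to the quantitative gap rather than to the finiteness of the admissible scales; the paper sidesteps the issue entirely by working at the single scale $\tilde l$ realizing the infimum. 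Apart from this mild misattribution, your argument is sound and equivalent in substance to the paper's.
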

		\begin{remark}
			Notice that \propref{Prop 4 lb and ub 1} implies $t_0=(R-r)/3v$. This follows from the definition $t_l:=\min\Set{(R_l-r_l)/3v,T_1}$ and the fact that $R_l-r_l=(a-1)a^{b+l}>(a-1)a^b=R-r$.
		\end{remark}
		\begin{proof}
			
			Assume, for the sake of contradiction to \eqref{T1Est}, that
			\begin{align}
				\label{contra}
				\text{		 	 $T_1<(R-r)/3v$. }
			\end{align}
			Consider a function $ f\in\ctE$ and the set $\cG=\Set{f_-, f_+}$.
			By the definition of $T_1$ in \eqref{T1Def}, it follows from \eqref{contra} that there exists $\Tilde{l}\in\N_0$ such that at time $T_1$,
			\begin{align}\label{contr1}
				e^{-1}\ge \min_{u\in\cG}\frac{\br{\Nf{u}{T_1}{\Tilde{l}}}_{T_1}}{\l_1 \abs{B_{a^{b+\Tilde{l}}}}}= \frac{\br{\Nf{f_+}{T_1}{\Tilde{l}}}_{T_1}}{\l_1 \abs{B_{r_{\Tilde{l}}}}}.
			\end{align}
			The equality is due to \lemref{lem7.3}, which is indeed applicable since $T_1< {R-r}/{3v'}$ under assumption \eqref{contra} (see \eqref{t0}).
			
			Applying lower bound \eqref{760} to the r.h.s. of  \eqref{contr1}, we obtain
			\begin{align}
				e^{-1}\ge\frac{{\br{N_{B_{r_{\Tilde{l}}}}}_{0}}}{\l_1 \abs{B_{r_{\Tilde{l}}}}} e^{-{  T_1}/{s_{\Tilde{l}}}}.\notag
			\end{align}
			The assumption of controlled density \eqref{CD}, {which is valid for $r_{\tilde l}\ge r\ge \rho$,} leads to
			\begin{align}\notag
				e^{-1}\ge e^{-{  T_1}/{s_{\Tilde{l}} }},
			\end{align}
			or, equivalently,
			\begin{align}\label{v dependency in t}
				T_1\ge     s_{\Tilde{l}}.
			\end{align}
			By the definition of $s_l$ in \eqref{slDef}, we conclude from \eqref{v dependency in t} that 
			\begin{align}
				T_1\ge\frac{R_{\tilde l}-r_{\tilde l}}{3v}\ge\frac{R-r}{3v}.\notag 
			\end{align}
			This contradicts the assumption \eqref{contra}, and the proposition is proved.
		\end{proof}
		
		\subsection{Completing the proof of \thmref{thm4} for ${p=1}$}\label{sec proof thm4}
		
		\begin{proof}[Proof of \thmref{thm4} for ${p=1}$]
			Fix $v>12\kappa$ and let $v_0=v/4$. 
			For $R/r\le4$, \eqref{main ub 1}--\eqref{main lb 1} follow for $v_0$ by  taking $v_1:=v_0/3>\kappa$ and applying \corref{cor7.6}, lower bound \eqref{T1Est} for $T_1$, and definition for $t_l$ (see \eqref{t0}).
			This fact, together with \propref{prop7}, implies the desired estimates \eqref{main ub 1}--\eqref{main lb 1}  for $R/r>4$ and $v=4v_0$.
			
		\end{proof}
		
		\subsection{Deriving integral ASTLO estimate for \texorpdfstring{$p=1$}{p=1} from  \thmref{thm4} for \texorpdfstring{$p=1$}{p=1}}\label{int astlo for p=1}
		In this section we show how we can apply \eqref{main lb 1} and \eqref{main ub 1} for $R/r\le 4$ to derive the following integral ASTLO estimate which will constitute part of the induction basis necessary to show \thmref{thm4} for $p\ge2$.
		
		\begin{proposition}\label{prop ind basis}
			Let the assumptions of \propref{Prop 1 ub 1} hold. Then, for any ${v} > \kappa$, $f\in\ctE$,  and 
			{$\delta_0\in(0,1)$,  there exist 
				\begin{align}
					A&\equiv A^{(1)}=A^{(1)}\,(v,d,\l_1,\l_2)>0\notag\\
					B&\equiv B^{(1)}=B^{(1)}\,(v,d,\l_2/\l_1)>0\notag\\
					\rho&\equiv\rho^{(1)}=\rho^{(1)}\,(v,\delta_0,\al,d,C_J,\l_1,\l_2)>0,\notag
				\end{align} 
				such that for all $l\in\mathbb{N}_0$, {$R>r\ge\rho$} with $R-r>\delta_0r$ satisfying \eqref{aCond},} and $t\le t_0= (R-r)/3v$, the following integral ASTLO estimate holds
			\begin{align}\label{IE p=1}
				s_l^{-1}\int_0^t&\br{N_{ f'_\pm,\ta l}}_\ta \rd\ta\le A\,R_l^{d}+\frac{tB}{s_l^{1+\eps}}\sup_{0\le\ta\le t}\cR_{1,l}(\ta).
			\end{align}
		\end{proposition}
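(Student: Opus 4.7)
The plan is to return to the differential inequality established in the proof of \propref{Prop 1 ub 1} and integrate its pre-logarithmic form, thereby retaining rather than dropping the term proportional to $\br{N_{f_\mp',tl}}_t$ (which was discarded there because of its sign but is precisely the quantity we now wish to integrate). Expanding the commutator $[iH,N_{f_\mp,tl}]$ via the quadratic-plus-$(1+\eps)$-remainder inequality of \lemref{new taylor} and tracking signs exactly as in \propref{Prop 1 ub 1} produces
\begin{align*}
\frac{d}{dt}\br{N_{f_-,tl}}_t &\le -\frac{v'-\kappa}{s_l}\br{N_{f_-',tl}}_t+\frac{C_f}{s_l^{1+\eps}}\cR_{1,l}(t),\\
-\frac{d}{dt}\br{N_{f_+,tl}}_t &\le -\frac{v'-\kappa}{s_l}\br{N_{f_+',tl}}_t+\frac{C_f}{s_l^{1+\eps}}\cR_{1,l}(t),
\end{align*}
where $v'-\kappa>0$ by \eqref{v'def}. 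These are simply the forms of \eqref{bound of derivative of log f, after all steps} and \eqref{after est comm lb 1} \emph{before} division by $\br{N_{f_\mp,tl}}_t$, so the commutator analysis is inherited verbatim.

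I would then isolate $\br{N_{f_\mp',tl}}_t$ on the left of each inequality and integrate over $[0,t]$ for $t\le t_0$, yielding
\begin{align*}
\frac{v'-\kappa}{s_l}\int_0^t\br{N_{f_-',\tau l}}_\tau\,d\tau &\le \br{N_{f_-,0l}}_0-\br{N_{f_-,tl}}_t+\frac{C_f\,t}{s_l^{1+\eps}}\sup_{0\le\tau\le t}\cR_{1,l}(\tau),\\
\frac{v'-\kappa}{s_l}\int_0^t\br{N_{f_+',\tau l}}_\tau\,d\tau &\le \br{N_{f_+,tl}}_t-\br{N_{f_+,0l}}_0+\frac{C_f\,t}{s_l^{1+\eps}}\sup_{0\le\tau\le t}\cR_{1,l}(\tau).
\end{align*}
For the $f_-$-estimate, I would drop the nonnegative $\br{N_{f_-,tl}}_t$ and combine \lemref{ASTLO and N} with the upper bound in \eqref{CD} to obtain $\br{N_{f_-,0l}}_0\le\br{N_{B_{R_l}}}_0\le\l_2 R_l^d$. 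For the $f_+$-estimate, I would drop the nonpositive $-\br{N_{f_+,0l}}_0$ and use \lemref{ASTLO and N} to bound $\br{N_{f_+,tl}}_t\le\br{N_{B_{R_l}}}_t$; since $R_l=r_{l+1}$, \corref{cor7.6} applied at scale $l+1$ together with \eqref{CD} and $R_{l+1}=aR_l$ gives
\begin{align*}
\br{N_{B_{R_l}}}_t=\br{N_{B_{r_{l+1}}}}_t\le\br{N_{B_{R_{l+1}}}}_0\,e^{t/s_{l+1}}\le\sqrt{e}\,\l_2 a^d R_l^d\le 4^d\sqrt{e}\,\l_2 R_l^d,
\end{align*}
where $t/s_{l+1}\le t_0/(2t_0)=1/2$ and $a<4$ by \eqref{aCond}.

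Dividing through by $v'-\kappa=(v-\kappa)/2$ and using $C_f\le 2\tilde C$ from \eqref{Cf cond} yields \eqref{IE p=1} with $A=4^d\sqrt{e}\,\l_2/(v'-\kappa)$ and $B=2\tilde C/(v'-\kappa)$; these satisfy the claimed parameter dependencies. The main technical subtlety is ensuring \corref{cor7.6} is valid at the shifted scale $l+1$ throughout $t\le t_0$, which amounts to $t_{l+1}\ge t_0$; this follows from $R_{l+1}-r_{l+1}=(a-1)a^{b+l+1}\ge R-r$ combined with the lower bound $T_1\ge(R-r)/3v$ established in \propref{Prop 4 lb and ub 1}. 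All other ingredients---the commutator bound via \lemref{new taylor}, the boundary geometry via \lemref{ASTLO and N}, and the remainder control---are already available from the preceding subsections, so the proof reduces to carefully sign-tracking and combining them.
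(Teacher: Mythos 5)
Your proof is correct, and it takes a genuinely different route from the paper's. The paper integrates the \emph{logarithmic} differential inequality \eqref{bound of derivative of log f, after all steps}, so the integrand is the ratio $\br{N_{f'_\mp,\ta l}}_\ta/\br{N_{f_\mp,\ta l}}_\ta$; extracting $\int_0^t\br{N_{f'_\mp,\ta l}}_\ta\,\rd\ta$ from that then requires lower-bounding the denominator via the bad-time estimate $\br{N_{f_\mp,\ta l}}_\ta\ge\l_1 r_l^d\me^{-1}$ from \propref{Prop 4 lb and ub 1}, upper-bounding it via \eqref{still f}, and separately controlling the boundary term $\pm\log\bigl(\br{N_{f_\mp,tl}}_t/\br{N_{f_\mp,0l}}_0\bigr)$; this is where the $\l_2/\l_1$ factor in $B^{(1)}$ arises. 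You instead integrate the \emph{pre-logarithmic} form (which is exactly what the paper derives in \eqref{differentiatin log with f} and \eqref{1 1 1} before dividing by $\br{N_{f_\mp,tl}}_t$, so no new commutator analysis is needed), and the transport term $\br{N_{f'_\mp,tl}}_t$ appears directly with the right sign once $v'>\kappa$; the boundary contribution reduces to sign-definite terms you can drop or bound by $\br{N_{B_{R_l}}}$. This avoids the denominator machinery entirely and is cleaner. For $f_+$ you reach $\br{N_{B_{R_l}}}_t$ via \lemref{ASTLO and N} and apply \corref{cor7.6} at the shifted scale $l+1$ using $R_l=r_{l+1}$, correctly verifying $t_{l+1}\ge t_0$ via \propref{Prop 4 lb and ub 1}; the paper instead uses the monotonicity \lemref{lem7.3} and \eqref{still f} at scale $l$, which is marginally shorter, but both are valid. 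Both approaches ultimately rest on the same ingredients (\lemref{new taylor}, \lemref{ASTLO and N}, the density assumption, and the bad-time bound), so the only trade-off is conceptual: yours avoids manipulating $\log$, at the small cost of a scale-shift argument in the $f_+$ case.
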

		\begin{proof}
			
			Step 1. First we show \eqref{IE p=1} for $ f_-$. Integrating \eqref{bound of derivative of log f, after all steps} between $0$ and $t\le t_0$, and rearranging the terms and applying \eqref{Cf cond} to $C_f$, we obtain
			\begin{align}\label{p=1 for in with f}
				s_l^{-1}\int_0^{t}\frac{\br{N_{f'_-,\ta l}}_\ta}{\br{N_{f_-,\ta l}}_\ta}\rd\ta &\le\frac{2\tilde CC_{v}t}{s_l^{1+\eps}}\sup_{0\le\ta\le t_0}\left(\frac{\br{N_{B_{R_l}}}_t}{\br{\Nf{f_-}{t}{l}}_t}+\sum_{x\in B_{{R_l}}}\sum_{\substack{y\in\Lambda}}\abs{J_{xy}}|x-y|^{1+\eps}\frac{\br{ n_y}_t }{\br{\Nf{f_-}{t}{l}}_t}\right)\notag\\
				&\quad+ C_v\log\frac{\br{N_{f_-,0 l}}_0}{\br{N_{f_-,t l}}_{t}}.
			\end{align}
			By the definition of $t_0$ and thanks to \propref{Prop 4 lb and ub 1} we know that the following holds 
			\begin{align}\label{8 p 1}
				\br{N_{f_{\pm,t}}}_t\ge \l_1r_l^d\me^{-1}\qquad \forall t\le\frac{R-r}{3v},\; \forall l\in\N_0.
			\end{align}
			Applying \eqref{8 p 1} and  \eqref{f and B at t=0} to \eqref{p=1 for in with f} leads to
			\begin{align}\label{p=1 for int}
				s_l^{-1}\int_0^{t}\frac{\br{N_{f'_-,\ta l}}_\ta}{\br{N_{f_-,\ta l}}_\ta}\rd\ta\le C_v\log\frac{\br{N_{B_{{R}_l}}}_0}{\l_1\me ^{-1}r_l^d}+\frac{2\tilde CC_{v}}{\l_1}\frac{t}{s_l^{1+\eps}}\;r_l^{-d}\sup_{0\le\ta\le t_0}\cR_{1,l}(\ta),
			\end{align}
			with $\cR_{1,l}(\ta)$ as in \eqref{P1Def}. To control the logarithm appearing in \eqref{p=1 for int} we apply the assumption of controlled density \eqref{CD} and  on $a$ \eqref{aCond} to obtain
			\begin{align}\label{9 p}
				\log\frac{\br{N_{B_{R_l}}}_0}{\l_1\me ^{-1}r_l^d}\le  \log \left(\frac{\me\l_2}{\l_1}\del{\frac{R}{r}}^d\right)\le \log \left(\frac{\me\l_2}{\l_1}4^d\right).
			\end{align}
			Now let us lower bound the l.h.s of \eqref{p=1 for int}. Given $\rho>0$ defined as in \eqref{rho at the end}, for any $R>r\ge\rho$ inequality \eqref{still f} yields
			\begin{align}\label{r 1}
				s_l^{-1}\int_0^{t}\frac{\br{N_{f'_-,\ta l}}_\ta}{\br{N_{f_-,\ta l}}_\ta}\rd\ta\ge \frac{s_l^{-1}}{ \me^{t{s^{-1}_l}}{\br{N_{f_-,0 l}}}_0}\int_0^{t}{\br{N_{f'_-,\ta l}}_\ta}.    
			\end{align}
			Thanks to the geometric property of $ f\in\ctE$, \eqref{f and b at t}, and the fact that $s_l>t$, from \eqref{r 1} it follows
			\begin{align}\label{r 2}
				\frac{s_l^{-1}}{ \me^{t{s^{-1}_l}}{\br{N_{f_-,0 l}}}_0}\int_0^{t}{\br{N_{f'_-,\ta l}}_\ta}\ge \frac{s_l^{-1}}{ \me{\br{N_{B_{R_l}}}_0}}\int_0^{t}{\br{N_{f'_-,\ta l}}_\ta}.
			\end{align}
			Applying \eqref{CD} to \eqref{r 2}, together with \eqref{r 1}, leads to
			\begin{align}\label{10 p}
				s_l^{-1}\int_0^{t}\frac{\br{N_{f'_-,\ta l}}_\ta}{\br{N_{f_-,\ta l}}_\ta}\rd\ta\ge \frac{s_l^{-1}}{ \me  \l_2R_l^d} \int_0^{t}{\br{N_{f'_-,\ta l}}_\ta}.
			\end{align}
			Applying \eqref{10 p} and \eqref{9 p} to \eqref{p=1 for int} leads to
			\begin{align}\notag
				s_l^{-1}\int_0^{t}{\br{N_{f'_-,\ta l}}_\ta}\le A\, R_l^d+\frac{Bt}{s_l^{1+\eps}}\sup_{0\le\ta\le t_0}\cR_{1,l}(\ta),
			\end{align}
			with 
			\begin{align}
				A&\coloneqq \me\l_2 C_v\log \left(\frac{\me\l_2}{\l_1}4^d\right),\notag\\
				B&\coloneqq 2\tilde C C_v\me4^d\frac{\l_2}{\l_1}.\notag
			\end{align}
			Notice that we also used again that $a\le 4$. This concludes the proof of \eqref{IE p=1} for $f_-$.
			
			Step 2. Following a similar strategy we show \eqref{IE p=1} for $f_+$. Integrating both sides of \eqref{after est comm lb 1} and rearranging the terms, we obtain
			\begin{align}\label{1 1 2}
				s_l^{-1}\int_0^{t}&\frac{\br{N_{f'_+,\ta l}}_\ta}{\br{N_{f_+,\ta l}}_\ta}\rd\ta\notag\\
				\le &\frac{2\tilde CC_{v}t}{s_l^{1+\eps}}\sup_{0\le\ta\le t_l}\left(\frac{\br{N_{B_{R_l}}}_t}{\br{\Nf{f_+}{t}{l}}_t}+\sum_{x\in B_{{R_l}}}\sum_{y\in\Lambda}\abs{J_{xy}}|x-y|^{1+\eps}\frac{\br{ n_y}_t }{\br{\Nf{f_+}{t}{l}}_t}\right)\notag\\
				&+ C_v\log \frac{\br{ N_{f_+,tl}}_t}{\br{ N_{f_+',0l}}_0}.
			\end{align}
			We notice that, for any $t\le (R-r)/v$, it hols that 
			\begin{align}\label{1 1 3}
				\br{\Nf{f+}{t}{l}}_t\le\br{\Nf{f-}{t}{l}}_t\le\br{\Nf{f-}{0}{l}}_0\me^{ts_l^{-1}}\le \br{N_{B_{R_l}}}_0\me^{ts_l^{-1}}\le \me\l_2R_l^d,
			\end{align}
			where the first inequality is due to \lemref{lem7.3}, the second one follows from \eqref{still f} until time $t_l$, and \propref{Prop 4 lb and ub 1} implies $t_l\ge (R-r)/3v$, and the third comes from the assumption of controlled density \eqref{CD} and the fact that $t\le s_l$.
			Inequality \eqref{1 1 3}, together with the geometric properties of $f_+$, \eqref{g and b 0}, and assumption \eqref{aCond}, implies
			\begin{align}\label{1 1 4}
				\log \frac{\br{ N_{f_+,tl}}_t}{\br{ N_{f_+',0l}}_0}\le \log \frac{\me\l_2  R^d}{\l_1r^d}\le \log \frac{4^d\me\l_2  }{\l_1}.
			\end{align}
			Using again \eqref{8 p 1} we can upper bound the remainder term in \eqref{1 1 2} as follows
			\begin{align}\label{1 1 5}
				\sup_{0\le\ta\le t_l}\left(\frac{\br{N_{B_{R_l}}}_t}{\br{\Nf{f_+}{t}{l}}_t}+\sum_{x\in B_{{R_l}}}\sum_{y\in\Lambda}\abs{J_{xy}}|x-y|^{1+\eps}\frac{\br{ n_y}_t }{\br{\Nf{f_+}{t}{l}}_t}\right)\le \frac{\me}{\l_1r_l^d} \sup_{0\le\ta\le t_l}\cR_{1,l}.
			\end{align}
			To lower bound the integrated term in \eqref{1 1 2} we use again \eqref{1 1 3}, 
			\begin{align}\label{1 1 6}
				s_l^{-1}\int_0^{t}&\frac{\br{N_{f'_+,\ta l}}_\ta}{\br{N_{f_+,\ta l}}_\ta}\rd\ta\ge \frac{s_l^{-1}}{ \me  \l_2R_l^d} \int_0^{t}{\br{N_{f'_+,\ta l}}_\ta}.
			\end{align}
			Applying \eqref{1 1 4}--\eqref{1 1 5}--\eqref{1 1 6} to \eqref{1 1 3} leads to the desired estimate \eqref{IE p=1} for $f_+$.
		\end{proof}
		\bigskip

		\section{Proof of \thmref{thm4} Assuming ASTLO Estimates}\label{second mom log}
		In the first part of this section we state the ASTLO estimates, and in the second part we show how they imply \thmref{thm4} for general $p$.
		
		\subsection{Main ASTLO estimates}
		The main result of this subsection are the following ASTLO estimates:
		\begin{itemize}
			\item [1)]\textit{Integral ASTLO estimate}
			\begin{align}\label{IE}
				s_l^{-1}\int_0^t&\br{\del{N_{f_\pm,\ta l}}^{p-1}N_{f'_\pm,\ta l}}_\ta \rd\ta\le A^{(p)}R_l^{dp}+\frac{tB^{(p)}}{s_l^{1+\eps}}\del{\sup_{0\le\ta\le t}\cR_{p,l}(\ta)+\sup_{0\le\ta\le t}\cR_{1,l}(\ta)}.\tag{IE}
			\end{align}
			
			\item [2)]\textit{Particle ASTLO estimates}
			\begin{align}
				&\pm\log\frac{\br{N^p_{f_\mp,t l}}_t}{\br{N^p_{f_\mp,0 l}}_0}\le \frac{C^{(p)}}{R_l^d}+\frac{  t}{ s_{l} } .\tag{PE}\label{PE-}
			\end{align}
		\end{itemize}
		Here we define the remainder term as follows:
		\begin{align}\label{Rp}
			&\cR_{p,l}(\ta):={\br{N^p_{B_{R_l}}}_\ta}+\sum_{x\in B_{{R_l}}}\sum_{y\in\Lambda}\abs{J_{xy}}|x-y|^{1+\eps}{\br{ N^{p-1}_{B_{R_l}}n_y}_\ta }.\tag{Rp}
		\end{align} 
		More precisely we prove the following proposition.
		\begin{proposition}[ASTLO estimates]\label{final prop ind p}
			{Let $\al>d+1$ and let \eqref{CD} hold for $p\ge1$ and some $0<\l_1<\l_2<\infty$.} 
			For  any ${v} >3 \kappa$ and
			$\delta_0\in(0,1)$ we define the following parameter set\begin{align}\label{parameters}
				\mathscr{P}=\Set{v,d,\delta_0,p,\l_1,\l_2}.
			\end{align} 
			Then, there exist constants
			\begin{align}
				A^{(p)}&=A^{(p)}\,(\scrP)>0,\notag\\
				B^{(p)}&=B^{(p)}\,(\scrP)>0,\notag\\
				C^{(p)}&=C^{(p)}\,(\scrP)\ge0,\notag\\
				\rho^{(p)}&=\rho^{(p)}\,(\scrP,\al,C_J)>0,\notag
			\end{align} 
			where $C^{(1)}=0$ and $C^{(p)}>0$ for $p\ge 2$,
			such that for all $l\in\mathbb{N}_0$, {$R>r\ge\rho^{(p)}$} with $R-r>\delta_0r$ satisfying \eqref{aCond}, and $t\le \tilde t_0\equiv (R-r)/v$, estimates \eqref{IE} and \eqref{PE-} hold for $p$ and  any $f\in\ctE$.
		\end{proposition}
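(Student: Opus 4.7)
The plan is to prove both estimates \eqref{IE} and \eqref{PE-} by a double induction: an outer induction on the moment order $p$ and an inner downward multiscale induction on the scale index $l$ that parallels the one used for $p=1$ in Proposition \ref{cor remainder}. The base case $p=1$ has already been established, specifically Corollary \ref{cor7.6} yields \eqref{PE-} with $C^{(1)}=0$ and Proposition \ref{prop ind basis} yields \eqref{IE} for $p=1$. Fix $p\ge2$ and assume both estimates hold for all $q\in\{1,\ldots,p-1\}$.

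First I would compute the time derivative of $\log \br{N^p_{f_\mp,tl}}_t$ following the logarithmic ASTLO trick introduced in the $p=1$ argument. By the product rule,
\begin{align*}
\frac{d}{dt}\br{N^p_{f_\mp,tl}}_t = \mp \frac{pv'}{s_l}\br{N^{p-1}_{f_\mp,tl} N_{f'_\mp,tl}}_t + \br{[iH,N^p_{f_\mp,tl}]}_t,
\end{align*}
where the explicit $t$-derivative uses $\partial_t f_\mp = \mp(v'/s_l) f'_\mp$. The commutator is expanded via Leibniz as $\sum_{k=0}^{p-1} N^k_{f_\mp,tl}[iH,N_{f_\mp,tl}] N^{p-1-k}_{f_\mp,tl}$, and for the innermost factor the Taylor expansion of Lemma \ref{new taylor} applied with $u=\sqrt{f'_\mp}$ produces a principal piece $\pm(p\kappa/s_l)\br{N^{p-1}_{f_\mp,tl}N_{f'_\mp,tl}}_t$ (which combines with the $\mp p v'/s_l$ term into a sign-definite leading term thanks to $v'>\kappa$) plus a remainder of order $s_l^{-1-\eps}\cR_{p,l}(t)$ with $\cR_{p,l}$ as in \eqref{Rp}. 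The interaction term $V_{xy}a_x^*a_y^*a_ya_x$ is a function of number operators only, hence commutes with each $N_{f_\mp,tl}$ and contributes nothing, exactly as in the $p=1$ case.

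Dividing by $\br{N^p_{f_\mp,tl}}_t$ and using the $p$-th moment version of the lower bound \eqref{tlprop} (which follows from the $p=1$ particle propagation estimate of Corollary \ref{cor7.6} together with the controlled density assumption \eqref{CD}) allows the denominator to be controlled from below by $c\, r_l^{dp}$ up to time $\tilde t_0$. Integrating the resulting differential inequality, exponentiating, and bounding the boundary contribution via $\br{N^p_{f_\mp,0l}}_0/(c r_l^{dp})\le C^{(p)}$ yields \eqref{PE-}. To obtain \eqref{IE}, one instead integrates the undivided differential inequality \emph{without} dropping the principal term: rearranging gives
\begin{align*}
\frac{p(v'-\kappa)}{s_l}\int_0^t\br{N^{p-1}_{f_\mp,\tau l} N_{f'_\mp,\tau l}}_\tau\, d\tau \le \br{N^p_{f_\mp,0l}}_0 + \frac{C_f\, t}{s_l^{1+\eps}}\sup_{\tau\le t}\bigl(\cR_{p,l}(\tau)+\cR_{1,l}(\tau)\bigr),
\end{align*}
where the extra $\cR_{1,l}$ contribution comes from cross-terms in the Leibniz expansion that involve only one factor of $n_y$. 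Combining the upper bounds \eqref{f and B at t=0}--\eqref{g and B t} on the initial ASTLO with \eqref{CD}, and invoking Proposition \ref{prop ind basis} to absorb the $\cR_{1,l}$ contribution, yields \eqref{IE} with constants $A^{(p)},B^{(p)}$ depending only on $\scrP$.

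The main obstacle I anticipate is verifying that the Leibniz expansion of $[iH,N^p_{f_\mp,tl}]$ produces only remainders controlled by $\cR_{p,l}$ as defined in \eqref{Rp}, rather than genuinely new higher-order mixed moments. The monomial $N^{p-1}_{B_{R_l}} n_y$ appearing in \eqref{Rp} arises naturally from the expansion because $N_{f_\mp,tl}$ and $n_y$ commute pointwise, but one must check that the symmetric Taylor remainder $\sqrt{f'_\mp(|x|_{t,l}) f'_\mp(|y|_{t,l})}|x-y|$ from Lemma \ref{new taylor} can be absorbed into $\cR_{p,l}$ after pointwise estimates $f_\mp\le \chi_{B_{R_l}}$ together with the power-law bound \eqref{no 1}. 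This step, combined with the downward multiscale induction controlling $\cR_{p,l}$ carried out in the subsequent sections and analogous to the one used to prove Proposition \ref{cor remainder}, constitutes the technical heart of the argument.
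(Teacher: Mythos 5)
Your overall architecture---logarithmic ASTLO differential inequality, Leibniz/Taylor expansion of the commutator, and a double induction (outer on $p$, inner downward on the scale $l$)---matches the paper's. But the expansion you describe is incomplete in a way that breaks the induction step, and the implication graph among (PE) and (IE) is not quite the one that closes.

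The crucial omission is an intermediate-order term in the differential inequality. When you combine the explicit $t$-derivative $\mp(pv'/s_l)\br{N_{f_\mp,tl}^{p-1}N_{f'_\mp,tl}}_t$ with the leading commutator contribution, the latter appears with $(N_{f_\mp,tl}+\mathbf{1})^{p-1}$, not $N_{f_\mp,tl}^{p-1}$. Matching powers leaves a residual $(pv'/s_l)\br{(N_{f_\mp,tl}+\mathbf{1})^{p-2}N_{f'_\mp,tl}}_t$ of order $s_l^{-1}$, which is neither sign-definite together with the $v'>\kappa$ cancellation nor small enough to join the $s_l^{-1-\eps}\cR_{p,l}$ remainder. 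The paper keeps this term explicitly (cf.\ the first summand on the right of \eqref{1 p}) and controls its time integral by the induction hypothesis (IE) for $p-1$; this is exactly where $(\mathrm{IE})_{p-1}$ feeds into the proof of $(\mathrm{PE})_p$ and where the constant $C^{(p)}/R_l^d$ is generated. Your plan uses $(\mathrm{IE})_{p-1}$ only when later proving $(\mathrm{IE})_p$, not in the proof of $(\mathrm{PE})_p$, so the differential inequality for $\log\br{N^p_{f_\mp,tl}}_t$ cannot be closed as written.

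Your "undivided" derivation of (IE) also has a sign problem for $f_+$. Integrating $-\partial_\tau\br{N^p_{f_+,\tau l}}_\tau$ from $0$ to $t$ and rearranging leaves the boundary term $\br{N^p_{f_+,tl}}_t$ at dynamical time $t$ on the right-hand side (the initial-time term $\br{N^p_{f_+,0l}}_0$ has the favourable sign and may be dropped, but not the time-$t$ one). The geometric facts \eqref{f and B at t=0}--\eqref{g and B t} only control the initial configuration and cannot bound $\br{N^p_{f_+,tl}}_t$; you need $(\mathrm{PE})_p$ (equivalently Lemma~\ref{lem7.3} plus \eqref{Pe- no s}) for that. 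This is precisely why the paper derives $(\mathrm{PE})_p$ first and uses it as input to Prop.~\ref{prop pe implies ie}: the $p$-th moment upper bound of $(\mathrm{PE})_p$ supplies the denominator control needed to extract the integral. The dependence $(\mathrm{IE})_p\Leftarrow(\mathrm{PE})_p$ cannot be eliminated by switching from the logarithm to the raw quantity.

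Finally, you correctly flag the multiscale control of $\cR_{p,l}$ as the technical heart, but you do not supply the required tool. The mixed moment $\br{N^{p-1}_{B_{R_l}}n_y}_\tau$ with $y$ in a distant annulus cannot be handled by the $p=1$ scheme verbatim; the paper decouples it via H\"older's inequality for commuting positive operators (Lemma~\ref{lemma holder}), giving $\br{N^{p-1}_{B_{R_l}}N_{B_{a^{m+1}}}}_\tau\le\br{N^p_{B_{R_l}}}_\tau^{(p-1)/p}\br{N^p_{B_{a^{m+1}}}}_\tau^{1/p}$, after which the single-scale particle bound \eqref{21 p} applies to each factor. Without this step the downward multiscale induction on $l$ for the remainder estimate (Prop.~\ref{Prop 2 ub 2}) does not close.
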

		We prove \propref{final prop ind p} by induction on $p$, where the base case  follows from what we have showed in \secref{first mom log}. The proof can be found in \secref{sec proof astlo est}.
		
		\subsection{Completing the proof of \thmref{thm4} assuming main ASTLO Estimates}
		We start the section by proving that if \thmref{thm4} holds  on truncated Fock spaces, then it also holds on the full Fock space. We conclude it by showing how the ASTLO estimates imply \thmref{thm4}.

		\subsubsection{\thmref{thm4} on truncated Fock space implies \thmref{thm4} on the full Fock space}
		Consider the following spaces
		\begin{align}\notag
			&\cF_{\le N_0}\del{\ell^2\del{\Lambda}}\coloneqq\bigoplus_{\hphantom{=\,}n=1\hphantom{+\,}}^{N_0} \mathcal S_n\left(\bigotimes_{j=1}^n \ell^2(\Lambda)\right),\\
			&\cF_{> N_0}\del{\ell^2\del{\Lambda}}\coloneqq\bigoplus_{n=N_0+1}^{\infty} \mathcal S_n\left(\bigotimes_{j=1}^n \ell^2(\Lambda)\right).\notag
		\end{align}
		In this section we show that if, for any fixed $N_0>0$, \thmref{thm4} holds for any initial wave function $\tilde \psi$ such that
		\begin{align}
			\tilde \psi\in\cF_{\le N_0}\del{\ell^2\del{\Lambda}},\notag
		\end{align}
		then \thmref{thm4} holds for any $\psi\in\cF\del{\ell^2\del{\Lambda}}$.

		\begin{proposition}\label{prop big psi}
			Given $\l_2>\l_1>0$ and $p\ge2$. If, for any $N_0>0$, \thmref{thm4} holds for initial states in $\cF_{\le N_0}\del{\ell^2\del{\Lambda}}$ satisfying \eqref{CD} for $p,\, \l_1,$ and $\l_2$, then it also holds for any $\psi\in\cF\del{\ell^2\del{\Lambda}}$ satisfying \eqref{CD} for $p,\, \l_1,$ and $\l_2$.
		\end{proposition}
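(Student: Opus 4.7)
The plan is to exploit particle-number conservation and reduce to the truncated setting via monotone approximation. Since $[H_\Lam,N_\Lam]=0$, the propagator $e^{-itH_\Lam}$ commutes with the projector $P_{\le N_0}$ onto $\cF_{\le N_0}(\ell^2(\Lam))$; in particular $P_{\le N_0}\psi_t=e^{-itH_\Lam}P_{\le N_0}\psi$. The operator $N^p_{B_r}$ also preserves particle number, so by orthogonality of the sectors
\[
\langle P_{\le N_0}\psi_t,\,N^p_{B_r}\,P_{\le N_0}\psi_t\rangle = \sum_{n\le N_0}\langle\psi_{n,t},\,N^p_{B_r}\psi_{n,t}\rangle \;\nearrow\; \langle\psi_t,N^p_{B_r}\psi_t\rangle
\]
as $N_0\to\infty$ by monotone convergence, and likewise at $t=0$ with $B_R$ in place of $B_r$, as well as for every moment $q\in\{1,\dots,p\}$ and every pair $(x,r)$ of interest.

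Next I would verify that the normalized truncation $\tilde\psi_{N_0}:=P_{\le N_0}\psi/\|P_{\le N_0}\psi\|$ satisfies the density bound \eqref{CD} with adjusted constants $\tilde\l_1:=\l_1/2$ and $\tilde\l_2:=2\l_2$, independently of $N_0$, once $N_0=N_0(\psi)$ is sufficiently large. Since $\Lam$ is finite, the relevant indices $(x,r,q)$ on which the lower bound of \eqref{CD} must be enforced for $\tilde\psi_{N_0}$ form a finite set, and the monotone convergence above can be enforced simultaneously there: pick $N_0$ with $\|P_{\le N_0}\psi\|^2\ge 1/2$ and $\langle P_{\le N_0}\psi,N^q_{B_r(x)}P_{\le N_0}\psi\rangle\ge \tfrac12(\l_1 r^d)^q$ for every such $(x,r,q)$. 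Combined with $\|P_{\le N_0}\psi\|^2\le 1$, this yields $\langle N^q_{B_r(x)}\rangle_{\tilde\psi_{N_0}}\ge (\l_1 r^d)^q/2\ge(\tilde\l_1 r^d)^q$; the upper bound $\langle N^q_{B_r(x)}\rangle_{\tilde\psi_{N_0}}\le 2(\l_2 r^d)^q\le(\tilde\l_2 r^d)^q$ is inherited directly from $\psi$. Uniformity in $q\in\{1,\dots,p\}$ of both bounds rests on the elementary inequality $2^q\ge 2$ for $q\ge 1$.

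Finally I apply the hypothesis, namely the conclusion of \thmref{thm4} on $\cF_{\le N_0}$, to $\tilde\psi_{N_0}$ with density parameters $(\tilde\l_1,\tilde\l_2)$. This yields \eqref{main ub 1}--\eqref{main lb 1} for $\tilde\psi_{N_0}$ with constants $\tilde C,\tilde\rho$ depending on $v,\delta_0,d,\tilde\l_1,\tilde\l_2,p,\al,C_{J,\al}$ but independent of $N_0$. Multiplying through by $\|P_{\le N_0}\psi\|^2$ removes the normalization and produces the same bounds for $P_{\le N_0}\psi$; sending $N_0\to\infty$ and invoking the monotone convergence from the first step delivers \eqref{main ub 1}--\eqref{main lb 1} for $\psi$ itself, with final constants depending on $(\l_1,\l_2,p)$ through $(\tilde\l_1,\tilde\l_2)$ alone. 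The lower bound is handled by the symmetric argument. No serious obstacle arises here; the main bookkeeping point is to choose the adjusted density constants uniformly in $q$ over the moment range, which is handled by the same inequality $2^q\ge 2$.
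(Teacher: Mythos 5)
Your proposal follows essentially the same route as the paper: decompose the state into particle-number sectors using the fact that $e^{-itH_\Lam}$ and $N_X$ commute with the sector projection, reduce to the truncated Fock space, apply the hypothesis there, and pass to the limit. The paper writes $\psi=\psi^{\le}+\psi^{>}$, observes that cross terms vanish by \eqref{no mixed}, shows $\br{\psi^{>},N^p\psi^{>}}\le\eps$ for $N_0$ large (using conservation of $N_\Lam$ and finiteness of $\br{\psi,N^p\psi}$), applies the truncated \thmref{thm4} to $\psi^{\le}$ for the upper bound and the reverse chain for the lower bound, and then sends $\eps\to 0$.

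Where you genuinely deviate is in handling the density assumption on the truncated state. The paper applies the truncated \thmref{thm4} directly to $\psi^{\le}$ without normalizing it or verifying that $\psi^{\le}$ satisfies \eqref{CD}; this is left implicit. You instead pass to $\tilde\psi_{N_0}=P_{\le N_0}\psi/\|P_{\le N_0}\psi\|$ and explicitly check that, for $N_0$ large, $\tilde\psi_{N_0}$ satisfies \eqref{CD} with the relaxed constants $\tilde\l_1=\l_1/2$, $\tilde\l_2=2\l_2$, uniformly over the (effectively finite) set of pairs $(x,r,q)$ since $\Lam$ is finite. You then de-normalize and take $N_0\to\infty$. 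This buys more rigor: it addresses the point that the hypothesis of the proposition only grants \thmref{thm4} for truncated states \emph{satisfying \eqref{CD}}, so one does need to know the truncated state (or its normalization) qualifies; the cost is that the final constants $C,\rho$ are those belonging to $(\tilde\l_1,\tilde\l_2)$ rather than $(\l_1,\l_2)$, which the proposition's statement permits. Your argument is correct; just make sure to note explicitly (as you do in passing with "$2^q\ge 2$") that both the upper and lower bounds in \eqref{CD} are being enforced for all $q\in\{1,\dots,p\}$ simultaneously, and that the choice of $N_0$ depends on $\psi$ but not on $(x,r,q)$ because the uniform error bound $\br{P_{>N_0}\psi,N_\Lam^p P_{>N_0}\psi}\to 0$ dominates all terms at once.
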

		\begin{proof}
			Step 1.    Consider $\psi\in\cF\del{\ell^2\del{\Lambda}}.$ We can write it as
			\begin{align}
				\psi=\psi^{\le}+\psi^{>},\notag
			\end{align}
			where
			\begin{align}
				\psi^{\le}\coloneqq P_{N\le N_0}\psi&\in\cF_{\le N_0}\del{\ell^2\del{\Lambda}},\notag\\
				\psi^{>}\coloneqq P_{N> N_0}\psi&\in\cF_{> N_0}\del{\ell^2\del{\Lambda}},\notag
			\end{align}
			where $P_{N\le N_0}$ and $P_{N> N_0}$ are the projections onto $\cF_{\le N_0}\del{\ell^2\del{\Lambda}}$ and $\cF_{> N_0}\del{\ell^2\del{\Lambda}}$, respectively.
			For any $X\in\Lambda$, by linearity it holds
			\begin{align}\label{together}
				\br{\psi_t,N_X^p\psi_t}&=\br{\psi^{\le}_t+\psi^{>}_t,N_X^p\del{\psi^{\le}_t+\psi^{>}_t}}\notag\\
				&=\br{\psi^{\le}_t,N_X^p{\psi^{\le}_t}}+\br{\psi^{>}_t,N_X^p{\psi^{>}_t}}+\br{\psi^{\le}_t,N_X^p{\psi^{>}_t}}+\br{\psi^{>}_t,N_X^p{\psi^{\le}_t}},
			\end{align}
			where 
			\begin{align}\notag
				\psi_t:=\me ^{iHt}\psi.
			\end{align}
			Since $$\sbr{N_X,P_{N\le N_0}}=0=\sbr{N_X,P_{N> N_0}}$$ then
			\begin{align}
				\br{\psi^{>}_t,N_X^p{\psi^{\le}_t}}=0=\br{\psi^{\le}_t,N_X^p{\psi^{>}_t}},\notag
			\end{align}
			and \eqref{together} becomes
			\begin{align}\label{no mixed}
				\br{\psi_t,N_X^p\psi_t}=\br{\psi^{\le}_t,N_X^p{\psi^{\le}_t}}+\br{\psi^{>}_t,N_X^p{\psi^{>}_t}}.  
			\end{align}
			
			Since the total number operator commutes with the Hamiltonian, it holds
			\begin{align}\label{after no mixed}
				\br{\psi^{>}_t,N^p_X{\psi^{>}_t}}\le\br{\psi^{>}_t,N^p{\psi^{>}_t}}=\br{\psi^{>},N^p{\psi^{>}}}.
			\end{align}
			By spectral decomposition we write
			\begin{align}\notag
				\br{\psi^{>},N^p{\psi^{>}}}=\sum_{n>N_0}n^p\norm{P_{N=n}\psi}.
			\end{align}
			Notice that, since we are assuming that $\psi$ satisfies \eqref{CD}, then 
			\begin{align}
				\br{\psi,N^p\psi}=\sum_{n=1}^{\infty}n^p\norm{P_{N=n}\psi}<\infty.\notag
			\end{align}
			The convergence of the sum implies that for any $\eps>0$ there exists an $\tilde N_0$ such that, for every $N_0\ge \tilde N_0$, the following holds
			\begin{align}\label{norm and eps}
				\sum_{n>N_0}n^p\norm{P_{N=n}\psi}\le\eps.
			\end{align}
			Inequality \eqref{norm and eps}, together with \eqref{no mixed}
			and \eqref{after no mixed}, yields
			\begin{align}\label{psi 1}
				\br{\psi_t,N_X^p\psi_t}\le \br{\psi^{\le}_t,N_X^p{\psi^{\le}_t}}+\eps.
			\end{align}

			Choose $X=B_r$, since we are assuming \eqref{main ub 1} holds for $\psi^{\le}$, we can bound the r.h.s. of \eqref{psi 1} by
			\begin{align}\label{psi 2}
				\br{\psi^{\le}_t,N_{B_r}^p{\psi^{\le}_t}}\le\br{\psi^{\le},N_{B_R}^p{\psi^{\le}}}\exp\Set{\frac{C}{R^d}+\frac{tv}{(R-r)}} +\eps.
			\end{align}
			Since $N_X\ge0$ for every $X\in\Lambda$, then \eqref{no mixed} leads to
			\begin{align}\label{psi 3}
				\br{\psi_t^\le,N^p_X\psi_t^\le}\le  \br{\psi_t,N^p_X\psi_t}.
			\end{align}
			Combining \eqref{psi 1}, \eqref{psi 2}, and \eqref{psi 3} we obtain that for every $\eps>0$ there exists an $\tilde N_0$ such that for every $N_0\ge \tilde N_0$ it holds
			\begin{align}\label{psi 4}
				\br{\psi_t,N_{B_r}^p{\psi_t}}\le \br{\psi,N_{B_R}^p{\psi}}\exp\Set{\frac{C}{R^d}+\frac{tv}{(R-r)}} +\eps.
			\end{align}
			Taking the limit $\eps \to 0$, since both the r.h.s and the l.h.s. of \eqref{psi 4} don't depend on $N_0$ anymore,  we obtain the desired estimate.
			
			Step 2. Now let us show that \eqref{main lb 1} also holds for a general $\psi$ satisfying \eqref{CD}. Combining \eqref{psi 1}, \eqref{psi 3}, and \eqref{main lb 1} for $\psi^\le$,  we obtain that for every $\eps>0$ there exists an $\tilde N_0$ such that for every $N_0\ge \tilde N_0$ it holds
			\begin{align*}
				\br{\psi_t,N^p_X\psi_t}&\ge\br{\psi_t^\le,N^p_X\psi_t^\le}\\
				&\ge \br{\psi^\le,N^p_X\psi^\le}\exp\Set{-\del{\frac{C}{R^d}+\frac{tv}{(R-r)}}}\\
				&\ge \del{\br{\psi,N^p_X\psi}-\eps}\exp\Set{-\del{\frac{C}{R^d}+\frac{tv}{(R-r)}}}.
			\end{align*}
			Taking the limit $\eps \to 0$ we conclude the proof.
		\end{proof}
		
		\subsubsection{ASTLO Estimates   Imply \thmref{thm4}}
		\begin{proof}
			We first notice that if \eqref{PE-} holds for an initial state $\psi\in\cF_{\le N_0}\del{\ell^2\del{\Lambda}}$, for some $N_0>0$, that satisfies \eqref{CD}, then \eqref{main lb 1} and \eqref{main ub 1} hold for such initial state and $R,r$ as in \propref{final prop ind p}. This follows by exponentiating both sides of \eqref{PE-}, and making use of the geometric properties of the ASTLOs, \eqref{f and b at t}--\eqref{f and B at t=0}--\eqref{g and b 0}--\eqref{g and B t}.
			
			By \propref{prop big psi}, \eqref{main lb 1} and \eqref{main ub 1} follow for states in $\cF\del{\ell^2\del{\Lambda}}$ with controlled density and radii $R,r$ as in \propref{final prop ind p}.
			
			Thanks to \propref{prop7}, if \eqref{main ub 1} and \eqref{main lb 1} hold for some $v_0\ge3\kappa$ and $a\in(\delta_0+1,4]$, then \eqref{main lb 1} and \eqref{main ub 1} hold for $a>4$ and $v_1:=4v_0$.
		\end{proof}
		
		\bigskip

		\section{Proof of Main ASTLO Estimates Assuming Remainder Estimate}\label{sec proof astlo est} 
		In this section, we assume that the remainder estimate stated below holds, and we derive the ASTLO estimates stated in \propref{final prop ind p}. 
		\begin{proposition}[Remainder estimate]\label{Prop 2 ub 2}
			Let $\al>d+1$ and let \eqref{CD} hold for $p\ge1$ and some $0<\l_1<\l_2<\infty$. 
			Let ${v} > \kappa$, 
			$\delta_0\in(0,1)$, and $\mu>0$. Assume \eqref{IE} for $p-1$ holds for some function $f\in\ctE$, some constants $A^{(p-1)}$, $\,B^{(p-1)}>0$, and radii $ R>r\ge\max\Set{1,\rho^{(p-1)}}$, for some $\rho^{(p-1)}>0$, satisfying \eqref{aCond} and $R-r>\delta_0 r$, $l\in\N_0$ and any $t\le t_0$. Then,	there exist
			\begin{align}
				C_{\cR,p}=C_{\cR,p}\del{A^{(p-1)},\scrP,C_J,\al}>0,\notag
			\end{align}
			such that for
			\begin{align}\notag
				\rho^{(p)}=\rho^{(p)}(C_{\cR,p},C_{\cR,1},B^{(p-1)}, \rho^{(p-1)}, \scrP)>0
			\end{align}
			(see definition in \eqref{9 o}), all $l\in\N_0$, and $R>r\ge\rho^{(p)}$ with $R-r>\delta_0r$ and satisfying \eqref{aCond},
			\begin{align}\label{rem Pp bound}
				r^{-dp}\sup_{0\le\ta\le t_0}\cR_{p,l}(\ta)\le C_{\cR,p}.
			\end{align}
		\end{proposition}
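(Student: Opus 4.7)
The proof follows the downward multiscale induction on the scale index $l$ used in \propref{cor remainder} for the case $p=1$, with the integral ASTLO estimate \eqref{IE} at order $p-1$ as the new input required to pass from moment $p-1$ to moment $p$. Fix the geometric parameters $a, b$ and $R_l = a^{b+l+1}$, $r_l = a^{b+l}$ as in \eqref{silDef}. The base case is straightforward: for $l \ge M := \log_a L$ where $\Lambda \subset B_L$, we have $B_{R_l} \supseteq \Lambda$, so conservation of total particle number together with the density assumption \eqref{CD} for $q=p$ yields $\br{N^p_{\Lambda}}_0 \le (\l_2 L^d)^p$, and after Schur's test via $\kappa_\eps$ (see \eqref{k2Cond}) both terms in $\cR_{p,l}$ are crudely dominated by a constant.

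For the inductive step, assume \eqref{rem Pp bound} at all scales $j \ge l+1$. The key intermediate step is to convert the assumed hypothesis \eqref{IE} at order $p-1$ into a pointwise-in-time control of the $p$-th moment at scale $l+1$. Concretely, differentiating $\log\br{(N_{f_-,\ta, l+1})^p}_\ta$ in time and using a Leibniz-type commutator expansion analogous to the one used in the proof of \propref{Prop 1 ub 1}, one obtains an integrand controlled by $\br{(N_{f_-,\ta,l+1})^{p-1}N_{f'_-,\ta,l+1}}_\ta$ plus a remainder. Integrating, inserting \eqref{IE} at order $p-1$, and using \propref{cor remainder} for $\cR_{1,l+1}$ together with the induction hypothesis for $\cR_{p,l+1}$, yields
\begin{equation}
\sup_{0\le\ta\le t_0}\br{N^p_{B_{r_{l+1}}}}_\ta \;\le\; C\,\br{N^p_{B_{R_{l+1}}}}_0\,\me^{t_0/s_{l+1}} \;\le\; C'\,r_{l+1}^{dp}, \notag
\end{equation}
where the last bound uses \eqref{CD} and $t_0 \le s_{l+1}$. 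Since $R_l = r_{l+1}$, this immediately handles Term I of $\cR_{p,l}$: $r_l^{-dp}\br{N^p_{B_{R_l}}}_\ta \le C a^{dp} \le C\cdot 4^{dp}$ by \eqref{aCond}.

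For Term II, I mirror the dichotomy of \propref{cor remainder}, splitting the $y$-sum into the near part $y \in B_{R_{l+1}}$ and the far part $y \notin B_{R_{l+1}}$. In the near part, I bound $\br{N^{p-1}_{B_{R_l}} n_y}_\ta \le \br{N^p_{B_{R_{l+1}}}}_\ta$, reduce the double sum to $\kappa_\eps \cdot \sup_\ta \br{N^p_{B_{R_{l+1}}}}_\ta / r_l^{dp}$ (Schur), and conclude via the moment bound derived in the previous paragraph. In the far part, I decompose dyadically into annuli $a^m \le |y| < a^{m+1}$ for $m > l+b+1$, use the power-law decay $|J_{xy}| \le C_J|x-y|^{-\g}$ with $\g = \al - 1 - \eps > d$ (see \eqref{gaDef}--\eqref{gaEst}), and control the mixed expectation by H\"older's inequality
\begin{equation}
\br{N^{p-1}_{B_{R_l}} n_y}_\ta \;\le\; \br{N^p_{B_{R_l}}}_\ta^{(p-1)/p}\,\br{n_y^p}_\ta^{1/p},\notag
\end{equation}
bounding $\br{n_y^p}_\ta$ by $\br{N^p_{B_1(y)}}_\ta$ and then by $r_{m}^{dp}\cdot(\text{const.})$ using the moment bound of Sub-step 2a applied at scale $m$. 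The convergence of the resulting geometric series in $m$ is ensured by $\g > d$, just as in \eqref{control 2 2}. The boundary contribution from $\{y : d(y, B_{R_l}) = 1\}$ is handled exactly as in \eqref{contr new term rem}.

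\textbf{Main obstacle.} The hardest part is Sub-step 2a: the logarithmic differentiation argument must reproduce \emph{for moment $p$} the structure that \eqref{IE} at order $p-1$ is tailored to close. In particular, the commutator expansion $[iH, (N_{f,tl})^p]$ produces, via the discrete Leibniz rule, weighted terms of the form $(N_{f,tl})^{p-1} N_{f',tl}$ together with genuinely mixed remainders of order $p$. Controlling these mixed remainders requires simultaneously invoking \eqref{IE} for $p-1$, \propref{cor remainder} for $\cR_{1,l+1}$, and the induction hypothesis for $\cR_{p,l+1}$, all balanced by the logarithmic renormalization trick introduced in \secref{first mom log} so that the far-field contribution of the hopping term is tamed before the H\"older step is applied. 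Once this bookkeeping is stable, the downward induction on $l$ closes exactly as in the $p=1$ case, yielding a constant $C_{\cR,p}$ depending only on $A^{(p-1)}$, $C_{\cR,1}$, and the parameter set $\scrP$.
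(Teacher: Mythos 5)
Your overall strategy matches the paper's: a downward multiscale induction on the scale index $l$, with the integral ASTLO estimate \eqref{IE} at order $p-1$ feeding a pointwise-in-time moment bound of the form \eqref{key estiamte rem p} at all scales $j\ge l$, which then controls Term $\mathrm{I}$, the near part of Term $\mathrm{II}$, and the boundary term exactly as in \propref{cor remainder}. The derivation of the moment bound itself is the content of \lemref{before bound rem} together with the choice $\mu=C_{\cR,p}+C_{\cR,1}$, so that part of your sketch is in the right spirit.

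However, your treatment of the far part of Term $\mathrm{II}$ has a genuine gap. You apply \lemref{lemma holder} \emph{per site} $y$, writing $\br{N_{B_{R_l}}^{p-1} n_y}_\ta\le\br{N_{B_{R_l}}^p}_\ta^{(p-1)/p}\br{n_y^p}_\ta^{1/p}$, and then try to bound $\br{n_y^p}_\ta$ using the scale-$m$ moment bound. But the induction only controls $\br{N_{B_{R_j}}^p}_\ta$ for balls centered at the origin, so the best you can do is $\br{n_y^p}_\ta\le\br{N_{B_{a^{m+1}}}^p}_\ta\lesssim a^{dpm}$, i.e.\ $\br{n_y^p}_\ta^{1/p}\lesssim a^{dm}$. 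The subsequent sum over the $\asymp a^{dm}$ sites $y$ in the annulus $a^m\le|y|<a^{m+1}$, weighted by $|x-y|^{-\g}\lesssim a^{-m\g}$, then contributes $a^{m(2d-\g)}$ per annulus, which converges only if $\g>2d$, i.e.\ $\al>3d+1$ after inserting $\g=\al-1-\eps$ and $\eps=(\al-d-1)/2$. This is far from the sharp hypothesis $\al>d+1$. The paper avoids this loss by first resumming $\sum_{a^m\le y<a^{m+1}} n_y\le N_{B_{a^{m+1}}}$ \emph{inside} the expectation (after extracting $\max_y|x-y|^{-\g}\lesssim a^{-m\g}$ as a prefactor), and only then applying H\"older to $\br{N_{B_{R_l}}^{p-1}N_{B_{a^{m+1}}}}_\ta$. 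That order of operations produces a single factor $a^{dm}$ from $\br{N_{B_{a^{m+1}}}^p}_\ta^{1/p}$ rather than two, giving $a^{m(d-\g)}$ and convergence for $\g>d$ as required. You should reverse the order: resum over $y$ into a single number operator first, then H\"older.
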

		The proof of this remainder estimate can be found in \secref{second mom log b}. Assuming \propref{Prop 2 ub 2}, we show the ASTLO estimates by induction on the moment parameter $p$, where the base case has been proved in \secref{sec prop 1 and 2}. The strategy of the proof is similar to the one developed in \secref{first mom log}, and can be fund in \secref{sub sec proof strategy astlo est}. The reader can find the full proof in \secref{sub sec proof astlo est}.

		\subsection{Proof Strategy for Main ASTLO Estimates}\label{sub sec proof strategy astlo est}
		The main challenge of proving \propref{final prop ind p} lays, as in the case $p=1$, in controlling the contribution from far away particle. To overcome it, we follow closely the method we developed in \secref{sec prop 1 and 2}, with two main differences:
		\begin{itemize}
			\item[a)] To control higher moments of the ASTLO we need to construct an induction on the moment parameter $p$. \secref{sec prop 1 and 2} constitutes the base case of this induction.
			\item[b)] We do not need to define a new bad time until when all scales are populated enough. 
		\end{itemize}
		The second point is due to the fact that, thanks to \propref{Prop 4 lb and ub 1}, given $v>\kappa$, for every $t\le \tilde t_0$ with
		\begin{align}\notag
			t_0= \frac{(R-r)}{3v},
		\end{align}
		\begin{align}\notag
			\br{N_{f_{\pm,t}}}_t\ge \l_1r^{d}\me^{-1}.
		\end{align}
		This, together with Jensen inequality and the fact that $N_{f_{\pm,t}}$ is a positive operator, leads to 
		\begin{align}\label{jens + p=1}
			\br{N^p_{f_{\pm,t}}}_t\ge \left(\l_1r^d\me^{-1}\right)^p,
		\end{align}
		for every $p\ge 1$ and $t\le t_0$.
		Since we aim to control higher moments of the number operator we need to assume \eqref{CD} holds for $p\ge2$, that means that we require additionally that there exist ${0<\l_1<\l_2<\infty}$  s.th.
		\begin{align}\notag
			(\l_1r^d)^q\le \wzdel{N_{B_r(x)}^q}\le (\l_2r^d)^q \qquad  (1\le q\le p,x\in \Lam, r\ge1).
		\end{align}
		In this and the following sections, thanks to \propref{prop big psi}, we can restrict our attention to initial states ${\psi\in\cF_{\le N_0}\del{\ell^2\del{\Lambda}}}$ for some $N_0>0$ which satisfy \eqref{CD} for some $\l_2>\l_1>0$ and $p\ge1$. The proof of \propref{final prop ind p} follows the following structure. We start by differentiating $\log\br{ N^p_{f,t}}_t$ with respect to time, obtaining the following inequality
		\begin{align}
			\pm\log\frac{\br{\Nf[p]{f_\mp}{t}{l}}_t}{\br{\Nf[p]{f_\mp}{0}{l}}_0}\le& \frac{pv'}  {s_l\left(\l_1r_l^de^{-1}\right)^p}\int_0^t{\br{\left(N_{f_\mp,\ta l}+1\right)^{p-2}N_{f'_\mp,\ta l}}_\ta}\rd \ta\notag\\
			&+\frac{ ptC_f}{s_l^{1+\eps}\left(\l_1r_l^de^{-1}\right)^p}\sup_{0\le \ta\le t_0}\del{{\cR}_{p,l}(\ta)+{\cR}_{1,l}(\ta)}.\label{before int est}
		\end{align}
		We notice that the integral ASTLO estimates \eqref{IE} would allow us to bound the integrated term appearing in \eqref{before int est}. Realizing that the integral ASTLO estimates \eqref{IE} and the particle ASTLO estimates \eqref{PE-} are intertwined, we set up an induction on the moment parameter $p$, that will allow us to show both sets of estimates in parallel. 
		\begin{itemize}
			\item \propref{prop ind basis} and \thmref{thm4} for $p=1$, that we showed in \secref{sec prop 1 and 2}, constitute the base case of this induction. 
			\item In \propref{prop pe implies ie}, we show that if \eqref{IE} hold for $p-1$ and \eqref{PE-} hold for $p$, then \eqref{IE} will hold for $p$ too.
			\item To close the induction we make use of the remainder estimate, \propref{Prop 2 ub 2}, together with \lemref{before bound rem}, to show that \eqref{IE} and \eqref{PE-} for $p-1$, imply \eqref{PE-} for $p$.
		\end{itemize}
		The reader can refer to the flow chart in \figref{flow chart 1} to have a better understanding on the interplay between \secref{sec prop 1 and 2} and the present section. 
		
		\begin{figure}[t]
			\centering
			\begin{tikzpicture}[
				node distance=1.5cm and 3cm,  line width=0.3mm,
				box/.style = {rounded corners, fill=blue!12, align=center, minimum height=2.2em},
				box2/.style = {rounded corners, fill=green!25, align=center, minimum height=2em},
				arrow/.style = {thick, -Stealth},
				every label/.append style = {font=\scriptsize, align=center},
				box3/.style = {rounded corners, fill=purple!20, align=center, minimum height=2.3em, font=\large}  , 
				highlight/.style={ line width=2pt, draw=yellow!85!red},
				box4/.style={highlight,rounded corners, fill=yellow!5, align=center, font=\Small},
				arrow/.style={-stealth,shorten <=2pt, shorten >=2pt},
				shorten <=2pt, shorten >=2pt,
				myfit/.style = {rounded corners, draw, dashed},
				scale=0.7
				]
				
				\node[ rounded corners,  align=center, minimum height=2.3em, font=\large,draw=purple!35, line width=0.5mm, fill=purple!10] (proofthm) {~\large Proof of main ASTLO estimates (\propref{final prop ind p})~};
				\node [rounded corners,  line width=0.5mm, fill=purple!10, below=.2cm of proofthm] (wh prop 83) {\textit{Particle estimate ASTLO \eqref{PE-}} and
					\textit{Integrated ASTLO estimate \eqref{IE}}};

				\node[below left=2cm and 2cm of proofthm] (1) {1)};
				
				\node[box2,highlight, right=1mm of 1] (wh prop82) {Differential inequality for $\log \br{ N^p_{f,t}}_t$} ;
				\node[box, right=5mm of wh prop82](prop82) {Prop. \ref{Prop 1 ub 2}~};
				\node[box4,  right=1cm of prop82, font=\Small] (new 1) { differentiating\\the log will produce an\\additional term at denominator,\\ key to bound remainder};
				
				\node [box2, below=.8cm of wh prop82] (w prop 83) {~Induction on $p$ to control the derivative of $\log \br{ N^p_{f,t}}_t$
				};

				\node[box, right=5mm of w prop 83] (prop83) {Prop. \ref{final prop ind p}};

				\node [left=1mm of w prop 83] (2) {2)};
				
				\draw[arrow] (wh prop82)--(w prop 83);
				\node[below right=-2mm and .7cm of 2] (2a) {};

				\node [box3, below right=2.cm and 0.58cm of 2a] (ind base) {\normalsize Induction Base: $p=1$};
				\node [left=1mm of ind base] (a) {};
				\draw[arrow] (ind base) -| (2a.south west);
				
				\node [box2, highlight, right=2cm of ind base] (badtime) {Define a bad time $T_1$, until which\\ all scales are populated enough};
				\node [box4, below=1.5cm of new 1, font=\Small] (new 2) {necessary to control the new\\term at denominator  produced in step 1).\\
					Useful thanks to Prop. \ref{Prop 4 lb and ub 1} };
				
				\node [box2, below=0.8cm of badtime] (step 1) { Derive differential inequality for $\log\br{ N_{f,t}}_t$\\ to control the ASTLO until $T_1$};
				\node[box, right=.5cm of step 1] (step1 1) {Prop. \ref{Prop 1 ub 1} \\Prop. \ref{prop ind basis}};
				
				\node [box2, below=0.8 cm of step 1] (ms) {Use multiscale induction to\\ control the remainder and \\obtain propagation bound until $T_1$};
				\node [box, right=0.5cm of ms] (ms 1) {Prop. \ref{cor remainder}};
				
				\node [box2, below=0.8cm of ms] (lb t) {Lower bound $T_1$};
				\node [box, right=0.5cm of lb t ] (lb t1) {Prop. \ref{Prop 4 lb and ub 1}};
				\node[fit=(badtime)(step 1)(ms)(lb t)(ms 1)(step1 1),myfit] (myfit1) {};
				\node[box3, below right=4.9cm and -3.6cm of ind base, font=\normalsize] (ind step) {Induction Step: $p-1\Rightarrow p$};
				\node [below right=.8mm and 2.5cm of myfit1.south] (where){\secref{sec prop 1 and 2} \checkmark};
				\draw[arrow] (ind step) -| (2a.south west);

				
				\node [box2, below right=1.1cm and -3.1cm of ind step]  (PE and IE) 
				{\eqref{PE-} for $p$ $+$ \eqref{IE} for $p-1$
					\\$\Rightarrow\;$ \eqref{IE} for $p$};
				\node [box,right=7mm of PE and IE] (prop85) {Prop. \ref{prop pe implies ie}};
				\node[left=1mm of PE and IE] (pe ie) {};
				\draw[arrow] (PE and IE) -| ([xshift=1cm]ind step.south west);

				\node [box2, below right=.8cm and 1mm of pe ie] (ms p){Multiscale induction to show\\ \eqref{PE-} + \eqref{IE} for $p-1 \qquad\Rightarrow\qquad$ \eqref{PE-} for $p$\qquad};
				\node [box, right=7mm of ms p] (prop 93) {Prop. \ref{Prop 2 ub 2}};
				\draw[arrow] (ms p) -| ([xshift=1cm]ind step.south west);
				
				\node [below=0.51cm of prop85] (b prop85) {};
				\node[above=5.5cm of prop85] (a prop85) {};
				
				\draw[arrow] (badtime)--(step 1);
				\draw[arrow] (step 1)--(ms);
				\draw[arrow] (ms) -- (lb t);
				
				\draw ([xshift=4.2cm]myfit1.south west) |- ([xshift =-1mm,yshift=0.7cm] PE and IE.north);
				\draw [arrow]([yshift=0.82cm] PE and IE.north)-- (PE and IE);
				\draw (PE and IE)--(prop85);
				
				\draw ([xshift=4.2cm]myfit1.south west) |- ([xshift =5.1cm,yshift=2.555cm] ms p.north);
				\draw [arrow]([xshift =5cm,yshift=2.67cm] ms p.north)-- ([xshift=5cm]ms p.north);
				
				\draw (ms p)--(prop 93);
				
				\draw[arrow]([yshift=-0.9cm]myfit1.north west)  -- (ind base);
				\draw (step1 1)--(step 1);
				\draw (ms 1)--(ms);
				\draw (lb t1)--(lb t);
				
				\draw (prop83)--(w prop 83);
				\draw (prop82)--(wh prop82);
				\begin{scope}[transform canvas={xshift=.7em}]
					\draw [arrow,dashed, draw=yellow!85!red, line width=0.4mm] (new 1) -- node [sloped, above, font=\Small] {requires} (new 2);
				\end{scope}
				\draw[yellow!85!red, line width=0.4mm] ([xshift=-3.7cm]myfit1.north)|-(new 2);
				\draw[yellow!85!red, line width=0.4mm] (wh prop82)|-([yshift=-.2cm]new 1.north west);
				
				\node[box2, below=0.6cm of ms p] (hol) {Separate far away and close by particles via H\"older};
				\node[box, right=0.6cm of hol] (lem) {\lemref{lemma holder}};
				\coordinate (bbox1) at (current bounding box.south);
				\coordinate (bbox2) at (current bounding box.north);
				\draw[arrow] (hol)--(ms p);
				\draw (lem)--(hol);
				\pgfresetboundingbox
				\path[use as bounding box] (bbox1) rectangle (bbox2);

			\end{tikzpicture}
			\caption{Flowchart illustrating the proof structure of \thmref{thm4}. The yellow boxes correspond to conceptually new ideas beyond our prior multiscale induction scheme \cite{lemm2023microscopic}.}
			\label{flow chart 1}
		\end{figure}

		\subsection{Proof of ASTLO estimates (\propref{final prop ind p}) assuming remainder estimate (\propref{Prop 2 ub 2}) }\label{sub sec proof astlo est}
		
		Again we consider a function $f\in\ctE$, and we differentiate the logarithm of the $p$-th moment of the ASTLO. We control the commutator deriving from such process as it was done in \cite{lemm2023microscopic}. This leads to the following proposition.
		
		\begin{proposition}\label{Prop 1 ub 2} Let $\al>d+1$ and let \eqref{CD} hold for $p\ge1$ and some $0<\l_1<\l_2<\infty$. Then there exists  a constant $C_p$ such that, for any $f\in\ctE,\,{v} >\kappa$, $\eps>0$, $R>r\ge1$, $l\in\N_0$, and $t\le t_0$, it holds
			\begin{align}
				\pm\log\frac{\br{\Nf[p]{f_\mp}{t}{l}}_t}{\br{\Nf[p]{f_\mp}{0}{l}}_0}\le& \frac{pv'}  {s_l\left(\l_1r_l^de^{-1}\right)^p}\int_0^t{\br{\left(N_{f_\mp,\ta l}+1\right)^{p-2}N_{f'_\mp,\ta l}}_\ta}\rd \ta\notag\\
				&+\frac{ tC_p}{s_l^{1+\eps}\left(\l_1r_l^de^{-1}\right)^p}\sup_{0\le \ta\le t_0}\del{{\cR}_{p,l}(\ta)+{{\cR}}_{1,l}(\ta)},\label{1 p}
			\end{align}
			where ${{\cR}}_{p,l}$ is defined as in \eqref{Rp}.
		\end{proposition}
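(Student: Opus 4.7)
The plan is to imitate the strategy from \propref{Prop 1 ub 1} (the $p=1$ case), but now working with the $p$-th moment of the ASTLO. First I would compute the time derivative
\begin{align*}
\frac{d}{dt}\br{N^p_{f_\mp,tl}}_t = \mp\frac{pv'}{s_l}\br{N^{p-1}_{f_\mp,tl}N_{f'_\mp,tl}}_t + \br{i[H,N^p_{f_\mp,tl}]}_t,
\end{align*}
using the fact that $N_{f_\mp,tl}$ and $N_{f'_\mp,tl}$ commute (they are both diagonal in the $n_x$ basis), and then divide by $\br{N^p_{f_\mp,tl}}_t$ to obtain a differential inequality for $\pm\log\br{N^p_{f_\mp,tl}}_t$. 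For the upper-sign case with $f_-$, the partial-time-derivative contribution is manifestly negative and can be retained in the bound; for the lower-sign case with $f_+$ the sign flips and one must argue that the commutator contribution dominates in the appropriate direction, as in the $p=1$ case (cf.\ estimate \eqref{1 1 1} and its analysis).

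The main work is the commutator bound. I would telescope
\begin{align*}
[iH,N^p_{f_\mp,tl}] = \sum_{k=0}^{p-1} N^{p-1-k}_{f_\mp,tl}\,[iH,N_{f_\mp,tl}]\,N^k_{f_\mp,tl},
\end{align*}
and expand $[iH,N_{f_\mp,tl}] = i\sum_{x,y}J_{xy}(f_\mp(|y|_{t,l})-f_\mp(|x|_{t,l}))\,a_x^*a_y$, since the interaction $V$ commutes with all $n_z$. Invoking \lemref{new taylor} with $u=\sqrt{f'}$ splits each difference into a main piece bounded by $s_l^{-1}u(|x|_{t,l})u(|y|_{t,l})|x-y|$ and a remainder of size $C_f s_l^{-1-\eps}|x-y|^{1+\eps}$. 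After sandwiching each piece between powers of $N_{f_\mp,tl}$, I would use the ``almost-commutation'' identity $[N_{f_\mp,tl},a_x^*a_y]=(f_\mp(|x|_{t,l})-f_\mp(|y|_{t,l}))a_x^*a_y$ (whose size is again controlled by \lemref{new taylor}) to move factors of $N_{f_\mp,tl}$ past $a_x^*a_y$, paying only a remainder-type correction. A symmetric Cauchy--Schwarz argument on each term $\langle N^{p-1-k}_{f_\mp,tl}\, a_x^*\, \sqrt{f'}\cdot \sqrt{f'}\, a_y\, N^k_{f_\mp,tl}\rangle_\tau$ then produces factors of $\sqrt{f'(|x|_{t,l}) n_x}$ and $\sqrt{f'(|y|_{t,l}) n_y}$; summing against $|J_{xy}||x-y|$ and using the Schur-type bound $\sup_x\sum_y|J_{xy}||x-y|\le\kappa<v'$ upgrades the resulting operator to $(N_{f_\mp,tl}+1)^{p-2}N_{f'_\mp,tl}$ after symmetrization. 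The $+1$ shift and the drop from $p-1$ to $p-2$ absorb the combinatorics of the telescoping and the cost of commuting $a_x^*a_y$ through the number factors.

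For the remainder of the Taylor expansion, I would bound $N^{p-1-k}_{f_\mp,tl}N^k_{f_\mp,tl} = N^{p-1}_{f_\mp,tl} \le N^{p-1}_{B_{R_l}}$ (using $\supp f_\mp\subset B_{R_l}$ and \lemref{ASTLO and N}) and then estimate
\begin{align*}
\frac{C_f}{s_l^{1+\eps}}\sum_{x\in B_{R_l},y\in\Lam}|J_{xy}||x-y|^{1+\eps}\br{N^{p-1}_{B_{R_l}}n_y}_t,
\end{align*}
which, together with a diagonal piece, is precisely the definition of $\cR_{p,l}$ from \eqref{Rp} plus an $\cR_{1,l}$ leftover coming from off-diagonal pieces where one $a_x^*a_y$ is ``absorbed'' into the first-moment remainder. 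Finally, dividing everything by $\br{N^p_{f_\mp,tl}}_t$ and invoking Jensen's inequality in the form \eqref{jens + p=1}, namely $\br{N^p_{f_\mp,tl}}_t\ge(\l_1 r_l^d e^{-1})^p$ valid for $t\le t_0$ by \propref{Prop 4 lb and ub 1}, yields the pointwise differential inequality; integrating over $[0,t]$ produces \eqref{1 p}.

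\textbf{Main obstacle.} The hardest step is the noncommutative bookkeeping in the telescoping expansion that produces exactly the exponent $p-2$ and the shift $+1$ in $(N_{f_\mp,tl}+1)^{p-2}$. One has to combine Cauchy--Schwarz (which naturally yields one factor of $N_{f',tl}$ from two ``square-root'' pieces $\sqrt{f'(|x|)n_x}$ and $\sqrt{f'(|y|)n_y}$) with careful commutations of the remaining $p-2$ powers of $N_{f_\mp,tl}$ past the bosonic hopping operators $a_x^*a_y$, and the errors from these commutations must be shown to fit into the $\cR_{p,l}+\cR_{1,l}$ remainder. Absorbing the $p\kappa$ coefficient from the telescoping into $pv'$ (using $v'>\kappa$) then closes the estimate.
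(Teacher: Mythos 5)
Your proposal follows essentially the same route as the paper: differentiate the logarithm of $\br{N^p_{f_\mp,tl}}_t$, bound the commutator by telescoping $[iH,N^p]$ and applying \lemref{new taylor} to split each difference $f_\mp(|x|_{t,l})-f_\mp(|y|_{t,l})$ into a main piece and an $|x-y|^{1+\eps}$ remainder, use Cauchy--Schwarz plus the Schur bound $\kappa<v'$ on the main piece, collect the remainder into $\cR_{p,l}+\cR_{1,l}$, lower-bound the denominator by $(\l_1 r_l^d e^{-1})^p$ via Jensen and \propref{Prop 4 lb and ub 1}, and integrate. The only difference in presentation is that the paper black-boxes the noncommutative bookkeeping (telescoping, commuting powers of $N_{f_\mp,tl}$ past $a_x^*a_y$, producing $(N_{f_\mp,tl}+1)^{p-2}N_{f'_\mp,tl}$ and the negative $(\kappa-v')(N_{f_\mp,tl}+1)^{p-1}N_{f'_\mp,tl}$ term) by citing \cite[Prop.~5.1]{lemm2023microscopic}, whereas you sketch that machinery from scratch; your description of how the $+1$ shift, the drop to exponent $p-2$, and the absorption of the Taylor remainder into $\cR_{p,l}+\cR_{1,l}$ arise is accurate. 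One small imprecision: the partial-time-derivative contribution alone is not what is retained/dropped — it combines with the main commutator term to produce the net negative $\frac{p(\kappa-v')}{s_l}\br{(N_{f_\mp,tl}+1)^{p-1}N_{f'_\mp,tl}}$, and it is that combined term that is discarded since $v'>\kappa$; this is clearer in the paper's displayed three-term bound but your overall logic is sound.
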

		
		The proof of this proposition can be found in \secref{prop diff log p}.
		Our next goal is to derive a bound for the integrated term in \eqref{1 p}, namely the integral ASTLO estimate \eqref{IE}. The idea is to show \eqref{IE} and \eqref{PE-} by induction on the moment parameter $p$ using what we have done in \secref{sec prop 1 and 2} as the base case. In fact,
		
		\begin{itemize}
			\item \propref{prop ind basis} implies \eqref{IE} for $p=1$,
			\item  \propref{Prop 4 lb and ub 1} together with \eqref{still f} and \eqref{still f 2}, leads to \eqref{PE-} for $p=1$.
		\end{itemize}
		
		As for the induction step, we notice that  if \eqref{PE-} holds for a certain $p\ge1$, then \eqref{IE} will also hold for the same $p$.
		
		\begin{proposition}[Particle ASTLO estimates imply integral ASTLO estimates]\label{prop pe implies ie}
			Let the assumptions of \propref{Prop 1 ub 2} hold.
			Consider ${v} > \kappa$ 
			and $\eps>0$ as in \eqref{epsDef}. Assume estimates \eqref{IE} for $p-1$ and \eqref{PE-} for $p$ hold for some constants $A^{(p-1)},\,B^{(p-1)}>0, C^{(p)}\ge0$, radii $R> r\ge 1 $ satisfying \eqref{aCond}, $l\in\N_0$, $f\in \ctE$, and  $t\le  t_0$. Then there exist 
			\begin{align}
				A^{(p)}&=A^{(p)}\,(A^{(p-1)},C^{(p)},\Tilde{ \mathscr{P}})>0,\notag\\
				B^{(p)}&=B^{(p)}\,(B^{(p-1)},C^{(p)}, \Tilde{ \mathscr{P}})>0,\notag
			\end{align} 
			such that estimate \eqref{IE} holds for $p$, $R> r\ge 1 $, $l\in\N_0$, $t\le  t_0$,  and $f\in\ctE$. We defined the following set of parameters $\Tilde{ \mathscr{P}}:=(d,v,\l_1,\l_2,p)$.
		\end{proposition}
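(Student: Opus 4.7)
The plan is to adapt the strategy used for Proposition \ref{prop ind basis} (the base case $p=1$) to the $p$-th moment setting. Rather than starting from the integrated form of Proposition \ref{Prop 1 ub 2}, I would return to its underlying differential identity for $\frac{d}{dt}\log\br{N^p_{f_\mp,tl}}_t$. Differentiating and applying the Taylor bound of \lemref{new taylor} to the hopping commutator produces a dissipative term
\begin{align*}
-\frac{p(v'-\kappa)}{s_l}\,\frac{\br{N^{p-1}_{f_\mp,tl}\,N_{f'_\mp,tl}}_t}{\br{N^p_{f_\mp,tl}}_t},
\end{align*}
which is negative since $v'>\kappa$ and which is dropped in order to derive \eqref{PE-}. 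The idea is to retain this term, integrate from $0$ to $t\le t_0$, apply the Jensen lower bound $\br{N^p_{f_\mp,\ta l}}_\ta\ge(\l_1 r_l^d\me^{-1})^p$ (valid on $[0,t_0]$ by \eqref{tlprop} and \propref{Prop 4 lb and ub 1}) in the remainder, and rearrange to obtain
\begin{align*}
\frac{p(v'-\kappa)}{s_l}\int_0^t \frac{\br{N^{p-1}_{f_\mp,\ta l}\,N_{f'_\mp,\ta l}}_\ta}{\br{N^p_{f_\mp,\ta l}}_\ta}\rd\ta \le \mp\log\frac{\br{N^p_{f_\mp,tl}}_t}{\br{N^p_{f_\mp,0l}}_0} + \frac{tC_p}{s_l^{1+\eps}(\l_1 r_l^d\me^{-1})^p}\sup_{\ta\le t_0}\bigl(\cR_{p,l}+\cR_{1,l}\bigr).
\end{align*}

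The log-ratio on the right is controlled in a manner dual to \eqref{PE-}. For $f_-$ (the ``$+$'' case of $\pm$), I would combine the Jensen lower bound $\br{N^p_{f_-,tl}}_t\ge(\l_1 r_l^d\me^{-1})^p$ with the controlled-density upper bound $\br{N^p_{f_-,0l}}_0\le(\l_2 R_l^d)^p$ (via \lemref{ASTLO and N} and \eqref{CD}); under \eqref{aCond} this gives $-\log(\text{ratio})\le p\log(\me\l_2/\l_1)+dp\log 4$, a constant. For $f_+$ the symmetric bound uses the lower bound on $\br{N^p_{f_+,tl}}_t$ furnished by \eqref{PE-} (``$-$'' sign) together with the $t=0$ upper bound from \lemref{ASTLO and N} and \eqref{CD}.

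To convert the integrand back to $\br{N^{p-1}_{f_\mp,\ta l}N_{f'_\mp,\ta l}}_\ta$ one needs a matching upper bound on $\br{N^p_{f_\mp,\ta l}}_\ta$. For $f_-$ this is precisely what \eqref{PE-} for $p$ provides, combined with \eqref{CD}:
\begin{align*}
\br{N^p_{f_-,\ta l}}_\ta \le (\l_2 R_l^d)^p\exp\!\Bigl(\frac{C^{(p)}}{R_l^d}+\frac{\ta}{s_l}\Bigr) \le C_2(\l_2 R_l^d)^p, \qquad \ta\le t_0\le s_l.
\end{align*}
For $f_+$ I would reduce to the $f_-$ case via the pointwise operator bound $N_{f_+,\ta l}\le N_{f_-,\ta l}$ of \lemref{lem7.3} (applicable since $t_0\le(R-r)/(3v')$) together with operator monotonicity of $x\mapsto x^p$ on $[0,\infty)$. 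Substituting this upper bound into the displayed inequality, multiplying through by $C_2(\l_2 R_l^d)^p/(p(v'-\kappa))$, and absorbing the factor $(R_l/r_l)^{dp}\le 4^{dp}$ from \eqref{aCond} into $B^{(p)}$, produces \eqref{IE} for $p$ with explicit constants depending on $\scrP$, $C^{(p)}$, and $A^{(p-1)},B^{(p-1)}$.

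The role of the hypothesis \eqref{IE} for $p-1$ is confined to the integrated remainder: the Taylor bound of \lemref{new taylor} applied to $[\mathrm{i}H,N^p_{f_\mp,tl}]$ produces lower-order cross terms of the form $\br{(N_{f_\mp,\ta l}+1)^{p-2}N_{f'_\mp,\ta l}}_\ta$, which after binomial expansion reduce to integrals $\int\br{N^k_{f_\mp,\ta l}N_{f'_\mp,\ta l}}_\ta\rd\ta$ for $0\le k\le p-2$, each of which is controlled by \eqref{IE} at moments $\le p-1$. The main obstacle is that the constants $A^{(p)},B^{(p)}$ must remain uniform in the scale parameter $l$; this is guaranteed precisely by the scale-invariant Jensen lower bound $(\l_1 r_l^d\me^{-1})^p$ available on $[0,t_0]$ and the CD upper bound $(\l_2 R_l^d)^p$ at each scale, together with the constraint \eqref{aCond} which keeps $R_l/r_l=R/r$ bounded above.
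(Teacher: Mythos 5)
Your route coincides with the paper's: return to the differential identity for $\log\br{N^p_{f_\mp,tl}}_t$, retain the dissipative term, integrate up to $t_0$, lower-bound denominators by Jensen, upper-bound $\br{N^p_{f_\mp,\ta l}}_\ta$ via \eqref{PE-} for $p$ (for $f_+$ also invoking \lemref{lem7.3}), control the log-ratio via \lemref{ASTLO and N} and \eqref{CD} under \eqref{aCond}, and feed the lower-order cross term into \eqref{IE} for $p-1$. Two small imprecisions worth noting: your displayed inequality omits the $\br{(N_{f_\mp,\ta l}+1)^{p-2}N_{f'_\mp,\ta l}}_\ta$ term, which is precisely what \eqref{IE} for $p-1$ is applied to (though you correctly account for it in the subsequent paragraph), and ``operator monotonicity of $x\mapsto x^p$'' is not literally true for $p>1$; the bound $N^p_{f_+,\ta l}\le N^p_{f_-,\ta l}$ holds here only because $N_{f_+,\ta l}$ and $N_{f_-,\ta l}$ are simultaneously diagonal, commuting, positive operators.
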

		We prove the proposition in \secref{proof pe impl ip}. To close the induction we need to show that \eqref{IE} and \eqref{PE-} for $p-1$, imply \eqref{PE-} for $p$. To do so, we need \propref{Prop 2 ub 2} and the following lemma.
		
		\begin{lemma}\label{before bound rem}
			In the same setting as in \propref{Prop 2 ub 2}	there exist
			\begin{align}
				C^{(p)}&=C^{(p)}\,(A^{(p-1)}, \scrP')>0\notag\\
				\rho^{(p)}&=\rho^{(p)}\,( B^{(p-1)},C_{\cR,1}, \rho^{(p-1)},\scrP',\mu,\eps,\delta_0)>0,\label{rhp p}
			\end{align} 
			such that for all {$R>r\ge\rho^{(p)}$} the following inequalities hold
			\begin{align}
				\pm\log\frac{\br{\Nf[p]{f_\mp}{t}{l}}_t}{\br{\Nf[p]{f_\mp}{0}{l}}_0}&\le {+\frac{C^{(p)}}{R_l^d}+\frac{  t}{\mu s_{l} } \del{\frac{1}{r^{dp}}\sup_{0\le \ta\le t_l}\cR_{p,l}(\ta)+C_{\cR,1}}},\label{4 p}
			\end{align}
			for any $t\le t_0$, with $C_{\cR,1}$ as in \eqref{655}. Here ${ \mathscr{P}'}:=(d,v,\l_1,p)$.
		\end{lemma}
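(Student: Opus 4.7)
\emph{Proposal.} The starting point is the differential inequality \eqref{1 p} from \propref{Prop 1 ub 2}, which already has the correct two-term structure: a time integral of $\br{(N_{f_\mp,\ta l}+1)^{p-2}N_{f'_\mp,\ta l}}_\ta$ plus the explicit $\sup_\ta(\cR_{p,l}+\cR_{1,l})$ contribution. My plan is to reduce the integrand via the operator inequality $(N+1)^{p-2}\le 2^{p-2}(N^{p-2}+\mathbf{1})$, valid on the bosonic Fock space by functional calculus of the non-negative operator $N_{f_\mp,\ta l}$ (the case $p=2$ reduces to the trivial identity $(N+1)^0=\mathbf{1}$). This splits the integral into $\int_0^t\br{N_{f_\mp,\ta l}^{p-2}N_{f'_\mp,\ta l}}_\ta d\ta$ and $\int_0^t\br{N_{f'_\mp,\ta l}}_\ta d\ta$, to which the hypothesis \eqref{IE} for $p-1$ and the base case \eqref{IE} for $p=1$ from \propref{prop ind basis} respectively apply.

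\emph{Collecting terms.} Each application of \eqref{IE} produces a leading piece of size $s_l A^{(\cdot)}R_l^{d(\cdot)}$ and a remainder piece of size $\tfrac{tB^{(\cdot)}}{s_l^\eps}\sup_\ta(\cR_{\cdot,l}+\cR_{1,l})$. After dividing by $s_l(\l_1 r_l^d\me^{-1})^p$ as in \eqref{1 p} and using $r_l\le R_l\le 4 r_l$ from \eqref{aCond}, the two leading pieces are bounded by $C/R_l^d$ for a constant $C$ depending on $A^{(p-1)},A^{(1)},\l_1,p,v$; this produces the $C^{(p)}/R_l^d$ summand of \eqref{4 p}. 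In the remainder pieces, the factor $s_l^{-\eps}$ is the ``profit'' used to absorb the remaining constants. For every $\cR_{1,l}$ occurrence (including the one already in \eqref{1 p}) I invoke the proved base case \propref{cor remainder} to write $\sup_{0\le\ta\le t_l}\cR_{1,l}(\ta)\le r_l^d C_{\cR,1}$. For the $\cR_{p-1,l}$ occurrence I use the inductively available remainder bound $\sup_\ta\cR_{p-1,l}\le r_l^{d(p-1)} C_{\cR,p-1}$ from the previous step of the moment induction; this is the output of \propref{Prop 2 ub 2} applied to $p-1$, which the inductive architecture depicted in \figref{flow chart 1} makes available at this point. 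The $\cR_{p,l}$ term from \eqref{1 p} is retained unchanged, and its coefficient scales as $s_l^{-\eps} r_l^{-dp}\le s_l^{-\eps}r^{-dp}$.

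\emph{Closing the estimate and main obstacle.} Defining $\rho^{(p)}$ as in \eqref{rhp p} so that, for $r\ge\rho^{(p)}$, the quantity $s_l^\eps$ dominates each of the finitely many absorbed constants (built from $A^{(p-1)},B^{(p-1)},C_{\cR,1},C_{\cR,p-1},C_p$, and the parameters in $\scrP'$), forces every coefficient on $r^{-dp}\sup\cR_{p,l}$ and on $C_{\cR,1}$ below $1/\mu$ and yields \eqref{4 p}. The principal technical obstacle is this final bookkeeping: several distinct thresholds on $s_l^\eps$ arise from different absorbed terms, and the largest one dictates the explicit parameter dependence of $\rho^{(p)}$ listed in \eqref{rhp p}. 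A secondary concern is the sequencing of the moment induction: one must use the $p-1$ remainder bound inside the proof of Lemma~\ref{before bound rem}, even though the lemma as stated only records \eqref{IE} for $p-1$ among its hypotheses; the overall scheme in \figref{flow chart 1} ensures that the previous-stage remainder bound is indeed in place whenever this lemma is invoked.
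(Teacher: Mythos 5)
Your proposal reaches the right endpoint but takes a detour the paper avoids. The paper applies \eqref{IE} for $p-1$ directly to the integrand $\br{(N_{f_\mp,\ta l}+\mathbf 1)^{p-2}N_{f'_\mp,\ta l}}_\ta$ in \eqref{1 p}, implicitly reading \eqref{IE} in the form it is actually proved in \propref{prop pe implies ie} (i.e.\ with the shift $N_{f,\ta l}+\mathbf 1$ inside the integrand), so your preliminary split $(N+\mathbf 1)^{p-2}\le 2^{p-2}(N^{p-2}+\mathbf 1)$ and the secondary appeal to \eqref{IE} for $p=1$ are extra bookkeeping, though harmless. The more substantive deviation is in the treatment of $\cR_{p-1,l}$: the paper exploits the elementary monotonicity \eqref{Pp and Pq}, namely $\cR_{p-1,l}+\cR_{1,l}\le C_1^{-1}(\cR_{p,l}+\cR_{1,l})$, to fold the $\cR_{p-1,l}$ contribution into the $\cR_{p,l}$ one at the level of operators, without any further input. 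You instead reach outside the lemma's stated hypotheses and invoke the $(p-1)$-stage remainder estimate $\sup_\ta\cR_{p-1,l}\le r_l^{d(p-1)}C_{\cR,p-1}$. You flag this yourself, and within the overall moment induction that fact is indeed already in place, so your route does close. But it makes $\rho^{(p)}$ depend additionally on $C_{\cR,p-1}$ (equivalently on $A^{(p-2)}$), which is not on the parameter list in \eqref{rhp p}, and it proves a statement with an extra hypothesis relative to what the lemma records. Replacing the $(p-1)$ remainder bound by \eqref{Pp and Pq} is what keeps the lemma self-contained and the parameter dependencies as advertised.
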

		The proof of this lemma can be found in \secref{proof lemma}.
		We notice that, if we are able to properly control the remainder term appearing on the r.h.s. of \eqref{4 p}, we recover exactly \eqref{PE-}, and we would be able to close the induction on the moments. The remainder estimate stated in 
		\propref{Prop 2 ub 2} concludes the proof of \propref{final prop ind p}, as shown below.

		\subsubsection{Proof of \propref{final prop ind p}}\label{proof final prop ind}
		\begin{proof}
			Since we already showed in \propref{prop pe implies ie} that \eqref{IE} holds for $p$ if we assume \eqref{PE-} holds for $p$ and \eqref{IE} holds for $p-1$, what is left to close the induction on $p$ is to show that \eqref{IE} and \eqref{PE-} for $p-1$ imply \eqref{PE-} for $p$  and velocity $\tilde v >\kappa$ up to time $t_0=(R-r)/3\tilde v$. This follows applying \lemref{before bound rem} with $\mu=C_{\cR,p}+C_{\cR,1}$  with $C_{\cR,p}$ as in \eqref{rem Pp bound} and choosing $\rho^{(p)}$ as in \propref{Prop 2 ub 2}. The bound on the remainder \eqref{rem Pp bound} leads to the desired estimates \eqref{PE-} for $p$ and velocity $\tilde v >\kappa$ up to time $t_0=(R-r)/3\tilde v$. We conclude by rescaling the velocity as follows $v:=3\tilde v$.
		\end{proof}
		
		\subsubsection{Proof of \propref{Prop 1 ub 2}}\label{prop diff log p}
		
		To show the proposition we combine the strategy used in \cite[Prop.~5.1]{lemm2023microscopic} with the method we developed  in \secref{sec prop 1 and 2}. 
		
		Let us first show \eqref{1 p} for $f_-$. We start by differentiating the logarithm of the $p$-th moment of the ASTLOs with respect to time.
		\begin{align}\label{differentiatin log with f 2}
			\frac{\rd}{\rd t}\log\br{\Nf[p]{f_-}{t}{l}}_t=\frac{1}{\br{\Nf[p]{f_-}{t}{l}}_t}\left(-\frac{pv'}{s_l}\br{\Nf{f_-'}{t}{l}\,\Nf[p-1]{f_-}{t}{l}}_t+\br{\left[iH,\Nf[p]{f_-}{t}{l}\right]}_t\right).
		\end{align}
		Again, the denominator in front of the r.h.s. of \eqref{differentiatin log with f 2} does not change the proof strategy of \cite[Prop.~5.1]{lemm2023microscopic} for $n=1$,  so we can carry out all the steps and,  obtain the following bound
		\begin{align}\label{before int f 2}
			\frac{\rd}{\rd t}\log\br{\Nf[p]{f_-}{t}{l}}_t\le& \frac{pv'}  {s_l}\frac{\br{\left(N_{f_-,tl}+1\right)^{p-2}N_{f'_-,tl}}_t}{\br{\Nf[p]{f_-}{t}{l}}_t}\\
			&+\frac{p(\kappa-v')} {s_l}\frac{\br{\left(N_{f_-,tl}+1\right)^{p-1} N_{f'_-,tl}}_t}{\br{\Nf[p]{f_-}{t}{l}}_t}\notag\\
			&+\frac{ pC_{f}}{s_l^{1+\eps}}Q^-_p(\ta)\notag.
		\end{align}
		with
		\begin{align}\notag
			Q^\pm_p(\ta)\coloneqq  &\frac{\br{\left(N_{f_\pm,tl}+1\right)^{p-1}N_{B_{{R}_l}}}_t}{\br{\Nf[p]{f_\pm}{t}{l}}_t}+\sum_{x\in B_{{R}_l}}\sum_{y\in\Lambda}\abs{J_{xy}}|x-y|^{1+\eps}\frac{\br{\left(N_{f_\pm,tl}+1\right)^{p-1} n_y}_t}{\br{\Nf[p]{f_\pm}{t}{l}}_t} .
		\end{align}
		Thanks to \eqref{Cf cond}, we can get rid of the $f$-dependency in the constant appearing in the last line of \eqref{before int f 2}:
		\begin{align}\label{before int f 3}
			\frac{\rd}{\rd t}\log\br{\Nf[p]{f_-}{t}{l}}_t\le& \frac{pv'}  {s_l}\frac{\br{\left(N_{f_-,tl}+1\right)^{p-2}N_{f'_-,tl}}_t}{\br{\Nf[p]{f_-}{t}{l}}_t}\\
			&+\frac{p(\kappa-v')} {s_l}\frac{\br{\left(N_{f_-,tl}+1\right)^{p-1} N_{f'_-,tl}}_t}{\br{\Nf[p]{f_-}{t}{l}}_t}\notag\\
			&+\frac{ 2p\tilde C}{s_l^{1+\eps}}Q^-_p(\ta)\notag.    
		\end{align}
		Now we integrate both sides of \eqref{before int f 3} until time $t\le t_0$, we apply \eqref{jens + p=1} to the denominators, and, since $ f_{-,tl}\le f_{-,0l}$ for every $t>0$, we apply \eqref{f and B at t=0} to obtain
		
		\begin{align}\label{d}
			\log\frac{\br{\Nf[p]{f_-}{t}{l}}_t}{\br{\Nf[p]{f_-}{0}{l}}_0}\le& \frac{pv'}  {s_l\left(\l_1r_l^de^{-1}\right)^p}\int_0^t{\br{\left(N_{f_-,\ta l}+1\right)^{p-2}N_{f'_-,\ta l}}_\ta}\rd \ta\notag\\
			&+\frac{p(\kappa-v')} {s_l}\int_0^t\frac{\br{\left(N_{f_-,\ta l}+1\right)^{p-1} N_{f'_-,\ta l}}_\ta}{\br{\Nf[p]{f_-}{\ta}{l}}_\ta}\mathrm{d} \ta\notag\\
			&+\frac{ 2pt\tilde C}{s_l^{1+\eps}\left(\l_1r_l^de^{-1}\right)^p}\sup_{0\le \ta\le t_0}{\hat{\cR}}_{p,l}(\ta).
		\end{align}
		with ${\hat{\cR}}_{p,l}(\ta)$ defined as 
		\begin{align}
			\hat{\cR}_{p,l}(\ta):={\br{\left(N_{B_{R_l}}+1\right)^{p-1}N_{B_{R_l}}}_\ta}+\sum_{x\in B_{{R_l}}}\sum_{y\in\Lambda}\abs{J_{xy}}|x-y|^{1+\eps}{\br{ \left(N_{B_{R_l}}+1\right)^{p-1}n_y}_\ta }.\notag
		\end{align}
		Since  $\del{1+x}^{p-1}\le C_p\del{1+x^{p-1}}$, it holds
		\begin{align}\label{Pp=Pp+P1}
			\hat{\cR}_{p,l}\del{\ta}\le C_p\del{\cR_{p,l}\del{\ta}+\cR_{1,l}\del{\ta}},
		\end{align}
		with $\cR_{p,l}$ as in \eqref{Rp}. Since $v'>\k$, the second term on the r.h.s. of \eqref{d} is negative. Dropping it and applying \eqref{Pp=Pp+P1}, we derive \eqref{1 p}.
		
		We follow a similar procedure in order to show \eqref{1 p} for $f_+$, with the only difference that we will use the ASTLO obtained through the second reparametrization of $f\in\cE.$ We start by differentiating $-\log\br{\Nf[p]{f_+}{t}{l}}_t$ with respect to time.
		\begin{align}\label{differentiatin log with f 2 ub}
			-\frac{\rd}{\rd t}\log\br{\Nf[p]{f_+}{t}{l}}_t=\frac{1}{\br{\Nf[p]{f_+}{t}{l}}_t}\left(-\frac{v'p}{s_l}\br{\Nf{f_+'}{t}{l}\Nf[p-1]{f_+}{t}{l}}_t+\br{\left[-iH,\Nf[p]{f_+}{t}{l}\right]}_t\right).
		\end{align}
		To estimate the r.h.s. of \eqref{differentiatin log with f 2 ub} we follow the same steps as before and we obtain 
		\begin{align}\label{eq 9}
			-\frac{\rd}{\rd t}\log\br{\Nf[p]{f_+}{t}{l}}_t\le& \frac{pv'}  {s_l}\frac{\br{\left(N_{f_+,tl}+1\right)^{p-2}N_{f'_+,tl}}_t}{\br{\Nf[p]{f_+}{t}{l}}_t}\notag\\
			&+\frac{p(\kappa-v')} {s_l}\frac{\br{\left(N_{f_,tl}+1\right)^{p-1} N_{f'_+,tl}}_t}{\br{\Nf[p]{f_+}{t}{l}}_t}\notag\\
			&+\frac{ pC_{f}}{s_l^{1+\eps}} Q^+_p(t).
		\end{align}    
		Recalling $f_{+,tl}\le f_{-,tl}$ for every $t<t_0$ and following the same steps as in step 1 we derive \eqref{1 p}.
		\qed
		
		\bigskip
		\subsubsection{Proof of \propref{prop pe implies ie}}\label{proof pe impl ip}
		
		Step 1. Assume \eqref{PE-} holds for $p$ with some constant $C^{(p)}$ and for some $R>r\ge 1$ then, thanks to $s_l> t_0$  and \eqref{ub Nf 0}, it  implies
		\begin{align}\label{Pe- no s}
			\br{N^p_{f_-,t l}}_t\le \br{N^p_{f_-,0 l}}_0\me^{C^{(p)}+1}\le  \del{\l_2R^d}^p \me^{C^{(p)}+1}.
		\end{align}
		Recall inequality \eqref{d}, which holds for such choice of $R,r$ and apply \eqref{Pp=Pp+P1} to it to obtain,
		\begin{align}\label{11 p}
			\log\frac{\br{\Nf[p]{f_-}{t}{l}}_t}{\br{\Nf[p]{f_-}{0}{l}}_0}\le& \frac{pv'}  {s_l\left(\l_1r_l^de^{-1}\right)^p}\int_0^t{\br{\left(N_{f_-,\ta l}+1\right)^{p-2}N_{f'_-,\ta l}}_\ta}\rd \ta\notag\\
			&+\frac{p(\kappa-v')} {s_l}\int_0^t\frac{\br{\left(N_{f_-,\ta l}+1\right)^{p-1} N_{f'_-,\ta l}}_\ta}{\br{\Nf[p]{f_-}{\ta}{l}}_\ta}\mathrm{d} \ta\notag\\
			&+\frac{ tC_p}{s_l^{1+\eps}\left(\l_1r_l^de^{-1}\right)^p}\sup_{0\le \ta\le t_0}\del{{\cR}_{p,l}(\ta)+{\cR}_{1,l}(\ta)},
		\end{align}
		where ${{\cR}}_{p,l}(\ta)$ is defined as in \eqref{Rp}. 
		The geometric property of the ASTLO \eqref{f and B at t=0}, together with the assumption of controlled density at initial time \eqref{CD}, implies the following bound
		\begin{align}\label{ub Nf 0}
			\br{\Nf[p]{f_-}{0}{l}}_0\le \br{N^p_{B_R}}_0\le \del{\l_2R^d}^p.
		\end{align}
		Thanks to  \eqref{jens + p=1}, \eqref{ub Nf 0}, and $a\le 4$ we can lower bound the logarithm in \eqref{11 p} by
		\begin{align}\label{13 p}
			\log\frac{\br{\Nf[p]{f_-}{t}{l}}_t}{\br{\Nf[p]{f_-}{0}{l}}_0}\ge p\log\del{\frac{\l_1}{4^d\me\l_2}}.
		\end{align}
		
		Inequality \eqref{Pe- no s} leads to the following lower bound the second summand on the r.h.s. of \eqref{11 p} 
		\begin{align}\label{12 p}
			\int_0^t\frac{\br{\left(N_{f_-,\ta l}+1\right)^{p-1} N_{f'_-,\ta l}}_\ta}{\br{\Nf[p]{f_-}{\ta}{l}}_\ta}\mathrm{d} \ta\ge \frac{\me^{-(C^{(p)}+1)}}{\del{\l_2 R_l^d}^p}   \int_0^t{\br{\left(N_{f_-,\ta l}+1\right)^{p-1} N_{f'_-,\ta l}}_\ta}\mathrm{d} \ta.
		\end{align}
		We apply  \eqref{13 p} and \eqref{12 p} to \eqref{11 p}, after rearranging the terms and dividing both sides by $p$, we obtain
		\begin{align}\label{o 1}
			&\frac{(v'-\k)\me^{-(C^{(p)}+1)}}{s_l\del{\l_2 R_l^d}^p}\int_0^t{\br{\left(N_{f_-,\ta l}+1\right)^{p-1} N_{f'_-,\ta l}}_\ta}\mathrm{d} \ta\notag\\
			&\quad\le \frac{v'}  {s_l\left(\l_1r_l^de^{-1}\right)^p}\int_0^t{\br{\left(N_{f_-,\ta l}+1\right)^{p-2}N_{f'_-,\ta l}}_\ta}\rd \ta\notag\\
			&\qquad + \frac{Ct}{(\l_1r_l^{d}e^{-1})^ps_l^{1+\eps}}\del{\sup_{0\le\ta\le t}\cR_{p,l}(\ta)+\sup_{0\le\ta\le t}\cR_{1,l}(\ta)}\notag\\
			&\qquad +\log\del{\frac{4\me\l_2}{\l_1}}.
		\end{align}
		Assume \eqref{IE} holds for $p-1$, for some positive constants $A^{(p-1)}$ and $B^{(p-1)}$. Then, applying \eqref{IE} to the first term on the r.h.s. of \eqref{o 1}, we obtain 
		\begin{align}
			&\frac{(v'-\k)\me^{-(C^{(p)}+1)}}{s_l\del{\l_2 R_l^d}^p}\int_0^t{\br{\left(N_{f_-,\ta l}+1\right)^{p-1} N_{f'_-,\ta l}}_\ta}\mathrm{d} \ta\notag\\
			&\hspace{2.em}\le \frac{v'}  {\left(\l_1r_l^de^{-1}\right)^p}\del{A^{(p-1)}R_l^{d(p-1)}+\frac{B^{(p-1)}t}{s_l^{1+\eps}}\del{\sup_{0\le\ta\le t}\cR_{p-1,l}(\ta)+\sup_{0\le\ta\le t}\cR_{1,l}(\ta)}}\notag\\
			&\hspace{3em} +\frac{ Ct}{s_l^{1+\eps}\left(\l_1r_l^de^{-1}\right)^p}\sup_{0\le \ta\le t_0}\del{{\cR}_{p,l}(\ta)+{\cR}_{1,l}(\ta)}\notag\\
			&\hspace{3em} +\log\del{\frac{4\me\l_2}{\l_1}}.\label{14 p}
		\end{align}
		
		Notice that, given $X\subset\Lambda$, $N_X$ has integer eigenvalues. This implies that ${N^p_X\ge N^q_X}$, for every ${p\ge q\ge1}$, leading to
		\begin{align}\label{Pp and Pq big q}
			{ \cR_{p,l}\del{\ta}}\ge \cR_{q,l}\del{\ta}\qquad \text{for }p\ge q\ge 2.
		\end{align}
		Notice that, if $q=1$, for every $p\ge2$ it holds 
		\begin{align}\label{Pp and Pq small q}
			{ \cR_{p,l}\del{\ta}}+\cR_{1,l}\del{\ta}\ge \cR_{1,l}\del{\ta} =\frac{1}{2}\del{\cR_{q,l}\del{\ta}+\cR_{1,l}\del{\ta}}.
		\end{align}
		Inequalities \eqref{Pp and Pq big q} and \eqref{Pp and Pq small q} show us that there exists a constant $C_1\in\Set{1,1/2}$, depending on $q$, such that
		\begin{align}\label{Pp and Pq}
			{ \cR_{p,l}\del{\ta}}+\cR_{1,l}\del{\ta}\ge C_1\del{\cR_{q,l}\del{\ta}+\cR_{1,l}\del{\ta}}\qquad \text{for }p\ge q\ge 1.
		\end{align}
		After dividing both sides of \eqref{14 p} by 
		\begin{align}
			\frac{(v'-\k)\me^{-(C^{(p)}+1)}}{\del{\l_2 R_l^d}^p},\notag
		\end{align}
		and applying \eqref{Pp and Pq}, we obtain
		\begin{align}\label{2 o}
			&s_l^{-1}\int_0^t{\br{\left(N_{f_-,\ta l}+1\right)^{p-1} N_{f'_-,\ta l}}_\ta}\mathrm{d} \ta\notag\\
			&\hspace{2.em}\le \frac{t}{s_l^{1+\eps}} \del{\frac{\me\l_2}{\l_1}a^d}^p\frac{\me^{C^{(p)}+1}}{v'-\kappa}\del{v'B^{(p-1)}+C}\del{\sup_{0\le\ta\le t}\cR_{p,l}(\ta)+\sup_{0\le\ta\le t}\cR_{1,l}(\ta)}\notag\\
			&\hspace{3em} +\frac{\del{\l_2 R_l^d}^p\me^{C^{(p)}+1}}{(v'-\k)}\del{\frac{v'A^{(p-1)}}{(\l_1e^{-1})^p} a^{dp}R_l^{-d}+\log\del{\frac{4\me\l_2}{\l_1}} }.
		\end{align}
		Since $a\le4$ and $R_l\ge 1$, inequality \eqref{2 o}, implies \eqref{IE} for $p$ and $f_-$, with constants
		\begin{align}
			&B^{(p)}:=\del{\frac{\me\l_2}{\l_1}4^d}^p\frac{\me^{C^{(p)}+1}}{v'-\kappa}\del{v'B^{(p-1)}+C},\notag\\
			&A^{(p)}:=\frac{\l_2^p\me^{C^{(p)}+1}}{(v'-\k)}\del{\frac{v'A^{(p-1)}}{(\l_1e^{-1})^p} 4^{dp}+\log\del{\frac{4\me\l_2}{\l_1}} }.\notag
		\end{align}
		
		2. Now we show \eqref{IE} for $f_+$ and $p$, assuming \eqref{PE-} holds for $p$ and \eqref{IE} holds for $f_+$ and $p-1$. Consider \eqref{eq 9}, after integrating it until time $t\le t_0$  we apply \eqref{jens + p=1} to the denominators and \eqref{f and b at t} to the remainder to obtain
		\begin{align}\label{01}
			\log\frac{\br{\Nf[p]{f_+}{0}{l}}_0}{\br{\Nf[p]{f_+}{t}{l}}_t}\le& \frac{pv'}  {s_l\left(\l_1r_l^de^{-1}\right)^p}\int_0^t{\br{\left(N_{f_+,\ta l}+1\right)^{p-2}N_{f'_+,\ta l}}_\ta}\rd \ta\notag\\
			&+\frac{p(\kappa-v')} {s_l}\int_0^t\frac{\br{\left(N_{f_+,\ta l}+1\right)^{p-1} N_{f'_+,\ta l}}_\ta}{\br{\Nf[p]{f_+}{\ta}{l}}_\ta}\mathrm{d} \ta\notag\\
			&+\frac{ 2pt\tilde C}{s_l^{1+\eps}\left(\l_1r_l^de^{-1}\right)^p}\sup_{0\le \ta\le t_0}{\hat{\cR}}_{p,l}(\ta).
		\end{align}
		Notice that again we made use of the property \eqref{Cf cond}.
		Since $t\le t_0$, applying \lemref{lem7.3} and \eqref{Pe- no s} yields
		\begin{align}\label{bound on f+ above}
			\br{\Nf[p]{f+}{t}{l}}_t\le\br{\Nf[p]{f-}{t}{l}}_t\le \del{\l_2R^d}^p \me^{C^{(p)}+1}.
		\end{align}
		Making use of \eqref{g and b 0}, \eqref{CD} and \eqref{bound on f+ above}, we can upper bound the l.h.s. of \eqref{01} by
		\begin{align}\label{02}
			\log\frac{\br{\Nf[p]{f_+}{0}{l}}_0}{\br{\Nf[p]{f_+}{t}{l}}_t}\ge \log  \del{\del{\frac{\l_1}{\l_2a^{d}}}^p\me^{-\del{C^{(p)}+1}}}.
		\end{align} 
		Since $a\le 4$, \eqref{02} implies 
		\begin{align}\label{03}
			\log\frac{\br{\Nf[p]{f_+}{0}{l}}_0}{\br{\Nf[p]{f_+}{t}{l}}_t}\ge C_3,
		\end{align}
		for some constant $C_3=C_3(C^{(p)}, p, \l_1,\l_2, v d)$.
		Now we can apply \eqref{bound on f+ above} to the denominator of the second summand appearing on the r.h.s. of \eqref{01},
		\begin{align}\label{04}
			\int_0^t\frac{\br{\left(N_{f_+,\ta l}+1\right)^{p-1} N_{f'_+,\ta l}}_\ta}{\br{\Nf[p]{f_+}{\ta}{l}}_\ta}\mathrm{d} \ta\ge \frac{\me^{-(C^{(p)}+1)}}{\del{\l_2 R_l^d}^p} \int_0^t{\br{\left(N_{f_+,\ta l}+1\right)^{p-1} N_{f'_+,\ta l}}_\ta}\mathrm{d} \ta.
		\end{align}
		Rearranging the terms in \eqref{01} and combining \eqref{03}, \eqref{04}, and \eqref{Pp=Pp+P1}, we derive
		\begin{align}\label{05}
			&\frac{p(v'-\kappa)\me^{-(C^{(p)}+1)}} {s_l\del{\l_2 R_l^d}^p}\int_0^t{\br{\left(N_{f_+,\ta l}+1\right)^{p-1} N_{f'_+,\ta l}}_\ta}\mathrm{d} \ta\\
			&\quad\le \frac{pv'}  {s_l\left(\l_1r_l^de^{-1}\right)^p}\int_0^t{\br{\left(N_{f_+,\ta l}+1\right)^{p-2}N_{f'_+,\ta l}}_\ta}\rd \ta\notag\\
			&\qquad+\frac{ tC_{p}}{s_l^{1+\eps}\left(\l_1r_l^de^{-1}\right)^p}\del{\sup_{0\le \ta\le t_0}{{\cR}}_{p,l}(\ta)+\sup_{0\le \ta\le t_0}{{\cR}}_{1,l}(\ta)}\notag\\
			&\qquad+C_3\notag.
		\end{align}
		Applying \eqref{IE} for $p-1$ and $f_+$, with constants $A^{(p-1)} \text{ and } B^{(p-1)}$, on the first term on the r.h.s. of \eqref{05} leads to
		\begin{align}\label{06}
			&\frac{p(v'-\kappa)\me^{-(C^{(p)}+1)}} {s_l\del{\l_2 R_l^d}^p}\int_0^t{\br{\left(N_{f_+,\ta l}+1\right)^{p-1} N_{f'_+,\ta l}}_\ta}\mathrm{d} \ta\notag\\
			&\quad\le\frac{pv'}  {\left(\l_1r_l^de^{-1}\right)^p}\del{A^{(p-1)}R_l^{d(p-1)}+\frac{B^{(p-1)}t}{s_l^{1+\eps}}\del{\sup_{0\le\ta\le t}\cR_{(p-1),l}(\ta)+\sup_{0\le\ta\le t}\cR_{1,l}(\ta)}}\notag\\
			&\qquad+\frac{ tC_{p}}{s_l^{1+\eps}\left(\l_1r_l^de^{-1}\right)^p}\del{\sup_{0\le \ta\le t_0}{{\cR}}_{p,l}(\ta)+\sup_{0\le \ta\le t_0}{{\cR}}_{1,l}(\ta)}\notag\\
			&\qquad +C_3.
		\end{align}
		
		We divide both sides of \eqref{06} by 
		\begin{align}
			\frac{p(v'-\kappa)\me^{-(C^{(p)}+1)}} {\del{\l_2 R_l^d}^p},\notag
		\end{align}
		apply \eqref{Pp and Pq}, and rename the constants to obtain
		\begin{align}
			s_l^{-1}&\int_0^t{\br{\left(N_{f_+,\ta l}+1\right)^{p-1} N_{f'_+,\ta l}}_\ta}\mathrm{d} \ta\notag\\
			&\quad\le R^{dp}\del{C_4+C_5\del{\frac{R}{r}}^{dp}}+ C_6\del{\frac{R}{r}}^{dp}\frac{t}{s_l^{\eps+1}}\del{\sup_{0\le \ta\le t_0}{{\cR}}_{p,l}(\ta)+\sup_{0\le \ta\le t_0}{{\cR}}_{1,l}(\ta)}.\notag
		\end{align}
		Notice that we also used the fact $R>1$. Since $a\le4$ we obtain the desired estimate with
		\begin{align}
			&A^{(p)}\coloneqq C_4+C_54^{dp},\notag\\
			&B^{(p)}\coloneqq C_64^{dp}.\notag
		\end{align}
		
		\subsubsection{Proof of \lemref{before bound rem}}\label{proof lemma}
		
		Step 1. We start by showing \eqref{4 p}  for $f_-$, where $f\in\ctE$. Consider \eqref{1 p} for $f_-$, and apply \eqref{IE} for $p-1$ to the first term on the r.h.s. of \eqref{1 p} to obtain
		\begin{align}
			\log\frac{\br{\Nf[p]{f_-}{t}{l}}_t}{\br{\Nf[p]{f_-}{0}{l}}_0}\le&\frac{pv'}{\left(\l_1r_l^de^{-1}\right)^p} \sbr{A^{(p-1)}R_l^{d(p-1)}+\frac{B^{(p-1)}t}{s_l^{1+\eps}}\del{\sup_{0\le\ta\le t}\cR_{p-1,l}(\ta)+\sup_{0\le\ta\le t}\cR_{1,l}(\ta)}}\notag\\
			&\qquad+\frac{ t\tilde C_p}{s_l^{1+\eps}\left(\l_1r_l^d\me^{-1}\right)^p}\del{\sup_{0\le\ta\le t}\cR_{p,l}(\ta)+\sup_{0\le\ta\le t}\cR_{1,l}(\ta)}\notag,
		\end{align}
		for any $R>r\ge \max\Set {\rho^{(p-1)},1}$ satisfying $R-r>\delta_0r$. Using \eqref{Pp and Pq}, the fact that $a\le 4$, and absorbing all the constants we arrive at
		\begin{align}\label{15 p}
			\log\frac{\br{\Nf[p]{f_-}{t}{l}}_t}{\br{\Nf[p]{f_-}{0}{l}}_0}\le& \frac{C^{(p)}}{R_l^d} +\frac{Ct}{s_l^{1+\eps}r_l^{dp}}\del{\sup_{0\le\ta\le t}P_{p,l}(\ta)+\sup_{0\le\ta\le t}P_{1,l}(\ta)},
		\end{align}
		for some positive constants $C^{(p)}=C^{(p)}(A^{(p-1)},p,\l_1,v,d)$  and $C=C(B^{(p-1)},p,\l_1)$. For any $\mu>0$, we define the following quantity,
		\begin{align}\label{16 p}
			\tilde \rho^{(p)}=\tilde \rho^{(p)}(B^{(p-1)},p,\l_1,\mu,\eps,\delta_0,v)\coloneqq \frac{3v}{2}\frac{1}{\delta_0}\del{C\mu}^{1/\eps}.
		\end{align}
		Recalling definition \eqref{slDef}, \eqref{16 p} implies that, for every $R>r\ge\tilde \rho^{(p)}$, the following holds
		\begin{align}\label{17 p}
			s_l^\eps\ge C\mu.
		\end{align}
		Inequalities \eqref{15 p} and \eqref{17 p} yield
		\begin{align}\notag
			\log\frac{\br{\Nf[p]{f_-}{t}{l}}_t}{\br{\Nf[p]{f_-}{0}{l}}_0}\le& \frac{C^{(p)}}{R_l^d} +\frac{t}{s_l\mu }\frac{1}{~r_l^{dp}}\del{\sup_{0\le\ta\le t}\cR_{p,l}(\ta)+\sup_{0\le\ta\le t}\cR_{1,l}(\ta)} ,
		\end{align}
		for any $R>r\ge \max\Set{\rho^{(p-1)},\tilde \rho^{(p)},1}$.
		Thanks to \propref{cor remainder}, there exists a constant $C_{\cR,1}=C_{\cR,1}(\al,d,C_J,\l_2/\l_1)>0$ and  $\rho$ is defined as in \eqref{Cdef2} and depending on $C_{\cR,1}$ such that for every
		\begin{align}\label{9 o}
			R>r\ge\max\Set{\rho,\tilde \rho^{(p)}, \rho^{(p-1)},1}\eqqcolon\rho^{(p)},
		\end{align} it holds 
		\begin{align}
			r^{-d}\sup_{0\le\ta\le t_l}\cR_{1,l}(\ta)\le C_{\cR,1}. \notag
		\end{align}
		Therefore,
		\begin{align}\label{18 p}
			\log\frac{\br{\Nf[p]{f_-}{t}{l}}_t}{\br{\Nf[p]{f_-}{0}{l}}_0}\le& \frac{C^{(p)}}{R_l^d} +\frac{t}{s_l\mu }\del{\frac{1}{~r_l^{dp}}\sup_{0\le\ta\le t}\cR_{p,l}(\ta)+C_{\cR,1}}.
		\end{align}
		Exponentiating \eqref{18 p} yields the desired estimate \eqref{4 p} for $f_-$.
		\begin{align}\label{19 p}
			\br{\Nf[p]{f_-}{t}{l}}_t\le \br{\Nf[p]{f_-}{0}{l}}_0\exp\Set{\frac{C^{(p)}}{R_l^d} +\frac{t}{s_l\mu }\del{\frac{1}{~r_l^{dp}}\sup_{0\le\ta\le t}\cR_{p,l}(\ta)+C_{\cR,1}}}.
		\end{align}  
		Step 2. Now, we prove \eqref{4 p}, for $f_+$. Consider \eqref{1 p} for $f_+$,  and bound the first summand on the r.h.s. using \eqref{IE} for $f_+$ and $p-1$,
		\begin{align}
			\log\frac{\br{\Nf[p]{f_+}{t}{l}}_0}{\br{\Nf[p]{f_+}{t}{l}}_t}\le&\frac{pv'}  {\left(\l_1r_l^de^{-1}\right)^p}\sbr{A^{(p-1)}R_l^{d(p-1)}+\frac{B^{(p-1)}t}{s_l^{1+\eps}}\del{\sup_{0\le\ta\le t}\cR_{p-1,l}(\ta)+\sup_{0\le\ta\le t}\cR_{1,l}(\ta)}}\notag\\
			&\qquad+\frac{ 2pt\tilde C}{s_l^{1+\eps}\left(\l_1r_l^de^{-1}\right)^p}\del{\sup_{0\le \ta\le t_0}{{\cR}}_{p,l}(\ta)+\sup_{0\le \ta\le t_0}{{\cR}}_{1,l}(\ta)}.\notag 
		\end{align}
		Following the same strategy as before, \eqref{4 p} for $f_+$ follows.

		\section{Proof of Remainder Estimate (\propref{Prop 2 ub 2}) via Multiscale Induction}\label{second mom log b}
		In this section we conclude the proof of \propref{final prop ind p} by proving \propref{Prop 2 ub 2}. The idea is to follow a similar multiscale induction scheme as in  \propref{cor remainder}. The additional tool that we need is Hölder inequality for commuting positive operators, that we state in the following lemma, that we prove in \secref{proof holder}.

		\begin{lemma}[Hölder inequality for positive commuting operators]\label{lemma holder}
			Consider a finite dimensional Hilbert space, $\mathscr{H}$, $\dim \mathscr{H}=n$, with scalar product $\br{\cdot\,,\cdot}$. For every positive operators $A$ and $B$ acting on $\mathscr{H}$, such that $\sbr{A,B}=0$, it holds that for every $p,q\in\del{1,\infty}$ such that $p^{-1}+q^{-1}=1$,
			\begin{align}
				\br{\psi,AB\psi}\le \br{\psi,A^p\psi}^{1/p}\br{\psi,B^q\psi}^{1/q} \qquad \text{for all }\psi\in\mathscr{H}.\notag
			\end{align}
		\end{lemma}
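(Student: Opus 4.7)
The plan is to reduce the statement to the classical Hölder inequality for finite sequences by exploiting the commutation of $A$ and $B$ through simultaneous diagonalization. Since $A,B$ are positive and commute on a finite-dimensional Hilbert space, the spectral theorem for commuting self-adjoint operators yields an orthonormal basis $\{e_i\}_{i=1}^n$ of $\mathscr H$ consisting of common eigenvectors, with $A e_i = a_i e_i$ and $B e_i = b_i e_i$ where $a_i, b_i \ge 0$. In this basis, any $\psi \in \mathscr H$ decomposes as $\psi = \sum_{i=1}^n c_i e_i$ with $c_i = \br{e_i,\psi}$.

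First, I would express all three expectation values as weighted sums. By direct computation using orthonormality and $f(A) e_i = f(a_i) e_i$ (valid by the functional calculus, and in particular for the monomials $x \mapsto x^p$, since $a_i \ge 0$), one obtains
\begin{align}
\br{\psi, AB\psi} = \sum_{i=1}^n a_i b_i |c_i|^2, \qquad \br{\psi, A^p \psi} = \sum_{i=1}^n a_i^p |c_i|^2, \qquad \br{\psi, B^q \psi} = \sum_{i=1}^n b_i^q |c_i|^2.\notag
\end{align}

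Next, I would apply the classical Hölder inequality for sequences with conjugate exponents $p, q$. Writing the weights as $|c_i|^2 = |c_i|^{2/p} \cdot |c_i|^{2/q}$ (which holds because $1/p + 1/q = 1$), and pairing these factors respectively with $a_i$ and $b_i$, one has
\begin{align}
\sum_{i=1}^n a_i b_i |c_i|^2 = \sum_{i=1}^n \bigl(a_i |c_i|^{2/p}\bigr) \bigl(b_i |c_i|^{2/q}\bigr) \le \left(\sum_{i=1}^n a_i^p |c_i|^2\right)^{1/p} \left(\sum_{i=1}^n b_i^q |c_i|^2\right)^{1/q},\notag
\end{align}
where the inequality is the standard Hölder inequality applied to the sequences $u_i := a_i |c_i|^{2/p}$ and $v_i := b_i |c_i|^{2/q}$, noting that $u_i^p = a_i^p |c_i|^2$ and $v_i^q = b_i^q |c_i|^2$. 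Combining this with the identifications above gives the claimed operator inequality.

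There is no real obstacle here; the only subtlety is to invoke the simultaneous diagonalization of commuting self-adjoint operators to reduce the operator inequality to the scalar one, and to verify that non-negativity of $a_i, b_i$ (from positivity of $A, B$) makes the fractional powers well-defined and the functional calculus trivial. The finite-dimensional hypothesis is convenient but not essential; however, since it is stated, there is no need to invoke more general spectral-theoretic machinery.
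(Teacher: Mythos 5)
Your proof is correct and follows essentially the same route as the paper's: simultaneously diagonalize the commuting positive operators, expand $\psi$ in the common eigenbasis, and apply the classical Hölder inequality for sequences after splitting the weight $|c_i|^2 = |c_i|^{2/p}|c_i|^{2/q}$. (Incidentally, your eigenvalue-to-exponent pairing is the consistent one, whereas the paper's displayed computation momentarily swaps the roles of the two eigenvalue families before landing on the same correct conclusion.)
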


		\subsection{Proof of \propref{Prop 2 ub 2} assuming \lemref{lemma holder}}
		
		The next step consists in implementing the multiscale induction coupled with the 
		Hölder inequality in order to control the remainder term. Recall that in this section we are only considering initial states $\psi\in\cF_{\le N_0}\del{\ell^2\del{\Lambda}}$. Since it is a finite dimensional Hilbert space we can apply \lemref{lemma holder}.

		\begin{proof}
			We adapt the multiscale induction scheme as in \eqref{cor remainder}, and we refer to \cite{lemm2023microscopic} for the details.
			
			\textbf{Induction base.} We fix a large $L>0$ such that $\Lambda\subset B_L$. We want to show that there exists $M=M(L)>0$ such that \eqref{rem Pp bound} holds for all $l\ge M$. Recalling the definition of the remainder \eqref{Rp}, using \eqref{k2Cond}, the assumption of controlled density \eqref{CD}, and the fact that particle number is conserved we obtain the following
			\begin{align}
				r^{-dp}_l\cR_{,l}(\ta)\le&\frac{\br{N^p_\Lam}_\ta}{r_l^{dp}}+\sum_{x\in \Lambda}\sum_{y\in\Lambda}\abs{J_{xy}}|x-y|^{1+\eps}\frac{\br{ N^{p-1}_\Lam n_y}_\ta }{ r_l^{dp}}\notag  \\
				\le& \frac{\br{N^p_\Lam}_0}{r_l^{dp}}+\kappa_\eps\frac{\br{N^p_\Lam}_0}{r^{dp}_l}\notag\\
				\le& \l_2(1+\kappa_\eps)\frac{L^{dp}}{a^{dp(l+b)}},\notag
			\end{align}
			which leads to the desired estimate \eqref{rem Pp bound} for every $l\ge M\coloneqq\log_aL$. 
			
			\textbf{Induction step.} Assume \eqref{rem Pp bound} holds for $l+1,\,l+2,\dots$, we show it for $l$. Consider  \eqref{19 p} with $$\mu=C_{\cR,p}+C_{\cR,1},$$ 
			with $C_{\cR,1}$ as in \eqref{CPdef}. Then the induction hypothesis together with \eqref{4 p} implies
			\begin{align}\notag
				\br{\Nf[p]{f_-}{t}{j}}_t\le \br{\Nf[p]{f_-}{0}{j}}_0\exp\Set{\frac{C^{(p)}}{R_j^d} +\frac{t}{s_j}},
			\end{align}
			for all $t\le t_0$, $j\ge l+1$, and $r\ge \rho^{(p)}$ with $\rho^{(p)}$ as in \eqref{rhp p}. For the same range of parameters, thanks to the geometric properties of $f\in\ctE$, \eqref{f and b at t} and \eqref{f and B at t=0}, we derive
			\begin{align}\label{21 p}
				\br{N_{B_{{r}_j}}^{p}}_t\le \br{N_{B_{{R}_j}}^{p}}_0\exp\Set{\frac{C^{(p)}}{R_j^d} +\frac{t}{s_j}}.
			\end{align}
			Inequality \eqref{21 p} implies that there exists a constant $C_1>0$ that depends on $d,\l_1,\l_2,p$ such that
			\begin{align}\label{key estiamte rem p}
				r_j^{-dp}\sup_{0\le \ta\le t_0}{\br{N_{B_{{R}_j}}^{p}}_\ta}\le C_1,\qquad j\ge l.
			\end{align}
			In fact, by \eqref{21 p} it follows, for $j\ge l$,
			\begin{align}\label{ 22 p}
				r_j^{-dp}\sup_{0\le \ta\le t_0}{\br{N_{B_{{R}_j}}^{p}}_\ta}\le r_j^{-dp}{\br{N_{B_{R_{j+1}}}^{p}}_0}\exp\Set{\frac{C^{(p)}}{R_{j+1}^d} +\frac{t}{s_{j+1}}}.
			\end{align}
			Thanks to the assumption of controlled density \eqref{CD} and since $a\le 4$, inequality \eqref{ 22 p} yields
			\begin{align}
				r_j^{-dp}\sup_{0\le \ta\le t_0}{\br{N_{B_{{R}_j}}^{p}}_\ta}&\le \l_2a^{2dp}\exp\Set{\frac{C^{(p)}}{R_{j+1}^d} +\frac{t}{s_{j+1}}}\notag\\
				&\le \l_24^{2dp}\exp\Set{\frac{C^{(p)}}{R_{j+1}^d} +\frac{t}{s_{j+1}}}.\notag
			\end{align}
			Notice that $t\le t_0\le s_{j+1}$ and $R_{j+1}>1$ imply
			\begin{align}
				{r_j^{-dp}}\sup_{0\le \ta\le t_0}{\br{N_{B_{{R}_j}}^{p}}_\ta}&\le \l_24^{2dp}\me^{C^{(p)} +1}\eqqcolon  C_1,\notag
			\end{align}
			which shows \eqref{key estiamte rem p}. Estimate \eqref{key estiamte rem p} is the key to bound the $p$-th order remainder term defined as
			\begin{align}\label{31 p}
				& r_l^{-dp} \sup_{0\le \ta \le t_0}{\cR_{p,l}}\del{\ta}= \underbrace{r_l^{-dp}\sup_{0\le \ta \le t_0}\vphantom{\sum_{x\in B_{{{R}_l}}}}{\br{N_{B_{{R}_l}}^{p}}_\ta}}_{\Mr{I}}+\underbrace{r_l^{-dp}\sup_{0\le \ta \le t_0}\sum_{x\in B_{{{R}_l}}}\sum_{y\in\Lambda}\abs{J_{xy}}|x-y|^{1+\eps}{{\br{ N_{B_{{R}_l}}^{p-1}n_y}_\ta }}}_{\Mr{II}}.
			\end{align}
			In fact, thanks to \eqref{key estiamte rem p} for $j=l$, we can bound term I by
			\begin{align}\label{30 p}
				\Mr{I}\equiv r_l^{-dp}\sup_{0\le \ta\le t_0}{\br{N_{B_{{R}_l}}^{p}}_\ta}\le C_1.
			\end{align}
			Now we focus on II. We follow the same strategy as in \propref{cor remainder}, differentiating in II the contribution from close particles and from far away particles. Hence, we split the sum on $y$ over the two regions, $B_{R_{l+1}}=:B$ and $B^c_{R_{l+1}}.$ We differentiate two cases: 
			1) the set $\Set{y\in\Lambda\;:\;d(y,B)=1}$ is contained in $B^c_{R_{l+1}}$, and 2) the set $\Set{y\in\Lambda\;;\;d(y,B)=1}$ lies in between $B_{R_l}$ and $B^c_{R_{l+1}}$.

			2.1. In case 1) we write
			\begin{align}\label{22 p}
				\Mr{II}&=\sup_{0\le \ta \le t_0}\sum_{x\in B}\sum_{y\in\Lambda}\abs{J_{xy}}|x-y|^{1+\eps}\frac{{\br{ N_{B_{{R}_l}}^{p-1}n_y}_\ta }}{r_l^{dp}}\notag\\
				&\le \sup_{0\le \ta \le t_0}\sum_{x\in B}\sum_{y\in B}\abs{J_{xy}}|x-y|^{1+\eps}\frac{{\br{ N_{B_{{R}_l}}^{p-1}n_y}_\ta }}{r_l^{dp}}\notag\\
				&\quad + \sup_{0\le \ta \le t_0}\sum_{x\in B}\sum_{m>l+b+1}\sum_{a^m\le y<a^{m+1}}\abs{J_{xy}}|x-y|^{1+\eps}\frac{{\br{ N_{B_{{R}_l}}^{p-1}n_y}_\ta }}{r_l^{dp}}.
			\end{align}
			The first term on the r.h.s. of \eqref{ 22 p} can be bounded as follows
			\begin{align}\label{23 p}
				\Mr{II}_a:&=\sup_{0\le \ta \le t_0}\sum_{x\in B}\sum_{y\in B}\abs{J_{xy}}|x-y|^{1+\eps}\frac{{\br{ N_{B_{{R}_l}}^{p-1}n_y}_\ta }}{r_l^{dp}}\notag\\
				&\le \del{\sup_{x\in B}\sum_{y\in \Lambda}\abs{J_{xy}}|x-y|^{1+\eps}}\del{\sup_{0\le \ta\le t_0}\sum_{y\in B}\frac{{\br{ N_{B_{{R}_l}}^{p-1}n_y}_\ta }}{r_l^{dp}}}\notag \\
				&=\k_\eps\,I\le \k_\eps C_1.
			\end{align}
			Now fix 
			\begin{align}
				\g:= \al-1-\eps.\notag
			\end{align}
			Recalling \eqref{gaDef} and \eqref{epsDef} it follows
			\begin{align}
				\g>d+\eps.\notag
			\end{align}
			Then, following the same steps as in the proof of \cite[Prop.~5.3]{lemm2023microscopic} we can bound the second term on the r.h.s. of \eqref{22 p} as
			\begin{align}\label{24 p}
				\Mr{II}_b&:= \sup_{0\le \ta \le t_0}\sum_{x\in B}\sum_{m>l+b+1}\sum_{a^m\le y<a^{m+1}}\abs{J_{xy}}|x-y|^{1+\eps}\frac{{\br{ N_{B_{{R}_l}}^{p-1}n_y}_\ta }}{r_l^{dp}}\notag\\
				&\le C_J  \sup_{0\le \ta \le t_0}\sum_{x\in B}\sum_{m>l+b+1}\sum_{a^m\le y<a^{m+1}}|x-y|^{-\g}\frac{{\br{ N_{B_{{R}_l}}^{p-1}n_y}_\ta }}{r_l^{dp}}\notag\\
				&\le C_J C_d a^{d(l+b)}(a-1)^{-\g} \sup_{0\le \ta \le t_0}\sum_{m>l+b+1}a^{-(m-1)\g}\frac{\br{ N_{B_{{R}_l}}^{p-1}N_{B_{a^{m+1}}}}_\ta }{r_l^{dp}}\notag\\
				&\le C_J C_d \del{\frac{4}{\delta_0}}^\g\sup_{0\le \ta \le t_0}\sum_{m>l+b+1}a^{-m\g}\frac{\br{ N_{{R}_l}^{p-1}N_{B_{a^{m+1}}}}_\ta }{r_l^{d(p-1)}},
			\end{align}
			where we used the fact $1+\delta_0<1\le 4$.
			We can split the expectation value appearing in \eqref{24 p} thanks to \lemref{lemma holder} since $(p/(p-1))^{-1}+p^{-1}=1$, and we obtain
			\begin{align}\label{27 p}
				& \sup_{0\le \ta \le t_0}\sum_{m>l+b+1}a^{-m\g}\frac{{\br{ N_{B_{{R}_l}}^{p-1}N_{B_{a^{m+1}}}}_\ta }}{r_l^{d(p-1)}}\notag\\
				\le& \sum_{m>l+b+1}a^{-m\g}\sup_{0\le \ta \le t_0}\frac{\br{N_{B_{R_l}}^{p}}_t^{(p-1)/p}}{r_l^{d(p-1)}}\br{N_{B_{a^{m+1}}}^{p}}_t^{1/p}\notag\\
				\le& \sbr{r^{-d}_l\del{\sup_{0\le \ta \le t_0}\br{N_{B_{R_l}}^{p}}_t}^{1/p}}^{p-1}\sbr{~
					\sum_{m>l+b+1}a^{-m\g}\del{\sup_{0\le \ta \le t_0}\br{N_{B_{a^{m+1}}}^{p}}_t}^{1/p}}.
			\end{align}
			Thanks to \eqref{key estiamte rem p} we can bound
			\begin{align}\label{25 p}
				r_l^{-d}\del{\sup_{0\le \ta \le t_0}\br{N_{B_{R_l}}^{p}}_t}^{1/p}\le C_1^{1/p},
			\end{align}
			and
			\begin{align}\label{26 p}
				\sum_{m>l+b+1}a^{-m\g}\del{\sup_{0\le \ta \le t_0}\br{N_{B_{a^{m+1}}}^{p}}_t}^{1/p}\le  C_1^{1/p} \sum_{m>l+b+1}a^{-m(\g-d)}.
			\end{align}
			The sum in \eqref{26 p} converges since $\g>d$ and $a>\delta_0+1>1$. This together with \eqref{27 p}, \eqref{25 p}, and \eqref{26 p} implies
			\begin{align}\label{28 p}
				\Mr{II_b}\le C_J C_{d,\al} C_1.
			\end{align}

			2.2.	In case 2, so when the set $\Set{y\in \Lambda: d(y,B)=1}$ lies in between $B_{R_l} $ and $B_{R_{l+1}}$, we have an additional term
			\begin{align}\label{22 p1}
				\Mr{II}&=\sup_{0\le \ta \le t_0}\sum_{x\in B}\sum_{y\in\Lambda}\abs{J_{xy}}|x-y|^{1+\eps}\frac{{\br{ N_{B_{{R}_l}}^{p-1}n_y}_\ta }}{r_l^{dp}}\notag\\
				&\le \sup_{0\le \ta \le t_0}\sum_{x\in B}\sum_{y\in B}\abs{J_{xy}}|x-y|^{1+\eps}\frac{{\br{ N_{B_{{R}_l}}^{p-1}n_y}_\ta }}{r_l^{dp}}\notag\\
				&\quad + \sup_{0\le \ta \le t_0}\sum_{x\in B}\sum_{m>l+b+1}\sum_{a^m\le y<a^{m+1}}\abs{J_{xy}}|x-y|^{1+\eps}\frac{{\br{ N_{B_{{R}_l}}^{p-1}n_y}_\ta }}{r_l^{dp}}\\
				&\quad+\sup_{\ta<t_1}\sum_{x\in B}\sum_{\substack{y\in B^c\\ d(B,y)=1}}\abs{J_{xy}}|x-y|^{1+\eps}\frac{\br{ N_{B_{{R}_l}}^{p-1}n_y}_\ta }{r_l^{dp}}.
			\end{align}
			The first two summand on the r.h.s. of \eqref{22 p1} are the same terms we treated in case 1, so we can focus on the new term.
			\begin{align}
				\Mr{II_c}&:=\sup_{\ta<t_0}\sum_{x\in B}\sum_{\substack{y\in B^c\\ d(B,y)=1}}\abs{J_{xy}}|x-y|^{1+\eps}\frac{\br{ {N_{B_{{R}_l}}}^{p-1}n_y}_\ta }{r_l^{dp}}\notag\\
				&\le \k_\eps\sup_{\ta<t_0}\sum_{\substack{y\in B^c\\ d(B,y)=1}}\frac{\br{ {N_{B_{{R}_l}}}^{p-1}n_y}_\ta }{r_l^{dp}}\notag\\
				&\le \k_\eps\sup_{\ta<t_0}\frac{\br{ {N_{B_{{R}_l+1}}}^{p}}_\ta }{r_l^{dp}}\notag.
			\end{align}
			By the assumptions on case 2 we know $R_l+1\le R_{l+1}$. Then, writing $R_l=a^{b+l+1}, \,r_l=a^{b+l}$,applying \eqref{key estiamte rem p}, and recalling $a\le 4$ leads to
			\begin{align}\label{29 p1}
				\Mr{II_c}&\le \k_\eps\sup_{\ta<t_0}\frac{\br{ N_{B_{a^{b+l+2}}}^{p}}_\ta }{a^{d(b+l)}}\le \k_\eps 4^2C_1.
			\end{align}
			Applying \eqref{23 p}, \eqref{28 p}, and \eqref{29 p1} to \eqref{ 22 p} we obtain
			\begin{align}\label{29 p}
				\Mr{II}\le C_1\del{\k_\eps+C_JC_{d,\al}+\k_\eps 4^2}.
			\end{align}

			3.	We can conclude the bound on the remainder combining \eqref{31 p}, \eqref{30 p}, and \eqref{29 p} 
			\begin{align}
				r^{-dp}\sup_{\ta\le t}\cR_{p,l}(\ta)\le C_1\del{1+\k_\eps+C_JC_{d,\al}}=:C_{\cR,p}.\notag
			\end{align}
			This completes the induction in $l\in\N_0$.
		\end{proof}

		\bigskip
		\subsection{Proof of \lemref{lemma holder}}\label{proof holder}
		
		Since $A$ and $B$ commute, there exists a common orthonormal basis of eigenvectors $\Set{\phi_\al}_{\al=1}^n$ and relative eigenvalues, $\Set{\l_\al},\,\Set{\mu_\al}$, with $\l_a,\mu_\al\ge0$, such that
		\begin{align}
			A\phi_\al=\l_\al\phi_\al\qquad B\phi_\al=\mu_\al\phi_\al.\notag
		\end{align}
		Given a general $\psi\in\mathscr{H}$ it can be written w.r.t. this basis as
		\begin{align}
			\psi=\sum_{\al=1}^nc_\al\phi_\al\qquad \text{where }c_\al\coloneqq\br{\psi,\phi_\al}.\notag
		\end{align}
		This way we can write
		\begin{align}
			\br{\psi,AB\psi}&= \sum_{\al,\beta=1}^n c_\al^*c_\beta \br{\phi_\al, AB\phi_\beta}\notag\\
			&=\sum_{\al,\beta=1}^n c_\al^*c_\beta \mu_\beta\l_\beta\br{\phi_\al,\phi_\beta}\notag\\
			&=\sum_{\al=1}^n \abs{c_\al}^2 \mu_\al\l_\al.\notag
		\end{align}
		Since $p^{-1}+q^{-1}=1$, we can write $2=2/p+2/q$, using Hölder inequality for numbers, and the fact that $\mu_\al,\l_\al\ge0$, we obtain 
		\begin{align}
			\sum_{\al=1}^n \abs{c_\al}^2 \mu_\al\l_\al&=\sum_{\al=1}^n \del{\abs{c_\al}^{2/p} \mu_\al}\del{\abs{c_\al}^{2/q} \l_\al}\notag\\
			&\le \del{\sum_{\al=1}^n\abs{c_\al}^{2} \mu^p_\al}^{1/p}\del{\sum_{\beta=1}^n\abs{c_\beta}^{2} \l^q_\al}^{1/q}\notag\\
			&=\br{\psi,A^p\psi}^{1/p}\br{\psi,B^q\psi}^{1/q}.\notag
		\end{align}
		\qed

		\section{Proof of \thmref{thm1} on Light-cone Approximation}\label{secPfLCAgen}

		To prove \thmref{thm1}, we will repeatedly use the propagation bounds obtained in our previous work \cite{lemm2023microscopic},  which are reproduced  below for the readers' convenience. Recall $n$, $\beta$ are defined in \eqref{betaDef}. 		We abbreviate $C_H\equiv C_{H,\al}$, with 
		\begin{equation}
			C_{H,\al}:=C_{J,\al}+C_{V,\al}\label{CHdef}.
		\end{equation}
		\begin{theorem}[\cite{lemm2023microscopic}*{Thm.~2.1}]\label{thmTSMVB}
			{Suppose the assumptions of Theorem \ref{thm1} hold.}
			Then, for any ${v} > \kappa$,
			{ $\delta_0>0$ and $p=1,2$, there exists ${C=C(\al,d, C_H, v,\delta_0,p)}>0$ such that for all $\l,\,R,\,r>0$ with $R-r>\max(\delta_0r,1)$,} there holds
			\begin{align}\label{MVBp}
				\sup_{0\le{t}<(R-r)/v}\br{N_{B_r}^p}_t\le {\left(1+ C(R-r)^{-1}\right)} {{\br{N_{B_{R}}^p}_0}} + {C(R-r)^{-\beta}\l^p.}
			\end{align}
		\end{theorem}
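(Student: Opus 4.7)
\textbf{Proof plan for Theorem \ref{thmTSMVB}.} The plan is to introduce an ASTLO that smoothly interpolates between the indicator of $B_r$ and the indicator of $B_R$, and to show that the $p$-th moment of this ASTLO is nearly non-increasing along the flow, up to a remainder governed by the long-range tails of $J$. Concretely, I would pick $f \in \mathcal{E}$ as in \eqref{classE}, form the moving observable $\Nf{f_-}{t}{\sigma}$ with $\sigma = (R,r)$ (see \eqref{f-} and \eqref{ASTLOdef}), and use Lemma \ref{ASTLO and N} to sandwich $N_{B_r} \le \Nf{f_-}{t}{\sigma}$ and $\Nf{f_-}{0}{\sigma} \le N_{B_R}$ on the time window $t \le s_\sigma = 2(R-r)/(3v)$. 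Once these comparisons are in place, it suffices to show that
\begin{equation*}
\br{\Nf[p]{f_-}{t}{\sigma}}_t \le \left(1 + \tfrac{C}{R-r}\right)\br{\Nf[p]{f_-}{0}{\sigma}}_0 + C(R-r)^{-\beta}\lambda^p
\end{equation*}
for $p \in \{1,2\}$.

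Next I would differentiate $\br{\Nf[p]{f_-}{t}{\sigma}}_t$ in time, obtaining two contributions: the explicit derivative of $f_-$ (which carries a factor $-v'/s_\sigma$ with $v' = (v+\kappa)/2$), and the Heisenberg commutator $\br{[iH, \Nf[p]{f_-}{t}{\sigma}]}_t$. Because $V$ is diagonal in the occupation-number basis, $[V, \Nf{f_-}{t}{\sigma}] = 0$ and the interaction drops out entirely. For the hopping piece I would apply the Taylor-type inequality of Lemma \ref{new taylor} to the weight differences $f_-(|x|_{t,\sigma}) - f_-(|y|_{t,\sigma})$: the leading piece $\sqrt{f_-'}(x)\sqrt{f_-'}(y)|x-y|$ combines with the $\br{J}_{xy}|x-y|$ factor and, after Cauchy--Schwarz and Schur's bound with the weight $\kappa$ from \eqref{kappa}, is absorbed by the adiabatic $-v'/s_\sigma$ term since $v' > \kappa$. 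What survives is a remainder of order $s_\sigma^{-(1+\varepsilon)}$ multiplied by expressions of the form $\sum_{x \in B_{R}}\sum_y |J_{xy}||x-y|^{1+\varepsilon}\br{N_{B_R}^{p-1} n_y}_t$ (exactly the structure of $\cR_{p,l}$ in \eqref{Rp}).

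The main work is to control this remainder uniformly in time with the correct decay $(R-r)^{-\beta}\lambda^p$. I would split the inner sum over $y$ into the near part $y \in B_{2R}$ and the far tails $y$ at dyadic distances $\sim 2^m R$. The near part is bounded using particle-number conservation together with the initial density hypothesis $\br{N_{B_r(x)}^q}_0 \le (\lambda r^d)^q$. The far tails use $|J_{xy}| \le C|x-y|^{-\alpha}$ and, after H\"older as in Lemma \ref{lemma holder} for $p=2$, reduce to geometric sums $\sum_m 2^{-m(\alpha - 1 - \varepsilon - d)}$ weighted by the density bound $\lambda^p$; these converge precisely under $\alpha > d+1$ (with $\varepsilon = (\alpha-d-1)/2$) and produce the $(R-r)^{-\beta}\lambda^p$ factor after accounting for $s_\sigma \sim R-r$. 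Finally, integrating the resulting differential inequality from $0$ to $t \le (R-r)/v$ yields the multiplicative correction $1 + C/(R-r)$ and the additive remainder $C(R-r)^{-\beta}\lambda^p$, completing the plan.

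The hardest step is the case $p = 2$: the commutator produces mixed terms of the form $\br{N_{f_-,t} N_{f_-',t}}_t$ which cannot be closed using only the $p=2$ ASTLO. I expect to handle this by first proving the $p=1$ estimate, then feeding it back to bound the mixed term in the $p=2$ Grönwall argument, essentially as in the induction-on-moments strategy summarized in Figure \ref{flow chart 1} but in its simplest two-step form. Keeping the density parameter $\lambda$ linear in $\lambda^p$ (rather than $\lambda^{2p}$) in the final remainder requires carefully using H\"older in the far-tail sum so that one factor of $N_{B_R}^{p-1}$ is absorbed into the initial-data term, and only one copy of the tail density appears.
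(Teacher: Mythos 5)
This theorem is quoted verbatim from \cite{lemm2023microscopic} (their Thm.~2.1) and is used as a black box in the present paper; there is no in-paper proof to compare against. Evaluating your plan on its own merits, using the tools developed here for reference, I see two substantive issues.

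The central gap is how you propose to bound the remainder at intermediate times. You write that the near part ($y\in B_{2R}$) of the remainder $\sum_{x\in B_R}\sum_y|J_{xy}||x-y|^{1+\varepsilon}\br{N_{B_R}^{p-1}n_y}_\tau$ is controlled ``using particle-number conservation together with the initial density hypothesis.'' That does not close: conservation gives $\br{N_\Lam}_\tau=\br{N_\Lam}_0$, which scales like $|\Lam|$ and would make $C$ system-size dependent, while the density hypothesis \eqref{UDBp} controls moments only at $t=0$, not $\br{n_y}_\tau$ or $\br{N_{B_{R'}}^q}_\tau$ for $\tau>0$. To bound the remainder one needs a propagation bound at scales $R'\geq R$ --- which is essentially what you are trying to prove, so the argument as stated is circular. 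This is precisely why the paper's proof of \thmref{thm4} (and the corresponding argument in \cite{lemm2023microscopic}) runs a \emph{downward multiscale induction} over dyadic scales $(r_l,R_l)$, starting at the largest scale where the crude $N_\Lam$ bound is harmless because $\Lam\subset B_L$, and at each smaller scale feeding in the already-established bound from one scale up (see Proposition~\ref{cor remainder} and the flowcharts in Figures~\ref{flow chart 2}--\ref{flow chart 1}). Your plan has no analogue of this step, and without it the constant cannot be made $\Lam$-independent.

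A second, more technical issue is the exponent. You invoke Lemma~\ref{new taylor} with $\varepsilon=(\alpha-d-1)/2$, but that lemma is restricted to $0<\varepsilon\leq 1$, so the remainder decay you can extract this way is at most $s_\sigma^{-\varepsilon}$ with $\varepsilon\leq 1$. The theorem you are proving claims the additive error exponent $\beta=\lfloor\alpha-3d-1\rfloor$, which can be an arbitrarily large integer for large $\alpha$. Obtaining it requires a Taylor expansion of the ASTLO weight to integer order $n\sim\lfloor\alpha-3d-1\rfloor$ (as in \lemref{lemNq}, which imports the $n$-th order expansion from \cite[Thm.~1]{lemm2023information}), not the $1+\varepsilon$ expansion of Lemma~\ref{new taylor}. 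Relatedly, the stated leading factor $1+C(R-r)^{-1}$ requires isolating, within the integrated differential inequality, a piece proportional to the ASTLO moment itself (which Gr\"onwall converts into the $(R-r)^{-1}$ multiplicative correction) from the residual additive error proportional to $\lambda^p$; the single $s_\sigma^{-(1+\varepsilon)}$-prefactored remainder you describe does not produce this split on its own. So while your high-level architecture --- $f_-$ ASTLO, absorption of the leading commutator into the $-v'/s_\sigma$ adiabatic term via $v'>\kappa$, $[V,N_{f_-}]=0$, near/far decomposition, and $p=1\Rightarrow p=2$ moment induction with H\"older --- is the right one and matches the paper's toolkit, the bootstrap and the order-$n$ expansion are both missing, and both are essential to land on the stated constant and exponent.
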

		
		The remainder of this section is structured as follows.
		In \secref{sec71},  we use \thmref{thmTSMVB} to derive   propagation estimates on annular regions  that are later used in the proof of \thmref{thm1}. In \secref{sec72}, we isolate a key estimate in \propref{propClaim} that constitutes the basis of the proof of \thmref{thm1}. This proposition is proved by induction in Sects.~\ref{sec73}--\ref{sec74}, assuming some technical lemmas, whose proofs are deferred to \secref{secPfLems}. Finally, in \secref{sec75} we complete the proof of \thmref{thm1}.

		\subsection{Preliminaries}\label{sec71}
		First, we extend \thmref{thmTSMVB} via a covering argument to  curved annular regions  (c.f.~\cite[Thm.~2.3]{faupin2022lieb}). 
		This will later facilitate the study of propagation properties of particles starting from arbitrary bounded domain  $X\subset\Lam$.
		
		To this end, we introduce some notations.
		For $\xi\ge1$ and $0\le \g\le1$, set%
		\begin{align}
			\label{NgxiDef}
			N_{\g,\xi} := N_{X_{(1+\g)\xi}}-N_{X_{(1-\g)\xi}}.
		\end{align}			
		$N_{\g,\xi}$ corresponds to the particle numbers associated to the annular region  of width $2\g\xi$ around the light-cone boundary, $\di X_\xi:=\Set{x\in\Lam:d_X(x)=\xi}$.
		The corresponding regions are illustrated in \figref{figAnnDecomp} below.
		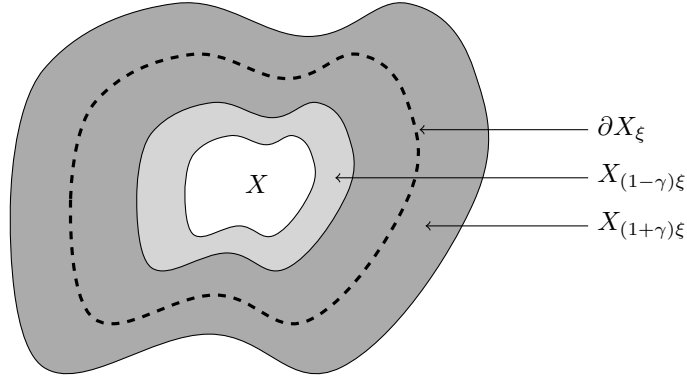
\begin{figure}[H]
			\centering
			\begin{tikzpicture}[scale=.8]
				\draw[fill=gray!66]  plot[shift={(-0.3,-0.45)}, scale=1.1,smooth, tension=.7] coordinates {(-3,0.5) (-2.5,2.5) (-.5,3.5) (1.5,3) (3,3.5) (4,2.5) (4,0.5) (2,-2) (0,-1.5) (-2.5,-2) (-3,0.5)};
				
				\draw[dashed,very thick]  plot[shift={(-0.2,-0.25)}, scale=.8,smooth, tension=.7] coordinates {(-3,0.5) (-2.5,2.5) (-.5,3.5) (1.5,3) (3,3.5) (4,2.5) (4,0.5) (2,-2) (0,-1.5) (-2.5,-2) (-3,0.5)};
				
				\draw[fill=gray!33]  plot[scale=.5,smooth, tension=.7] coordinates {(-3,0.5) (-2.5,2.5) (-.5,3.5) (1.5,3) (3,3.5) (4,2.5) (4,0.5) (2,-2) (0,-1.5) (-2.5,-2) (-3,0.5)};
				
				\draw[fill=white]  plot[shift={(0.2,.15)},scale=.3,smooth, tension=.7] coordinates {(-3,0.5) (-2.5,2.5) (-.5,3.5) (1.5,3) (3,3.5) (4,2.5) (4,0.5) (2,-2) (0,-1.5) (-2.5,-2) (-3,0.5)};
				
				\node at (.5,.4) {$X$};

				\draw [->] (6,.5)--(1.8,.5);
				\node [right] at (6,.5) {$X_{(1-\g)\xi}$};
				
				\draw [->] (6,-.3)--(3.3,-.3);
				\node [right] at (6,-0.3) {$X_{(1+\g)\xi}$};
				
				\draw [->] (6,1.3)--(3.2,1.3);
				\node [right] at (6,1.3) {$\di X_\xi$};
			\end{tikzpicture}
			\caption{Schematic diagram illustrating the curved annular region in \eqref{NgxiDef}.}
			\label{figAnnDecomp}
		\end{figure}
		We prove the following: 	 	
		\begin{theorem}[MVB on curved annular region]\label{thmAnnMVB}
			Suppose the assumptions of \thmref{thmTSMVB} hold. 
			Then, for any $v>\kappa$, $p=1,2$, $1\ge \g_2>\g_1\ge0$, $\xi>(\g_2-\g_1)^{-1}$,  and bounded subset $X\subset\Lam$,
			there exists positive constant 
			$$ C=C(\al,d, C_H, v,p,\g_1,\g_2,X),$$ 
			such that the solution $\psi_t$ satisfies, 
			\begin{align}\label{annEst}
				\sup_{0\le t<(\g_2-\g_1)\xi/v}	\br{N_{\g_1,\xi}^p}_t
				\le C\Set{\br{N_{\g_2,\xi}^p}_0+C_2[(\g_2-\g_1)\xi]^{-\beta} \l^p}.
			\end{align}
		\end{theorem}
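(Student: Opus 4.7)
My plan is to extend Theorem~\ref{thmTSMVB} from balls to the curved annular region by a finite covering argument, in the spirit of \cite[Thm.~2.3]{faupin2022lieb}. Fix $c \in (0, \tfrac{v - \kappa}{2v})$ so that $v_0 := v(1 - 2c) > \kappa$, and set $r := c(\gamma_2 - \gamma_1)\xi$ and $R := (1 - c)(\gamma_2 - \gamma_1)\xi$, so that $R - r = (1 - 2c)(\gamma_2 - \gamma_1)\xi$ and the ratio $(R - r)/r = (1 - 2c)/c =: \delta_0$ is a fixed positive constant. The hypothesis $\xi \gtrsim (\gamma_2 - \gamma_1)^{-1}$ ensures $R - r > 1$ (possibly after absorbing a constant into the implicit threshold), so the geometric hypotheses of Theorem~\ref{thmTSMVB} are met on each ball.

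Next I would build a covering $\{x_k\}_{k=1}^M \subset A_1 := X_{(1+\gamma_1)\xi} \setminus X_{(1-\gamma_1)\xi}$ satisfying (i) $A_1 \subset \bigcup_k B_r(x_k)$, (ii) each $y \in \Lambda$ belongs to at most $K$ of the enlarged balls $B_R(x_k)$, and (iii) $B_R(x_k) \subset A_2 := X_{(1+\gamma_2)\xi} \setminus X_{(1-\gamma_2)\xi}$ for each $k$; property (iii) follows from $|d_X(y) - d_X(x_k)| \le |y - x_k|$ together with $R \le (\gamma_2 - \gamma_1)\xi$. Crucially, because the covering radius $r$ and both the transverse thickness and the lateral extent of $A_1$ all scale linearly in $\xi$, the counts $M$ and $K$ can be bounded by constants depending only on $X$ and $\gamma_1, \gamma_2$.

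Applying Theorem~\ref{thmTSMVB} ball-wise with velocity $v_0$ and parameter $\delta_0$, the time range becomes $(R - r)/v_0 = (\gamma_2 - \gamma_1)\xi/v$, which is exactly the range claimed. For $p \in \{1, 2\}$ this gives
\[
\langle N_{B_r(x_k)}^p \rangle_t \le C_1 \langle N_{B_R(x_k)}^p \rangle_0 + C_1 (R - r)^{-\beta} \lambda^p.
\]
The covering property yields the operator inequality $N_{\gamma_1, \xi} = N_{A_1} \le \sum_k N_{B_r(x_k)}$, and the power-mean inequality gives $N_{\gamma_1, \xi}^p \le M^{p-1} \sum_k N_{B_r(x_k)}^p$ (trivial for $p = 1$, Cauchy--Schwarz for $p = 2$). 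Summing the ball-wise bounds and using the bounded-overlap bound $\sum_k N_{B_R(x_k)}^p \le K^p N_{\gamma_2, \xi}^p$ (a consequence of $\bigcup_k B_R(x_k) \subset A_2$ with multiplicity $\le K$) collapses the right-hand side and yields \eqref{annEst}.

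The main obstacle is the geometric bookkeeping: making precise that both $M$ and $K$ depend only on $X, \gamma_1, \gamma_2$ and not on $\xi$. This rests on the scale-invariance of the thickened annular geometry once the covering radius is proportional to $(\gamma_2 - \gamma_1)\xi$; concretely, one may cover $\partial X_\xi$ by $O(1/(\gamma_2-\gamma_1)^{d-1})$ balls (times an $X$-dependent shape constant) and stack $O(\gamma_1/(\gamma_2-\gamma_1))$ radial layers to cover the thickness $2\gamma_1 \xi$. The $X$-dependence of the covering constants is the source of the $X$-dependence in $C$, as noted in \remref{remXdep}(iii).
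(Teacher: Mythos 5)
Your proposal is correct and follows essentially the same strategy as the paper's proof: cover the curved annular region $X_{(1+\g_1)\xi}\setminus X_{(1-\g_1)\xi}$ by balls whose radius scales linearly in $\xi$ (so that the number of balls $M$ and the overlap multiplicity $K$ are $\xi$-independent), apply the ball-wise particle-number bound of Theorem~\ref{thmTSMVB} with a slightly reduced velocity, and then recombine via the power-mean inequality $N_{\g_1,\xi}^p\le M^{p-1}\sum_k N_{B_r(x_k)}^p$ and the bounded-overlap bound $\sum_k N_{B_R(x_k)}^p\le K^p N_{\g_2,\xi}^p$. The only differences are cosmetic—the paper picks radii $(R,r)=(\g_2\xi,(1+\eps)\g_1\xi)$ and constructs the cover at unit scale via the convex hull of $X$ and a nearest-point rescaling map, whereas you use concentric radii proportional to $(\g_2-\g_1)\xi$ and argue the $\xi$-uniformity of $M,K$ directly—but the logical skeleton (cover, ball-wise MVB, aggregate) is identical.
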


		\begin{proof}
			Step 1. 	 Fix bounded $X\subset\Lam$ and write $X_{\g,\xi}:={X_{(1+ \g)\xi}}\setminus X_{(1-\g)\xi}$ for $0\le \g\le 1$, $\xi\ge1$. We first show that  there exist positive integers $M,\, D$ \textit{independent of $\xi$} s.th.~ $X_{\g,\xi}$ can be covered with $M$ balls in $\Rb^d$, with each point in $X_{\g,\xi}$ contained in at most $D$ balls.

			Take  $\eps>0$ to be determined later, and let $\tilde \g = (1+\eps)\g.$ Denote by $\tilde Y=\mathrm{Conv}\,(X)$, the convex hull of $X$, and $Y_{\g,\xi}:=\Set{x\in\Rb^d: (1-\g)\xi<\dist_Y(x)\le (1+\g)\xi}$. Since $X$ is bounded, $Y\subset \Rb^d$ is compact, and  there exist positive integers
			\begin{align}
				\label{Pdef}
				D=D(\g,\eps)\ge1,\quad	M=M(X,\g,\eps)\ge1
			\end{align} 
			and balls $B^1_{\tilde\g},\ldots, B^M_{\tilde\g}\subset Y_{\tilde\g, 1}$  s.th. %
			\begin{align}
				\label{coverCond1}
				Y_{\tilde \g,1}	\subset\bigcup_{j=1}^MB^j_{\tilde\g},\qquad B^{j_1}_{\tilde\g}\cap \ldots\cap B^{j_{D+1}}_{\tilde\g}=\emptyset\text{ for distinct }1\le j_1,\ldots,j_{D+1}\le M.
			\end{align}

			Denote by $g(x)$ the map sending $x$ to its closest point on $\di Y$. Then, for any $\xi\ge1$, the map $x\mapsto g(x)+\xi(x-g(x))$ is a surjection from $Y_{\g,1}$ onto $Y_{\g,\xi}$. It follows that for any $\xi\ge1$, a suitable rescaling of the choice from \eqref{coverCond1} yields a cover 
			$B^1_{\tilde\g,\xi},\ldots, B^M_{\tilde\g,\xi}\subset Y_{\tilde\g,\xi}$
			for the curved annular region  $Y_{\g,\xi}$, satisfying
			\begin{align}
				\label{cover1}
				& Y_{\g,\xi}	\subset\bigcup_{j=1}^M B^j_{\tilde\g,\xi},\\ \label{coverCond'}&B^{j_1}_{\tilde\g,\xi}\cap \ldots\cap B^{j_{D+1}}_{\tilde\g,\xi}=\emptyset\text{ for distinct }1\le j_1,\ldots,j_{D+1}\le M.
			\end{align}  
			Since $X_{\g,\xi}\subset Y_{\g,\xi}$, it follows that one can cover all the lattice points in the annular region $X_{\g,\xi}$ with the same number of balls  independent of $\xi$, as desired.
			
			Step 2.  Given $v>\kappa$ and $1\ge\g_2>\g_1\ge0$, we choose $\eps>0$ small enough so that $\g_2>\tilde \g_1:=(1+\eps)\g_1$ and $\tilde v :=\frac{\g_2-\tilde \g_1}{\g_2-\g_1} v>\kappa$. 
			Then we apply \thmref{thmTSMVB} with this choice of $\tilde v$,  $\delta_0:=\frac{\g_2}{\tilde \g_1}-1>0$, and 
			$(R,r):=( \g_2\xi, \tilde \g_1\xi)$  to obtain some constant $ C_0=C_0(\al,d, C_H, v,\delta_0,p)>0$  such that
			\begin{align}
				\label{annEst1}
				\sup_{0\le{t}<(\g_2-\tilde \g_1)\xi/v}\br{N_{B^j_{\tilde \g_1,\xi}}^p}_t\le {\left(1+ C_0[(\g_2-\tilde\g_1)\xi]^{-1}\right)} {{\br{N_{B^j_{\g_2, \xi}}^p}_0}} + {C_0[(\g_2-\tilde\g_1)\xi]^{-\beta}\l^p,}  	
			\end{align}
			where  $B^1_{\al,\xi},\ldots, B^M_{\al,\xi}\subset {X_{(1+\al)\xi}}\setminus X_{(1-\al ) \xi}$ for $\al>0$.
			
			Since $N_S\ge0$ for any $S\subset\Lam$, by the covering property \eqref{cover1} and generalized mean inequality, for $p\ge1$  we have
			$$N_{\g_1,\xi}^p \le\del{\sum_{j=1}^M N_{B^j_{\tilde\g_1,\xi}}}^p\le M^{p-1}\sum_{j=1}^MN_{B^j_{\tilde\g_1,\xi}}^p.$$
			This fact, together with \eqref{annEst1}, yields
			\begin{align}
				\label{annEst2}
				&\sup_{0\le{t}<(\g_2-\tilde\g_1)\xi/v}	\br{N_{\tilde\g_1,\xi}^p}_t\notag\\\le& \sup_{0\le{t}<(\g_2-\tilde\g_1)\xi/v}M^{p-1}\sum_{j=1}^M\br{N_{B^j_{\tilde\g_1,\xi}}^p}_t\notag
				\\\le&M^{p-1}\Set{\left(1+ C_0[(\g_2-\tilde\g_1)\xi]^{-1}\right)  {{\sum_{j=1}^M\br{N_{B^j_{\g_2,\xi}}^p}_0}} + {{C_0M}[(\g_2-\tilde\g_1)\xi]^{-\beta}\l^p}} .
			\end{align}
			Owing to the positivity $N_S\ge0$,  the non-intersection condition \eqref{coverCond'}, and the fact that $B^j_{\g_2,\xi}\subset X_{(1+\g_2)\xi}\setminus X_{(1-\g_2)\xi}$, we have
			$$\sum_{j=1}^M N_{B^j_{\g_2,\xi}}^p\le\del{\sum_{j=1}^M N_{B^j_{\g_2,\xi}}}^p\le  D^p N_{\g_2,\xi}^p.$$
			Using this relation to bound the sum in line \eqref{annEst2}, we find
			\begin{align}
				\label{annEst3}
				&\sup_{0\le{t}<(\g_2-\tilde\g_1)\xi/v}	\br{N_{\tilde\g_1,\xi}^p}_t\notag
				\\\le&M^{p-1}\Set{D^p{\left(1+ C_0[(\g_2-\tilde\g_1)\xi]^{-1}\right)} {{\br{N_{\g_2,\xi}^p}_0}} + {C_0M[(\g_2-\tilde\g_1)\xi]^{-\beta}\l^p}  }	.
			\end{align}
			
			Step  3. Finally, set
			$$C_1:= M^{p-1}D^p(1+C_0),\qquad C_2:= M^{p}C_0,$$
			and choose $\eps>0$ small so that $\xi (\g_2-\tilde \g_1)\ge1$. Recalling  that \textit{$D$ and $M$ in \eqref{Pdef} are independent of  $\xi$} (which is shown in the construction in Step 1), 
			we conclude the desired estimate \eqref{annEst} from \eqref{annEst3}. This completes the proof of Theorem \ref{thmAnnMVB}.
		\end{proof}
		
		\begin{remark}
			A straightforward modification of the proof above as in \cite{sigal2022propagation,lemm2023information} yields similar propagation estimates on a general bounded domain,
			\begin{align}\notag
				\sup_{0\le t<\xi/v}	\br{N_{X}^p}_t
				\le C_X ({\br{N_{X_\xi}^p}_0+\xi^{-\beta}\l^p}).
			\end{align}
			
		\end{remark}

		\subsection{Key estimate}\label{sec72}
		Recall the notations  $X^\cp:=\Lam\setminus X$, $X_\xi\equiv \Set{x\in\Lam:d_X(x)\le \xi}$ for $\xi\ge0$,  $X_\xi^\cp\equiv (X_\xi)^\cp$ (see \figref{fig:Xxi}), and that an observable (i.e.~bounded operator) $A$ is said to be localized in $X\subset\Lambda$ if \eqref{A-loc} holds, 
		written as $A\in\cB_X$. 
		To simplify notations, in what follows we write the localized evolution to $X_\xi$ as
		\begin{equation}\label{Axi}
			A_s^\xi\equiv \al_s^{X_\xi}(A)=e^{isH_{X_\xi}}Ae^{-isH_{X_\xi}},
		\end{equation} 
		where, recall, $H_S,\,S\subset \Lam$ is the localized Hamiltonian defined by \eqref{H}  with $S$ in place of $\Lam$. With the notation above, it is easily verified using  definitions \eqref{A-loc} and \eqref{BX} that   %
		\begin{align}
			\label{AlocEvol}
			A_s^\xi\in \cB_{X_\xi}\quad  \text{ for } \quad A\in\cB_X, s\in\Rb,\,\xi\ge0.	
		\end{align}

		Let $A_t=\al_t(A)$ be the full evolution \eqref{1.5} and define the Heisenberg derivative 
		\begin{align}&D A_t=\frac{\partial}{\partial t}A_t+i[H_\Lam, A_t],\notag
		\end{align}
		so that
		\begin{align}	\label{dt-Heis}
			&\di_t\al_t(A_t) =\al_t(DA_t).
		\end{align}
		By the time-reversal symmetry, without loss of generality we take $t\ge0$ in the sequel. By the fundamental theorem of calculus, we have$$	A_t-A_t^\xi=\int_0^t \di_r\al_r(\al_{t-r}^{X_\xi}(A))\,dr.$$	 Using identity \eqref{dt-Heis} for $\al_r$ and $\al_{t-r}^{X_\xi}$ in the integrand above, together with the fact that $\al_{t-r}^{X_\xi}([H_{X_\xi},A])=[H_{X_\xi},\al_{t-r}^{X_\xi}(A)]$, we find 
		\begin{equation}\label{410}
			A_t-A_t^\xi=\int_0^t\al_r(i[R',A_{t-r}^\xi])\,dr,
		\end{equation}
		where $R':=H-H_{X_\xi}$. Since $A_s^\xi$ is localized in $X_\xi$, only terms in $R'$ which connect $X_\xi$ and $X_\xi^\cp$ contribute to $[R',A_{t-r}^\xi]$ (see \figref{fig:splitting}). 
		Our goal now is to control this commutator, whence the desired estimate in \thmref{thm1} follow by integration.
		\begin{figure}[H]
			\centering
			\begin{tikzpicture}[scale=1.2]
				\draw  plot[scale=.5,smooth, tension=.7] coordinates {(-3,0.5) (-2.5,2.5) (-.5,3.5) (1.5,3) (3,3.5) (4,2.5) (4,0.5) (2,-2) (0,-1.5) (-2.5,-2) (-3,0.5)};

				
				\node [right] at (-.5,.5) {$X_\xi$};
				
				\draw [->] (4,.5)--(1.3,.5);
				\node [right] at (4,.5) {$\text{contributing to }H_{X_\xi}$};
				
				\draw [thick] (1,.7) -- (1.4,.3);
				\draw [fill] (1,.7) circle [radius=0.05];
				\draw [fill] (1.4,.3) circle [radius=0.05];
				
				\draw [->] (4,0)--(2.3,0);
				\node [right] at (4,0) {$\text{contributing to }H_{X_\xi^\cp}$};
				
				\draw [thick] (2.05,.-.2) -- (2.4,.3);
				\draw [fill] (2.05,.-.2) circle [radius=0.05];
				\draw [fill] (2.4,.3) circle [radius=0.05];
				
				\draw [->] (4,-0.5)--(1.2,-0.5);
				\node [right] at (4,-0.5) {$\text{connecting ${X_\xi}$ and ${X_\xi^\cp}$}$};
				
				\draw [thick] (1,-0.3) -- (1.3,-1);
				\draw [fill] (1,-0.3) circle [radius=0.05];
				\draw [fill] ((1.3,-1) circle [radius=0.05];

			\end{tikzpicture}
			\caption{Schematic diagram illustrating the splitting of $H$.}
			\label{fig:splitting}
		\end{figure}
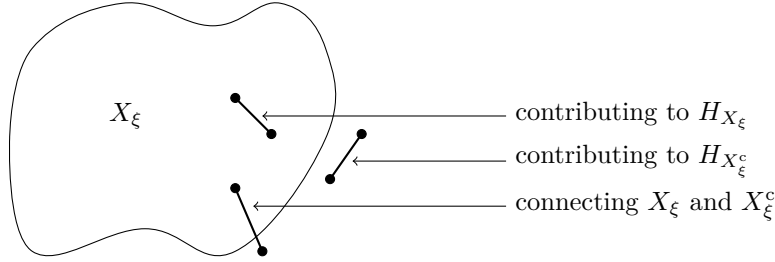

		Define the remainder operator 
		\begin{align}
			\label{Rem-def}
			\Rem_t\equiv \Rem_t(A):=A_t-A_t^\xi.
		\end{align} 
		The proof of \thmref{thm1} relies on the following key estimate on $\undel{\Rem_t}$:
		
		\begin{proposition}\label{propClaim}
			Let the assumptions of \thmref{thm1} hold. 
			
			Then, for every $v>2\kappa$ and bounded subset $X\subset \Lam$, there exists positive constant
			\begin{align}
				\label{C1C2}
				{C=C(\al,d, C_H, v,\l,X)},	
			\end{align}
			s.th.~for all $\xi\ge\max(2,\diam X)$ and      operator   $A\in \cB_X$, 
			we have, for all $0\le  {t}< \xi/v$ and states $\varphi,\,\psi$ satisfying \eqref{g0-cond}, \eqref{UDBp},
			\begin{align}
				\label{remEstMain}
				\abs{\undel{\Rem_t}}\le&  C{t } \norm{A}\tau(\varphi)^{1/2}\tau(\psi)^{1/2},\\
				\notag
				\tau(\phi):=& \phidel{N_{1,\xi}N_\Lam }+ \xi^{-\beta} .
			\end{align}
		\end{proposition}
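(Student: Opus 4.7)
The starting point is the Duhamel identity \eqref{410}: $\Rem_t = i\int_0^t \al_r([R', A_{t-r}^\xi])\,dr$ with $R' = H_\Lam - H_{X_\xi}$. Since $A_{t-r}^\xi \in \cB_{X_\xi}$, only the cross terms in $R'$ linking $X_\xi$ to $X_\xi^\cp$ survive the commutator; I split these into a hopping piece $R'_J$ and an interaction piece $R'_V$. The plan is to bound the bilinear form $\undel{\al_r([R', A_{t-r}^\xi])}$ uniformly in $r\in[0,t]$ and integrate in $r$, which produces the factor of $t$ on the right-hand side of \eqref{remEstMain}.

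The key step is a geometric decomposition of the surviving sums. I split each sum over $(x,y)\in X_\xi\times X_\xi^\cp$ into a \emph{bulk} contribution, where $x\in X_{\xi/2}$ so that $|x-y|\ge \xi/2$, and a \emph{boundary} contribution, where $x\in X_\xi\setminus X_{\xi/2}$ lies in the shell of radius $\xi/2$ around $\partial X_\xi$. For the bulk contribution, the polynomial decay \eqref{no 1} together with lattice summation and the density bound \eqref{UDBp} produces a prefactor that is polynomially decaying in $\xi$, feeding directly into the $\xi^{-\beta}$ term. For the boundary contribution, I invoke \thmref{thmAnnMVB} on the curved annulus $X_{2\xi}\setminus X$: thanks to the particle-free shell hypothesis \eqref{locCond}, the initial expectation of $N_{1,\xi}^p$ vanishes, so the moment bound gives $\langle N_{1,\xi}^p\rangle_r\lesssim \xi^{-\beta}\l^p$ for $p=1,2$ and $r<\xi/v$, which supplies the remaining decay.

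To convert these operator-level estimates into the sesquilinear bound involving $\tau(\varphi)^{1/2}\tau(\psi)^{1/2}$, I exploit the gauge invariance $[A,N_\Lam]=0$ available for $A\in\cB_X$. For the hopping cross-terms, this symmetry reorganizes quantities like $\undel{a_x^* a_y\,\al_r(A_{t-r}^\xi)}$ into a form amenable to Cauchy--Schwarz, separating the $\varphi$- and $\psi$-dependence and producing factors of the shape $\|A\|\,\langle n_x\rangle_\varphi^{1/2}\langle n_y\rangle_\psi^{1/2}$. Summing over the shell and bounding $n_y\le N_\Lam$ yields precisely the mixed moment $\br{N_{1,\xi}N_\Lam}_\varphi^{1/2}\br{N_{1,\xi}N_\Lam}_\psi^{1/2}$ appearing in $\tau$. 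The quartic interaction cross-terms $V_{xy}n_x n_y$ are handled analogously, using \lemref{lemma holder} (H\"older for commuting positive operators) to decouple the two number factors before Cauchy--Schwarz.

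The main obstacle is an induction needed to upgrade a single Duhamel iteration, which only gives a modest rate of order $\xi^{-(\al-d-1)}$, into the sharp $\xi^{-\beta}$ rate with $\beta=\lfloor\al-3d-1\rfloor$. Iterating the Duhamel formula expands $A_{t-r}^\xi$ itself through further commutators with $R'$, and each iteration contributes an additional factor of $\xi^{-(\al-d-1)}$ from a new lattice summation against the polynomial decay; the rates compound, and the $3d$ in the exponent $\beta$ tracks the two spatial sums over the quartic interaction plus one from the annular-shell estimate. The intricate part is propagating the unbounded particle-number factors consistently through all these iterations while preserving thermodynamic stability, i.e.~ensuring no hidden $N$-dependence appears in the final constant. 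This is where the refined multiscale particle propagation bound \thmref{thm4} (in both its $p=1$ and $p=2$ incarnations) enters essentially, and it is also where the precise shape of $\tau$ --- comprising both a state-dependent particle-number term and a deterministic $\xi^{-\beta}$ term --- emerges naturally from the induction.
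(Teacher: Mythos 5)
Your starting point (the Duhamel identity \eqref{410}), the Cauchy--Schwarz step to separate the $\varphi$- and $\psi$-dependence, the bulk/boundary splitting of the cross-terms, and the use of \thmref{thmAnnMVB} on an annular region are all in line with the paper. However, the structural core of the paper's argument is missing, and what you propose in its place would not close the proof.

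The paper proves \propref{propClaim} by a \emph{downscale multiscale induction on $\xi$}, not by iterating the Duhamel formula. The base case treats $\xi\ge 2^{k_0}$ with $2^{k_0}\approx L$, where $\Lam\subset[-L/2,L/2]^d$, so that the volume bound $\br{N_\Lam^2}_0\le\l^2 L^{2d}$ absorbs the uncontrolled particle-number factor $\br{N_\Lam^2}$ that a single Duhamel pass produces. The induction step assumes \eqref{remEstMain} for all $\xi'\ge 2^{k+1}$ and proves it for $2^k\le\xi<2^{k+1}$ by telescoping $\Rem_t=\Rem_t^{(1)}+\Rem_t^{(2)}$ with $\Rem_t^{(1)}=A_t-A_t^{2\xi}$ (covered by the induction hypothesis) and $\Rem_t^{(2)}=A_t^{2\xi}-A_t^\xi$; the latter involves only the coupling $R=H_{X_{2\xi}}-H_{X_\xi}-H_{X_{\xi,2\xi}}$, whose contributions are confined to the bounded annulus $X_{\xi,2\xi}$ and can thus be controlled by the annular propagation bounds (Lemmas \ref{lemIest}--\ref{lemIIIest}, \ref{lemNq}, \ref{lemScale}) without ever seeing $N_\Lam$ directly. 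Your proposal of ``iterating Duhamel'' to expand $A_{t-r}^\xi$ through further commutators is a different device (akin to the classical Lieb--Robinson combinatorial expansion) and does not substitute for this: for unbounded bosonic interactions, that series is not summable without the very propagation estimates you are trying to bootstrap, and it is exactly what the multiscale telescoping is designed to avoid.

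Two further concrete gaps. First, your explanation of where $\beta=\lfloor\al-3d-1\rfloor$ comes from is incorrect and has the sign reversed: a single Duhamel pass already produces $\xi^{-n+dp}$ with $n\le\al-d-1$ and $p=2$ (from the quartic cross terms, see \eqref{taupDef}), which is $\xi^{-\lfloor\al-3d-1\rfloor}$; one does not need to ``compound'' multiple iterations, and $\xi^{-(\al-d-1)}$ would in fact be \emph{stronger}, not weaker, decay than $\xi^{-\beta}$. Second, you invoke the particle-free shell condition \eqref{locCond} to kill the $\br{N_{1,\xi}^p}$ contributions inside the proof of \propref{propClaim}. But \propref{propClaim} does not assume \eqref{locCond}: its conclusion retains the $\phidel{N_{1,\xi}N_\Lam}$ term in $\tau$ precisely because the shell condition is only used at the very end, in \secref{sec75}, after choosing $\varphi=B^*\psi$. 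Using it inside the proposition proof would prove a different (weaker, state-specific) statement. Relatedly, you invoke \thmref{thm4} and \lemref{lemma holder}, neither of which is used in the paper's proof of \propref{propClaim}; the moment control comes from \thmref{thmTSMVB}/\thmref{thmAnnMVB} and \lemref{lemNq}.
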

		{In the next two subsections, we prove this proposition by a down scale induction on the size of $\xi$.}

		\subsection{Proof of \propref{propClaim}: Base case} \label{sec73}Take a large constant $L>0$ s.th.~the domain $\Lam\subset [-L/2,L/2]^d$. 
		For the base case, we prove that there exists a large integer $k_0=k_0(L)>0$ s.th.\ \eqref{remEstMain} holds for all $\xi\ge 2^k$ with $k\ge k_0$. 
		
		Under the decay assumptions \eqref{CJdef}--\eqref{CVdef}, for any integer	$n\le\al-d-1$,
		there exists $c>0$ depending only on $\al$, $d$, and $C_{H,\al}$ (see \eqref{CHdef}) such that
		\begin{align}
			{		\kappa_\nu:=\sup_{x\in\Lam}\sum_{y\in\Lam} \del{\abs{J_{xy}}+ \abs{V _{xy}} } \abs{x-y}^{\nu+1}\le c\qquad (\nu=0,\ldots,n).\label{Kcond}}
		\end{align}
		Recall also the definition for $N_{\g,\xi}$ in \eqref{NgxiDef}. We have the following lemma
		\begin{lemma}
			\label{lemNq}
			Suppose \eqref{Kcond} holds with some $n\ge1$.  Then, for any $v > \kappa$, $1\ge \g_2>\g_1\ge0$,
			there exists $C=C(n,\kappa_n,v,\g_1,\g_2)>0$ s.th.~for every $q\ge0$, $\xi\ge1 $ and two subsets $X,\,S\subset\Lam$ with $X_{2\xi}\subset S$,
			we have the following estimate for all $0\le {t}< \xi/v$:
			\begin{align}\label{MVE-nuq}
				\al_t^S(N_{\g_1,\xi}^{q+1})\le &C
				\del{N_{\g_2,\xi}N^{q}_S +\xi^{-n} N^{q+1}_S},
			\end{align}
			where $\al_t^S$ is the localized evolution defined in \eqref{alloc}.
		\end{lemma}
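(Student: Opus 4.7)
I would prove this operator inequality by transposing the ASTLO machinery of Sections~\ref{sec prop 1 and 2}--\ref{sec proof astlo est} from balls to the curved annulus $X_{(1+\g_1)\xi}\setminus X_{(1-\g_1)\xi}$, and by keeping track of everything at the operator (rather than expectation) level. Choose $v'\in(\kappa,v)$, set the adiabatic scale $s:=(\g_2-\g_1)\xi/(3v)$, and pick $\chi\in\cE$ from \eqref{classE}. Using two copies of $\chi$---one cutting off the outer edge of the annulus and moving outward with velocity $v'$ in $t$, the other cutting off the inner edge and moving inward at the same speed---build a smooth, time-dependent annular profile $f_t:\Lam\to[0,1]$ and the associated ASTLO $N_{f_t}:=\sum_x f_t(x)\,n_x$. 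By a direct construction one checks that, on the interval $0\le t\le(\g_2-\g_1)\xi/v$,
\begin{align*}
\chi_{X_{(1+\g_1)\xi}\setminus X_{(1-\g_1)\xi}}\,\le\, f_t,\qquad \supp f_0\,\subset\, X_{(1+\g_2)\xi}\setminus X_{(1-\g_2)\xi}\subset S.
\end{align*}
Since $\al_t^S$ is unitary conjugation and therefore preserves the operator ordering of positive operators, the first inclusion gives $\al_t^S(N_{\g_1,\xi}^{q+1})\le M_t:=\al_t^S(N_{f_t}^{q+1})$; the second, together with $N_{\g_2,\xi}\le N_S$, will be used at the end to bound $M_0=N_{f_0}^{q+1}\le N_{\g_2,\xi}\,N_S^q$.

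The heart of the proof is a differential inequality for $M_t$. Differentiating gives
\begin{align*}
\di_tM_t=\al_t^S\bigl(\tfrac{d}{dt}(N_{f_t}^{q+1})+i[H_S,N_{f_t}^{q+1}]\bigr).
\end{align*}
The drift $\tfrac{d}{dt}(N_{f_t}^{q+1})$ is non-positive and has leading contribution of size $-\tfrac{v'}{s}\{N_{f_t'},\,N_{f_t}^q\}$, where $f_t'$ is supported in the two edge layers of width $\sim s$. The commutator is estimated exactly as in the proof of \propref{Prop 1 ub 2} and \cite[Prop.~5.1]{lemm2023microscopic}: expanding the difference $f_t(x)-f_t(y)$ by \lemref{new taylor} with $\eps=n$, using the tail-moment bound \eqref{Kcond} to control the ``far'' contribution, and applying an operator Cauchy--Schwarz (with symmetrized products) to contract the hopping pairs, one obtains the operator inequality
\begin{align*}
\bigl|[H_S,N_{f_t}^{q+1}]\bigr|\,\le\, \tfrac{\kappa}{s}\,\{N_{f_t'},\,N_{f_t}^q\}+\tfrac{C\kappa_n}{s^{1+n}}\,\{N_S,\,N_{f_t}^q\}.
\end{align*}
The choice $v'>\kappa$ lets the adiabatic drift absorb the first term with a positive margin, leaving
\begin{align*}
\di_tM_t\,\le\, \tfrac{C\kappa_n}{s^{1+n}}\,\al_t^S\bigl(\{N_S,\,N_{f_t}^q\}\bigr)\,\le\, \tfrac{2C\kappa_n}{s^{1+n}}\,N_S^{q+1},
\end{align*}
where the last step uses $[H_S,N_S]=0$, the commuting bound $N_{f_t}\le N_S$, and $\al_t^S$-invariance of $N_S^{q+1}$.

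Integrating on $[0,(\g_2-\g_1)\xi/v]$ and using $s\sim\xi$ yields $M_t\le M_0+C'\xi^{-n}N_S^{q+1}$, and combining with $M_0\le N_{\g_2,\xi}\,N_S^q$ gives precisely \eqref{MVE-nuq}. The principal obstacle will be the operator-valued commutator estimate above: the scalar Cauchy--Schwarz steps used in earlier sections must be upgraded to operator inequalities that respect the PSD ordering, while tracking how powers of $N_{f_t}$ interact with single-hop terms via $[N_{f_t},a_x^*a_y]=(f_t(x)-f_t(y))a_x^*a_y$. Once this algebraic bookkeeping is in place, the Taylor estimate \eqref{fExp} with $\eps=n$ and the moment bound \eqref{Kcond} directly produce the $s^{-(1+n)}$ decay that converts, after time integration and the choice $s\sim\xi$, into the required $\xi^{-n}$ error term.
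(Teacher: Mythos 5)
Your route differs fundamentally from the paper's. The paper's proof is a one-paragraph algebraic reduction: it cites the operator-level bound \cite[Thm.~1(i)]{lemm2023information} for the $q=0$ case, namely $\al_t^S(N_Y)\le C(N_{Y_\eta}+\eta^{-n}N_S)$, applies it with $Y:=X_{(1+\g_1)\xi}\setminus X_{(1-\g_1)\xi}$ and $\eta:=(\g_2-\g_1)\xi$, and then promotes to general $q$ via the elementary chain $N_{\g_1,\xi}^{q+1}\le N_{\g_1,\xi}N_S^q$ (since $N_{\g_1,\xi}\le N_S$ and they commute), $\al_t^S(N_{\g_1,\xi}N_S^q)=\al_t^S(N_{\g_1,\xi})N_S^q$ (since $[H_S,N_S]=0$), and monotonicity of $\al_t^S$. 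This sidesteps entirely the operator-level commutator bookkeeping for $[H_S,N_{f_t}^{q+1}]$ with $q\ge1$ that your proposal must grapple with. In short: you are reproving, for all $q$ at once, a result that the paper obtains by reducing $q$ to $0$ and invoking an external reference; the paper's reduction is both shorter and cleaner, and you would be wise to spot it.

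There is also a concrete technical gap in your remainder estimate. You propose to apply \lemref{new taylor} ``with $\eps=n$,'' but \lemref{new taylor} is a first-order expansion with a H\"older remainder and is stated and proved only for $0<\eps\le1$: it yields $\abs{f(x)-f(y)}\le u(x)u(y)\abs{x-y}+C_f\abs{x-y}^{1+\eps}$ and cannot produce an $\abs{x-y}^{1+n}$ error for $n>1$. After setting $s\sim\xi$ and integrating over times of order $\xi$, this gives an error of order $\xi^{-\eps}\le\xi^{-1}$, \emph{not} the required $\xi^{-n}$ for all $1\le n\le\al-d-1$. Producing the full power $\xi^{-n}$ requires an $n$-th order Taylor expansion of $f$ inside the commutator, which is precisely the nontrivial content of the cited operator MVB in \cite{lemm2023information} (going back to \cite{faupin2022lieb}) and is not redeveloped in the present paper. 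Finally, a small but non-cosmetic point: your profile is described as one ``cutting off the outer edge and moving outward'' and one ``cutting off the inner edge and moving inward,'' which widens the annulus with $t$ and would make the drift $\di_t N_{f_t}^{q+1}$ non-negative; for the drift to be non-positive (as your absorption argument requires), the annular profile must shrink from the $\g_2$-annulus at $t=0$ toward the $\g_1$-annulus at the final time, i.e.\ both transition layers must move \emph{toward} the sphere $\di X_\xi$.
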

		
		\begin{proof}[Proof of \lemref{lemNq}]
			We use \cite[Thm.~1, part (i)]{lemm2023information} with $\al^S_t(\cdot)$ in place of $\al_t(\cdot)$, which is possible because the local Hamiltonian $H_S$ also satisfies \eqref{Kcond} with the same $n\ge1$ as in the assumption. The resulting estimate reads, for any subset $Y$ and $\eta>0$ with $Y_\eta\subset S$, 
			\begin{align}
				\label{MVES}
				\al_t^S(N_Y)\le C (N_{Y_\eta}+\eta^{-n}N_S).
			\end{align}
			Choose now $Y:=X_{(1+\g_1)\xi}\setminus X_{(1-\g_1)\xi}$ and $\eta:=(\g_2-\g_1)\xi$.
			Using	\eqref{MVES}, together with the relations $N_Y\equiv N_{\g_1,\xi}$, $N_{Y_\eta}\equiv N_{\g_2,\xi}$ (see \figref{figAnnDecomp}), $N_{\g,\xi}^{q+1}\le N_{\g,\xi}N^{q}$, and $[N_S,H_S]=0$, we arrive at  
			$$
			\begin{aligned} 
				\al_t^S\del{N_{\g_1,\xi}^{q+1}}\le &\al_t^S\del{N_{\g_1,\xi} N_S^{q}}
				\\= & \al_t^S\del{N_{\g_1,\xi} }N_S^{q}
				\\\le&	C \del{N_{\g_2,\xi} +[(\g_2-\g_1)\xi]^{-n} N_S}N_S^q.
			\end{aligned}
			$$
			This gives \eqref{MVE-nuq}. 
		\end{proof}

		Owing to condition \eqref{Kcond}, we can apply \lemref{lemNq} with $S\equiv \Lam$ and proceed exactly as in the proof of \cite[Eq.~(D.51)]{sigal2022propagation}
		to obtain, for some $C_0>0$\textit{ independent of $k,\,\alpha,\,L$},
		\begin{align}
			\label{remEstMain'}
			\abs{\undel{\Rem_t}}\le&   C_0t  \norm{A}\tilde{\tau}(\varphi)^{1/2}\tilde{\tau}(\psi)^{1/2},\\
			\notag
			\tilde{\tau}(\phi):=& {\phidel{N_{1,\xi}N_\Lam }+\xi^{-n} \phidel{N_\Lam^2} }.
		\end{align}
		Under assumption \eqref{UDBp} and $\abs{\Lam}\le L^d$, we have $\br{N_\Lam^p}_0\le \l^p L^{dp}$, and so $\tilde{\tau}$ verifies
		\begin{align}\label{propEst4}
			\tilde{\tau}(\phi)\le \phidel{N_{1,\xi}N_\Lam } + {2^{-kn} L^{2d }\l^2}.
		\end{align}
		We conclude from \eqref{remEstMain'}--\eqref{propEst4} that \eqref{remEstMain} holds for $C\ge C_0$ and all $k\ge k_0:= \log_2(L).$
		This completes the proof of \propref{propClaim} for the base case.

		\subsection{Proof of \propref{propClaim}:  Induction step} \label{sec74}Next, 
		assuming \eqref{remEstMain} holds for  all $\xi$ with $\xi\ge 2^{k+1}$, 
		we prove it  for  $2^{k+1}>\xi\ge 2^k$.

		{We write
			\begin{align}
				\notag
				\Rem_t\equiv\Rem_t^{(1)}+\Rem_t^{(2)},\quad  \Rem_t^{(1)}(A)=A_t-A_t^{2\xi},\quad\Rem_t^{(2)}(A)=A_t^{2\xi}-A_t^\xi.
			\end{align}
			By the induction hypothesis, $\Rem_t^{(1)}$ verifies estimate \eqref{remEstMain} in place of $\Rem_t$. Thus it remains to prove the same estimate for $\Rem_t^{(2)}$. See \figref{figInd} for an illustration.}

		\begin{figure}[H]
			\begin{tikzpicture}[scale=.8]
				\draw  plot[scale=.5,smooth, tension=.7] coordinates {(-3,0.5) (-2.5,2.5) (-.5,3.5) (1.5,3) (3,3.5) (4,2.5) (4,0.5) (2,-2) (0,-1.5) (-2.5,-2) (-3,0.5)};

				\draw  plot[shift={(-0.2,-0.25)}, scale=.8,smooth, tension=.7] coordinates {(-3,0.5) (-2.5,2.5) (-.5,3.5) (1.5,3) (3,3.5) (4,2.5) (4,0.5) (2,-2) (0,-1.5) (-2.5,-2) (-3,0.5)};
				
				\node [right] at (-.5,.5) {$X_\xi$};
				\node [right] at (-.5-1-.1-.1-.1-.1-.1,.5+1+.1+.1) {$X_{2\xi}$};

				\draw [->] (4,.5)--(1.3-1-.2-1,.5-1+.2);
				\draw [->] (4,.5)--(1.31,.5+.2+1+.2+.1);
				\node [right] at (4,.5) {$\text{handled by the induction hypothesis}$};
				
				\draw [thick] (-1,.4) -- (-1.4,-2.3);
				\draw [fill] (-1,.4) circle [radius=0.05];
				\draw [fill] (-1.4,-2.3) circle [radius=0.05];

				\draw [thick] (2.05-1,.-.2+1+1) -- (2.4-1,.3+1+1.2);
				\draw [fill] (2.05-1,.-.2+1+1) circle [radius=0.05];
				\draw [fill] (2.4-1,.3+1+1.2) circle [radius=0.05];
				
				\draw [->] (4,-0.5)--(1.2,-0.5);
				\node [right] at (4,-0.5) {$\text{handled in the induction step (to be proved)}$};
				
				\draw [thick] (1,-0.3) -- (1.3,-1);
				\draw [fill] (1,-0.3) circle [radius=0.05];
				\draw [fill] ((1.3,-1) circle [radius=0.05];

			\end{tikzpicture}
			\caption{Schematic diagram illustrating the induction scheme.}
			\label{figInd}
		\end{figure}
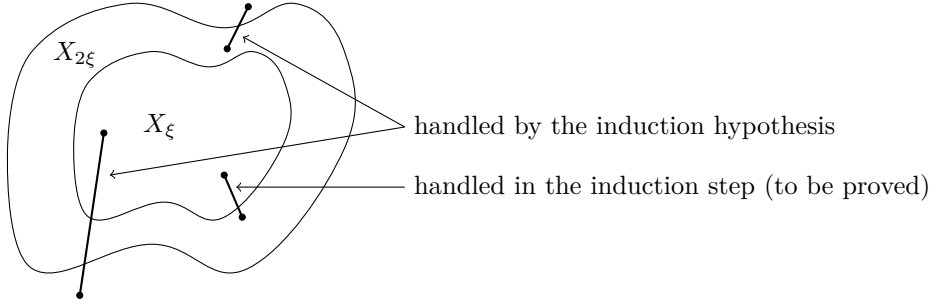

		1.	Write $X_{a,b}=X_b\setminus X_a$ for $b>a\ge0$ and introduce the coupling operator  %
		\begin{align}
			\label{Rdef}
			R:=H_{X_{2\xi}}-H_{X_\xi}-H_{X_{\xi,2\xi}}.
		\end{align}
		By these definitions and relation \eqref{AlocEvol}, we find %
		\begin{align}
			\label{HAcomm1}
			[H_{X_{\xi,2\xi}},A^\xi_s]=0 \qquad
			(s\in\Rb,\xi>0).	
		\end{align} 
		Combining \eqref{Rdef}--\eqref{HAcomm1} and following the derivation of \eqref{410}, we find that
		\begin{equation}\label{4.41}
			\begin{aligned}
				\Rem_t^{(2)} 
				= & \int_0^t \al_r\del{i\sbr{R,A^\xi_{t-r}}}\,dr.
			\end{aligned}
		\end{equation}
		To control the integrand in \eqref{4.41}, we split the coupling term $R$ as follows:
		\begin{align}
			\label{5.12}
			R:=& S+W, &&\\
			\label{5.10}
			S:=&S'+(S')^*,& S':=& \sum_{x\in X_\xi,y\in X_{\xi,2\xi}}J_{xy}a_x^*a_y,\\
			W:= & \sum_{x\in X_\xi,y\in X_{\xi,2\xi}}V_{xy} a_x^*a_y^*a_ya_x.\label{W-def}
		\end{align} 
		Write $$\phi_r=e^{-irH_{X_{2\xi}}}\phi.$$ Eqs.~\eqref{4.41}--\eqref{5.12} imply, for any $\varphi,\,\psi$ satisfying \eqref{g0-cond},
		\begin{align}\label{4.42}
			\abs{\undel{\Rem_t^{(2)} }}\le   t\sup_{0\le r \le t}\del{\abs{\urdel{\sbr{S,A^\xi_{t-r}}}}+\abs{\urdel{\sbr{W,A^\xi_{t-r}}}}}.
		\end{align}
		To complete the induction step, we estimate the r.h.s.~of \eqref{4.42}.
		
		2. 	 We begin with some abstract lemmas, proved in \secref{secPfLems} below, that will later be used in estimating \eqref{5.12}--\eqref{W-def}.  
		
		Fix $v_1>\kappa$, $\xi\ge2$,  and $r,\,s,\,t\ge0$ with $r+s=t< \xi/v_1.$ We have the following general results:
		\begin{lemma}\label{lemIest}
			Let $n\ge1$ and let $h=(h_{xy})$ be an $1$-body matrix satisfying, for some $K>0$,  
			\begin{align}
				\sup_{x\in\Lam}\sum_{y\in\Lam} \abs{h_{xy}} \abs{x-y}^{\nu+1}\le K\qquad (\nu=0,\ldots,n).\label{hcond}
			\end{align}
			Suppose \eqref{annEst} holds for state $\phi_r$ and some $p\ge1$. Then, {for $\xi\ge\diam X$,}
			\begin{align}
				&\mI_p=\mI_p(r, h,\phi):=\sum_{x\in X_\xi,y\in X_{\xi,2\xi}} \abs{h_{xy}}\phirdel{n_y^p},\notag
			\end{align}
			there holds
			\begin{align}
				\notag
				\mI_p\le C	\tau_p(\phi),
			\end{align}
			where  ${C=C(\al,d, C_H, v,p,X,K)}$,  and
			\begin{align}
				\label{taupDef}
				\tau_p(\phi):={ \phidel{  N_{1,\xi}N_{X_{2\xi}}^{p-1} }+ \xi^{-n+dp} \l^p}.	
			\end{align}
		\end{lemma}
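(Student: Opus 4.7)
The plan is to split the double sum defining $\mI_p$ into a \emph{close} piece with $\abs{x-y}\le\xi/2$ and a \emph{far} piece with $\abs{x-y}>\xi/2$, and bound each by one of the two terms in $\tau_p(\phi)$. Two operator-level facts drive everything: first, for $y\in X_{2\xi}$, since the $n_y$'s commute and $n_y\le N_{X_{2\xi}}$, one has $n_y^p\le n_y N_{X_{2\xi}}^{p-1}$; second, the power-mean inequality in the occupation-number basis gives $\sum_y n_y^p\le N_{X_{2\xi}}^p$ for $p\ge1$. Moreover, since $[H_{X_{2\xi}},N_{X_{2\xi}}]=0$, expectations of polynomials in $N_{X_{2\xi}}$ are preserved between $\phi$ and $\phi_r=\me^{-irH_{X_{2\xi}}}\phi$, which lets us transfer such factors back and forth.

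For the far piece, I would invoke \eqref{hcond} with $\nu=n-1$, together with the symmetry $\abs{h_{xy}}=\abs{h_{yx}}$ (valid in the intended applications where $h$ derives from $J$ or $V$), to get $\sum_{x:\abs{x-y}>\xi/2}\abs{h_{xy}}\ls K\xi^{-n}$. Combining with $\sum_y n_y^p\le N_{X_{2\xi}}^p$, the conservation of $N_{X_{2\xi}}$, and the density bound $\phidel{N_{X_{2\xi}}^p}\ls\lambda^p\xi^{dp}$ (available from \eqref{UDBp} for $p\le2$ and from the assumed annular MVB \eqref{annEst} for $\phi_r$ combined with conservation for $p\ge3$), this yields the contribution $\ls\xi^{-n+dp}\lambda^p$, which is the second term in $\tau_p(\phi)$.

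For the close piece, the constraint $\abs{x-y}\le\xi/2$ with $x\in X_\xi$, $y\in X_{\xi,2\xi}$ forces $d_X(y)\le d_X(x)+\abs{x-y}\le 3\xi/2$, so $y\in X_{\xi,3\xi/2}\subset X_{\xi/2,3\xi/2}=\supp N_{1/2,\xi}$. Using $\sum_x\abs{h_{xy}}\le K$ (which follows from \eqref{hcond} with $\nu=0$ since $\abs{x-y}\ge 1$ because $X_\xi$ and $X_{\xi,2\xi}$ are disjoint), together with the operator inequality above, I arrive at a close-piece bound $\ls\phirdel{N_{1/2,\xi}N_{X_{2\xi}}^{p-1}}$. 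Conservation of $N_{X_{2\xi}}$ rewrites this as $\phidel{\al_r^{X_{2\xi}}(N_{1/2,\xi})\cdot N_{X_{2\xi}}^{p-1}}$, and \lemref{lemNq} applied with $S=X_{2\xi}$, $\gamma_1=1/2$, $\gamma_2=1$, $q=0$ supplies the operator bound $\al_r^{X_{2\xi}}(N_{1/2,\xi})\le C(N_{1,\xi}+\xi^{-n}N_{X_{2\xi}})$. Substituting this and controlling the residual $\phidel{N_{X_{2\xi}}^p}$ as in the far piece delivers a bound $\ls\phidel{N_{1,\xi}N_{X_{2\xi}}^{p-1}}+\xi^{-n+dp}\lambda^p$, which reproduces the first term of $\tau_p(\phi)$ along with a harmless copy of the second.

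The main obstacle will be reconciling the time ranges: \lemref{lemNq} requires $r<(\gamma_2-\gamma_1)\xi/v'=\xi/(2v')$ for some $v'>\kappa$, whereas the standing hypothesis only gives $r<\xi/v_1$. Picking $v_1>2\kappa$ (which is consistent with \propref{propClaim}'s hypothesis $v>2\kappa$) lets me choose $\kappa<v'<v_1/2$ so that both constraints are simultaneously met. The role of the assumed annular MVB \eqref{annEst} in the hypothesis is precisely to supply the higher-moment density bound on $\phi_r$ used in the far-piece estimate, transferring it to $\phi$ via conservation of $N_{X_{2\xi}}$; this closes the reduction to $\tau_p(\phi)$ and controls the $X$-dependence of $C$ through $\diam X$ via the requirement $\xi\ge\diam X$.
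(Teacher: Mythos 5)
Your proof is correct and shares the paper's near/far decomposition idea, but it handles the near piece by a genuinely different route. The paper splits the annular summand according to its distance to $\partial X_\xi$ (at a flexible threshold $\gamma\xi$ chosen in \eqref{gChoice} so the time windows close up for any $v>2\kappa$) and then applies the $p$-th moment annular MVB \eqref{annEst} from \thmref{thmAnnMVB} directly to $\phirdel{(N'_{\g,\xi})^p}$. You instead split on $|x-y|$ at the hard-coded threshold $\xi/2$, first linearize $n_y^p\le n_y N_{X_{2\xi}}^{p-1}$, sum to get $\phirdel{N_{1/2,\xi}N_{X_{2\xi}}^{p-1}}$, and then apply the one-body estimate \lemref{lemNq} (with $q=0$) to the factor $\alpha_r^{X_{2\xi}}(N_{1/2,\xi})$ alone. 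This buys a mild simplification: the close piece then needs only the linear propagation bound plus the static density assumption \eqref{UDBp}, rather than the full $p$-th moment annular MVB; the remaining operator-inequality manipulations (commuting positive factors, conservation of $N_{X_{2\xi}}$) are sound. Your caveat about the row/column symmetry $|h_{xy}|=|h_{yx}|$, needed to pass from the $\sup_x\sum_y$ form of \eqref{hcond} to a $\sup_y\sum_x$ bound, is real and correctly addressed. One bookkeeping slip: the claim that ``picking $v_1>2\kappa$ \ldots is consistent with $v>2\kappa$'' is backwards, since $v_1=(1-\g)v/2<v/2$ by \eqref{c1def}, so $v_1>2\kappa$ would force $v>4\kappa$. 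The correct reconciliation is simpler: the standing constraint \eqref{rstRel} already gives $r\le t<\xi/v$, and applying \lemref{lemNq} with $\g_2-\g_1=1/2$ only requires $r<\xi/(2v')$ for some $v'>\kappa$; choosing $v'\in(\kappa,v/2]$, which is possible precisely because $v>2\kappa$, closes the gap without touching $v_1$.
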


		\begin{lemma}\label{lemIIest}
			Let the assumptions of \lemref{lemIest} hold and let $A\in\cB_X$. Then, for 
			\begin{align}
				\notag
				\mII_p=\mII_p(r,h,\phi,A):=\sum_{x\in X_\xi,y\in X_{\xi,2\xi}} \abs{h_{xy}}\phirdel {A^\xi_{s}  n_x^p (A^\xi_{s})^*},
			\end{align}
			there holds
			\begin{align}
				\notag
				\mII_p\le C	\norm{A}^2\tau_p(\phi).
			\end{align}
		\end{lemma}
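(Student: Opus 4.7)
The proof parallels that of Lemma~\ref{lemIest}, with the sandwiching by $A^\xi_s$ and $(A^\xi_s)^*$ handled via three structural facts: (i) since $A\in\cB_X$ commutes with $N_\Lam$ and is supported in $X\subset X_\xi$, while $H_{X_\xi}$ commutes with both $N_{X_\xi}$ and $N_{X_{2\xi}}$, one obtains $[A^\xi_s,N_{X_\xi}]=[A^\xi_s,N_{X_{2\xi}}]=0$; (ii) unitarity yields $\norm{A^\xi_s}=\norm{A}$, hence $A^\xi_s(A^\xi_s)^*\le\norm{A}^2\mathbf{1}$; (iii) $H_{X_{2\xi}}$ conserves $N_{X_{2\xi}}$, so $\br{F(N_{X_{2\xi}})}_{\phi_r}=\br{F(N_{X_{2\xi}})}_\phi$ for any function $F$.

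I would split $\mII_p=\mII_p^{\mathrm{in}}+\mII_p^{\mathrm{out}}$ according to whether $x\in X$ or $x\in X_\xi\setminus X$. For $\mII_p^{\mathrm{in}}$, the constraint $\diam X\le\xi$ forces $\abs{x-y}\ge\xi$ for $y\in X_{\xi,2\xi}$, so the decay hypothesis \eqref{hcond} gives $\sum_{y\in X_{\xi,2\xi}}\abs{h_{xy}}\le K\xi^{-n-1}$. Summing via $\sum_{x\in X}n_x^p\le N_X^p\le N_{X_\xi}^p$ and using (i)--(iii) to factor out $\norm{A}^2$, I would arrive at
\[
\mII_p^{\mathrm{in}}\le K\xi^{-n-1}\norm{A}^2\br{N_{X_{2\xi}}^p}_\phi\ls\norm{A}^2\lambda^p\xi^{dp-n},
\]
where the last step uses a finite covering of $X_{2\xi}$ by $O(\xi^d)$ unit balls together with the density assumption \eqref{UDBp}. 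This matches the second term of $\tau_p(\phi)$.

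For $\mII_p^{\mathrm{out}}$, only the trivial bound $\sum_y\abs{h_{xy}}\le K$ applies. Using $n_x^p\le n_x N_{X_{2\xi}}^{p-1}$ (valid by commutativity of number operators and $n_x\le N_{X_{2\xi}}$) and summing over $x$ yields $\sum_{x\in X_\xi\setminus X}n_x^p\le N_{X_\xi\setminus X}N_{X_{2\xi}}^{p-1}\le N_{1,\xi}N_{X_{2\xi}}^{p-1}$. Handling the remaining expectation $\br{A^\xi_s N_{1,\xi}N_{X_{2\xi}}^{p-1}(A^\xi_s)^*}_{\phi_r}$ proceeds by symmetrically moving $N_{X_{2\xi}}^{(p-1)/2}$ past $A^\xi_s$ via (i), then applying $A^\xi_s(A^\xi_s)^*\le\norm{A}^2\mathbf{1}$ together with the assumed propagation bound \eqref{annEst} on $\phi_r$ to transfer an annular moment back to the initial state, recovering the first term of $\tau_p(\phi)$.

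\textbf{Main obstacle.} The subtlest step is preserving the $N_{1,\xi}$ weight in $\mII_p^{\mathrm{out}}$ rather than overbounding by $N_{X_{2\xi}}$: since $[A^\xi_s,N_{1,\xi}]=-[A^\xi_s,N_X]\ne0$ in general (because $A^\xi_s$ may hop particles across $\partial X$ during the local evolution), no direct commutation argument is available, and the final estimate hinges on invoking the propagation bound \eqref{annEst} for $\phi_r$ to control the resulting annular moment in terms of the initial state $\phi$.
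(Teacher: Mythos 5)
Your proposal has a genuine gap in the treatment of $\mII_p^{\mathrm{out}}$, and the root cause is that you chose the wrong splitting radius. The paper does not split at $\partial X$; it splits $X_\xi$ into $X_{(1-\gamma)\xi}$ and the annulus $X_{(1-\gamma)\xi,\xi}$ for a carefully chosen $\gamma\in(0,1/3)$ (cf.~\eqref{gChoice}). Both parts need $\xi\ge\diam X$, and the split is calibrated so that (a) the near part still has $\abs{x-y}\ge\gamma\xi$ for all $y\in X_{\xi,2\xi}$, yielding the $\xi^{-n}$ gain, and (b) the annular part is buffered away from both $\partial X$ and $\partial X_\xi$, so that after propagation it \emph{stays} disjoint from $X$. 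Your split at $\partial X$ sacrifices (b) entirely: the set $X_\xi\setminus X$ has no margin against $X$.

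You correctly identify the obstruction — $[A^\xi_s,N_{1,\xi}]\ne 0$ — but the remedy you sketch cannot close it. The annular propagation bound \eqref{annEst} from \thmref{thmAnnMVB} requires $\gamma_2>\gamma_1$ with both in $[0,1]$, and the constant and decay factor degenerate as $\gamma_2-\gamma_1\to 0$. Your annulus $N_{X_\xi\setminus X}\le N_{1,\xi}$ corresponds to $\gamma_1=1$, which leaves no admissible $\gamma_2$; you cannot use \eqref{annEst} to transfer $\phirdel{N_{1,\xi}\,\cdot\,}$ back to $\phi$. Independently, even granting that transfer, you still could not pull the two factors $A^\xi_s$ out, since they do not commute with $N_{1,\xi}$. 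The paper's maneuver, which you are missing, is to \emph{undo the conjugation}: write $\phirdel{A^\xi_s(N'_{\gamma,\xi})^p(A^\xi_s)^*}=\inn{\phi_{r,s}}{A\,\al^{X_\xi}_{-s}\!\bigl((N'_{\gamma,\xi})^p\bigr)A^*\phi_{r,s}}$ with $\phi_{r,s}=e^{-isH_{X_\xi}}\phi_r$, then apply \lemref{lemNq} to $\al^{X_\xi}_{-s}$ so that the evolved number operator is bounded by an enlarged annulus $N'_{(1+\gamma)/2,\xi}$ which, precisely because $\gamma<1/3$ and $\xi\ge\diam X$, is still supported in $X^\cp$ and hence commutes with $A$ (not $A^\xi_s$). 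After pulling out $\norm{A}^2$, two further applications of \lemref{lemNq} (through $\al^{X_\xi}_s$ and $\al^{X_{2\xi}}_r$) carry the annular moment to the initial state, with the annulus growing to $N_{1,\xi}$ but no further thanks to the time constraint $t<\xi/v$. Without the buffer $\gamma$ and the back-and-forth conjugation trick, the outer contribution is uncontrolled.
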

		
		\begin{lemma}\label{lemIIIest}
			Let the assumptions of \lemref{lemIIest} hold and put
			\begin{align}
				\mIII_p=\mIII_p(r,h,\phi,A):=& 
				\sum_{x\in X_\xi,y\in X_{\xi,2\xi}} \abs{h_{xy}}\phirdel {n_x^{p/2} A^\xi_{s}  (A^\xi_{s})^*n_x^{p/2}}.\notag
			\end{align}
			Then there holds
			\begin{align}
				\mIII_p\le C	\norm{A}^2\tau_p(\phi).\label{IIIpEst}
			\end{align}
		\end{lemma}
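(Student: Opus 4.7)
\textbf{Plan for Lemma \ref{lemIIIest}.} The main idea is to peel off the observable $A$ by an operator inequality, reducing $\mIII_p$ to a scalar-valued sum of density expectations, and then split that sum according to whether the site $x$ lies deep inside $X_\xi$ (where the weight $h_{xy}$ decays) or close to the boundary of $X_\xi$ (where an annular mean-value bound applies).

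First, since $A^\xi_s (A^\xi_s)^* \le \|A^\xi_s\|^2 \mathbf{1} = \|A\|^2 \mathbf{1}$ (because $H_{X_\xi}$ is self-adjoint and $e^{\pm is H_{X_\xi}}$ is unitary), and conjugation by the nonnegative self-adjoint operator $n_x^{p/2}$ preserves operator inequalities, we obtain
\begin{equation}\notag
n_x^{p/2}\,A^\xi_s (A^\xi_s)^*\, n_x^{p/2}\le \|A\|^2\, n_x^p.
\end{equation}
Taking expectations in $\phi_r$ and summing,
\begin{equation}\notag
\mIII_p\le \|A\|^2 \sum_{x\in X_\xi,\,y\in X_{\xi,2\xi}}|h_{xy}|\,\phirdel{n_x^p}.
\end{equation}
Thus it suffices to show the latter sum is bounded by $C\tau_p(\phi)$.

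Second, I would split $X_\xi=X_{\xi/2}\,\dot\cup\,(X_\xi\setminus X_{\xi/2})$. For the \emph{interior part}, any $x\in X_{\xi/2}$ and $y\in X_{\xi,2\xi}$ satisfy $|x-y|\ge d_X(y)-d_X(x)\ge \xi/2$, so by \eqref{hcond} with $\nu=n$,
\begin{equation}\notag
\sum_{y\in X_{\xi,2\xi}}|h_{xy}|\le (2/\xi)^n\sum_{y\in\Lam}|h_{xy}||x-y|^n\le K(2/\xi)^n.
\end{equation}
Summing over $x\in X_{\xi/2}$ and using $\sum_{x\in X_{\xi/2}} n_x^p\le N_{X_{2\xi}}^p$ together with the fact that $[H_{X_{2\xi}},N_{X_{2\xi}}]=0$ (so the $p$-th moment of $N_{X_{2\xi}}$ is preserved by $\phi_r$), the density assumption \eqref{UDBp} yields $\phirdel{N_{X_{2\xi}}^p}=\phidel{N_{X_{2\xi}}^p}\le C_X\l^p \xi^{dp}$ for $\xi\ge\diam X$. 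The interior contribution is thus bounded by $C\xi^{-n+dp}\l^p$, matching the second term in $\tau_p(\phi)$.

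Third, for the \emph{boundary part} $x\in X_\xi\setminus X_{\xi/2}$, I use \eqref{hcond} with $\nu=0$ to obtain the uniform bound $\sum_{y\in X_{\xi,2\xi}}|h_{xy}|\le K$. Since $X_\xi\setminus X_{\xi/2}\subset X_{3\xi/2}\setminus X_{\xi/2}$, one has $N_{X_\xi\setminus X_{\xi/2}}\le N_{1/2,\xi}$ in the sense of \eqref{NgxiDef}, and therefore
\begin{equation}\notag
\sum_{x\in X_\xi\setminus X_{\xi/2}}\phirdel{n_x^p}\le\phirdel{N_{1/2,\xi}^p}.
\end{equation}
Now I would apply \lemref{lemNq} with $\g_1=1/2$, $\g_2=1$, $q=p-1$, $S=X_{2\xi}$ (note $X_{2\xi}\subset S$ trivially, and $\xi\ge 2>(\g_2-\g_1)^{-1}$), and take the expectation in $\phi$ to obtain
\begin{equation}\notag
\phirdel{N_{1/2,\xi}^p}=\phidel{\al^{X_{2\xi}}_r(N_{1/2,\xi}^p)}\le C\bigl(\phidel{N_{1,\xi}N_{X_{2\xi}}^{p-1}}+\xi^{-n}\phidel{N_{X_{2\xi}}^p}\bigr)\le C\tau_p(\phi),
\end{equation}
using again the density assumption for the second term. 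Combining the two contributions gives \eqref{IIIpEst}.

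The main obstacle is essentially bookkeeping: one must verify that the time constraint $0\le s\le t<\xi/v_1$ inherited from the set-up is compatible with the range $t<\xi/v$ in \lemref{lemNq} (choosing any $v\in(\kappa,v_1]$ suffices), and that the $X$-dependence in the density bound $\phidel{N_{X_{2\xi}}^p}\lesssim \l^p\xi^{dp}$ is absorbed into the constant $C=C(\al,d,C_H,v,p,X,K)$. No new analytic idea beyond the reduction via the operator inequality $A^\xi_s(A^\xi_s)^*\le\|A\|^2\mathbf{1}$ followed by the inside/boundary splitting is needed; the bound for $\mIII_p$ ultimately rests on the same annular MVB that powers the proofs of \lemref{lemIest} and \lemref{lemIIest}.
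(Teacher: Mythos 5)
Your proof is correct and uses the same key step as the paper: peel off $A$ via $\|A_s^\xi\|=\|A\|$ (the operator inequality $n_x^{p/2}A_s^\xi(A_s^\xi)^*n_x^{p/2}\le\|A\|^2 n_x^p$ you invoke is exactly this), reducing $\mIII_p$ to $\|A\|^2\sum_{x\in X_\xi,y\in X_{\xi,2\xi}}|h_{xy}|\phirdel{n_x^p}$. The paper then finishes in one line by recognizing that this residual sum is precisely $\mII_p(r,h,\phi,\1)$ and citing \lemref{lemIIest}; your direct re-derivation via the $X_{\xi/2}$ / $X_\xi\setminus X_{\xi/2}$ split and \lemref{lemNq} is a valid, self-contained alternative that reaches the same bound (and is somewhat simpler than the general \lemref{lemIIest} argument because with $A=\1$ there is no commutator bookkeeping).
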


		Our general strategy for proving these lemmas is as follows: First, 
		we split the annulus $X_{0,2\xi}$ into four annuli, 
		\begin{equation}\label{55s}
			X_{0,(1-\g)\xi},\quad X_{(1-\g)\xi,\xi}, \quad X_{\xi,(1+\g)\xi,},\quad X_{(1+\g)\xi,2\xi},
		\end{equation}see \figref{fig:N} below.	We then use	the annular MVB from \thmref{thmTSMVB} in the second and the third annuli and the long-distance decay properties of $h_{xy}$ in the first and the fourth ones. The details are deferred to Section \ref{secPfLems}.
		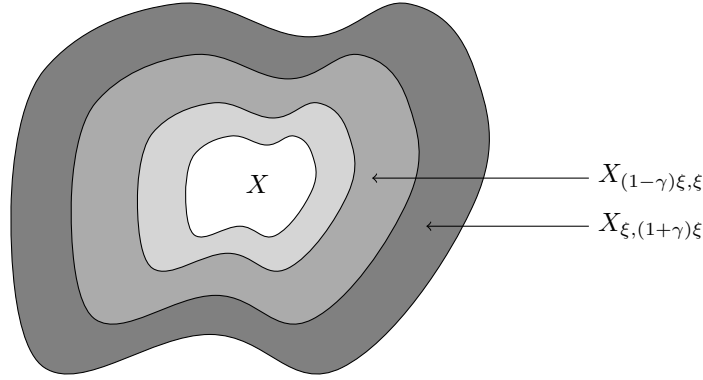
\begin{figure}[H]
			\centering
			\begin{tikzpicture}[scale=.8]
				\draw[fill=gray!100]  plot[shift={(-0.3,-0.45)}, scale=1.1,smooth, tension=.7] coordinates {(-3,0.5) (-2.5,2.5) (-.5,3.5) (1.5,3) (3,3.5) (4,2.5) (4,0.5) (2,-2) (0,-1.5) (-2.5,-2) (-3,0.5)};
				
				\draw[fill=gray!66]  plot[shift={(-0.2,-0.25)}, scale=.8,smooth, tension=.7] coordinates {(-3,0.5) (-2.5,2.5) (-.5,3.5) (1.5,3) (3,3.5) (4,2.5) (4,0.5) (2,-2) (0,-1.5) (-2.5,-2) (-3,0.5)};
				
				\draw[fill=gray!33]  plot[scale=.5,smooth, tension=.7] coordinates {(-3,0.5) (-2.5,2.5) (-.5,3.5) (1.5,3) (3,3.5) (4,2.5) (4,0.5) (2,-2) (0,-1.5) (-2.5,-2) (-3,0.5)};
				
				\draw[fill=white]  plot[shift={(0.2,.15)},scale=.3,smooth, tension=.7] coordinates {(-3,0.5) (-2.5,2.5) (-.5,3.5) (1.5,3) (3,3.5) (4,2.5) (4,0.5) (2,-2) (0,-1.5) (-2.5,-2) (-3,0.5)};
				
				\node at (.5,.4) {$X$};

				\draw [->] (6,.5)--(2.4,.5);
				\node [right] at (6,.5) {$X_{(1-\g)\xi,\xi}$};
				
				\draw [->] (6,-.3)--(3.3,-.3);
				\node [right] at (6,-0.3) {$X_{\xi,(1+\g)\xi}$};
			\end{tikzpicture}
			\caption{Schematic diagram illustrating the decomposition \eqref{55s}.}
			\label{fig:N}
		\end{figure}

		3. Thanks to our power-law decay conditions on the system Hamiltonian,   estimate \eqref{Kcond} is verified, and therefore the assumption \eqref{hcond} is satisfied for $J=(J_{xy})$ and $V=(V_{xy})$. Thus 
		we proceed to apply Lems.~\ref{lemIest}--\ref{lemIIIest} to estimate the r.h.s.~of \eqref{4.42}.
		Let $s:=t-r$ with	\begin{equation}\label{s-cond}
			{0\le t< \xi/v. }
		\end{equation} By formula \eqref{5.10} and the triangle inequality,
		we have the estimate
		\begin{align*}
			&\abs{\urdel{\sbr{S',A^\xi_{s}}}}\\\le& \sum_{x\in X_\xi,y\in X_{\xi,2\xi}} \abs{J_{xy}}\del{\abs{\urdel{A^\xi_{s}a_x^*a_y}}+\abs{\urdel{a_x^*a_yA^\xi_{s}}}}\\
			=& \sum_{x\in X_\xi,y\in X_{\xi,2\xi}} \abs{J_{xy}}\del{\abs{\inn{a_x(A^\xi_{s})^*\varphi_r}{a_y \psi_r}}+\abs{\inn{(A^\xi_{s})^*a_x\varphi_r}{a_y \psi_r}}}.
		\end{align*}
		Using the Cauchy-Schwarz inequality and the fact that $\norm{a_z\varphi_r}=\ordel{n_z}^{1/2}$ (recall $n_z=a_z^*a_z$), we find
		\begin{align}
			&\abs{\urdel{\sbr{S',A^\xi_{s}}}}
			\le \mI_1^{1/2}(r,J,\psi)\mII_1^{1/2}(r,J,\varphi,A)+\mI_1^{1/2}(r,J,\psi)\mIII_1^{1/2}(r,J,\varphi,A).\label{4.71}
		\end{align}
		{	By a straightforward adaption of the procedure above, we also find
			\begin{align}
				&\abs{\urdel{\sbr{W ,A^\xi_{s}}}}
				\le \mI_2^{1/2}\mII_2^{1/2}+\mI_2^{1/2}\mIII_2^{1/2} .\label{WcommEst1}
		\end{align}}

		Applying Lems \ref{lemIest}--\ref{lemIIIest} to estimate the terms in the r.h.s.~of \eqref{4.71} and \eqref{WcommEst1}, we conclude that for all $t$ satisfying \eqref{s-cond},
		\begin{align*}
			\sup_{0\le r \le t}\abs{\urdel{\sbr{S',A^\xi_{t-r}}}}\le C \norm{A} \tau_1(\varphi)^{1/2}\tau_1(\psi)^{1/2}.
		\end{align*}
		Since $\abs{\urdel{\sbr{(S')^*,A^\xi_{t-r}}}}=\abs{\inn{\psi_r}{\sbr{S',(A^\xi_{t-r})^*}\varphi_r}}$ and $(A^\xi_{t-r})^*=(A^*)^\xi_{t-r}$ (see \eqref{Axi}),  interchanging the roles of $A^*$ (resp. $\varphi_r$) and $A$ (resp. $\psi_r$) yields the exact same estimate for $$\sup_{0\le r \le t}\abs{\urdel{\sbr{(S')^*,A^\xi_{t-r}}}}$$ as above. The estimates for $W$ is similar. 
		
		Recalling  definitions \eqref{5.12}--\eqref{W-def}    and the validity interval \eqref{s-cond},  we conclude  that {for every $0\le t< \xi/v$,}
		\begin{align}\label{4.84}
			\sup_{0\le r \le t}\del{\abs{\urdel{\sbr{S,A^\xi_{t-r}}}}+\abs{\urdel{\sbr{W,A^\xi_{t-r}}}}}\le C \norm{A}\sum_{q=1}^2\tau_q(\varphi)^{1/2}\tau_q(\psi)^{1/2}.
		\end{align}
		Plugging \eqref{4.84} back to \eqref{4.42}, we conclude that 
		\begin{align}
			\abs{\undel{\Rem_t}}\le&   Ct  \norm{A}\sum _{q=1}^2\tau_q(\varphi)^{1/2}\tau_q(\psi)^{1/2}.\label{RemFormEst}
		\end{align}
		This completes the induction step and \propref{propClaim} is proved.
		
		\subsection{Completing the proof of \thmref{thm1}}\label{sec75}
		
		Finally, for any operator $B\in\cB_Y$ with $\dist(X,Y)\ge2\xi$, we choose $\varphi=B^*\psi$ in \eqref{RemFormEst} to obtain
		\begin{align}\label{lcae-gen-pf1}
			|\wndel{B\,\Rem_t}|\le&  \abs{\inn{B^*\psi }{\Rem_t\psi}} \le Ct \norm{A}\sum _{q=1}^2 \tau_q(B^*\psi)^{1/2}\tau_q(\psi)^{1/2}.
		\end{align}
		Since $B\in\cB_Y$ and $Y\subset X_{2\xi}^\cp$, we have $[B,N_\Lam]=[B,N_{1,\xi}]=0$ (see \eqref{NgxiDef}). Therefore, by definition \eqref{taupDef}, we have $\tau_q(B^*\psi) \le \norm{B}^2\tau_q(\psi)$ for every $q$.
		This, together with estimate \eqref{lcae-gen-pf1} and definition \eqref{taupDef},  yields
		\begin{align}
			|\wndel{B\,\Rem_t}|\le&  Ct \norm{A}\norm{B} \del{  \wndel{N_{1,\xi}N_\Lam}+ \xi^{-\beta}}.\notag
		\end{align}
		Owning to the particle-free shell condition \eqref{locCond}, the first term on the r.h.s.~vanishes. 
		This completes the proof of Theorem \ref{thm1}. 	\qed
		
		\subsection{Generalization of \thmref{thm1} }
		Since  Lems.~\ref{lemIest}--\ref{lemIIIest} are valid for higher moments, for any integer $Q\ge1$, a straightforward adaption of the argument above yields the following generalization of \thmref{thm1} to Hamiltonians of the form \eqref{HQ}:
		\begin{proposition}\label{propThm2gen}
			Let $Q\geq 1$ and assume that
			\begin{align} 
				{\al>\max\{3dQ/2+1,2d+1\}}.
			\end{align} 
			Let $X\subset \Lam$. Assume that the initial state satisfies the particle-free shell condition \eqref{locCond} and, for some $\l>0$, the uniform density upper bound
			\begin{align} 
				\wzdel{ N_{B_r(x)}^q}\le(\l r^d)^q \qquad  (1\le q\le Q,x\in\Lam, r\ge1).
			\end{align}
			Then, for every $v>2\kappa$,  there exists a positive constant
			\begin{align}
				C=C(\al,d, C_{J}, C_{V^{(1)}},\ldots, C_{V^{(Q)}},v,Q,X),	
			\end{align}
			s.th.~for all {$\xi\ge\max(2,\diam X)$}, $Y\subset\Lam$ with $\dist(X, Y)\ge2\xi$, and  operators $A\in \cB_X,\,B\in \cB_Y$, 
			{			\begin{align}
					|\br{B(\al_t^{H^Q_\Lam}(A)-\al_t^{H^Q_{X_\xi}}(A))}_0|\le C \norm{A}\norm{B} \abs{t}{  {\xi^{-\tilde\beta}\sum_{q=1}^Q\l^q}},
			\end{align}}
			for all $\abs{t}< \xi/v$. The error exponent $\tilde \beta$ is given by
			\begin{align}
				\tilde \beta:=&\left\{ \begin{aligned} &						\lfloor\al-2d-1\rfloor,\quad Q=1, \\ &\lfloor\al-{\tfrac{3dQ}{2}}-1\rfloor,\quad Q\ge2.
				\end{aligned}\right.
			\end{align}
		\end{proposition}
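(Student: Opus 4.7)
The plan is to repeat the multiscale induction proving Theorem \ref{thm1} almost verbatim, with the single structural change that the boundary coupling operator is now the sum of one hopping contribution and $Q$ interaction contributions. Specifically, in place of the splitting $R=S+W$ from \eqref{5.12}, one writes $R = S + \sum_{q=1}^Q W^{(q)}$, where $S$ is the hopping coupling \eqref{5.10} and
\[
W^{(q)} := \sum_{x\in X_\xi,\, y\in X_{\xi,2\xi}} V^{(q)}_{xy}\, n_x^{q/2} n_y^{q/2}.
\]
The identity $[H^Q_{X_{\xi,2\xi}}, A^\xi_s]=0$ analogous to \eqref{HAcomm1} still holds because every summand of $H^Q_{X_{\xi,2\xi}}$ is supported in $X_{\xi,2\xi}$ while $A^\xi_s \in \mathcal{B}_{X_\xi}$, so the integral representation \eqref{4.41} for $\mathrm{Rem}_t^{(2)}$ carries over unchanged.

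For each $q$, the commutator $[W^{(q)}, A^\xi_s]$ will be estimated by the same Cauchy--Schwarz argument that produces \eqref{WcommEst1}: using that $n_x^{q/2}$ and $n_y^{q/2}$ are positive and commute whenever $x\neq y$, the splitting produces the quantities $\mathrm{I}_q, \mathrm{II}_q, \mathrm{III}_q$ of Lemmas \ref{lemIest}--\ref{lemIIIest} with matrix $h = V^{(q)}$. Each $V^{(q)}$ satisfies the decay hypothesis \eqref{hcond} at every order $\nu \le \alpha - d - 1$, so applying those lemmas at moment $p = q$ gives $|\langle \varphi_r, [W^{(q)}, A^\xi_s] \psi_r\rangle| \le C \|A\| \tau_q(\varphi)^{1/2}\tau_q(\psi)^{1/2}$. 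Summing the resulting estimates over $q = 1, \ldots, Q$, the analogue of \eqref{RemFormEst} becomes
\[
|\langle \varphi, \mathrm{Rem}_t \psi\rangle| \le C|t|\, \|A\| \sum_{q=1}^Q \tau_q(\varphi)^{1/2} \tau_q(\psi)^{1/2}.
\]
The non-integer powers $n_x^{q/2} n_y^{q/2}$ pose no difficulty since they are positive commuting operators to which Lemma \ref{lemma holder} applies directly.

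Closing the induction requires that the annular MVB \eqref{annEst} and the base-case estimate \eqref{remEstMain'} are available at every moment $1 \le p \le Q$ rather than only $p=1,2$. This is exactly the content of Theorem 2.1 of \cite{lemm2023microscopic} under the hypothesis $\alpha > \max\{3dQ/2 + 1, 2d+1\}$, which is precisely the assumption of the proposition; the covering argument of Theorem \ref{thmAnnMVB} extends without modification. Finally, setting $\varphi = B^*\psi$ and using $[B, N_\Lambda] = [B, N_{1,\xi}] = 0$ yields $\tau_q(B^*\psi) \le \|B\|^2 \tau_q(\psi)$, and the particle-free shell condition \eqref{locCond} annihilates the $\langle N_{1,\xi} N_{X_{2\xi}}^{q-1}\rangle$ part of each $\tau_q$. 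What remains is the explicit decay $\xi^{-n+dq}\lambda^q$ from \eqref{taupDef}; selecting $n$ as large as the assumptions permit uniformly in $q \le Q$ produces the stated exponent $\tilde\beta$. The principal obstacle is purely bookkeeping: making sure the Cauchy--Schwarz split for each $q$ matches $\mathrm{I}_q, \mathrm{II}_q, \mathrm{III}_q$ cleanly, and that the optimization of the $\tau_q$ exponents $-n + dq$ against the moment constraint $\alpha > 3dQ/2+1$ reproduces the floor function appearing in $\tilde\beta$.
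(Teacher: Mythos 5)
Your proposal is correct and matches the paper's intent: the paper's own proof consists of the single remark that ``since Lemmas \ref{lemIest}--\ref{lemIIIest} are valid for higher moments, a straightforward adaptation of the argument yields the generalization,'' and you have correctly spelled out what that adaptation is—namely, replacing the two-term boundary coupling $R=S+W$ by $R=S+\sum_{q=1}^Q W^{(q)}$, estimating each $[W^{(q)},A^\xi_s]$ via Cauchy--Schwarz into $\mI_q,\mII_q,\mIII_q$ with $h=V^{(q)}$ at moment $p=q$, and invoking the $p$-dependent decay condition $\al>\max\{3dQ/2+1,2d+1\}$ so that the annular moment bound from \cite{lemm2023microscopic} is available for all $1\le p\le Q$. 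One minor inessential remark: Lemma \ref{lemma holder} is used in the proof of \thmref{thm4}, not in the Lieb--Robinson argument; the commuting positivity of $n_x^{q/2}$, $n_y^{q/2}$ is exploited here only through the Cauchy--Schwarz split, so the reference to that lemma is superfluous (though harmless).
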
	
		
		\section{Bounding Terms $\mathrm{(I)}$-$\mathrm{(III)}$ (Proofs of Lemmas~\ref{lemIest}--\ref{lemIIIest})}\label{secPfLems}
		In this section, we supply the bounds on terms $\mathrm{(I)}$-$\mathrm{(III)}$ that were used in \secref{secPfLCAgen}. 
		
		As a starting point, the following lemma, which is a direct consequence of the particle propagation bound \eqref{MVBp}, gives a crude estimate that controls the scale of growth of particles:
		\begin{lemma}\label{lemScale}
			Suppose estimate \eqref{MVBp} holds for some $p\ge1$.	Then, for any $v>\kappa$, $\mu,\,\mu'>0$, and bounded subset $X\subset\Lam$, there exists   $C =C (\al, d, C_J,p, v,\mu,\mu')>0$ s.th.~for all  {$\xi>\frac\mu2\diam X$,} 
			\begin{align}
				\label{scaleEst}
				\xi^{-dp}\sup_{vt\le \xi}\langle N_{X_{\mu'\xi}}^p\rangle_{t}\le C\l^p.
			\end{align}
			
		\end{lemma}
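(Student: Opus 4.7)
The plan is to embed $X_{\mu'\xi}$ in a Euclidean ball of radius $\mathcal{O}(\xi)$, apply the assumed mean--value bound \eqref{MVBp} to propagate from time $t$ back to time $0$, and then invoke the density assumption \eqref{UDBp} to control the initial quantity. This is essentially a direct consequence of \thmref{thmTSMVB}; the only mild issue is that \eqref{MVBp} in the paper is written for balls $B_r$ (centered at the origin) whereas here we must use balls centered inside $X$.

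Concretely, I would pick any reference point $x_0\in X$. The hypothesis $\xi>\tfrac{\mu}{2}\diam X$ gives $\diam X<2\xi/\mu$, so
\begin{equation*}
X_{\mu'\xi}\subset B_r(x_0), \qquad r:=\mu'\xi+\diam X\le \bigl(\mu'+2/\mu\bigr)\xi,
\end{equation*}
which yields the positive-operator bound $N_{X_{\mu'\xi}}\le N_{B_r(x_0)}$. Setting $R:=r+\xi$ aligns the time window $[0,(R-r)/v)=[0,\xi/v)$ with the supremum in \eqref{scaleEst}, and (the $x_0$-centered version of) \eqref{MVBp} gives
\begin{equation*}
\sup_{vt\le\xi}\br{N_{B_r(x_0)}^p}_t \le (1+C\xi^{-1})\,\br{N_{B_R(x_0)}^p}_0 + C\xi^{-\beta}\l^p .
\end{equation*}
The density hypothesis \eqref{UDBp} applied at $x_0$ with radius $R\le(\mu'+2/\mu+1)\xi$ yields $\br{N_{B_R(x_0)}^p}_0\le (\l R^d)^p\le C'\l^p\xi^{dp}$ with $C'=(\mu'+2/\mu+1)^{dp}$. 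Dividing through by $\xi^{dp}$ and absorbing the $\xi^{-\beta-dp}$ contribution via $\xi\ge 1$ delivers \eqref{scaleEst} with a constant depending only on $\al,d,C_J,p,v,\mu,\mu'$.

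The main (and really only) subtle point is justifying the use of \eqref{MVBp} with a ball centered at $x_0$ rather than at the origin. This is harmless because the ASTLO construction underlying \thmref{thmTSMVB} in \cite{lemm2023microscopic} uses only the distance function $|x-y|$ and can be centered at any lattice site, with the same constants. Alternatively, one could bypass this by applying \thmref{thmAnnMVB} with the singleton $\{x_0\}$ in place of $X$, reducing to annular estimates around $x_0$; either route carries no substantive technical obstacle.
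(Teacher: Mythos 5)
Your proof is correct and follows the same strategy as the paper: embed $X_{\mu'\xi}$ in a ball of radius $\mathcal{O}(\xi)$, apply the $p$-moment propagation bound \eqref{MVBp} across one $\xi$-step, and then invoke the density hypothesis \eqref{UDBp}. The only cosmetic difference is your choice $r=\mu'\xi+\diam X$ centered at an arbitrary $x_0\in X$, versus the paper's $r:=\tfrac12\diam X$; your version sidesteps any ambiguity about whether $X$ fits in an origin-centered ball of that radius, and the translation of the ball center is indeed harmless since the constants in \thmref{thmTSMVB} do not depend on the center.
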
\begin{proof}
			Set $r:=\frac12\diam X$ so that $X_{\mu'\xi}\subset B_{r+\mu'\xi}$ for any $\mu',\xi\ge0$. Using \eqref{MVBp}, the uniform density assumption \eqref{UDBp}, 
			we find
			\begin{align}
				\label{propEst5'}
				\xi^{-dp}\sup_{ vt\le \xi} \br{ N_{X_{\mu'\xi}}^p}_ {t}\le& \xi^{-dp}\sup_{ vt\le \xi} \br{ N_{B_{r+\mu'\xi}}^p}_ {t}\notag
				\\\le& \xi^{-dp}
				\del{\del{1+C\xi^{-1}}\br{N_{B_{r+(1+\mu')\xi}}^p}_0+C\l^p}\notag\\
				\le&C\del{\xi^{-dp}(r+(1+\mu')\xi)^{dp}+1}\l^p.
			\end{align}
			By the assumption $\xi>\mu r$, we have $\xi^{-d}(r+(1+\mu')\xi)^d\le (1+\mu'+\mu^{-1})^d$. This, together with  \eqref{propEst5'}, implies the desired estimate \eqref{scaleEst}.
		\end{proof}

		Let $v>2\kappa$. For the remainder of this section we fix a number %
		\begin{align}
			\label{gChoice}
			\g=\g(v,\kappa)\equiv\g(v,C_H,\al)\in(0,1/3)	
		\end{align} such that, for $\kappa$  defined in \eqref{kappa},
		\begin{equation}\label{c1def}
			v_1:=\frac{(1-\g)v}{2}>\kappa.
		\end{equation}
		Results below are valid for $r,s\ge0$ with
		\begin{align}
			\label{rstRel}
			r+s=t< \frac{1-\g}{2}\xi/v_1\equiv\xi/v,
		\end{align}
		c.f.~\eqref{s-cond}.
		
		For ease of notation we introduce the local number operators  counting  particles in the curved annular regions corresponding to decomposition \eqref{55s} (see \figref{fig:N}):  \begin{equation}\label{N1def}
			N_{\g,\xi}:=N'_{\g,\xi}+N''_{\g,\xi},\quad	N'_{\g,\xi}  := N_{ X_{(1-\g)\xi,\xi}},\quad 	N''_{\g,\xi}  := N_{ X_{\xi,(1+\g)\xi}}.
		\end{equation} 
		In the remainder of this section,  constant  $C>0$ might increase from one line to the next, but have the same dependence as in \eqref{C1C2}. 
		
		\subsection{Proof of \lemref{lemIest}}Recall $n\ge1$ from assumption \eqref{hcond} and 
		the decomposition from \figref{fig:N},
		\begin{equation}\label{X-decomp}
			X_\xi=X_{(1-\g)\xi}\cup  X_{(1-\g)\xi, \xi}.
		\end{equation}
		Since $\abs{x-y}\ge\g\xi$ for all $x\in X_{(1-\g)\xi}$ and $y\in X_{\xi,2\xi}$, we have
		\begin{align}
			\mI_p
			&=\sum_{y\in X_{\xi,2\xi}} |h_{xy}|\del{\sum_{x\in  X_{(1-\g)\xi,\xi}} \phirdel{n_x^p}  +\sum_{x\in X_{(1-\g)\xi}} \phirdel{n_x^p}}   \notag \\ 
			&\le {\del{\sup_{x\in\Lam}\sum_{y\in X_{\xi,2\xi}} |h_{xy}|}}\sum_{x\in  X_{(1-\g)\xi,\xi}} \phirdel{n_x^p} \notag \\&\quad +(\g\xi)^{-n} \del{\sup_{x\in\Lam}\sum_{y\in X_{\xi,2\xi}} |h_{xy}||x-y|^n}\sum_{x\in X_{(1-\g)\xi}} \phirdel{n_x^p}. \notag 
		\end{align}
		Using condition \eqref{hcond}, together with the relation $\sum_{x\in S} n_x^p\le N_S^p$ (since $n_x\ge0$), we deduce
		\begin{equation}
			\mI_p \le K\del{ \phirdel{(N'_{\g,\xi})^p }+{(\g\xi)^{-n }\phirdel{(N_{X_{(1-\g)\xi}})^p} }}. \label{4.811}
		\end{equation}
		Here $K>0$ is as in \eqref{hcond}.

		To estimate the first term in line \eqref{4.811},  we use the relation $N_{\g,\xi}'\le N_{\g,\xi}$ (see \eqref{N1def}) and apply the annular propagation estimate
		\eqref{annEst} with  the choice (c.f.~\eqref{c1def}){$$(v,\g_1,\g_2)= (v_1,\g,1).$$ This way we obtain,  for all $0\le r< (1-\g)\xi/v_1$,
			\begin{align}\label{8.909}
				\phirdel{(N_{\g,\xi}')^p}\le&C\tau_p(\phi).
			\end{align}

			Next, to bound the second term in the r.h.s.~of \eqref{4.811}, {we use \lemref{lemScale}}. }
		Applying estimate \eqref{scaleEst} and recalling definition \eqref{gChoice}, we find that
		\begin{align}
			\notag
			(\g\xi)^{-n }\sup_{r<\xi/v_1}\phirdel{(N_{X_{(1-\g)\xi}})^p} \le C\xi^{-n+dp}\lam^p.
		\end{align}
		Plugging this   together with \eqref{8.909} back to \eqref{4.811} yields
		\begin{equation}\notag
			\mI_p\le C\tau_p(\phi),
		\end{equation}
		uniformly for all $r$ with  {$0\le r< (1-\g)\xi/v_1$}.
		
		This completes the proof of \lemref{lemIest}.\qed

		\subsection{Proof of \lemref{lemIIest}}
		Using  decomposition \eqref{X-decomp}, we compute 
		\begin{align}
			\mII_p =&\sum_{y\in X_{\xi,2\xi}} |h_{xy}|\sum_{x\in X_{(1-\g)\xi}}\phirdel {A^\xi_{s}  n_x^p (A^\xi_{s})^*} \label{4.802} \\
			& + \sum_{y\in X_{\xi,2\xi}} |h_{xy}|\sum_{x\in  X_{(1-\g)\xi,\xi}}\phirdel {A^\xi_{s}  n_x^p (A^\xi_{s})^*}  \label{4.803}.
		\end{align}
		We first bound the term in line \eqref{4.802}.
		Since $A\in\cB_X$ (see \eqref{BX}), we have
		\begin{align}
			\label{ANcomm}
			0=[A,N]=[A,N_{X_\xi}]=[A,N_X].
		\end{align}
		Through \eqref{ANcomm} and \eqref{Axi}, we arrive at the relation $[A_s^\xi,N_{X_\xi}]=0$.
		This, together with estimate \eqref{scaleEst} implies, for all $r<\xi/v_1$ and $K>0$ from \eqref{hcond},   
		\begin{align}
			&\quad\sum_{y\in X_{\xi,2\xi}} |h_{xy}|\sum_{x\in X_{(1-\g)\xi}}\phirdel {A^\xi_{s}  n_x^p(A^\xi_{s})^*} \notag \\ 
			&\le (\g\xi)^{-n}  \sum_{y\in X_{\xi,2\xi}} |h_{xy}||x-y|^n\sum_{x\in X_{(1-\g)\xi}} \phirdel{A^\xi_{s}  n_x^p(A^\xi_{s})^*}\notag \\ 
			&\le K(\g\xi)^{-n } \phirdel{A^\xi_{s}N_{X_{\xi}}^p(A^\xi_{s})^*}\notag\\
			&\le K\norm{A}^2(\g\xi)^{-n}\phirdel{N_{X_{\xi}}^p}\notag\\
			&\le C \norm{A}^2 \xi^{-n+dp}\l^p.\label{4.802'}
		\end{align}
		This bounds the term in \eqref{4.802}.
		
		To bound the term in line \eqref{4.803}, we 
		define $ \phi_{r,s}:=e^{-isH_{X_\xi}} \phi_r$ and recall $N_{\g,\xi} '\equiv N_{ X_{(1-\g)\xi,\xi}}$.
		Then, we have, by definition \eqref{Axi} for the local evolution,
		\begin{align}\label{4.831'}
			&\sum_{y\in X_{\xi,2\xi}} |h_{xy}|\sum_{x\in  X_{(1-\g)\xi,\xi}}\phirdel\notag {A^\xi_{s}  n_x^p (A^\xi_{s})^*} 
			\\\le& K \phirdel {A^\xi_{s} (N_{\g,\xi} ')^p(A^\xi_{s})^*} \notag
			\\=&K\inn{   \phi_{r,s}} {  A\al^{X_\xi}_{-s}( (N_{\g,\xi}')^p)A^* \phi_{r,s}}.
		\end{align}
		To estimate the quantity in  line  \eqref{4.831'}, we apply \lemref{lemNq} with evolution $\al_{-s}^{X_\xi}(\cdot)$ and initial state $A^* \phi_{r,s}$ to obtain,   for all {$s <  \tfrac{1-\g}{2}\xi/v_1$}, that
		\begin{equation}
			\label{4.831}
			\begin{aligned}
				&\inn{ A^* \phi_{r,s}} {  \al^{X_\xi}_{-s}( (N_{\g,\xi}')^p)A^* \phi_{r,s}}\\
				\le& C\del{\inn{ \phi_{r,s}}{ A  N_{(1+\g)/2,\xi}' N_{X_\xi}^{p-1} A^* \phi_{r,s} }+ (\g\xi)^{-n}  \inn{ \phi_{r,s}}{ A   N_{X_\xi}^p A^* \phi_{r,s} }}   . 
			\end{aligned}
		\end{equation}
		For the remainder term above we use that $0<\g<1/3$ and so  $\tfrac{1+\g}{2}-\g= \tfrac{1-\g}{2}>\g$. 
		{		Note that this is the only place $\g<1/3$ is used. }
		
		Since $\supp A\subset X$, $\supp N_{(1+\g)/2,\xi}'\subset X^\cp$, and $[A,N_{X_\xi}]=0$ (see \eqref{ANcomm}), we can pull out the $A$'s from \eqref{4.831} to obtain 
		\begin{align}
			\label{4.832}
			&\inn{ A^* \phi_{r,s}} {  \al^{X_\xi}_{-s}( N_{\g,\xi}')A^* \phi_{r,s}}\notag\\
			\le&C\norm{A}^2 \del{\inn{ \phi_{r,s}}{ N_{(1+\g)/2,\xi}' N_{X_\xi}^{p-1}\phi_{r,s} }+ (\g\xi)^{-n}  \inn{ \phi_{r,s}}{    N_{X_\xi}^p  \phi_{r,s} }   }.
		\end{align}
		Now we apply \lemref{lemNq} twice on the first term in line \eqref{4.832}, first with the evolution $\al^{X_\xi}_s(\cdot)$ and then with $\al_r^{X_{2\xi}}(\cdot)$. For the first application, we have
		\begin{align}
			&\inn{ \phi_{r,s}}{ N_{(1+\g)/2,\xi}' N_{X_\xi}^{p-1}\phi_{r,s} }\notag\\=&	\inn{ \phi_{r}}{ \al_s^{X_\xi}\del{N_{(1+\g)/2,\xi}'} N_{X_\xi}^{p-1}\phi_{r} }\notag\\
			\le&	\inn{ \phi_{r}}{ \al_s^{X_\xi}\del{N_{(1+\g)/2,\xi}} N_{X_\xi}^{p-1}\phi_{r} }\notag\\
			\le& C\del{\inn{ \phi_{r}}{  \del{N_{\frac{1+\g}{2}+\frac{sv_1}{\xi},\xi}} N_{X_\xi}^{p-1}\phi_{r} }+\xi^{-n}\phirdel{N_{X_\xi}^p}}.\label{519}
		\end{align}
		For the second application, we use \eqref{lemNq} on the leading term in line \eqref{519} to obtain that
		\begin{align}
			\label{520}
			&	\inn{ \phi_{r}}{  \del{N_{\frac{1+\g}{2}+\frac{sv_1}{\xi},\xi}} N_{X_\xi}^{p-1}\phi_{r} }\notag\\\le&\inn{ \phi_{r}}{  \del{N_{\frac{1+\g}{2}+\frac{sv_1}{\xi},\xi}} N_{X_{2\xi}}^{p-1}\phi_{r} }\notag\\
			=&\inn{ \phi}{  \al_r^{X_{2\xi}}\del{N_{\frac{1+\g}{2}+\frac{sv_1}{\xi},\xi}} N_{X_{2\xi}}^{p-1}\phi}\notag\\
			\le&C\del{\inn{ \phi}{  \del{N_{\frac{1+\g}{2}+\frac{(s+r)v_1}{\xi},\xi}} N_{X_{2\xi}}^{p-1}\phi}+\xi^{-n}\phidel{N_{X_{2\xi}}^p}}.
		\end{align}
		Combining \eqref{519}--\eqref{520}, we obtain that for {$r+s=t< \tfrac{1-\g}{2}\xi/v_1$},
		\begin{align}\label{4.833}
			&\inn{ \phi_{r,s}}{  (N_{(1+\g)/2,\xi}')^p   \phi_{r,s} } \notag\\\le& C\del{\inn{ \phi}{  \del{N_{1,\xi}} N_{X_{2\xi}}^{p-1}\phi}+\xi^{-n}\phirdel{N_{X_\xi}^p}+\xi^{-n}\phidel{N_{X_{2\xi}}^p}}.
		\end{align}
		The first term in \eqref{4.833} is of the desired form (see \eqref{taupDef}). To bound the second term, we use the \lemref{lemScale}, which gives 
		\begin{align}
			\label{4.834'}
			\xi^{-n}\phirdel{N_{X_\xi}^p}\le C\xi^{-n+dp}\l^p.
		\end{align}
		To bound the third term, we use the uniform density bound \eqref{UDBp} for $\phi$. In conclusion, we arrive at
		\begin{align}
			\notag
			\inn{ \phi_{r,s}}{ N_{(1+\g)/2,\xi}' N_{X_\xi}^{p-1}\phi_{r,s} }\le C\tau_p(\phi).
		\end{align} 
		This bounds the first term in line \eqref{4.832}.

		Next, by the conservation of $N_{X_\xi}$ under $\al_{(\cdot)}^{X_\xi}$, we find 
		\begin{equation}\notag
			\inn{ \phi_{r,s}}{  N_{X_\xi}^p \phi_{r,s}}=\phirdel{N_{X_\xi}^p}
		\end{equation}
		in the second term in the r.h.s. of \eqref{4.832}. This, together with \eqref{4.834'}, yields
		\begin{align}\label{525}
			\inn{ \phi_{r,s}}{  N_{X_\xi}^p \phi_{r,s}}\le C\xi^{-n+dp}\l^p.
		\end{align}
		Combining \eqref{4.802'}--\eqref{525} yields
		\begin{equation}\notag
			\begin{aligned}
				\mII_p\le &  C\norm{A}^2 \tau_p(\phi),
			\end{aligned}
		\end{equation}
		which holds uniformly for all $r,s,t$ satisfying \eqref{rstRel}.
		This bounds the second term in line \eqref{4.832}. 	
		
		This completes the proof of \lemref{lemIIest}.\qed
		
		\subsection{Proof of \lemref{lemIIIest}}
		
		By definition \eqref{Axi}, we have $\norm{A_s^\xi}\equiv \norm{A}$ for all $s,\,\xi$. Therefore, 
		$$
		\mIII_p\le \norm{A_t^\xi}^2 \sum_{x\in X_\xi,y\in X_{\xi,2\xi}} \abs{h_{xy}}\phirdel{n_x^p}=\norm{A}^2 \sum_{x\in X_\xi,y\in X_{\xi,2\xi}} \abs{h_{xy}}\phirdel{n_x^p}.
		$$
		Applying \lemref{lemIIest} with $\mII_p(r,h,\phi,\1)$ in the last term above gives \eqref{IIIpEst}.  This proves \lemref{lemIIIest}.\qed

		\section*{Acknowledgment}	
		The research of M.L.\ is supported by the DFG through the grant TRR 352 – Project-ID 470903074 and by the European Union (ERC Starting Grant MathQuantProp, Grant Agreement 101163620).\footnote{Views and opinions expressed are however those of the authors only and do not necessarily reflect those of the European Union or the European Research Council Executive Agency. Neither the European Union nor the granting authority can be held responsible for them.}
        The research of C.R. is supported by the DFG through the grant TRR 352 – Project-ID 470903074.
		J.Z.~is supported by National Key R \& D Program of China Grant 2022YFA100740, China Postdoctoral Science Foundation Grant 2024T170453, National Natural Science Foundation of China Grant 12401602, and the Shuimu Scholar program of Tsinghua University.  
		
		\section*{Declarations}
		\begin{itemize}
			\item Conflict of interest: The Authors have no conflicts of interest to declare that are relevant to the content of this article.
			\item Data availability: Data sharing is not applicable to this article as no datasets were generated or analysed during the current study.
		\end{itemize}
		
		\appendix
		\section{Basic Properties of the ASTLOs}
		\subsection{Proof of \lemref{ASTLO and N}}\label{proof lemma aslto and N}

		In this proof we write 
		\begin{align}\label{s defin}
			s\equiv s_\s=\frac{2(R-r)}{3v}.
		\end{align}
		
		Step 1. We start by showing \eqref{f and B at t=0}. We observe that 
		\begin{align}
			\quad f\del{\xi}\equiv 0\text{ for } \xi\le \frac{\omega}{2}.\notag
		\end{align}
		Since $f(\xi)\in[0,1]$ for every $\xi\in\R$, this implies
		\begin{align}\label{f le chi}
			f\del{\xi} \le \chi_{\Set{\xi\ge \omega/2}}.
		\end{align}
		Now fix $t=0$, inequality \eqref{f le chi} implies
		\begin{align}\label{a 1}
			f\left(\frac{R-\abs{x}}{s}\right)\le \chi_{\Set{\frac{R-\abs{x}}{s}\ge \frac{\omega}{2}}}=\chi_{\Set{\abs{x}\le R-{s\omega}/{2}}}.
		\end{align}
		Since 
		\begin{align}
			\chi_{\Set{\abs{x}\le R-\frac{s\omega}{2}}}\le \chi_{\Set{\abs{x}\le R}},\notag
		\end{align}
		inequality \eqref{a 1} implies 
		\begin{align}\label{a 2}
			f\left(\frac{R-\abs{x}}{s}\right)\le \chi_{\Set{\abs{x}\le R}}.
		\end{align}
		Recalling the definition of the ASTLOs \eqref{ASTLOdef}, estimate \eqref{a 2} leads to \eqref{f and B at t=0}.
		
		2. Now we show \eqref{f and b at t}. By the definition of $\cE$, we have that
		\begin{align}
			f\del{\xi}\equiv 1\text{ for }\xi \ge \omega.\notag 
		\end{align}
		This implies that 
		\begin{align}\label{f ge chi}
			f\del{\xi} \ge \chi_{\Set{\xi\ge \omega}}.
		\end{align}
		Thanks to \eqref{f ge chi} we obtain
		\begin{align}\label{a 5}
			f\del{\frac{R-v't-\abs{x}}{s}}\ge \chi_{\Set{\frac{R-v't-\abs{x}}{s}\ge \omega}}=\chi_{\Set{\abs{x}\le R-v't-\omega s}}.
		\end{align}
		Since $s\ge t$ and setting 
		\begin{align}
			& v'\coloneqq \frac{\kappa+v}{2},\notag\\
			&\omega\coloneqq v-v'>0,\label{eps defin}	
		\end{align}
		we obtain
		\begin{align}\label{a 3}
			\Set{\abs{x}\le R-v't-\omega s}\supseteq\Set{\abs{x}\le R-vs}.
		\end{align}
		Applying \eqref{s defin} to \eqref{a 3} yields
		\begin{align}\label{a 4}
			\Set{\abs{x}\le R-v't-\omega s}\supseteq \Set{\abs{x}\le \frac{R}{3}+\frac{2r}{3}}\supseteq       \Set{\abs{x}\le r}.
		\end{align}
		Combining \eqref{a 5} and \eqref{a 4} leads to
		\begin{align}
			f\del{\frac{R-v't-\abs{x}}{s}}\ge \chi_{\Set{ \abs{x}\le r}},\notag
		\end{align}
		which leads to \eqref{f and b at t}.
		
		3. As a next step we derive \eqref{g and b 0}. From 
		\eqref{f ge chi} we obtain
		\begin{align}\label{a 6}
			f\left(\frac{(R+2r)/3-\abs{x}}{s}\right)\ge \chi_{\Set{\frac{(R+2r)/3-\abs{x}}{s}\ge \omega}}=\chi_{\Set{\abs{x}\le \frac{R+2r}{3}-\omega s}}.
		\end{align}
		From \eqref{s defin} and \eqref{eps defin} it follows
		\begin{align}\label{a 7}
			\Set{  \abs{x}\le \frac{R+2r}{3}-\omega s }=\Set{\abs{x}\le \frac{4r}{3}-\frac{R}{3}+\frac{2}{3}\frac{v'}{v}\del{R-r}}.
		\end{align}
		Since $v'>v/2$, we obtain
		\begin{align}\label{a 8}
			\Set{\abs{x}\le \frac{4r}{3}-\frac{R}{3}+\frac{2}{3}\frac{v'}{v}\del{R-r}}\supseteq \Set{\abs{x}\le r}.
		\end{align}
		Inequalities \eqref{a 6}, \eqref{a 7}, and \eqref{a 8} together imply
		\begin{align}\notag
			f\left(\frac{(R+2r)/3-\abs{x}}{s}\right)\ge\chi_{\Set{\abs{x}\le r}},
		\end{align}
		which then implies \eqref{g and b 0}.
		
		4. In this last part we show \eqref{g and B t}. Thanks \eqref{f le chi} it holds that
		\begin{align}\label{a 11}
			f\del{\frac{(R+2r)/3+v't-\abs{x}}{s}}\le \chi_{E},
		\end{align}
		where
		\begin{align}
			E \coloneqq\Set{\frac{(R+2r)/3+v't-\abs{x}}{s}>\frac{\omega}{2}} = \Set{\abs{x}<\frac{R}{3}+\frac{2r}{3}+v't-\frac{\omega s}{2}}. \notag
		\end{align}
		Since $s\ge t$, recalling \eqref{s defin} and \eqref{eps defin}, 
		\begin{align}\label{a 9}
			E\subseteq\Set{\abs{x}\le \frac{R}{3}+\frac{2r}{3}+\del{\frac{3v'}{v}-1}\frac{R-r}{3}}.
		\end{align}
		Since $v'\le v$, \eqref{a 9} implies
		\begin{align}\label{a 10}
			E\subseteq \Set{\abs{x}\le R}.
		\end{align}
		Combining \eqref{a 11} and \eqref{a 10}, we obtain
		\begin{align}
			f\del{\frac{(R+2r)/3+v't-\abs{x}}{s}}\le \chi_{\Set{\abs{x}\le R}}.\notag
		\end{align}
		This concludes the proof.

		\qed
		\subsection{Proof of \lemref{lem7.3}}
		\label{secPfLem7.3}
		
		We observe that, by definition \eqref{ASTLOdef}, ${\Nf{f+}{t}{\si}}\le{\Nf{f-}{t}{\si}}$ if and only if 
		\begin{align}\label{619}
			f_+(\abs{x}_{t,\si})\le f_-(\abs{x}_{t,\si}).
		\end{align}
		By the definitions of $f_+,\;f_-$ (see \eqref{f-}, \eqref{f+}), \eqref{619} is equivalent to
		\begin{align}\label{b}
			f\left(\frac{(R+2r)/3+v't-\abs{x}}{s_{\si}}\right)\le f\left(\frac{R-v't-\abs{x}}{s_{\si}}\right).
		\end{align}
		Since $f\in\cE$ it is  non-decreasing, \eqref{b} is equivalent to
		\begin{align}
			\frac{R+2r}{3} +v't\le R-v't.\notag
		\end{align}
		After rearranging   this yields the desired inequality.
		\qed
		\subsection{Proof of \lemref{new taylor}}\label{secPfExp}
		
		By the mean value theorem, for any $x\le y$ there exists $\xi\in[x,y]$ s.th.
		\begin{align}
			\label{mvt}
			f(x)-f(y)=f'(\xi)(x-y).
		\end{align}
		Write $f'(\xi)=f'(x)+(f'(\xi)-f'(x))$ and  $u:=\sqrt{f'}$, equation \eqref{mvt} becomes
		\begin{align}
			\label{fExp1}
			&f(x)-f(y)\notag\\=&f'(x)(x-y)+(f'(\xi)-f'(x))(x-y)\notag\\
			=&u(x)u(y)(x-y) + u(x)\left(u(x)-u(y)\right)(x-y) +(f'(\xi)-f'(x))(x-y).
		\end{align}
		Recalling definition \eqref{classE}, we see that for every function $f\in\cE$, the derivative $f'$ is compactly supported in $(1/2,1)$ and Lipshitz, and similarly for $u\equiv \sqrt{f'}$. It follows that the H\"older semi-norm
		\begin{align}
			\label{holEst}
			\abs{g}_{0,\eps}\le \abs{g}_{0,1},\qquad g\in\Set{f',u},\,0<\eps\le 1 . 
		\end{align} 
		Inserting \eqref{holEst} into \eqref{fExp1} yields
		\begin{align}\notag
			\abs{	f(x)-f(y)}\le u(x)u(y)\abs{x-y}+C_f\abs{x-y}^{1+\eps}\qquad0<\eps\le 1 ,
		\end{align}
		with
		\begin{align}
			C_f:=\norm{u}_{L^\infty}\abs{u}_{0,\eps}+\abs{f'}_{0,\eps}.\label{Cf def}
		\end{align}
		The desired estimate \eqref{fExp} follows from here.
		\qed

		\bibliography{LRZ}
	\end{document}